\newtheorem{theorem}{Theorem}
\newtheorem{lemma}[theorem]{Lemma}
\newtheorem{corollary}[theorem]{Corollary}
\newtheorem{remark}[theorem]{Remark}
\newtheorem{definition}[theorem]{Definition}
\newtheorem{condition}[theorem]{Condition}
\newtheorem{conjecture}[theorem]{Conjecture}
\newtheorem{proposition}[theorem]{Proposition}
\newtheorem*{thmmain}{Theorem \ref{thm:main}}
\newtheorem*{lemonestepuniq}{Lemma~\ref{lem:onestepuniq}}
\newtheorem*{lemsteptwouniq}{Proposition~\ref{lem:steptwouniq}}
\newtheorem*{lemexistence}{Lemma~\ref{lem:existence}}
\newtheorem*{lemtwostep}{Proposition~\ref{lem:twostep}}
\newtheorem*{lemtwosteptoprove}{Lemma~\ref{lem:twosteptoprove}}
\newtheorem*{lemmonotone}{Lemma~\ref{lem:mo12no12tone}}
\newtheorem*{lemonesteprecursion}{Proposition~\ref{lem:onesteprecursion}}
\newcommand{\bn}{{\mathbb N}}
\newcommand{\deriv}[2]{\frac{\mathrm{d}{#1}}{\mathrm{d}{#2}}}
\newcommand{\pderiv}[2]{\frac{\partial{#1}}{\partial{#2}}}
\newcommand{\pb}{\mathbf{p}}
\newcommand{\rb}{\mathbf{r}} 
\newcommand{\abs}[1]{\left\vert#1\right\vert}
\renewcommand{\geqslant}{\geq}
\renewcommand{\leqslant}{\leq}
\def\Pr{\mathop{\rm Pr}\nolimits}
\newcommand{\thetree}{\mathbb{T}_{d,n}}
\newcommand{\thetreetwo}{\mathbb{T}_{2,n}}
\newcommand{\theleaves}{\Lambda_{\thetree}}
\newcommand{\theleavestwo}{\Lambda_{\thetreetwo}}
\newcommand{\theroot}{v_{d,n}}
\newcommand{\theroottwo}{v_{2,n}}
\newcommand{\child}[2]{#1^{(#2)}}
\newcommand{\theratio}[4]{\gamma(#1,#2,#3,#4)}
\newcommand{\maxset}[1]{\mathcal{M}_{#1}}
\begin{document}
\title{Uniqueness for the 3-State Antiferromagnetic Potts Model on the Tree\thanks{
The research leading to these results has received funding from the European Research Council under the European Union's Seventh Framework Programme (FP7/2007-2013) ERC grant agreement no.\ 334828. The paper reflects only the authors' views and not the views of the ERC or the European Commission. The European Union is not liable for any use that may be made of the information contained therein. Department of Computer Science, University of Oxford, Wolfson Building, Parks Road, Oxford, OX1~3QD, UK.}
}
\author{Andreas Galanis, Leslie Ann Goldberg, and Kuan Yang}

\date{July 26, 2018}
 
\maketitle

\begin{abstract}   
The antiferromagnetic $q$-state Potts model is perhaps the most  canonical model for which the uniqueness threshold on the tree is not yet understood, largely because of the absence of monotonicities.   Jonasson  established the uniqueness threshold in the zero-temperature case, which corresponds to the $q$-colourings model. In the permissive case (where the temperature is positive), the Potts model has an extra parameter $\beta\in (0,1)$, which  makes the task of analysing the uniqueness threshold even harder and much less is known.

In this paper, we focus on the case $q=3$ and give a detailed analysis of the Potts model on the tree by refining Jonasson's approach. In particular, we establish the uniqueness threshold on the $d$-ary tree for all values of $d\geq 2$. When $d\geq3$, we show that the 3-state antiferromagnetic Potts model has uniqueness for all $\beta\geq 1-3/(d+1)$. The case $d=2$ is critical since it relates to  the 3-colourings model on the binary tree ($\beta=0$), which has non-uniqueness. Nevertheless, we show that the Potts model has uniqueness for all $\beta\in (0,1)$ on the binary tree. Both of these results are tight since it is known that uniqueness does not hold in the complementary regime.

Our proof technique gives for general $q>3$ an analytical condition  for proving uniqueness based on the two-step recursion on the tree, which we conjecture to be sufficient to establish the uniqueness threshold for all non-critical cases ($q\neq d+1$). 
\end{abstract} 

\section{Introduction}

The $q$-state Potts model is a fundamental spin system from statistical physics that has  been thoroughly studied in probability and computer science. The model has two parameters $q$ and $\beta$, where $q\geq 3$ is the number of the states, and $\beta>0$ is a parameter which corresponds to the temperature of the system\footnote{Often, in the literature, $\beta$ is taken to be the 
\emph{inverse tempertature}. Since we don't need the physical details here, we simplify the notation
by taking $\beta$ to be $e$ to the inverse temperture.}. 
The set of states is given by   $[q]=\{1,\hdots,q\}$ and we will usually refer to them as \emph{colours}. The case $q=2$ is known as  the Ising model, and the Potts model is the generalisation of the Ising model to multiple states. When $\beta=0$, the Potts model is known as the $q$-colourings model.

A \emph{configuration} of the Potts model on a finite graph $G=(V,E)$ is an assignment $\sigma: V\rightarrow [q]$. The \emph{weight} of the configuration $\sigma$ is given by $w_G(\sigma)=\beta^{m(\sigma)}$, where $m(\sigma)$ denotes the number of monochromatic edges in $G$ under the assignment $\sigma$. The Gibbs distribution of the model, denoted by $\Pr_G[\cdot]$, is the probability distribution on the set of all configurations, where 
the probability mass of each configuration~$\sigma$ is
proportional to its weight~$w_G(\sigma)$.  Thus, for any $\sigma:V\rightarrow[q]$ it holds that
\[\Pr_G[\sigma] = w_G(\sigma)/Z_G,\]
where $Z_G = \sum_{\sigma: V\to[q]} w_G(\sigma)$ is the so-called \emph{partition function}. Note that in the case $\beta=0$ the Gibbs distribution becomes the uniform distribution on the set of proper $q$-colourings of $G$. The Potts model is said to be \emph{ferromagnetic} if $\beta>1$, which means that more likely configurations have
many monochromatic edges. It is said to be \emph{antiferromagnetic} if $\beta<1$, which means that more likely
configurations have fewer monochromatic edges.
This paper is about the antiferromagnetic case.

For spin systems like the Ising model and the Potts model, one of the most well-studied subjects in statistical physics is the so-called uniqueness phase transition on lattice graphs, such as the grid or the regular tree. Roughly, the uniqueness phase transition on an infinite graph captures whether boundary configurations can exert non-vanishing influence on far-away vertices.  In slightly more detail, for a vertex $v$ and an integer $n$, fix an arbitrary configuration on the vertices that are at distance 
at least~$n$ from $v$. Does the influence on the state of $v$ coming from the boundary configuration vanish when $n\rightarrow \infty$? If yes, the model has \emph{uniqueness},  and it has  \emph{non-uniqueness} otherwise.\footnote{\label{eq:rferf3g555}The terminology comes from the theory of Gibbs measures, where the interest is in examining whether there is a unique infinite-volume measure  whose marginals on finite regions is given by the Gibbs distribution (it can be shown that an infinite-volume measure always exists). See \cite{Geo88,friedlivelenik2017} for a thorough  exposition of the theory. The two formulations of uniqueness/non-uniqueness that we have described, i.e., examining infinite-volume measures and examining the limit of marginals in growing finite regions, turn out to be   equivalent.} (See Definition~\ref{def:uniqueness} for a precise formulation in the case of the tree.) Note that uniqueness is a strong property,  which guarantees that the effect of fixing an arbitrary boundary configuration eventually dies out. As an example, for the antiferromagnetic Ising model on the   $d$-ary tree it is well-known that uniqueness holds iff $\beta\geq \tfrac{d-1}{d+1}$; the value $\tfrac{d-1}{d+1}$ is a point of a phase transition and is also known as the \emph{uniqueness threshold} because it is the point at which the uniqueness phase transition occurs.

The uniqueness phase transition plays a prominent role in connecting the efficiency of algorithms for sampling from the Gibbs distribution to   the properties of the Gibbs distribution itself. One of the first examples of such a connection is in the analysis of the Gibbs sampler Markov chain   for the Ising model on the 2-dimensional lattice, where the uniqueness phase transition marks the critical value of  $\beta$ where the mixing time switches from polynomial to exponential (see \cite{ martinelli2, martinelli1,Thomas}).

From a computational complexity perspective, it is the uniqueness phase transition on  the regular tree which is particularly important. For many 2-state spin models, including the antiferromagnetic Ising model and the hard-core model, it has been proved \cite{Sinclair,SlySun,GSVi,LiLuYin} that the uniqueness phase transition  on the tree coincides with a 
more general computational transition in the complexity of approximating the partition function   or sampling from the Gibbs distribution. In the case of the antiferromagnetic Ising model for example,  the problem of approximating  the partition function on $(d+1)$-regular graphs undergoes a computational transition at the tree uniqueness threshold: it admits a polynomial-time algorithm when $\beta\in (\tfrac{d-1}{d+1},1)$ and it is NP-hard  for $\beta\in(0,\tfrac{d-1}{d+1})$. This connection has been established in full generality for antiferromagnetic 2-state systems.

For antiferromagnetic multi-state systems, the situation is much less clear and, in fact, even understanding the uniqueness phase transition on the tree poses major challenges. One of the key reasons behind these difficulties is that certain monotonicities that hold for two-state systems simply do not hold in the multi-state setting, which therefore necessitates far more elaborate techniques. For analysing the uniqueness threshold on the tree, this difficulty has already been illustrated in the case of the $q$-colourings model,  where Jonasson \cite{Jonasson02}, building upon work of Brightwell and Winkler \cite{BW02}, established via a painstaking method that the model is in uniqueness on the $d$-ary tree iff $q>d+1$. 
The goal of this paper is to extend this analysis to the Potts model (beyond the zero-temperature case).

There are several reasons for focusing on establishing uniqueness on the tree. For the colourings model and the antiferromagnetic Potts model,  it is widely conjectured that the uniqueness phase transition on the $d$-ary tree captures the complexity of  approximating the partition function on graphs with maximum degree $d+1$, 
as is the case  for antiferromagnetic 2-state  models. 
It has been known since the 80s that non-uniqueness holds for the colourings model when $q\leq d+1$ and for the Potts model when $\beta<1-q/(d+1)$, see \cite{Peruggi0}. More recently, it was shown in \cite{GSV} that the problem of approximating the partition function is NP-hard when $q<d+1$ for the colourings model 
and when $\beta<1-q/(d+1)$ for the Potts model (for $q$ even). It is not known however whether efficient algorithms can be designed in the complementary regime; for correlation decay algorithms in particular (see \cite{GK, PY, LYZZ}), it has been difficult to capture the uniqueness threshold in the analysis --- this becomes even harder in the case of the Potts model where uniqueness is not known.  For a more direct algorithmic consequence of uniqueness, it has been demonstrated that, on sparse random graphs,  sampling algorithms for the Gibbs distribution can be designed  by exploiting the underlying tree-like structure  and the decay properties on the tree guaranteed by uniqueness. In particular, in the $G(n,d/n)$ random graph,  Efthymiou \cite{Efthymiou} 
developed a sampling algorithm for $q$-colourings when $q>(1+\epsilon)d$, based on Jonasson's uniqueness result.
Related results on $G(n,d/n)$
appear in  \cite{YZ,EHSV,Sinclair2017,MS22}.
Also, after presenting our main result, we will describe an application on random regular graphs, appearing in \cite{samplingpaper}.

\subsection{Our result}\label{sec:results}
In this paper, we study the uniqueness threshold for  the antiferromagnetic Potts model on the tree.
We establish the uniqueness threshold for $q=3$  for every $d\geq 2$. Our proof technique, which is a refinement of Jonasson's approach, also gives, for general $q>3$, an analytical condition  for proving uniqueness, which we conjecture  to be sufficient  for establishing the uniqueness threshold whenever $q\neq d+1$. As we shall discuss shortly, the case $q=d+1$ is  special, since it incorporates the critical case for the colourings model.  To formally state our result, we will need a few definitions.

Given a graph $G=(V,E)$, a configuration $\sigma:V\rightarrow [q]$, and a subset $U$ of $V$, we use $\sigma(U)$ to denote the restriction of the configuration $\sigma$ to the vertices in $U$. For a vertex $v\in V$ and a colour $c\in[q]$, we denote by $\Pr_G[\sigma(v) = c]$ the probability that $v$ takes the colour $c$ in the Gibbs distribution.  Let $\thetree$ be the $d$-ary tree 
with height $n$ (i.e., every path from the root to a leaf has $n$~edges, and every non-leaf vertex has $d$ children).\footnote{Note that the $d$-ary tree is essentially the same as a regular tree with degree $d+1$;  the only difference is that the root of the $d$-ary tree has degree $d$ while the root of a  $(d+1)$-regular tree has degree $d+1$. Accordingly, the uniqueness phase transition occurs at exactly the same location in both trees.}
Let $\theleaves$  be the set of leaves of $\thetree$ 
 and  let  $\theroot$  be its root. The following definition formalises uniqueness on the $d$-ary tree. (See also~\cite{BW02} for details 
 about how to translate
   Definition~\ref{def:uniqueness} to the Gibbs theory formalisation.) 
 
\begin{definition} \label{def:uniqueness}
The $q$-state Potts model with parameter $\beta$ has  \emph{uniqueness}  on the infinite $d$-ary tree if, for all colours $c\in [q]$,  it holds that
\begin{equation}\label{eq:f4g4553}
\limsup_{n\to\infty} \max_{\tau: \theleaves \to [q]}\abs{\Pr_{ \thetree}[\sigma(\theroot) = c\mid\sigma(\theleaves)=\tau]-\frac{1}{q}}=0.
\end{equation}
It has \emph{non-uniqueness} otherwise.
\end{definition}
Equation~\ref{eq:f4g4553} formalises the fact that the correlation between the root of a $d$-ary tree  and vertices at distance $n$ from the root vanishes as $n\rightarrow \infty$. We are now ready to state our main result.

\begin{theorem}\label{thm:uniqueness}
Let $q=3$. When $d\geq 3$, the $3$-state Potts model on  the $d$-ary tree has uniqueness for all $\beta\in[\tfrac{d-2}{d+1},1)$. When $d=2$,  the $3$-state Potts model on the binary tree has uniqueness for all $\beta\in (0,1)$.
\end{theorem}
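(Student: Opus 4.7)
My plan is to follow Jonasson's strategy of analysing the tree recursion for the marginals at the root, but carried out two steps at a time in order to circumvent the absence of standard monotonicity for the $3$-state Potts model. For an internal vertex $v$ and a pinning $\tau$ of the leaves of the subtree rooted at $v$, write $\pb_v=(p_{v,1},p_{v,2},p_{v,3})$ for the conditional distribution of the colour at $v$ given $\tau$. By Definition~\ref{def:uniqueness}, uniqueness is equivalent to showing that $\pb_{\theroot}$ converges to the uniform distribution as $n\to\infty$, uniformly over the leaf pinning. My first step is to write the standard one-step recursion expressing $\pb_v$ in terms of the marginals at the $d$ children of $v$ and, using the colour-permutation symmetry of the Potts model, to reparametrise it in a convenient form, e.g.\ sorting $p_{v,1}\ge p_{v,2}\ge p_{v,3}$ and tracking the ratios $p_{v,2}/p_{v,1}$ and $p_{v,3}/p_{v,1}$.

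Because the one-step recursion does not contract toward uniformity in the naive sense throughout the conjectured uniqueness region (for instance, the dominant colour at the parent can easily differ from the dominant colour at a child), I would compose it with itself to obtain a two-step recursion mapping the marginals at the $d^2$ grandchildren of $v$ to $\pb_v$. The crucial property to establish is that, under the hypotheses of the theorem, this two-step recursion is contractive toward the uniform distribution in a suitable norm; equivalently, a potential $\Phi(\pb)$ measuring the distance of $\pb$ from $(1/3,1/3,1/3)$ decreases by a factor strictly below $1$ at each two-step iteration. Iterating this contraction from the leaves of $\thetree$ to $\theroot$ yields the required vanishing of $\norm{\pb_{\theroot}-(1/3,1/3,1/3)}$ as $n\to\infty$ and hence uniqueness.

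To make this contraction analysis tractable, I would first prove an existence/well-definedness lemma for the recursion and then, using monotonicity arguments tailored to the two-step map, reduce the worst case over all grandchildren pinnings to a small family of extremal configurations. This is the analogue of Jonasson's reduction for colourings, where the extremal recursive inputs are concentrated on a single colour or a pair of colours. Having identified the extremal inputs, the contraction would be certified by bounding a one-variable function of $\beta$ on the interval $[\tfrac{d-2}{d+1},1)$ (or $(0,1)$ when $d=2$) and checking it is strictly less than $1$, which is essentially where the bound $\beta\ge 1-3/(d+1)$ enters.

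The main obstacle will be the extremal analysis. Unlike in the colourings case $\beta=0$, here every colour appears at every vertex with positive probability, so the combinatorial arguments of Jonasson have to be replaced by analytic estimates on a multivariate rational recursion, and the monotonicity needed to localise the worst case is much more fragile. The case $d=2$ is especially delicate, since it sits at the critical $q=d+1$ boundary for the colourings model: the contraction factor necessarily degrades to $1$ as $\beta\to 0$, and one must argue that it nevertheless remains strictly below $1$ for every fixed $\beta\in(0,1)$ and quantify the resulting gap well enough for the iteration from leaves to root to close.
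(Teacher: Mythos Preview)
Your outline is the paper's strategy in broad strokes: track how far the root marginal is from uniform (the paper uses the ratio $\gamma(q,\beta,d,n)$ of Equation~\eqref{eq:ratio}), analyse the two-step recursion, reduce the worst case to a finite family of extremal inputs, and verify a contraction inequality. But what you have written is a plan, not a proof, and it leaves open precisely the steps where the work lies. Two points deserve comment.

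First, the reduction to extremal inputs is the crux, and ``monotonicity arguments tailored to the two-step map'' does not capture what is actually needed. The paper proves (Lemma~\ref{lem:existence}) that the maximum of $h_{c_1,c_2,\beta}$ over $\triangle_\alpha^d$ is always attained at tuples whose entries, after rescaling so that $p_{c_2}=1$, lie in $\{1,\alpha\}$; the argument is a coordinate-by-coordinate linear-fractional maximisation (Lemma~\ref{lem:ratio}), not a monotonicity argument. With this in hand, the contraction condition becomes a finite list of single-variable inequalities in $\alpha$, not in $\beta$ as you write; a separate monotonicity lemma in $\beta$ (Lemma~\ref{lem:mo12no12tone}) lets one check them only at the critical $\beta_*=\max\{1-3/(d+1),0\}$.

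Second, and more substantively, your plan for $d=2$ does not match what the paper does and may not be viable as stated. You propose to show that the two-step contraction factor stays strictly below~$1$ for every fixed $\beta\in(0,1)$. The paper does \emph{not} establish this; in fact it observes that the relevant condition (Condition~\ref{cond:we}, stated at $\beta_*=0$) fails for large $\alpha$ when $(q,d)=(3,2)$, so the two-step recursion alone cannot close the argument from arbitrary leaf configurations. The paper's fix is a genuine extra ingredient: it first analyses the one-step recursion carefully (Lemma~\ref{lem:onestepuniq}, the whole of Section~\ref{sec:bounds12onestep}) to prove that $\gamma(3,\beta,d,n)\le 53/27$ for all large $n$, and only then applies the two-step contraction, now restricted to $\alpha\in(1,53/27]$ where Condition~\ref{cond:we} does hold (Lemma~\ref{lem:twosteptoprove}). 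This one-step bootstrap is not an optimisation but a necessary step, and it accounts for a large share of the technical effort; your plan omits it entirely.
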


Theorem~\ref{thm:uniqueness}  precisely pinpoints the uniqueness threshold for the 3-state Potts model since it is known that the  model is in non-uniqueness in the complementary regime. When $d\geq 3$, non-uniqueness for $\beta<\tfrac{d-2}{d+1}$ follows from the existence of multiple semi-translation-invariant Gibbs measures\footnote{\label{foot:referf}Roughly, in a semi-translation-invariant Gibbs measure,  even-layered vertices have the same marginals and  odd-layered vertices  have the same marginals. By studying the number of fixpoints of a particular recursion, one can establish whether there exist multiple such measures. See \cite[Theorem 2.3 \& Theorem 3.2]{BW02} for details on this connection in the context of the colourings model and \cite[Corollary 7.5]{GSV} in the context of the Potts model. We also remark that such measures on the tree have been studied in the statistical mechanics literature as well, for example Peruggi, di Liberto, and Monroy \cite{Peruggi1,Peruggi2} give a description of the phase diagrams of the models in non-uniqueness. We refer the reader to the book \cite{Rozikov} for a detailed treatment of Gibbs measures on the infinite tree.}. When $d=2$,  the 3-state Potts model for $\beta=0$ corresponds to 
the 3-colouring model, and non-uniqueness holds in this case because of the existence of so-called \emph{frozen} 3-colourings; in these colourings, the configuration on the leaves determines uniquely the colour of the root, see \cite{BW02}.

Interestingly, our result and proof technique for the 3-state Potts model suggests that the only obstruction to uniqueness 
in the 3-colouring model on the binary tree are  the frozen colouring configurations. It is reasonable to  believe that this critical behaviour 
in the colourings model
happens more generally whenever $q=d+1$. 
For comparison, note that
the colourings model has non-uniqueness when
$q<d+1$  (\cite{BW02}, see also   footnote \ref{foot:referf})
and it has uniqueness 
when   $q>d+1$  \cite{Jonasson02}.

This critical behaviour for the colourings model when $q=d+1$  arises in the context of the Potts model as well, and, as we shall see in the next section, it causes complications in the proof of Theorem~\ref{thm:uniqueness}. Nevertheless, we formulate a general condition for all non-critical cases ($q\neq d+1$) which will be sufficient to establish the uniqueness threshold.
We conjecture that the condition holds whenever $q\neq d+1$ (see Conjecture~\ref{conj}). The condition is tailored to the Potts model on a tree, unlike other known sufficient criteria for uniqueness (see for example \cite{Dobrushin,Weitz}). Our condition reduces to single-variable inequalities and can be verified fairly easily for small values of $q,d$. 
Since Theorem~\ref{thm:uniqueness}
includes the critical case $(q,d)=(3,2)$, our proof
of the theorem  necessarily goes a slightly different way (as we explain below),  so
in  Section~\ref{sec:approach}, we give a more detailed outline of our proof approach.

\subsection{Application}

We have already discussed some results in the literature where the uniqueness of spin-models on trees
enables fast algorithms for sampling from these models on bounded-degree graphs and sparse random graphs.
It turns out that Theorem~\ref{thm:uniqueness} can also be used in this way.
In particular, Blanca et al.~have obtained the following theorem.

\begin{theorem} 
[Theorem 8 of  \cite{samplingpaper}]
Let $q\geq 3$, $d\geq2$,   and $\beta\in(0,1)$ be in the uniqueness regime of the $d$-ary tree with $\beta\neq (d+1-q)/(d+1)$. 
Then, there exists a constant $\delta>0$ such that, for all sufficiently large $n$, the following holds   with probability $1-o(1)$ over the choice of a random $(d+1)$-regular 
graph $G=(V,E)$ with $n$ vertices.

There is a polynomial-time algorithm which, 
given the graph~$G$ as input, outputs a random  assignment $\sigma\colon V\rightarrow [q]$ 
from a distribution which is within total variation distance $O(1/n^{\delta})$ from the Gibbs distribution of the Potts model
on~$G$ with parameter~$\beta$.  
\end{theorem}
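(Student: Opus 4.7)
The plan is to leverage the tree-uniqueness statement of Theorem~\ref{thm:uniqueness} together with the standard fact that a random $(d+1)$-regular graph $G$ on $n$ vertices is locally tree-like: with high probability, for $r=c\log n$ with $c$ sufficiently small, all but $o(n)$ vertices have an $r$-neighbourhood isomorphic to a rooted $d$-ary tree, and the few ``bad'' vertices lying close to short cycles are sparsely distributed. The hypothesis puts $\beta$ strictly inside the tree-uniqueness regime and away from the critical value $(d+1-q)/(d+1)$, and this strict separation is what should allow uniqueness to be upgraded to a quantitative decay-of-correlations sufficient for algorithmic use.

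The first step is to strengthen Theorem~\ref{thm:uniqueness} into a \emph{strong spatial mixing} statement on the $d$-ary tree with an explicit polynomial (or better) rate: the influence on the root of a worst-case boundary at depth $n$ decays at least like $n^{-\delta'}$ for some $\delta'>0$ depending on $(q,d,\beta)$. Being strictly in the non-critical uniqueness regime should yield a contracting two-step tree recursion in an appropriate metric (this is precisely the object manipulated in the proof of Theorem~\ref{thm:uniqueness}); differentiating through the fixed-point equation one expects to obtain a uniform contraction constant strictly less than $1$ away from criticality, and hence exponential decay along paths of the tree. The second step is to design and analyse a sampling algorithm on $G$. A natural route is a Markov chain on the Potts model or on its random-cluster representation (via the Edwards--Sokal coupling), analysed block-by-block on the tree-like neighbourhoods, with rapid mixing established through a path-coupling or block-dynamics argument that uses the tree SSM bound as its per-block decay input. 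Alternatively, one can sample vertex-by-vertex by computing marginals via the tree recursion truncated at depth $r$, and then sampling sequentially. In either route the SSM bound translates into a per-step error of roughly $n^{-\delta'}$, and a union bound over the $O(n)$ sampling steps, together with careful control of the $o(n)$ bad vertices, yields a sample within total variation distance $O(1/n^{\delta})$ for some $\delta\in(0,\delta')$.

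The hardest step will be the transfer from the tree to the graph. The long cycles of $G$ feed correlations back in a way that has no analogue on the infinite $d$-ary tree, and one must argue that the errors incurred by replacing $G$-marginals with tree-marginals do not amplify through the $n$ update steps. A standard workaround is to condition on the configuration on a suitable separating set whose size is controlled by the local girth of $G$, and then reconcile the separator size with the decay rate $\delta'$. A further subtlety is that for $q\geq 3$ the Potts model lacks the monotonicities used in the Ising case, so both the SSM argument and the coupling argument need to be robust to this; working in the random-cluster representation restores enough monotonicity to be useful. Finally, since $\delta'\to 0$ as $\beta\to(d+1-q)/(d+1)$, the polynomial rate $\delta$ also degrades there, which is precisely why the hypothesis excludes the critical value.
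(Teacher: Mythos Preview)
This theorem is not proved in the paper at all: it is quoted verbatim as ``Theorem 8 of \cite{samplingpaper}'' and used as a black box to derive the corollary that follows. The paper's contribution is Theorem~\ref{thm:uniqueness} (uniqueness on the tree for $q=3$), which is then plugged into this external algorithmic result to obtain a sampling algorithm on random regular graphs. There is therefore no proof in the paper to compare your proposal against.

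Your sketch is a plausible high-level outline of the kind of argument one expects in \cite{samplingpaper}, but two points are worth flagging. First, you repeatedly invoke Theorem~\ref{thm:uniqueness} as the source of the tree decay input, yet that theorem is only for $q=3$, whereas the statement you are trying to prove is for general $q\ge 3$; the hypothesis ``$\beta$ in the uniqueness regime'' is assumed, not derived, so you should not be appealing to Theorem~\ref{thm:uniqueness} at all here. Second, what you have written is a plan rather than a proof: the substantive work --- upgrading uniqueness to a quantitative spatial-mixing rate away from criticality, handling the non-monotonicity of the multi-state Potts model (your suggestion to pass to the random-cluster representation is indeed the route taken in \cite{samplingpaper}), and controlling the feedback from long cycles in the random regular graph --- is exactly the content of \cite{samplingpaper} and is not something that can be filled in within the scope of the present paper.
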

  
Thus, Theorem~\ref{thm:uniqueness} has the following corollary.

\begin{corollary}  
Let $q=3$.
Suppose either $d=2$ and $\beta \in (0,1)$
or $d\geq 3$ and 
$\beta\in(\tfrac{d-2}{d+1},1)$.
In either case,  there exists a constant $\delta>0$ such that, for all sufficiently large $n$, the following holds   with probability $1-o(1)$ over the choice of a random $(d+1)$-regular graph $G=(V,E)$ with $n$ vertices.

There is a polynomial-time algorithm which, 
given the graph~$G$ as input, outputs a random  assignment $\sigma\colon V\rightarrow [q]$ 
from a distribution which is within total variation distance $O(1/n^{\delta})$ from the Gibbs distribution of the Potts model
on~$G$ with parameter~$\beta$.  
\end{corollary}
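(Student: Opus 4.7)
The plan is to derive the corollary as a direct consequence of combining Theorem~\ref{thm:uniqueness} (the main uniqueness result of this paper) with Theorem 8 of \cite{samplingpaper} (which is quoted verbatim just above). The conclusions of the two theorems---the existence, with probability $1-o(1)$ over the random $(d+1)$-regular graph $G$, of a polynomial-time algorithm that outputs a sample within total variation distance $O(1/n^\delta)$ from the Potts Gibbs distribution---are identical in form, so the proof reduces to checking that the hypotheses of Theorem 8 of \cite{samplingpaper} are met for every $(\beta,d)$ pair in the corollary's stated range.

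I would first unpack Theorem 8 of \cite{samplingpaper} specialised to $q=3$. That theorem requires $q\geq 3$, $d\geq 2$, and $\beta\in(0,1)$ lying in the uniqueness regime of the $d$-ary tree, together with the extra restriction $\beta\neq(d+1-q)/(d+1)$. For $q=3$ the forbidden value becomes $\beta=(d-2)/(d+1)$, and I would confirm that this is precisely the endpoint excluded by the corollary in both cases: the open interval $(\tfrac{d-2}{d+1},1)$ excludes it when $d\geq 3$, and the value $0$ already lies outside $(0,1)$ when $d=2$. Next, I would invoke Theorem~\ref{thm:uniqueness} to certify tree uniqueness. When $d=2$ and $\beta\in(0,1)$, Theorem~\ref{thm:uniqueness} directly gives uniqueness on the binary tree; when $d\geq 3$, it gives uniqueness on the closed--open interval $[\tfrac{d-2}{d+1},1)$, which in particular contains the open interval $(\tfrac{d-2}{d+1},1)$ stated in the corollary. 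Hence all hypotheses of Theorem 8 of \cite{samplingpaper} hold.

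Because the argument is merely the composition of two off-the-shelf results, there is no substantial technical obstacle. The only step demanding care is the endpoint bookkeeping at $\beta=(d-2)/(d+1)$: one must verify that both the corollary's hypothesis and Theorem 8 of \cite{samplingpaper} genuinely forbid this value, so that their composition is valid, and that the open condition $\beta\in(0,1)$ is never crossed at $\beta=0$ in the $d=2$ case. Once this bookkeeping is in place, Theorem 8 applies throughout the corollary's parameter range, and its conclusion yields exactly the sampling guarantee claimed.
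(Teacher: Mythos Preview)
Your proposal is correct and matches the paper's approach exactly: the corollary is stated immediately after quoting Theorem~8 of \cite{samplingpaper} with the line ``Thus, Theorem~\ref{thm:uniqueness} has the following corollary,'' and no further proof is given. Your endpoint bookkeeping (verifying that $\beta=(d-2)/(d+1)$ is excluded in both cases) is more explicit than what the paper writes, but it is precisely the check needed to combine the two results.
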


We next discuss our approach for proving Theorem~\ref{thm:uniqueness}.

\section{Proof Approach}\label{sec:approach}
In this section, we outline the key steps of our proof approach for proving uniqueness for the antiferromagnetic Potts model on the tree. As mentioned in the Introduction, the  model does not enjoy the monotonicity properties which are present in two-state systems (or the ferromagnetic case)\footnote{All two-state systems are either monotone or antimonotone on the tree, and therefore the root is most sensitive to boundary configurations where all the leaves have the same state. Uniqueness/non-uniqueness is therefore determined by examining whether the marginal at the root under these two extremal configurations coincide. Similarly, for the ferromagnetic Potts model, one can show that the extremal configurations on the leaves are those where all the leaves have the same colour.}, so we have to establish more elaborate criteria to resolve the uniqueness threshold. 

We first review Jonasson's approach for colourings \cite{Jonasson02}. One of the key insights there is to consider the ratio of the probabilities that the root takes two distinct colours and show that this converges to 1 as the height of the tree grows large. Jonasson analysed first a one-step recursion to establish bounds on the marginals of the root and used those to obtain upper bounds on the ratio. Then,  he bootstrapped these bounds by analysing a more complicated two-step recursion and showed that the ratio converges to 1.

Our approach refines Jonasson's approach in the following way; we jump into the two-step recursion and analyse the associated optimisation problem by giving an explicit description of the maximisers for general $q$ and $d$ (see Lemma~\ref{lem:existence}).  It turns out that the maximisers change as the value of the ratio gets closer to 1, so to prove the desired convergence to 1, we need to account for the roughly $q^d$ possibilities for the maximiser. This yields an analytic  condition that can be checked easily for small values of $q,d$ and thus establish uniqueness. In the context of Theorem~\ref{thm:uniqueness} where $q=3$, most of the technical work is to deal analytically with the potentially large values of the arity $d$ of the tree. 

A further complication arises in the case $q=3$ and $d=2$ (and more generally $q=d+1$), since this incorporates the critical behaviour for colourings described in Section~\ref{sec:results}. This manifests itself in our proof by breaking the (global) validity of our uniqueness condition. We therefore have to use an analogue of Jonasson's approach to account for this case by first using the one-step recursion to argue that the ratio gets sufficiently close to 1 and then finishing the argument with the two-step recursion. 

Our proofs are computer-assisted but rigorous --- namely we use the (rigorous) {\tt Resolve} function of Mathematica
to check certain inequalities. We also provide Mathematica code to assist the
reader with tedious-but-straightforward calculations (such as differentiating complicated functions).
The Mathematica code is in 
Section~\ref{sec:code}.

\subsection{Ratio for proving Theorem~\ref{thm:uniqueness}}

For $\beta\in (0,1)$ and $n>0$, define the following ratio.
\begin{equation}
\label{eq:ratio}
\theratio{q}{\beta}{d}{n}
= 
\max_{\stackrel{ \tau: \theleaves \to [q]}{c_1,c_2\in[q]}} \frac
{\Pr_{\thetree}[\sigma(\theroot) = c_1\mid\sigma(\theleaves)=\tau]}
{\Pr_{ \thetree}[\sigma(\theroot) = c_2\mid\sigma(\theleaves)=\tau]}.
\end{equation}
Note that if $\beta > 0$ and $n>0$, then for
every $ \tau\colon \theleaves \to [q]$ and every $c\in[q]$,
$ \Pr_{ \thetree}[\sigma(\theroot) = c\mid\sigma(\theleaves)=\tau]>0$.
So  $\theratio{q}{\beta}{d}{n}$ is well-defined.

Suppose, for fixed $q$, $\beta$ and $d$,
that
 $\lim_{n\to\infty} \gamma(q,\beta,d,n) = 1$.
This implies that the limsup in the uniqueness definition (Definition~\ref{def:uniqueness}) is zero.
Thus,  Theorem~\ref{thm:uniqueness} 
is an immediate consequence of the following theorem.

\newcommand{\statethmmain}{
    If $\beta\in(0,1)$ 
    then  $\lim_{n\to\infty}  \theratio{3}{\beta}{2}{n} = 1$.
    If $d\geq 3$
    and  $1-3/(d+1)\leq \beta < 1$ then
    $\lim_{n\to\infty} \theratio{3}{\beta}{d}{n}
    = 1$.
}
\begin{theorem}\label{thm:main}
    \statethmmain
\end{theorem}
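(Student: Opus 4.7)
The plan is to carry out a tree recursion on the ratio $\gamma(3,\beta,d,n)$, but to replace the usual one-step contraction with a more delicate two-step optimisation, exactly as the paper's outline suggests.

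First, I would set up the recursion. Writing $\child{v}{1},\ldots,\child{v}{d}$ for the children of $\theroot$ and $Z^{(i)}_c$ for the partition function of the $i$-th subtree conditioned on its leaf configuration and on $\child{v}{i}$ having colour $c$, the unnormalised Gibbs weight that $\theroot$ takes colour $c$ factorises as $W_c = \prod_{i=1}^d \bigl(\beta Z^{(i)}_c + \sum_{c'\ne c} Z^{(i)}_{c'}\bigr)$. Hence $W_{c_1}/W_{c_2}$ is an explicit rational function of the normalised child-vectors, and $\theratio{q}{\beta}{d}{n}$ is the supremum of this function over all admissible child-vectors, i.e.\ those arising from a subtree of height $n-1$ with some leaf configuration. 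In particular, each child-vector has pairwise ratios bounded by $\theratio{q}{\beta}{d}{n-1}$, giving a one-step recursion $R_n \leq H(R_{n-1})$ with $R_n := \theratio{q}{\beta}{d}{n}$.

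Next I would unfold this recursion once more to obtain a two-step map $R_n \leq F_{\beta,d}(R_{n-2})$, defined as the maximum of the root ratio over all depth-$2$ boundary data whose individual sub-subtree ratios are bounded by $R_{n-2}$. The crucial structural input (which I would prove separately) is that the maximisers of this optimisation lie in a small, explicit family of extreme profiles, parameterised by which colours are saturated at the upper bound and which are balanced at ratio~$1$ in each of the $d$ subtrees at depth~$1$. This finitises the optimisation into a case analysis depending on $(d,\beta)$ and reduces the uniqueness claim to proving the single-variable inequality $F_{\beta,d}(R) < R$ for every $R > 1$ in the relevant regime.

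For $q=3$ and $d\geq 3$, each case yields an explicit algebraic inequality in $R$ whose coefficients are polynomial in $\beta$ and whose $R$-degree depends on $d$. Assuming $\beta \geq 1-3/(d+1)$, I would verify the inequality by evaluating both sides at $R=1$ (where equality holds), checking that the derivative of $R - F_{\beta,d}(R)$ at $R=1$ is strictly positive for $\beta > (d-2)/(d+1)$ and zero at the threshold, and then ruling out further crossings in $R>1$. The threshold $\beta = (d-2)/(d+1)$ should emerge precisely from the sign change of this derivative, matching the non-uniqueness boundary. The main obstacle here is to perform the global check uniformly across all arities $d\geq 3$: since $d$ enters as an exponent in $F_{\beta,d}(R)$, I expect to cover small $d$ (say $d \leq d_0$) by a direct Mathematica \texttt{Resolve} verification and large $d$ by an asymptotic monotonicity argument in $d$, reducing the $d$-dependent inequality to a simpler limit inequality.

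Finally, the case $q = d+1 = 3$, i.e.\ $d=2$, needs the two-phase strategy described in the outline: here the two-step contraction $F_{\beta,2}(R) < R$ degenerates when $R$ is large, reflecting the frozen $3$-colouring obstruction at $\beta=0$, so $F_{\beta,2}$ alone cannot drive an arbitrary initial $R_n$ down to $1$. To handle this, I would first invoke the one-step recursion $R_n \leq H(R_{n-1})$ to show that, for every $\beta \in (0,1)$, iteration drives $R_n$ below some explicit $R^{*}(\beta) > 1$ after $O(1)$ steps; then $F_{\beta,2}$ recovers strict contraction on the restricted range $(1,R^{*}(\beta)]$ and can be iterated to push $R_n\to 1$. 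Composing the two phases yields $\lim_{n\to\infty} R_n = 1$, and this two-phase argument --- a global one-step initial reduction followed by a local two-step fine contraction --- is the analogue of Jonasson's argument in the colourings case and is what lets us reach arbitrarily small $\beta$ when $d=2$.
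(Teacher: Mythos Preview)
Your overall framework matches the paper's: the two-step recursion, the reduction to extremal maximisers (the paper's Lemma~\ref{lem:existence}), and a two-phase argument where the one-step recursion is used to get into a range where the two-step contraction applies. The genuine gap is in how you partition the cases.

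You propose, for $d\geq 3$, to establish the two-step contraction $F_{\beta,d}(R) < R$ for \emph{every} $R > 1$ when $\beta\geq (d-2)/(d+1)$, thereby avoiding the one-step phase altogether in that regime. But this global inequality is precisely the $q=3$ instance of the paper's Conjecture~\ref{conj}, which the authors explicitly leave open: they verify it only for isolated pairs such as $(q,d)=(3,3)$ by exhaustive Mathematica enumeration (Lemma~\ref{lem:examplecond}), and state that they have not been able to prove it uniformly in $d$. Your sketch --- derivative check at $R=1$, rule out further crossings, small $d$ by \texttt{Resolve} and large $d$ by an asymptotic monotonicity argument --- is the natural line of attack, but it is exactly where the difficulty lies, and your proposal gives no indication of how the ``ruling out further crossings'' step would go for large $R$ and large $d$ simultaneously.

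The paper's actual proof of Theorem~\ref{thm:main} uses the two-phase strategy for \emph{all} $d\geq 2$, not only for $d=2$. The one-step recursion (Lemma~\ref{lem:onestepuniq}, which occupies all of Section~\ref{sec:bounds12onestep}) is first used to bring $\gamma(3,\beta,d,n)$ below the universal bound $53/27$; this requires constructing explicit upper/lower envelope functions $f_u,f_\ell$ for the one-step map, proving their monotonicity properties, and then treating $3\leq d\leq 22$ by numerical iteration and $d\geq 23$ by an analytic study of the fixed points of those envelopes. Only afterwards is the two-step inequality verified, and only on the restricted range $\alpha\in(1,53/27]$ (Lemma~\ref{lem:twosteptoprove}, itself split into $d\leq 22$ and $d\geq 23$). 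So the one-step phase you reserve for $d=2$ is in fact where the bulk of the technical work sits for general $d$, and your plan does not budget for any of it.
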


In  Section~\ref{sec:conclusion}  we 
obtain Theorem~\ref{thm:uniqueness}
by proving Theorem~\ref{thm:main}.

\subsection{The two-step recursion}\label{sec:introtwostep}
In this section, we formulate an appropriate recursion on the infinite $d$-ary tree, which will be one of our main tools for tracking the ratio $\gamma(q,\beta,d,n)$.

We denote the set of $q$-dimensional probability vectors by $\triangle$, i.e., 
$$\triangle = \{(p_1, p_2, \ldots, p_q)\colon0 \leq  p_1, p_2, \ldots, p_q \leq 1\, \land\, p_1+p_2+\cdots+p_q = 1\}.$$
Suppose that $c_1$ and $c_2$ are two colours in $[q]$.
We define two functions $g_{c_1,c_2,\beta}$ and $h_{c_1,c_2,\beta}$, indexed by these colours.
The argument of each of these functions
is a tuple $(\child{\pb}{1}, \ldots, \child{\pb}{d})$
where, for each $j\in[d]$,   $\child{\pb}{j} \in \triangle$. 
The functions are defined as follows.
\begin{equation}\label{eq:gh12def}
\begin{aligned}
g_{c_1,c_2,\beta}(\child{\pb}{1},\ldots,\child{\pb}{d})
&:=\prod^d_{k=1}\bigg(1-\frac{(1-\beta) \big(\child{p}{k}_{c_1}-\child{p}{k}_{c_2}\big)}{\beta \child{p}{k}_{c_2}+\sum_{c\neq c_2}\child{p}{k}_{c}}\bigg).\\ 
h_{c_1,c_2,\beta}(\child{\pb}{1}, \ldots, \child{\pb}{d})
&:=1+\frac{(1-\beta)\big(1-g_{c_1,c_2,\beta}(\child{\pb}{1}, \ldots, \child{\pb}{d})\big)}{\beta +\sum_{c\neq c_2}g_{c,c_2,\beta}(\child{\pb}{1}, \ldots, \child{\pb}{d})}.
\end{aligned}
\end{equation}
Note that the functions $g_{c_1,c_2,\beta}$ and $h_{c_1,c_2,\beta}$ are well-defined when $\beta\in (0,1)$ and all of $\child{\pb}{1}, \ldots, \child{\pb}{d}$ have non-negative entries; they are also well-defined when $\beta=0$ and all of $\child{\pb}{1}, \ldots, \child{\pb}{d}$ have  positive entries.

One feature of the functions $g_{c_1,c_2,\beta}$ and $h_{c_1,c_2,\beta}$ which will be important shortly is that they are scale-free. This means that we can multiply each of their arguments by some constant  without changing their value, i.e., for scalars $t_1,\hdots,t_d>0$ it holds that 
\begin{equation}\label{eq:scalefree23}
\begin{aligned}
g_{c_1,c_2,\beta}(t_1\child{\pb}{1},\ldots,t_d\child{\pb}{d})&=g_{c_1,c_2,\beta}(\child{\pb}{1},\ldots,\child{\pb}{d}),\\ 
h_{c_1,c_2,\beta}(t_1\child{\pb}{1},\ldots,t_d\child{\pb}{d})&=h_{c_1,c_2,\beta}(\child{\pb}{1},\ldots,\child{\pb}{d}).
\end{aligned}
\end{equation}
The following proposition, proved in Section~\ref{sec:proverec}, 
 shows the relevance of these functions for analysing the tree.
\newcommand{\statelemtwostep}{
Suppose $q\geq 3$, $ d\geq 2$ and $\beta\in (0,1)$.
For an integer $n\geq2$, let $T$ be the tree $\thetree$  with root $z=v_{d,n}$ and leaves $\Lambda=\theleaves$. Let $\tau \colon \Lambda \to [q]$ be an arbitrary configuration.

Let $z_1,\ldots,z_d$ be the children of $z$ in~$T$ and, for $i\in [d]$, let $\{z_{i,j}\}_{j\in[d]}$  be the children of $z_i$. Denote by $T_{i,j}$ the subtree of $T$ rooted at $z_{i,j}$ and by $\Lambda_{i,j}$  the set of leaves of $T_{i,j}$.  For  $i\in[d]$, $j\in[d]$ and $c\in[q]$, let $r^{(i,j)}_c:=
\Pr_{T_{i,j}}[\sigma(z_{i,j})=c \mid \sigma(\Lambda_{i,j})= \tau(\Lambda_{i,j})]$, and denote by $\rb^{(i,j)}$ the vector $\rb^{(i,j)}=\big(r^{(i,j)}_1, \ldots, r^{(i,j)}_q\big)$.
Then for any colours $c_1\in[q]$ and $c_2\in[q]$  we have 
\[
\frac 
{\Pr_{\thetree}[\sigma(z) = c_1\mid\sigma( \Lambda)=\tau]}
{\Pr_{ \thetree}[\sigma(z) = c_2\mid\sigma( \Lambda)=\tau]}=\prod^{d}_{k=1}h_{c_1,c_2,\beta}\big(\rb^{(k,1)}, \ldots, \rb^{(k,d)}\big).
\]
}
\begin{proposition}\label{lem:twostep}
\statelemtwostep
\end{proposition}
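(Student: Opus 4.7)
My plan is to derive the two-step recursion by applying a one-step tree recursion twice and then algebraically rewriting the result into the form of $h_{c_1,c_2,\beta}$.

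\textbf{Step 1: The one-step recursion.} I first establish that for any subtree with root $v$, children $v_1,\ldots,v_d$ rooting subtrees $T'_i$ with leaf configuration $\tau$, and marginal probability vectors $\pb^{(i)}=(\Pr_{T'_i}[\sigma(v_i)=c\mid\tau])_{c\in[q]}$, the following one-step identity holds:
\[
\frac{\Pr[\sigma(v)=c_1\mid\tau]}{\Pr[\sigma(v)=c_2\mid\tau]} = g_{c_1,c_2,\beta}\!\left(\pb^{(1)},\ldots,\pb^{(d)}\right).
\]
This follows because, conditioned on $\sigma(v)=c$, the subtrees are independent, so $\Pr[\sigma(v)=c\mid\tau]$ is proportional to $\prod_i\bigl(\beta\, p^{(i)}_c + \sum_{c'\ne c} p^{(i)}_{c'}\bigr)$. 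Taking the ratio of the $c_1$ and $c_2$ expressions and noting that the numerator and denominator of the $i$-th factor differ by $-(1-\beta)\bigl(p^{(i)}_{c_1}-p^{(i)}_{c_2}\bigr)$ yields exactly the definition of $g_{c_1,c_2,\beta}$ in \eqref{eq:gh12def}.

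\textbf{Step 2: Apply the recursion twice.} Let $\pb^{(k)}$ denote the marginal vector of $z_k$ in $T$ conditioned on $\sigma(\Lambda)=\tau$. Applying Step~1 at the root $z$ with the $\pb^{(k)}$ gives
\[
\frac{\Pr_T[\sigma(z)=c_1\mid\tau]}{\Pr_T[\sigma(z)=c_2\mid\tau]} = \prod_{k=1}^d \frac{\beta\, p^{(k)}_{c_1}+\sum_{c'\ne c_1} p^{(k)}_{c'}}{\beta\, p^{(k)}_{c_2}+\sum_{c'\ne c_2} p^{(k)}_{c'}}.
\]
Applying Step~1 at each $z_k$ (whose grandchildren subtrees $T_{k,j}$ yield the marginals $\rb^{(k,j)}$) gives, for every pair of colours $c,c'$, $p^{(k)}_c/p^{(k)}_{c'}=g_{c,c',\beta}(\rb^{(k,1)},\ldots,\rb^{(k,d)})$.

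\textbf{Step 3: Reduce to $h_{c_1,c_2,\beta}$.} For each $k$, divide the numerator and denominator of the $k$-th factor by $p^{(k)}_{c_2}$ and use the identities from Step~2 (abbreviating $g_{c,c_2}:=g_{c,c_2,\beta}(\rb^{(k,1)},\ldots,\rb^{(k,d)})$ and noting $g_{c_2,c_2}=1$). A direct computation shows
\[
\bigl(\beta g_{c_1,c_2}+\!\!\sum_{c'\ne c_1}\!\! g_{c',c_2}\bigr)-\bigl(\beta+\!\!\sum_{c'\ne c_2}\!\! g_{c',c_2}\bigr)=(1-\beta)\bigl(1-g_{c_1,c_2}\bigr),
\]
since the two sums differ only in swapping the term $g_{c_1,c_2}$ for $g_{c_2,c_2}=1$. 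Dividing this difference by the denominator and adding $1$ yields exactly $h_{c_1,c_2,\beta}(\rb^{(k,1)},\ldots,\rb^{(k,d)})$ as defined in \eqref{eq:gh12def}. Taking the product over $k\in[d]$ finishes the proof.

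The only step with real content is the one-step recursion, which is standard; the remainder is bookkeeping. The main thing to be careful about is tracking which sum ranges over $c'\ne c_1$ versus $c'\ne c_2$ in the algebraic simplification of Step~3, and using scale-freeness \eqref{eq:scalefree23} implicitly (we may work with unnormalised quantities $p^{(k)}_c/p^{(k)}_{c_2}$ because both $g$ and $h$ are invariant under positive rescaling of each argument vector).
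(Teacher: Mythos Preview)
Your proof is correct and follows essentially the same approach as the paper: apply the one-step recursion at each child $z_k$ to express $p^{(k)}_{c}/p^{(k)}_{c_2}$ as $g_{c,c_2,\beta}(\rb^{(k,1)},\ldots,\rb^{(k,d)})$, then apply it again at the root and rewrite each factor as $h_{c_1,c_2,\beta}$. One small wording issue: in Step~2 you define $\pb^{(k)}$ as ``the marginal vector of $z_k$ in $T$ conditioned on $\sigma(\Lambda)=\tau$,'' but what you actually need (and what your Step~1 setup uses) is the marginal of $z_k$ in the \emph{subtree} rooted at $z_k$, conditioned on the leaves of that subtree only---the full-tree marginal of $z_k$ is a different quantity.
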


We refer to the recursion introduced in Proposition~\ref{lem:twostep} as the
\emph{two-step recursion}. The two-step recursion will allow us to iteratively bootstrap our bounds on the ratio $\theratio{q}{\beta}{d}{n}$. 
To formalise this, we will use the following definition.
\begin{definition}\label{def:max}
Suppose $q\geq 3$, $ d\geq 2$ and $\beta\in [0,1)$. 
For any $\alpha > 1$, 
let 
$$\triangle_\alpha = \{(p_1,\ldots,p_q)\in \triangle \colon \max_{i\in[q]} p_i \leq \alpha \min_{j\in[q]} p_j \}.$$ (Note that every vector in $\triangle_\alpha$ has strictly positive entries.) For colours $c_1\in[q]$ and $c_2\in[q]$, let
\begin{equation}\label{eq:erff5g54r323234}
M_{\alpha,c_1,c_2,\beta}={\max}_{(\child{\pb}{1},\hdots,\child{\pb}{d})\in\triangle_{\alpha}^d}\, h_{c_1,c_2,\beta}\big(\child{\pb}{1},\hdots,\child{\pb}{d}\big).
\end{equation}
Since  $\triangle_{\alpha}$ is compact and $h_{c_1,c_2,\beta} $ is continuous, the maximisation in \eqref{eq:erff5g54r323234} is well-defined.
\end{definition} 
Definition~\ref{def:max} ensures that
 $\triangle_\alpha$ is the subset of $\triangle$ induced by probability vectors whose entries are within a factor of 
 $\alpha>1$ of each other. $M_{\alpha,c_1,c_2,\beta}$ is the maximum of the two-step recursion function $h_{c_1,c_2,\beta}$ when each of its arguments are from $\triangle_\alpha$. The following proposition gives a preliminary condition 
for establishing uniqueness when $\beta\in(0,1)$ --- it is proved in Section~\ref{sec:conclusion}.
\newcommand{\statelemsteptwouniq}{
Let $q\geq 3$, $d\geq 2$ and $\beta\in (0,1)$. Suppose that for all $\alpha>1$ and any colours $c_1,c_2\in [q]$, it holds that 
\begin{equation*}
M_{\alpha, c_1,c_2,\beta}<\alpha^{1/d}.
\end{equation*}
Then, it holds that $\theratio{q}{\beta}{d}{n}\rightarrow 1$ as $n\rightarrow \infty$, i.e., the $q$-state Potts model with parameter $\beta$ has uniqueness on the $d$-ary tree. 
}
\begin{proposition}\label{lem:steptwouniq}
\statelemsteptwouniq 
\end{proposition}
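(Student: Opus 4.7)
The plan is to iterate the two-step recursion from Proposition~\ref{lem:twostep} to control the sequence $\alpha_n := \theratio{q}{\beta}{d}{n}$ and show it converges to~$1$. Two easy facts get the bookkeeping started: $\alpha_n \geq 1$ (take $c_1=c_2$), and $\alpha_n$ is finite for every $n \geq 1$ since $\beta>0$ forces all root marginals to be strictly positive.

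The first step is to establish the recursive inequality
\[
    \alpha_n \;\leq\; \max_{c_1,c_2 \in [q]} \bigl(M_{\alpha_{n-2},c_1,c_2,\beta}\bigr)^d \qquad (n \geq 3).
\]
For any boundary $\tau\colon \theleaves \to [q]$, the grandchildren of the root of $\thetree$ are roots of subtrees of height $n-2 \geq 1$. Each grandchild marginal vector $\rb^{(i,j)}$ has strictly positive entries and, by the very definition of $\alpha_{n-2}$ as a maximum over boundary conditions, its maximal and minimal entries lie within a factor $\alpha_{n-2}$, so $\rb^{(i,j)} \in \triangle_{\alpha_{n-2}}$. Proposition~\ref{lem:twostep} combined with the definition of $M_{\alpha_{n-2},c_1,c_2,\beta}$ in~\eqref{eq:erff5g54r323234} then gives the claimed bound after maximising over $\tau$, $c_1$, $c_2$. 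Plugging in the hypothesis, whenever $\alpha_{n-2}>1$ we have $(M_{\alpha_{n-2},c_1,c_2,\beta})^d < \alpha_{n-2}$ for every colour pair, and so $\alpha_n < \alpha_{n-2}$. Consequently the subsequences $(\alpha_{2m})_{m \geq 1}$ and $(\alpha_{2m+1})_{m \geq 0}$ are each strictly decreasing (as long as they exceed $1$) and bounded below by $1$, hence each converges to some limit $L \geq 1$.

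To rule out $L>1$ I would argue by contradiction using continuity of $\alpha \mapsto M_{\alpha,c_1,c_2,\beta}$ on $(1,\infty)$. Granting continuity, passing to the limit along the relevant subsequence in the displayed recursion yields $L \leq \max_{c_1,c_2}(M_{L,c_1,c_2,\beta})^d$, contradicting the hypothesis applied at $\alpha = L$. I expect the main technical point to be justifying this continuity; I would do so by a standard compactness argument. Since $\beta \in (0,1)$, the computation $1 - \frac{(1-\beta)(p_{c_1}-p_{c_2})}{\beta p_{c_2} + \sum_{c \neq c_2} p_c} = \frac{1 - (1-\beta)p_{c_1}}{1 - (1-\beta)p_{c_2}}$ shows that the denominators in $g_{c_1,c_2,\beta}$ and $h_{c_1,c_2,\beta}$ are bounded away from zero on $\triangle_\alpha^d$, so $h_{c_1,c_2,\beta}$ is continuous and bounded there; moreover $\triangle_\alpha$ is a nested compact family varying continuously in $\alpha$ in the Hausdorff sense. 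Hence for $\alpha_n \searrow L$ any sequence of maximisers $\pb^{\ast}_{\alpha_n} \in \triangle_{\alpha_n}^d$ has a convergent subsequence with limit in $\triangle_L^d$; continuity of $h_{c_1,c_2,\beta}$ and the trivial inclusion $\triangle_L \subset \triangle_{\alpha_n}$ together force this limit to attain $M_{L,c_1,c_2,\beta}$, giving $M_{\alpha_n,c_1,c_2,\beta} \to M_{L,c_1,c_2,\beta}$ and completing the argument.
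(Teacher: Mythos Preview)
Your proposal is correct and follows essentially the same route as the paper: derive $\alpha_n \leq \max_{c_1,c_2}(M_{\alpha_{n-2},c_1,c_2,\beta})^d$ from Proposition~\ref{lem:twostep}, deduce that the even and odd subsequences of $(\alpha_n)$ are nonincreasing and converge to some $L\geq 1$, and then rule out $L>1$ by a compactness argument on maximisers in $\triangle_{\alpha_n}^d$ together with continuity of $h_{c_1,c_2,\beta}$. The only cosmetic difference is that the paper phrases the contradiction as ``every maximiser $\mathbf{m}_{2n}$ must satisfy $h_{c_1,c_2,\beta}(\mathbf{m}_{2n})\geq (\alpha_{\mathrm{ev}})^{1/d}$'' and then passes to a subsequential limit in $\triangle_{\alpha_{\mathrm{ev}}}^d$, whereas you package the same compactness step as ``$\alpha\mapsto M_{\alpha,c_1,c_2,\beta}$ is continuous from above'' before taking limits in the recursion; the underlying argument is identical.
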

In the next section, we will show how to simplify the condition in Proposition~\ref{lem:steptwouniq}.

\subsection{A simpler condition for uniqueness}\label{sec:simplecondition}
Proposition~\ref{lem:steptwouniq} gives a sufficient condition on the two-step recursion that is sufficient for establishing uniqueness based on the maximisation of $h_{c_1,c_2,\beta}$. Due to the many variables involved in the maximisation, this is rather complicated for any direct verification. We will simplify this maximisation signifantly by showing that it suffices to consider very special vectors whose entries are either equal to $\alpha$ or 1. We start with the following definition of ``extremal tuples''.
\begin{definition}
Let $\alpha>1$, and consider a colour $c\in [q]$. A tuple $(\child{\pb}{1},\hdots,\child{\pb}{d})$ is called \emph{$(\alpha,c)$-extremal} iff  for all $k \in [d]$,
    \begin{itemize}
        \item for all $c'\in[q]$, either $\child{p}{k}_{c'} = \child{p}{k}_{c}$, or $\child{p}{k}_{c'} = \alpha\cdot\child{p}{k}_{c}$;
        \item there exists $c'\in [q]$ such that $\child{p}{k}_{c'} = \alpha\cdot\child{p}{k}_{c}$.
    \end{itemize}
\end{definition}

Our interest in extremal tuples is justified by the following lemma, whose proof is given in Section~\ref{sec:existence}.
\newcommand{\statelemexistence}{Let $q\geq 3$, $d\geq 2$ and $\beta\in [0,1)$. For any colours $c_1,c_2\in [q]$, there is an $(\alpha,c_2)$-extremal tuple which achieves the maximum in ${\max}_{(\child{\pb}{1},\hdots,\child{\pb}{d})\in\triangle_{\alpha}^d}\, h_{c_1,c_2,\beta}\big(\child{\pb}{1},\hdots,\child{\pb}{d}\big)$ (cf. \eqref{eq:erff5g54r323234}).}
\begin{lemma}\label{lem:existence}
\statelemexistence
\end{lemma}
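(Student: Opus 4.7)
The plan is to show that any maximizer of $h_{c_1,c_2,\beta}$ over $\triangle_\alpha^d$ can be transformed, via scale-freeness \eqref{eq:scalefree23} and monotone one-coordinate moves that do not decrease $h_{c_1,c_2,\beta}$, into an $(\alpha,c_2)$-extremal tuple. The case $c_1=c_2$ is trivial since $h_{c_1,c_2,\beta}\equiv 1$, so assume $c_1\neq c_2$. A maximizer exists by continuity of $h_{c_1,c_2,\beta}$ and compactness of $\triangle_\alpha^d$; evaluating $h_{c_1,c_2,\beta}$ at a tuple in which, for every $k$, $\child{p}{k}_{c_1}=\alpha\child{p}{k}_{c_2}$ and $\child{p}{k}_{c'}=\child{p}{k}_{c_2}$ for $c'\notin\{c_1,c_2\}$, one sees that the $k$-th factor of $g_{c_1,c_2,\beta}$ drops strictly below $1$ while the $k$-th factor of each $g_{c',c_2,\beta}$ ($c'\neq c_1,c_2$) equals $1$; by \eqref{eq:gh12def} this yields $h>1$, so at every maximizer $g_{c_1,c_2,\beta}\leq 1$.

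The first step places the minimum of each $\child{\pb}{k}$ at coordinate $c_2$. Rescale each $\child{\pb}{k}$ via \eqref{eq:scalefree23} so that $\min_c\child{p}{k}_c=1$; then $\child{p}{k}_c\in[1,\alpha]$ and in particular $\child{p}{k}_{c_2}\in[1,\alpha]$. Writing $S_k=\sum_c\child{p}{k}_c$, a short differentiation shows each $k$-th factor $(S_k-(1-\beta)\child{p}{k}_{c'})/(S_k-(1-\beta)\child{p}{k}_{c_2})$ of $g_{c',c_2,\beta}$ is strictly increasing in $\child{p}{k}_{c_2}$. Combining this with $g_{c_1,c_2,\beta}\leq 1$ and applying the chain rule to $h-1=(1-\beta)(1-g_{c_1,c_2,\beta})/(\beta+\sum_{c'\neq c_2}g_{c',c_2,\beta})$ yields $\partial h_{c_1,c_2,\beta}/\partial\child{p}{k}_{c_2}<0$, holding all other coordinates fixed. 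Hence the maximum over $\child{p}{k}_{c_2}\in[1,\alpha]$ is at $\child{p}{k}_{c_2}=1$, and we may assume $\child{p}{k}_{c_2}=\min_c\child{p}{k}_c$ for every $k$.

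The key step is a one-dimensional monotonicity. With $\child{p}{k}_{c_2}$ the minimum, fix $k$ and $c\neq c_2$ and let $x:=\child{p}{k}_c$ vary over $[1,\alpha]$ with all other coordinates fixed; the previous step guarantees $\delta_{c''}:=\child{p}{k}_{c''}-\child{p}{k}_{c_2}\geq 0$ for every $c''$. Set $A:=\sum_{c'\neq c}\child{p}{k}_{c'}$, $B_2:=A-(1-\beta)\child{p}{k}_{c_2}>0$, and $t:=1/(x+B_2)$. Then each $k$-th factor of $g_{c',c_2,\beta}$ becomes an affine function of $t$: for instance, $F_{c_1}=1-(1-\beta)\delta_1 t$, $F_c=\beta+(1-\beta)(A+\beta\child{p}{k}_{c_2})t$, and $F_{c'}=1-(1-\beta)\delta_{c'} t$ for $c'\notin\{c,c_1,c_2\}$. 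Absorbing the other subtrees into positive constants $u_{c'}:=\prod_{k'\neq k}F_{c'}^{(k')}$ and substituting into the expression for $h-1$ exhibits $h-1$ as a linear-fractional transformation $(R+St)/(P+Qt)$ of $t$, hence monotone in $t$ (unless identically constant, in which case both endpoints already achieve the maximum). Since $t$ is strictly decreasing in $x$, $h_{c_1,c_2,\beta}$ is monotone in $x$ on $[1,\alpha]$, so the maximum in $x$ is attained at $x\in\{\child{p}{k}_{c_2},\alpha\child{p}{k}_{c_2}\}$. Iterating over all pairs $(c,k)$ produces a maximizer with every coordinate in $\{\child{p}{k}_{c_2},\alpha\child{p}{k}_{c_2}\}$.

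Finally, each subtree must contain some coordinate equal to $\alpha\child{p}{k}_{c_2}$; otherwise $\child{\pb}{k}$ is uniform, contributes factor $1$ to every $g_{c',c_2,\beta}$, and replacing $\child{p}{k}_{c_1}$ by $\alpha\child{p}{k}_{c_2}$ drops only the $k$-th factor of $g_{c_1,c_2,\beta}$ strictly below $1$ while leaving the other $k$-th factors equal to $1$, strictly increasing $h_{c_1,c_2,\beta}$ and contradicting maximality. The modified maximizer is therefore $(\alpha,c_2)$-extremal. The main obstacle is the monotonicity step: every factor involves $S_k$, which couples the subtree coordinates and initially obscures the one-dimensional behavior in $x$; the substitution $t=1/(x+B_2)$ is the decisive device that exposes the linear-fractional (and thus monotone) structure of $h_{c_1,c_2,\beta}$ in a single coordinate.
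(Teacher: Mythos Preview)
Your proof is correct and follows essentially the same strategy as the paper: first argue that $g_{c_1,c_2,\beta}\le 1$ at any maximizer, then that the $c_2$-entry is the minimum of each $\child{\pb}{k}$, then exhibit $h_{c_1,c_2,\beta}$ as a linear-fractional (hence monotone) function of each remaining coordinate so that the maximum is attained at an endpoint in $\{1,\alpha\}$, and finally rule out the all-ones case. The only differences are tactical: the paper obtains the ``$c_2$ is the minimum'' step by a swap argument (its Lemma~\ref{lem:pc2min}) rather than your derivative computation, and it exposes the linear-fractional structure via the reparameterisation $\mu_c=\child{p}{k}_c/\child{p}{k}_{c_2}$ (its Lemma~\ref{lem:ratio}) rather than your substitution $t=1/(x+B_2)$; these are equivalent since with $\child{p}{k}_{c_2}=1$ one has $x+B_2=\beta+\sum_{c\neq c_2}\child{p}{k}_c$, which is exactly the paper's $m(\boldsymbol{\mu})$. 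One small point worth making explicit in your write-up is that the denominator $P+Qt=\beta+\sum_{c'\neq c_2}g_{c',c_2,\beta}$ stays strictly positive throughout, so the linear-fractional function has no pole on the interval.
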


One of the consequences of Lemma~\ref{lem:existence} is that  the validity of the inequality in Proposition~\ref{lem:steptwouniq} is monotone with respect to $\beta$. In particular, we have the following lemma (also proved in Section~\ref{sec:existence}).
\newcommand{\statelemmonotone}{
Let $q\geq 3$, $d\geq 2$ and $\beta', \beta''\in [0,1)$ with $\beta'\leq \beta''$. Then,  for all $\alpha>1$ and any colours $c_1,c_2\in [q]$, it holds that 
\begin{equation*}
M_{\alpha, c_1,c_2,\beta''}\leq M_{\alpha, c_1,c_2,\beta'}.
\end{equation*}
}
\begin{lemma}\label{lem:mo12no12tone}
\statelemmonotone
\end{lemma}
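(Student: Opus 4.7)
My plan is to reduce the comparison to the class of $(\alpha,c_2)$-extremal tuples via Lemma~\ref{lem:existence} and then establish a pointwise monotonicity of $h_{c_1,c_2,\beta}$ in $\beta$ on that class. By Lemma~\ref{lem:existence} applied at $\beta''$, fix an $(\alpha,c_2)$-extremal tuple $(\child{\pb}{1},\dots,\child{\pb}{d})\in\triangle_\alpha^d$ that achieves $M_{\alpha,c_1,c_2,\beta''}$. Since this tuple is also a feasible point for the maximisation defining $M_{\alpha,c_1,c_2,\beta'}$, it would suffice to show the pointwise inequality
\begin{equation*}
h_{c_1,c_2,\beta''}(\child{\pb}{1},\dots,\child{\pb}{d}) \leq h_{c_1,c_2,\beta'}(\child{\pb}{1},\dots,\child{\pb}{d}),
\end{equation*}
as then $M_{\alpha,c_1,c_2,\beta''} = h_{c_1,c_2,\beta''}(\child{\pb}{1},\dots,\child{\pb}{d}) \leq h_{c_1,c_2,\beta'}(\child{\pb}{1},\dots,\child{\pb}{d}) \leq M_{\alpha,c_1,c_2,\beta'}$ closes the argument. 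The case $c_1=c_2$ is immediate, since then $g_{c,c,\beta}\equiv 1$ and hence $h_{c,c,\beta}\equiv 1$, so I henceforth assume $c_1\neq c_2$.

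In the nontrivial case, I would use the scale-freeness identity \eqref{eq:scalefree23} to rescale each $\child{\pb}{k}$ by $1/\child{p}{k}_{c_2}$, without changing any value of $h$, so that each entry of $\child{\pb}{k}$ lies in $\{1,\alpha\}$ with $\child{p}{k}_{c_2}=1$ and, by extremality, at least one other coordinate equal to $\alpha$. Writing $S_k := \sum_{c\neq c_2}\child{p}{k}_c$, we have $S_k \geq (q-2)+\alpha \geq \alpha+1$, using $q\geq 3$. Every factor of $g_{c,c_2,\beta}(\child{\pb}{1},\dots,\child{\pb}{d})$ then takes the form $1 - (1-\beta)(\child{p}{k}_c-1)/(\beta+S_k)$ with $\child{p}{k}_c-1\in\{0,\alpha-1\}$, and the short derivative identity $\frac{d}{d\beta}\frac{1-\beta}{\beta+S_k} = -\frac{1+S_k}{(\beta+S_k)^2}<0$ shows each such factor is nondecreasing in $\beta$ and bounded above by $1$. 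Each factor is also strictly positive, since $\beta + S_k > (1-\beta)(\alpha-1)$ holds even at $\beta=0$ thanks to $S_k \geq \alpha+1 > \alpha-1$. Consequently, for every colour $c$, the value $g_{c,c_2,\beta}(\child{\pb}{1},\dots,\child{\pb}{d})$ is a strictly positive, nondecreasing function of $\beta$ that is bounded above by $1$ (and identically equal to $1$ when $c=c_2$).

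The monotonicity of $h_{c_1,c_2,\beta}(\child{\pb}{1},\dots,\child{\pb}{d})$ then reads off directly from \eqref{eq:gh12def}: the numerator $(1-\beta)(1-g_{c_1,c_2,\beta})$ of $h-1$ is a product of two nonnegative nonincreasing functions of $\beta$, hence is itself nonnegative and nonincreasing, while the denominator $\beta + \sum_{c\neq c_2}g_{c,c_2,\beta}$ is strictly positive and nondecreasing in $\beta$. Therefore $h_{c_1,c_2,\beta}(\child{\pb}{1},\dots,\child{\pb}{d})-1$ is nonincreasing in $\beta$, which yields the required inequality. The crux of the argument is the single derivative computation for the $g$-factor on an extremal tuple; once it is in hand, the monotonicity of $h$ follows from a brief sign analysis, and it is Lemma~\ref{lem:existence} that upgrades this pointwise monotonicity to the desired comparison of the two maxima.
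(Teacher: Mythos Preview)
Your proof is correct and follows essentially the same approach as the paper's: reduce to $(\alpha,c_2)$-extremal tuples via Lemma~\ref{lem:existence}, then verify pointwise monotonicity of $h_{c_1,c_2,\beta}$ in $\beta$ on such tuples by checking that each $g_{c,c_2,\beta}$ is nondecreasing in $\beta$ and bounded above by $1$. You supply more detail than the paper (the explicit derivative of the $g$-factor, the positivity check via $S_k\geq\alpha+1$, and the trivial $c_1=c_2$ case), but the structure is the same.
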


Another consequence of Lemma~\ref{lem:existence} is that, combined with the scale-free property, it reduces the verification of the condition in Proposition~\ref{lem:steptwouniq} to  the verification of single-variable inequalities in $\alpha$.
These inequalities are obtained by   trying all $d$-tuples  of $q$-dimensional vectors whose entries are 
as follows.
\begin{equation}\label{ex:ca}
\mathrm{Ex}_{c}(\alpha)=\big\{(p_1,\hdots,p_q)
\in \{1,\alpha\}^q
\mid p_{c}=1 \land \exists c'\in[q]  \text{ such that }  p_{c'}=\alpha \big\}.
\end{equation}
The following simplified condition will be our main focus henceforth. 
\begin{condition}\label{cond:we}
Let $q\geq 3$, $d\geq 2$. Set $\beta_*:=\max\big\{1-\tfrac{q}{d+1},0\big\}$. For $\alpha>1$,  let $\mathcal{C}(\alpha)$ be the condition
\begin{equation*}
\mathcal{C}(\alpha): \quad \forall c_1,c_2\in [q],\ h_{c_1,c_2,\beta_*}\big(\child{\pb}{1},\hdots,\child{\pb}{d}\big)<\alpha^{1/d} \mbox{ for all } \child{\pb}{1},\hdots,\child{\pb}{d}\in \mathrm{Ex}_{c_2}(\alpha).
\end{equation*}
If $\mathcal{C}(\alpha)$ holds, we say that the pair $(q,d)$ satisfies Condition~\ref{cond:we} for $\alpha$. 
\end{condition}
Now, to verify the inequality in Proposition~\ref{lem:steptwouniq}, we will show shortly that it suffices only to establish Condition~\ref{cond:we} for all $\alpha>1$, which turns out to be a much more feasible task because of the very explicit form of the set $\mathrm{Ex}_{c_2}(\alpha)$. In the next section, we discuss how to do this in detail, but for now let us state a proposition which asserts that this is indeed sufficient.

\begin{proposition}\label{lem:condimpliesuniq}
Suppose that the pair $(q,d)$ satisfies Condition~\ref{cond:we} for all $\alpha>1$. Let $\beta_*=\max\big\{1-\tfrac{q}{d+1},0\big\}$. Then, the $q$-state Potts model on the $d$-ary tree has uniqueness for all $\beta\in(0,1)$ satisfying $\beta\geq \beta_*$.
\end{proposition}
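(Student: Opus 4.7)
The plan is to show that Condition~\ref{cond:we} (for all $\alpha>1$) implies the hypothesis of Proposition~\ref{lem:steptwouniq}, from which the uniqueness conclusion is immediate. That is, fixing $\beta\in(0,1)$ with $\beta\geq \beta_*$, an arbitrary $\alpha>1$, and arbitrary colours $c_1,c_2\in[q]$, I plan to establish the strict inequality $M_{\alpha,c_1,c_2,\beta}<\alpha^{1/d}$. Three ingredients from the preceding subsections combine to give this: the monotonicity of $M_{\alpha,c_1,c_2,\cdot}$ in $\beta$ (Lemma~\ref{lem:mo12no12tone}), the existence of an extremal maximiser (Lemma~\ref{lem:existence}), and the scale-freeness of $h_{c_1,c_2,\beta}$ from \eqref{eq:scalefree23}.

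First, I would use monotonicity to reduce to the endpoint $\beta=\beta_*$: since $\beta\geq \beta_*$, Lemma~\ref{lem:mo12no12tone} yields $M_{\alpha,c_1,c_2,\beta}\leq M_{\alpha,c_1,c_2,\beta_*}$, so it suffices to prove $M_{\alpha,c_1,c_2,\beta_*}<\alpha^{1/d}$. Next, by Lemma~\ref{lem:existence} applied at $\beta=\beta_*$, this maximum is attained at some $(\alpha,c_2)$-extremal tuple $(\child{\pb}{1},\ldots,\child{\pb}{d})\in\triangle_\alpha^d$; in particular, for each $k\in[d]$, every entry of $\child{\pb}{k}$ equals either $\child{p}{k}_{c_2}$ or $\alpha\,\child{p}{k}_{c_2}$, with at least one entry of the latter form, and $\child{p}{k}_{c_2}>0$ because $\child{\pb}{k}\in\triangle_\alpha$.

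Finally, I rescale: for each $k\in[d]$ set $t_k:=1/\child{p}{k}_{c_2}>0$ and define $\child{\mathbf{q}}{k}:=t_k\,\child{\pb}{k}$. Then $\child{q}{k}_{c_2}=1$, every other coordinate of $\child{\mathbf{q}}{k}$ lies in $\{1,\alpha\}$, and some coordinate equals $\alpha$, so $\child{\mathbf{q}}{k}\in \mathrm{Ex}_{c_2}(\alpha)$ in the sense of \eqref{ex:ca}. By the scale-free identity \eqref{eq:scalefree23},
\[
h_{c_1,c_2,\beta_*}\bigl(\child{\pb}{1},\ldots,\child{\pb}{d}\bigr)
=h_{c_1,c_2,\beta_*}\bigl(\child{\mathbf{q}}{1},\ldots,\child{\mathbf{q}}{d}\bigr),
\]
and, since the tuple $(\child{\mathbf{q}}{1},\ldots,\child{\mathbf{q}}{d})$ lies in $\mathrm{Ex}_{c_2}(\alpha)^d$, Condition~\ref{cond:we} (applied to the given $\alpha$) bounds the right-hand side strictly below $\alpha^{1/d}$. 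This yields $M_{\alpha,c_1,c_2,\beta_*}<\alpha^{1/d}$, completing the hypothesis of Proposition~\ref{lem:steptwouniq} and hence establishing uniqueness.

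There is essentially no difficult step here: the proposition is a bookkeeping statement that glues together monotonicity, extremality, and scale-freeness. The only minor points worth verifying carefully are that the rescaling $t_k=1/\child{p}{k}_{c_2}$ is well-defined (which is immediate from $\triangle_\alpha\subseteq \triangle$ having strictly positive entries), and that when $\beta_*=0$ all invoked results still apply—Lemma~\ref{lem:existence} and Lemma~\ref{lem:mo12no12tone} are both stated for $\beta\in[0,1)$, and the extremal tuples have strictly positive entries so $h_{c_1,c_2,0}$ is well-defined on them by the remark following \eqref{eq:gh12def}.
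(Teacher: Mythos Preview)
Your proposal is correct and follows essentially the same route as the paper's proof: invoke Lemma~\ref{lem:existence} to pass to an $(\alpha,c_2)$-extremal maximiser, rescale via \eqref{eq:scalefree23} to land in $\mathrm{Ex}_{c_2}(\alpha)$, apply Condition~\ref{cond:we} at $\beta_*$, and then use Lemma~\ref{lem:mo12no12tone} together with Proposition~\ref{lem:steptwouniq} to conclude. The only cosmetic difference is that the paper first establishes $M_{\alpha,c_1,c_2,\beta_*}<\alpha^{1/d}$ and then applies monotonicity, whereas you apply monotonicity first to reduce to $\beta_*$; the logical content is identical, and your handling of the $\beta_*=0$ case is fine.
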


\begin{proof}
We consider first the case where $\beta_*>0$.  We will show that for all  colours $c_1,c_2\in[q]$, it holds that 
\begin{equation}\label{eq:z4r34rf4rf4}
M_{\alpha,c_1,c_2,\beta_*}<\alpha^{1/d} \mbox{ for all $\alpha>1$}.
\end{equation}
Then by Lemma~\ref{lem:mo12no12tone}, we obtain that, for all $\beta\in [\beta^*,1)$ it holds that $M_{\alpha,c_1,c_2,\beta}<\alpha^{1/d}$ as well for all $\alpha>1$ and therefore, by Proposition~\ref{lem:steptwouniq}, the Potts model has uniqueness for all such $\beta$.

To prove \eqref{eq:z4r34rf4rf4}, consider an arbitrary $\alpha>1$ and colours $c_1,c_2\in [q]$. By Lemma~\ref{lem:existence}, there exists an $(\alpha,c_2)$-extremal tuple $(\pb^{(1)},\hdots,\pb^{(d)}\big)$ such that 
\begin{equation}\label{eq:z4r34rf4rf4a}
M_{\alpha,c_1,c_2,\beta_*}=h_{c_1,c_2,\beta_*}(\pb^{(1)},\hdots,\pb^{(d)}\big).
\end{equation}
For $c\in [q]$ and $k\in [d]$, denote by $p^{(k)}_c$ the entry of $\pb^{(k)}$ corresponding to colour $c$ and let $\hat{\pb}^{(k)}$ be the vector $t_k {\pb}^{(k)}$ where $t_k=1/p^{(k)}_{c_2}$. By the definition of an $(\alpha,c_2)$-extremal tuple, we have that 
\[\hat{\pb}^{(1)},\hdots,\hat{\pb}^{(d)}\in \mathrm{Ex}_{c_2}(\alpha).\] 
Moreover, by the scale-free property \eqref{eq:scalefree23} we have that 
\begin{equation}\label{eq:z4r34rf4rf4b}
h_{c_1,c_2,\beta_*}(\hat{\pb}^{(1)},\hdots, \hat{\pb}^{(d)})=h_{c_1,c_2,\beta_*}(\child{\pb}{1},\ldots,\child{\pb}{d}).
\end{equation}
Finally, since the pair $(q,d)$ satisfies Condition~\ref{cond:we} for all $\alpha>1$, we have that 
\begin{equation}\label{eq:z4r34rf4rf4c}
h_{c_1,c_2,\beta_*}(\hat{\pb}^{(1)},\hdots, \hat{\pb}^{(d)})<\alpha^{1/d}.
\end{equation}
Combining \eqref{eq:z4r34rf4rf4a}, \eqref{eq:z4r34rf4rf4b}, and \eqref{eq:z4r34rf4rf4c} yields \eqref{eq:z4r34rf4rf4}, as needed.

The case $\beta_*=0$ is analogous. Now, we need to show that we have uniqueness for all $\beta\in (0,1)$ assuming that Condition~\ref{cond:we} holds for all $\alpha>1$. Just as before, we obtain that  
$M_{\alpha,c_1,c_2,\beta_*}<\alpha^{1/d}$ for all $\alpha>1$ and hence   by Lemma~\ref{lem:mo12no12tone}, we have that $M_{\alpha,c_1,c_2,\beta}<\alpha^{1/d}$ for all $\alpha>1$ and $\beta\in (0,1)$. Uniqueness for $\beta\in(0,1)$ therefore follows from applying Proposition~\ref{lem:steptwouniq}.
\end{proof}

\begin{remark}
Note that, when $\beta_*>0$, the conclusion of Proposition~\ref{lem:condimpliesuniq} asserts uniqueness in the half-open interval $[\beta_*,1)$;  when $\beta_*=0$,  it instead asserts uniqueness in the open interval $(0,1)$. 
\end{remark}

\subsection{Verifying the Condition}
In this section, we give more details on how to verify Condition~\ref{cond:we}.

To apply Proposition~\ref{lem:condimpliesuniq}, we will need to verify Condition~\ref{cond:we}. The latter is fairly simple to verify for small values of $q,d$ since it reduces to single-variable inequalities in $\alpha$. We illustrate the details when $(q,d)=(3,3)$ and $(q,d)=(4,4)$. 

\begin{lemma}\label{lem:examplecond}
The pairs $(q,d)=(3,3)$ and $(q,d)=(4,4)$ satisfy Condition~\ref{cond:we} for all $\alpha>1$.
\end{lemma}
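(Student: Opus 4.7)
The plan is to fix $(q,d)\in\{(3,3),(4,4)\}$ with $\beta_* = 1-q/(d+1)$ (so $\beta_* = 1/4$ or $1/5$), and reduce Condition~\ref{cond:we} to a finite family of single-variable polynomial inequalities in $\alpha$, each of which can be certified rigorously by Mathematica's \texttt{Resolve}.

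First I would exploit symmetry. Both $g_{c,c_2,\beta}$ and $h_{c_1,c_2,\beta}$ are invariant under any permutation of the coordinates in $[q]\setminus\{c_1,c_2\}$ applied simultaneously to every $\pb^{(k)}$, and globally relabelling the colours produces an equivalent inequality. Thus, without loss of generality, I would take $c_2 = q$ and $c_1\in\{1,q\}$. Any vector $\pb^{(k)}\in \mathrm{Ex}_{c_2}(\alpha)$ is then described, up to the stabiliser of $(c_1,c_2)$, by a bit $a_k\in\{0,1\}$ recording whether $p^{(k)}_{c_1}=\alpha$ (vacuous if $c_1=c_2$), together with an integer $m_k\in\{0,1,\ldots,q-2\}$ counting the coordinates in $[q]\setminus\{c_1,c_2\}$ equal to $\alpha$, subject to the constraint $a_k=1$ or $m_k\geq 1$. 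This leaves at most $(2q-3)^d$ equivalence classes of $d$-tuples per $(c_1,c_2)$-pair, i.e.\ at most $27$ tuples for $(3,3)$ and $625$ for $(4,4)$.

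Second, for each equivalence class I would substitute the coordinate values into \eqref{eq:gh12def}. Since every entry of each $\pb^{(k)}$ belongs to $\{1,\alpha\}$, every factor in $g_{c,c_2,\beta_*}$ becomes an explicit rational function of $\alpha$, and so does $h_{c_1,c_2,\beta_*}$. Positivity of $h$, which follows from its interpretation as a ratio of probabilities via Proposition~\ref{lem:twostep}, lets me recast the target inequality $h_{c_1,c_2,\beta_*}<\alpha^{1/d}$ as $h_{c_1,c_2,\beta_*}^d-\alpha<0$; clearing the strictly positive denominator reduces this to a polynomial inequality $P(\alpha)<0$ for $\alpha>1$. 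Note that as $\alpha\to 1$ every $\pb^{(k)}$ degenerates to the all-ones vector and $h=1$, so $P(1)=0$; the content of the statement is strict negativity on the open interval $(1,\infty)$.

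The final step is to verify $P(\alpha)<0$ on $(1,\infty)$ for each generated polynomial. I would pass each $P$ to \texttt{Resolve}, which decides such univariate polynomial inequalities rigorously. The main obstacle is purely organisational rather than mathematical: for $(q,d)=(4,4)$ there are several hundred polynomials to generate and test, with degrees growing in $d$, and the expressions are cumbersome to manipulate by hand. However, since each inequality is single-variable and of moderate degree, \texttt{Resolve} dispatches them in reasonable time, so the entire verification is a templated case analysis once the symmetry reduction and rational-function substitution have been set up.
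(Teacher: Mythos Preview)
Your overall strategy---enumerate the finitely many tuples in $\mathrm{Ex}_{c_2}(\alpha)^d$, evaluate $h_{c_1,c_2,\beta_*}$ as a rational function of $\alpha$, and feed the resulting one-variable inequalities to \texttt{Resolve}---is exactly what the paper does. The paper fixes $c_1=1$, $c_2=q$ by colour symmetry (the case $c_1=c_2$ being trivial since then $h=1$), and in its Mathematica code simply enumerates all $(2^{q-1})^d$ tuples with no further reduction, substituting $\alpha=u^d$ rather than raising $h$ to the $d$-th power.

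There is, however, a genuine gap in your symmetry reduction for $(q,d)=(4,4)$. You correctly state that $h_{c_1,c_2,\beta}$ is invariant under permutations of $[q]\setminus\{c_1,c_2\}$ applied \emph{simultaneously} to every $\pb^{(k)}$, but then you reduce each $\pb^{(k)}$ \emph{independently} to its type $(a_k,m_k)$ and count $(2q-3)^d=5^4=625$ classes. This over-reduces: two $d$-tuples with identical $(a_k,m_k)$ profiles can give different values of $h$, because the denominator $\sum_{c\neq c_2} g_{c,c_2,\beta}$ depends on which specific coordinates in $[q]\setminus\{c_1,c_2\}$ equal $\alpha$, not merely on how many. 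Concretely, for $q=4$, $c_1=1$, $c_2=4$, compare the tuples with $\pb^{(1)}=\pb^{(2)}=(\alpha,\alpha,1,1)$ versus $\pb^{(1)}=(\alpha,\alpha,1,1)$, $\pb^{(2)}=(\alpha,1,\alpha,1)$: both have $(a_k,m_k)=(1,1)$ for every $k$, but $g_{2,4}+g_{3,4}$ differs between them. So your enumeration would omit tuples that must be checked. (For $(3,3)$ the issue does not arise, since $|[q]\setminus\{c_1,c_2\}|=1$.) The fix is simply to drop the per-coordinate reduction and enumerate all $(2^{q-1}-1)^d$ tuples, as the paper effectively does.

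A smaller point: your justification that $h>0$ via Proposition~\ref{lem:twostep} is not quite right, since that proposition concerns specific marginal vectors rather than arbitrary elements of $\mathrm{Ex}_{c_2}(\alpha)$. It is cleaner to note directly from \eqref{eq:gh12def} that for any $(\alpha,c_2)$-extremal tuple each factor of $g_{c,c_2,\beta}$ lies in $(0,1]$, so $g_{c_1,c_2,\beta}\leq 1$ and hence $h_{c_1,c_2,\beta}\geq 1$.
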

\begin{proof}
By symmetry among the colours, it suffices to verify the condition for colours $c_1=1$ and $c_2=q$. In Section~\ref{app:examplecond}, we just try all possible $d$-tuples $(\pb^{(1)},\hdots, \pb^{(d)})$ with $\pb^{(1)},\hdots, \pb^{(d)}\in \mathrm{Ex}_q(\alpha)$. For each such $d$-tuple, the  inequality 
\[h_{c_1,c_2,\beta_*}(\pb^{(1)},\hdots, \pb^{(d)})<\alpha^{1/d}\]
is a single-variable inequality in $\alpha$ which can be verified using Mathematica's {\tt Resolve} function for all $\alpha>1$. For $(q,d)=(3,3)$ and $(q,d)=(4,4)$, all the resulting inequalities are satisfied.
\end{proof}

Combining Lemma~\ref{lem:examplecond} with 
Proposition~\ref{lem:condimpliesuniq} we get the following immediate corollary.
\begin{corollary} \label{cor:example}
The $3$-state Potts model on the $3$-ary tree has uniqueness for all $\beta\in[1/4,1)$.
The $4$-state Potts model on the $4$-ary tree has uniqueness for all $\beta\in[1/5,1)$.
\end{corollary}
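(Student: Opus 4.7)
The plan is to deduce Corollary~\ref{cor:example} as an immediate consequence of the two pieces that have just been established: Lemma~\ref{lem:examplecond}, which verifies that the pairs $(q,d)=(3,3)$ and $(q,d)=(4,4)$ satisfy Condition~\ref{cond:we} for every $\alpha>1$, and Proposition~\ref{lem:condimpliesuniq}, which converts this verification into a uniqueness statement on the $d$-ary tree. So the entire argument reduces to a substitution of parameters.

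Concretely, first I would set $(q,d)=(3,3)$. By Lemma~\ref{lem:examplecond}, Condition~\ref{cond:we} holds for all $\alpha>1$. Then I would invoke Proposition~\ref{lem:condimpliesuniq}, computing the relevant threshold $\beta_* = \max\{1-q/(d+1),\,0\} = \max\{1-3/4,\,0\} = 1/4$. Since $\beta_*>0$, the conclusion of Proposition~\ref{lem:condimpliesuniq} gives uniqueness on the $3$-ary tree for all $\beta\in[1/4,1)$, which is the first half of the corollary. Next I would repeat the same two-line deduction with $(q,d)=(4,4)$: Lemma~\ref{lem:examplecond} supplies Condition~\ref{cond:we} for every $\alpha>1$, and Proposition~\ref{lem:condimpliesuniq} with $\beta_* = 1-4/5 = 1/5 > 0$ yields uniqueness on the $4$-ary tree for all $\beta\in[1/5,1)$.

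There is no substantive obstacle to overcome here; all the real work has already been packaged into Lemma~\ref{lem:examplecond} (the Mathematica-assisted verification of the single-variable inequalities arising from the extremal tuples in $\mathrm{Ex}_{c_2}(\alpha)$) and into Proposition~\ref{lem:condimpliesuniq} (the scale-free reduction plus the monotonicity in $\beta$ given by Lemma~\ref{lem:mo12no12tone}). The only thing to be careful about is the arithmetic of $\beta_*$ in each case and the observation that $\beta_*>0$ for both pairs, so the conclusion is the half-open interval $[\beta_*,1)$ rather than $(0,1)$. The proof is therefore a two-sentence citation of the preceding results.
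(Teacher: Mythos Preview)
Your proposal is correct and matches the paper's approach exactly: the paper also obtains Corollary~\ref{cor:example} as an immediate consequence of combining Lemma~\ref{lem:examplecond} with Proposition~\ref{lem:condimpliesuniq}, and your computation of $\beta_*$ in each case is accurate.
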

Corollary~\ref{cor:example} establishes the uniqueness threshold
for $(q,d)=(3,3)$ and $(q,d)=(4,4)$.
More generally, we are interested in the following question:
When is Condition~\ref{cond:we} satisfied for all $\alpha>1$? We conjecture the following.
\begin{conjecture}\label{conj}
When $q\neq d+1$, the pair $(q,d)$ satisfies Condition~\ref{cond:we} for all $\alpha>1$.
\end{conjecture}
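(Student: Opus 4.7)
My plan is to reduce the family of inequalities in Condition~\ref{cond:we} to a small, structured set of one-variable inequalities in $\alpha$, and then verify those by combining Taylor analysis at $\alpha=1$ with a global monotonicity/convexity argument. By color-permutation symmetry I may fix $c_2=q$; the case $c_1=c_2$ is trivial, since $g_{c_1,c_2,\beta_*}\equiv 1$ forces $h=1<\alpha^{1/d}$, so assume $c_1=1$. Each extremal tuple $(\pb^{(1)},\dots,\pb^{(d)})\in \mathrm{Ex}_q(\alpha)^d$ is then specified by non-empty sets $S_k\subseteq[q-1]$ recording the positions at which $\pb^{(k)}=\alpha$, and the permutation symmetry of $\{2,\dots,q-1\}$ can be used to cut down further on the number of essentially distinct profiles.

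Setting $D_k=\beta+(q-1)+(\alpha-1)|S_k|$ and $A_k=(1-\beta)(\alpha-1)/D_k$, a short computation gives $g_{c,q,\beta}=\prod_{k\,:\,c\in S_k}(1-A_k)$ for each $c\in[q-1]$, so $h_{1,q,\beta}$ becomes an explicit rational function of $\alpha$ (depending on $\beta$, $q$, $d$, and the profile). Expanding at $\alpha=1$, I would show
\[
\left.\frac{d\,h_{1,q,\beta}}{d\alpha}\right|_{\alpha=1}=\frac{(1-\beta)^2\,n_1}{(\beta+q-1)^2},\qquad n_1:=|\{k:1\in S_k\}|.
\]
For $\beta=\beta_*$, elementary algebra gives that this coefficient is at most $1/d=\left.(\alpha^{1/d})'\right|_{\alpha=1}$, with equality if and only if $n_1=d$ and $\beta_*>0$. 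Hence when $q>d+1$ (so $\beta_*=0$) the first-order inequality is already strict for every profile, while when $q<d+1$ the only \emph{critical} profiles are those with $1\in S_k$ for every $k$.

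For non-critical profiles, and for the entire case $q>d+1$, I would upgrade strictness at first order to a global inequality on $(1,\infty)$ by studying $\frac{d}{d\alpha}\log\bigl(\alpha^{1/d}/h(\alpha)\bigr)$: it is non-negative at $\alpha=1$ and I would show it does not change sign thereafter, either by exploiting the explicit rational form of $h$ or by controlling its limit as $\alpha\to\infty$. For the critical profiles (relevant only when $q<d+1$ and $\beta_*>0$), the first-order Taylor terms of the two sides agree exactly, so the inequality must be extracted from the second order; here I would parameterize each profile by $T_k:=S_k\setminus\{1\}\subseteq\{2,\dots,q-1\}$ and aim to prove a \emph{rearrangement lemma} stating that, among all $(T_1,\dots,T_d)$, $h_{1,q,\beta_*}(\alpha)$ is maximized at a small canonical family of profiles (for example the symmetric extreme profiles $T_k\equiv\emptyset$ or $T_k\equiv\{2,\dots,q-1\}$, or profiles depending only on $|T_k|$). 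This would reduce the critical case to $O(1)$ single-variable inequalities, each verifiable by a second-order Taylor bound plus an asymptotic check. The main obstacle is precisely this rearrangement lemma: showing that flipping the membership of some $c\in T_k$ moves $h$ in a direction independent of the rest of the profile, uniformly in $\alpha>1$, requires a delicate sign analysis of a derivative that couples through both $D_k$ and the other $g$-values. As a fallback, Condition~\ref{cond:we} can always be checked computationally (via Mathematica's \texttt{Resolve}, as in Lemma~\ref{lem:examplecond}) for any fixed $(q,d)$ in a finite window, leaving the regime $d\to\infty$ to be handled by a separate asymptotic argument that tracks the second-order Taylor constant at $\alpha=1$ and the behavior of $h$ as $\alpha\to\infty$.
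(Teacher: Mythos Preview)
The statement you are attempting to prove is stated in the paper as a \emph{conjecture}, not a theorem: the paper explicitly says ``We have only been able to verify Conjecture~\ref{conj} for specific values of $q,d$'' and provides no proof. So there is no paper proof to compare against; your proposal is an attempt to resolve an open problem.

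As an outline, your reductions are sound. The symmetry reduction to $c_1=1$, $c_2=q$ is correct; the formula $g_{c,q,\beta}=\prod_{k:c\in S_k}(1-A_k)$ is correct; and your first-order computation
\[
\left.\frac{dh}{d\alpha}\right|_{\alpha=1}=\frac{(1-\beta_*)^2 n_1}{(\beta_*+q-1)^2}
\]
is correct, as is the identification of the critical profiles (those with $1\in S_k$ for all $k$) when $\beta_*>0$.

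However, the proposal is not a proof, and you yourself flag the two genuine gaps. First, for the non-critical profiles you only assert that you ``would show'' $\frac{d}{d\alpha}\log(\alpha^{1/d}/h(\alpha))$ does not change sign, ``either by exploiting the explicit rational form of $h$ or by controlling its limit as $\alpha\to\infty$.'' Neither of these is an argument: knowing the sign at $\alpha=1$ and the limit at $\alpha\to\infty$ does not preclude sign changes in between, and ``explicit rational form'' is not a method. Second, for the critical profiles you need the rearrangement lemma, and you correctly identify this as ``the main obstacle'': the claim that flipping a single coordinate $c\in T_k$ always moves $h$ in a definite direction, uniformly in $\alpha$ and in the rest of the profile, is exactly the kind of delicate monotonicity that fails to hold in multi-state systems (and is the whole reason the paper resorts to brute-force enumeration in Lemma~\ref{lem:examplecond}). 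Your fallback of computational verification plus a separate $d\to\infty$ argument would still leave the full conjecture open, since it covers only a finite window of $(q,d)$ together with an asymptotic regime, not all pairs with $q\neq d+1$.
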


We have only been able to verify Conjecture~\ref{conj} for specific values of $q,d$ (with methods similar to those used in the proof of Lemma~\ref{lem:examplecond}). However, it is important to
note that the restriction $q\neq d+1$ in the conjecture cannot be  removed.
For example,   the pair $(q,d)=(3,2)$ does not
satisfy Condition~\ref{cond:we} for all $\alpha>1$ --- it only satisfies
the condition for $\alpha$ fairly close to~$1$.
Thus, to prove Theorem~\ref{thm:uniqueness} 
we need a different argument to account for the case  $(q,d)=(3,2)$. 

Thus, instead of trying to prove Conjecture~\ref{conj} for \emph{all} values of~$\alpha$
(which wouldn't be enough for our theorem),
 we follow Jonasson's approach and use the one-step recursion to argue that the ratio~$\gamma(q,\beta,d,n)$ 
 gets moderately close to 1; close enough that we can then use the two-step recursion to finish the proof of uniqueness. Note that, in contrast to the two-step recursion, the one-step recursion is not sufficient on its own to obtain tight uniqueness results for any values of $q,d$ (this was also observed by Jonasson \cite{Jonasson02} in the case of colourings). 

First, we state the one-step recursion that we are going to use on the tree. This recursion, as well as the two-step recursion of Proposition~\ref{lem:twostep}, are well-known,  but we prove them explicitly  in Section~\ref{sec:proverec} for completeness.  

\newcommand{\statelemonestep}{ Suppose $q\geq 3$, $ d\geq 2$ and 
    $\beta\in (0,1)$.    For an integer $n\geq 1$, let $T$ be the tree $\thetree$  with root $v=v_{d,n}$ and leaves $\Lambda=\theleaves$. Let $\tau:\Lambda\rightarrow[q]$ be an arbitrary configuration.

    Let $v_1,\ldots,v_d$ be the children of $v$ in~$T$. For $i\in [d]$, let $T_i$ be the subtree of $T$ rooted at $v_i$ and let $\Lambda_i$ denote the set of leaves of the subtree $T_i$.
    Then, for any colour $c\in[q]$, it holds that 
    \begin{equation*}
    \Pr_{T}[\sigma(v) = c\mid \sigma(\Lambda)=\tau] = 
    \frac{\prod_{i=1}^d \big(1-(1-\beta)
        \Pr_{T_i}[\sigma(v_i)= c \mid \sigma(\Lambda_{i}) = \tau(\Lambda_{i})]\big)}
    {\sum_{c'=1}^q\prod_{i=1}^d \big( 1-(1-\beta)
        \Pr_{T_i}[\sigma(v_i) = c'\mid \sigma(\Lambda_{i}) = \tau(\Lambda_{i})]\big)}.
    \end{equation*}
}
\begin{proposition}\label{lem:onesteprecursion}
   \statelemonestep
\end{proposition}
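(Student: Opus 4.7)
The plan is to compute $\Pr_T[\sigma(v) = c \mid \sigma(\Lambda) = \tau]$ directly from the definition of the Gibbs distribution, summing over configurations extending~$\tau$ grouped by the colour of the root. The key structural observation is that, conditioned on $\sigma(v) = c$, the restrictions of $\sigma$ to the subtrees $T_1,\ldots,T_d$ decouple, since every edge of~$T$ lies within some subtree $T_i$ except for the edges $(v,v_i)$, which each involve only $v$ and one subtree root. Writing
\[
W_c := \sum_{\sigma : V \to [q],\; \sigma(v) = c,\; \sigma(\Lambda) = \tau} w_T(\sigma),
\]
the conditional probability equals $W_c \big/ \sum_{c' \in [q]} W_{c'}$, so the task reduces to factoring~$W_c$ over the children.

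Next, I would isolate the contribution of each child~$v_i$. Define
\[
B_i^{c'} := \sum_{\sigma_i : V(T_i) \to [q],\; \sigma_i(v_i) = c',\; \sigma_i(\Lambda_i) = \tau(\Lambda_i)} w_{T_i}(\sigma_i), \qquad Z_i := \sum_{c'\in[q]} B_i^{c'},
\]
so that $r_i^{c'} := \Pr_{T_i}[\sigma(v_i) = c' \mid \sigma(\Lambda_i) = \tau(\Lambda_i)] = B_i^{c'}/Z_i$. When $\sigma(v) = c$, the edge $(v,v_i)$ is monochromatic precisely when $\sigma_i(v_i) = c$, contributing an extra factor of $\beta$ in that case and no extra factor otherwise. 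Combining this with the decoupling yields
\[
W_c \;=\; \prod_{i=1}^d \Bigl( \beta B_i^c + \sum_{c' \neq c} B_i^{c'} \Bigr) \;=\; \prod_{i=1}^d Z_i\bigl(1 - (1-\beta)\, r_i^c\bigr).
\]

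Substituting this expression into $W_c \big/ \sum_{c'} W_{c'}$ causes the common factors $\prod_{i=1}^d Z_i$ to cancel between the numerator and the denominator, producing exactly the claimed identity. The only real obstacle is careful bookkeeping of how the edge $(v,v_i)$ contributes to the weight — in particular, correctly separating the global weight $w_T(\sigma)$ into the subtree weights $w_{T_i}(\sigma_i)$ and the extra monochromatic-edge factors from edges incident to~$v$. No estimates, inequalities, or case analysis are required; the recursion is ultimately a clean algebraic identity obtained from the tree's product structure.
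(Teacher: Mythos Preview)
Your proposal is correct and follows essentially the same approach as the paper: both arguments factor the constrained weight $W_c$ over the children using the tree structure, rewrite each child's contribution as $Z_i\bigl(1-(1-\beta)r_i^c\bigr)$, and then cancel the common factor $\prod_i Z_i$ in the ratio. The only difference is notational.
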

Tracking the one-step recursion relatively accurately requires a fair amount of work, and to aid the verification of Condition~\ref{cond:we} in the case $q=3$, we do this for general values of $d$. In particular, we prove the following lemma in Section~\ref{sec:bounds12onestep}.
\newcommand{\statelemonestepuniq}{
Let $q=3$ and $c\in [3]$ be an arbitrary colour. For $d\geq 2$, consider the $d$-ary tree $\thetree$ with height $n$ and let $\tau:\theleaves\rightarrow [3]$ be an arbitrary configuration on the leaves.

When $d=2$, for all $\beta\in (0,1)$, for all sufficiently large $n$ it holds that
\[\frac{459}{2000}\leqslant \Pr_{\thetreetwo}[\sigma(\theroottwo)=c\mid \sigma(\theleavestwo) = \tau] \leqslant \frac{1107}{2500}.\]
When $d\geq 3$, for all $\beta\in [1-\tfrac{3}{d+1},1)$, there exist sequences $\{L_n\}$ and $\{U_n\}$ (depending on $d$ and $\beta$) such that for all sufficiently large $n$ 
\[L_n\leq\Pr_{\thetree}[\sigma(\theroot)=c\mid \sigma(\theleaves)=\tau]\leq U_n \mbox{ and } U_n/L_n\leq 53/27.\]
}
\begin{lemma}\label{lem:onestepuniq}
\statelemonestepuniq
\end{lemma}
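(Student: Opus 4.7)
The plan is to iterate the one-step recursion of Proposition~\ref{lem:onesteprecursion} to obtain uniform bounds on the conditional root marginal, valid over all boundary configurations and all three colours.

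By the symmetry of the Potts model under permutations of colours, it suffices to bound the minimum and maximum of $\Pr_{\thetree}[\sigma(\theroot)=c\mid \sigma(\theleaves)=\tau]$ as $\tau$ ranges over all boundaries and $c$ ranges over $[3]$. Accordingly, I would set
\[
L_n := \min_{\tau,c}\Pr_{\thetree}[\sigma(\theroot)=c\mid \sigma(\theleaves)=\tau], \qquad U_n := \max_{\tau,c}\Pr_{\thetree}[\sigma(\theroot)=c\mid \sigma(\theleaves)=\tau].
\]
Since $f(x)=1-(1-\beta)x$ is decreasing, Proposition~\ref{lem:onesteprecursion} leads to coupled inequalities of the form $L_{n+1}\geq F(L_n,U_n;\beta,d)$ and $U_{n+1}\leq G(L_n,U_n;\beta,d)$, where $F$ and $G$ are rational functions determined by the recursion. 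A first-pass bound follows by replacing every factor in the numerator or denominator by its extreme value $f(L_n)$ or $f(U_n)$, but, as I explain below, reaching the tight constants in the lemma requires exploiting more carefully the constraint that the children's marginals are genuine probability vectors summing to one.

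The base values $L_1,U_1$ are read off from the depth-one tree, where every factor in Proposition~\ref{lem:onesteprecursion} is either $\beta$ or $1$ depending on whether the corresponding leaf matches the target colour. The iteration then splits into two cases. For $d\geq 3$ with $\beta\in[1-3/(d+1),1)$, the threshold $\beta=1-3/(d+1)$ is precisely where the uniform marginal $(1/3,1/3,1/3)$ becomes a locally attracting fixed point of the one-step recursion, and the plan is to show that starting from $(L_1,U_1)$ a bounded number of iterations of the coupled bound suffices to bring $U_n/L_n$ below $53/27$, uniformly in $d$. For $d=2$, every $\beta\in(0,1)$ is permitted, and the plan is to show that for every such $\beta$ the iteration drives $(L_n,U_n)$ into the rectangle $[459/2000,1107/2500]^2$ after finitely many steps. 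Because the resulting inequalities are low-dimensional in the parameters $(L_n,U_n,\beta)$, they can be discharged via Mathematica's \texttt{Resolve} in the same computer-assisted style as the rest of the paper (Section~\ref{sec:approach}).

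The main obstacle is twofold. First, the $\beta$-uniform analysis for $d=2$ is delicate as $\beta\to 0$, since the underlying $3$-colouring dynamics on the binary tree admit frozen stationary solutions; one must argue that, for every $\beta>0$, the iteration escapes any neighbourhood of such a perturbed frozen fixed point, which forces the specific constants $459/2000$ and $1107/2500$. Second, obtaining bounds that are uniform in the arity $d\geq 3$ is nontrivial, because the crude ``all children extremal'' coupling yields a ratio-expansion factor that grows roughly linearly in $d$ near the uniform fixed point; the genuine probability-vector structure of the children's marginals must be leveraged to temper this expansion, so that the limiting ratio is bounded by the dimension-free constant $53/27$ for every admissible $d$ and $\beta$.
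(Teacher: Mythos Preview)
Your overall architecture matches the paper's: define upper/lower envelopes via the one-step recursion, exploit the constraint that the children's marginals sum to one (the paper does this through Jensen and convexity arguments, arriving at explicit functions $f_u,f_\ell$ in Section~\ref{sec:fns}), split into $d=2$ versus $d\ge 3$, and finish with computer-assisted verification. Your identification of the two obstacles is accurate.

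However, your plan for resolving them diverges from the paper in a way that is likely not completable as stated. You propose showing that a \emph{bounded number of iterations, uniform in $d$}, drives the ratio below $53/27$. The paper does not do this, and it is doubtful that such a uniform iteration count exists: as $d$ grows the bounding recursion at $\beta_*(d)$ approaches a degenerate regime, and the number of steps needed to reach any fixed neighbourhood of $1/3$ is not obviously bounded. Instead, the paper argues via \emph{fixed points}: it first shows the envelope sequences $u_n,\ell_n$ are monotone and hence converge to a solution of $f_u(x,y)=x$, $f_\ell(x,y)=y$ (Lemmas~\ref{lem:seqxymono}--\ref{lem:fixedpoints}), and then shows that no such fixed point can have $x/y\ge 157/80$. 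Since the limiting ratio is below $157/80<53/27$, the desired bound holds for all sufficiently large $n$, with $n_0$ allowed to depend on $d$. Ruling out bad fixed points for large $d$ is itself nontrivial and is done by reparameterising via $\mu=x/y$ and a carefully chosen curve $y_\mu$ (Definitions~\ref{def:gs}--\ref{def:ymu}, Lemmas~\ref{lem:guymono}--\ref{lem:ffixedpoints}), while small $d\in\{3,\dots,22\}$ are handled by brute-force iteration (Lemma~\ref{lem:d<=22sequence}). A second ingredient you do not mention is the monotonicity of the envelope recursion in $\beta$ (Lemma~\ref{lem:seqbetamono}), which reduces the entire $d\ge 3$ analysis to the single critical value $\beta_*(d)$; without this reduction you would have to control an extra continuous parameter.

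For $d=2$ your description is closer to the paper, but again the mechanism is fixed-point exclusion rather than an explicit iteration count: the paper shows (Lemma~\ref{lem:d=2bound}) via \texttt{Resolve} that the system $f_u=x$, $f_\ell=y$ has no solution with $x\ge 1106/2500$ or $y\le 460/2000$ in the relevant region, and then passes to the limit. This is what lets the argument survive as $\beta\to 0$; a direct iteration bound uniform in $\beta$ would have to contend with the slowing dynamics near the frozen configurations you mention.
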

The following corollary is an immediate consequence of Lemma~\ref{lem:onestepuniq}.

\begin{corollary}\label{thm:ratiobound}
For $d=2$ and every $\beta\in (0,1)$  
there is a positive integer $n_0$ such that,
for every $n\geq n_0$,  we have
$\theratio{3}{\beta}{d}{n} \leq {53}/{27}$.

For every $d\geq 3$ and every 
$\beta$ satisfying  $1-3/(d+1)\leq \beta <1$
there is a positve integer $n_0$ such that,
for every $n\geq n_0$, we have
$\theratio{3}{\beta}{d}{n}  \leq {53}/{27}$.
\end{corollary}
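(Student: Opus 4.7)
The plan is to derive Corollary~\ref{thm:ratiobound} as an essentially immediate consequence of Lemma~\ref{lem:onestepuniq}. Fix $d$ and $\beta$ in the relevant regime (either $d=2$ with $\beta\in(0,1)$, or $d\geq 3$ with $1-3/(d+1)\leq\beta<1$). Unpacking the definition of $\theratio{3}{\beta}{d}{n}$ from \eqref{eq:ratio}, for any boundary configuration $\tau\colon\theleaves\to[3]$ and any colours $c_1,c_2\in[3]$ the corresponding ratio of conditional marginals at the root is bounded above by $\big(\max_{c}\Pr_{\thetree}[\sigma(\theroot)=c\mid\sigma(\theleaves)=\tau]\big)/\big(\min_{c}\Pr_{\thetree}[\sigma(\theroot)=c\mid\sigma(\theleaves)=\tau]\big)$. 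Taking the maximum over $\tau$ and $(c_1,c_2)$, it therefore suffices to bound the ratio of the uniform upper bound to the uniform lower bound supplied by Lemma~\ref{lem:onestepuniq}.

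For $d\geq 3$, Lemma~\ref{lem:onestepuniq} directly produces sequences $\{L_n\}$ and $\{U_n\}$, both uniform over the choice of $\tau$ and of the colour $c$, together with the guarantee $U_n/L_n\leq 53/27$ for all sufficiently large $n$. Hence there is an $n_0$ such that $\theratio{3}{\beta}{d}{n}\leq U_n/L_n\leq 53/27$ for every $n\geq n_0$, which is exactly the claim.

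For $d=2$, Lemma~\ref{lem:onestepuniq} gives the stronger uniform sandwich $\tfrac{459}{2000}\leq \Pr_{\thetreetwo}[\sigma(\theroottwo)=c\mid\sigma(\theleavestwo)=\tau]\leq \tfrac{1107}{2500}$ valid for every $\beta\in(0,1)$, every colour $c$, every configuration $\tau$, and every sufficiently large $n$. Consequently $\theratio{3}{\beta}{2}{n}\leq (1107/2500)/(459/2000)=(1107\cdot 2000)/(2500\cdot 459)$, and the only remaining step is the elementary arithmetic verification that this fraction does not exceed $53/27$, i.e.\ that $27\cdot 1107\cdot 2000\leq 53\cdot 2500\cdot 459$ (equivalently, $59{,}778{,}000\leq 60{,}817{,}500$). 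This is a straightforward integer comparison and constitutes no real obstacle; all of the substantive work is absorbed into Lemma~\ref{lem:onestepuniq} itself.
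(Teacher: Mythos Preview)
Your proof is correct and follows exactly the approach the paper intends: the paper states that the corollary is an immediate consequence of Lemma~\ref{lem:onestepuniq}, and you have simply unpacked that immediacy by bounding $\theratio{3}{\beta}{d}{n}$ by the ratio of the uniform upper and lower bounds furnished there (including the arithmetic check $\tfrac{1107/2500}{459/2000}\leq\tfrac{53}{27}$ in the $d=2$ case).
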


We combine this with the following lemma which verifies Condition~\ref{cond:we} for all $\alpha\in(1,53/27]$. The proof  is given in Section~\ref{sec:twosteptoprove}. 
\newcommand{\statelemtwosteptoprove}{
Let $q=3$ and $d\geq 2$. Then, the pair $(q,d)$ satisfies Condition~\ref{cond:we} for all $\alpha\in(1,53/27]$.
}
\begin{lemma}\label{lem:twosteptoprove}
\statelemtwosteptoprove
\end{lemma}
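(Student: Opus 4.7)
The plan is to use the very explicit form of $\mathrm{Ex}_{c_2}(\alpha)$ when $q=3$ to reduce Condition~\ref{cond:we} to a finite family of single-variable inequalities in $\alpha$ on the compact interval $(1,53/27]$, and then to discharge them rigorously along the lines of Lemma~\ref{lem:examplecond}. By the permutation symmetry among the three colours we may take $c_1=1$ and $c_2=3$ (the case $c_1=c_2$ is immediate since $g_{c,c,\beta_*}\equiv 1$ forces $h_{c,c,\beta_*}\equiv 1<\alpha^{1/d}$). Then $\mathrm{Ex}_3(\alpha)$ consists of exactly three vectors,
\[
\pb_A=(1,\alpha,1),\qquad \pb_B=(\alpha,1,1),\qquad \pb_C=(\alpha,\alpha,1),
\]
and since $h_{1,3,\beta_*}$ is symmetric in its $d$ arguments (a consequence of the product form of $g$), any $d$-tuple in $\mathrm{Ex}_3(\alpha)^d$ is determined up to reordering by a non-negative integer triple $(n_A,n_B,n_C)$ with $n_A+n_B+n_C=d$.

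Using the product form of $g$ from~\eqref{eq:gh12def}, a direct computation yields
\[
g_{1,3,\beta_*}(\pb^{(1)},\ldots,\pb^{(d)})=a^{n_B}\,b^{n_C},\qquad g_{2,3,\beta_*}(\pb^{(1)},\ldots,\pb^{(d)})=a^{n_A}\,b^{n_C},
\]
where $a:=1-\tfrac{(1-\beta_*)(\alpha-1)}{\beta_*+\alpha+1}$ and $b:=1-\tfrac{(1-\beta_*)(\alpha-1)}{\beta_*+2\alpha}$ are the one-vector values of $g_{1,3,\beta_*}$ on $\pb_B$ and $\pb_C$ respectively (while $g_{1,3,\beta_*}(\pb_A)=1$). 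Substituting into the definition of $h_{1,3,\beta_*}$, Condition~\ref{cond:we} reduces to the following claim: for every triple $(n_A,n_B,n_C)\in\mathbb{Z}_{\geq 0}^3$ summing to $d$ and every $\alpha\in(1,53/27]$,
\[
1+\frac{(1-\beta_*)\bigl(1-a^{n_B}b^{n_C}\bigr)}{\beta_*+a^{n_B}b^{n_C}+a^{n_A}b^{n_C}}<\alpha^{1/d},\qquad \beta_*=\max\bigl\{0,\tfrac{d-2}{d+1}\bigr\}.
\]
For each fixed $d$ this is a finite collection of $\binom{d+2}{2}$ inequalities in the single variable $\alpha$ on the compact interval $(1,53/27]$, each of which is in principle amenable to {\tt Resolve}.

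The main obstacle is handling all $d\geq 2$ uniformly. The plan is to split into three regimes. The case $d=2$ (where $\beta_*=0$) is dispatched by direct enumeration of the six triples; here the bound $53/27$ is essential, since for instance the triple $(1,1,0)$ reduces to $(\alpha-1)/4<\sqrt{\alpha}-1$, which has equality at $\alpha=9$ and fails throughout $[9,\infty)$, so the strict bound $53/27<9$ is what keeps us safely inside the valid range. The cases $3\leq d\leq d_0$ for a suitable cut-off $d_0$ are dispatched similarly, after substituting $\beta_*=(d-2)/(d+1)$ and enumerating the triples. Finally, for $d>d_0$ the plan is to exploit that $1-\beta_*=3/(d+1)$ forces $1-a$ and $1-b$ to be $O(1/d)$ uniformly for $\alpha\in(1,53/27]$, while $\alpha^{1/d}-1=\Theta(1/d)$; Taylor-expanding both sides in $1/d$, with the scaled multiplicities $(n_A,n_B,n_C)/d\in[0,1]^3$ playing the role of continuous parameters ranging over a compact simplex, reduces the inequality to an asymptotic comparison of explicit rational functions which can be verified uniformly once $d_0$ is large enough.
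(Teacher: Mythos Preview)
Your setup is correct and matches the paper's Lemma~\ref{lem:vald0d1d2}: with your notation $(n_A,n_B,n_C)$ corresponding to the paper's $(d-d_0-d_1,d_1,d_0)$ and $(a,b)=(y(\alpha,\beta_*),x(\alpha,\beta_*))$, you arrive at exactly the function $\varphi(d,d_0,d_1,\alpha,\beta_*)$ of Definition~\ref{def:xy}. Your treatment of small $d$ by {\tt Resolve} is the same idea as Lemma~\ref{lem:d<=22techlem}, and your example for $d=2$, $(n_A,n_B,n_C)=(1,1,0)$ is computed correctly.

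Where your plan diverges from the paper, and where the real gap lies, is the large-$d$ regime. Two points:

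\emph{(i) You are missing a key reduction.} The paper does not attack the full two-parameter family $(n_A,n_B,n_C)$. Lemma~\ref{lem:hbound} uses the elementary inequality $x(\alpha,\beta_*)^2\le y(\alpha,\beta_*)$ (valid for $\alpha<2$, hence on $(1,53/27]$) to show that the maximum over all triples is attained with $n_A=0$, i.e.\ $d_0+d_1=d$. This collapses $\binom{d+2}{2}$ cases to the one-parameter family $\varphi_*(d,d_0,\alpha)$ and is what makes the subsequent analysis feasible. Without it you would have to carry the whole argument over a two-dimensional simplex.

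\emph{(ii) Your asymptotic plan is not quite right and is far from complete.} The claim that Taylor-expanding in $1/d$ reduces the comparison to ``explicit rational functions'' is inaccurate: with $n_B=\kappa_B d$ one has $a^{n_B}=(1-O(1/d))^{\kappa_B d}\to \exp(-3\kappa_B(\alpha-1)/(\alpha+2))$, so the limiting objects are exponential, not rational, in the simplex parameters; and $\alpha^{1/d}-1\sim (\ln\alpha)/d$ is not rational in $\alpha$ either. More importantly, even after the reduction to one parameter, the paper's actual argument (Lemma~\ref{lem:d>=23techlem}) is not a routine expansion: it first replaces the denominator via a weighted AM--GM bound to get a tractable majorant $\tilde\varphi$, then proves $\partial\tilde\varphi/\partial\alpha<0$ on $(1,53/27]$ through a four-step chain of estimates (Steps~1--4), several of which are themselves discharged by {\tt Resolve} over the continuous parameters $(d,\kappa,\alpha)$. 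None of this structure is visible in your sketch, and carrying it out over a two-dimensional simplex (as your plan would require, absent the reduction in~(i)) would be substantially harder still.

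In short: your outline is on the right track and coincides with the paper for small $d$, but the large-$d$ part needs the $x^2\le y$ reduction of Lemma~\ref{lem:hbound} and then a genuine analytic argument along the lines of Lemma~\ref{lem:d>=23techlem}, not just a heuristic Taylor expansion.
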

Using Corollary~\ref{thm:ratiobound} and Lemma~\ref{lem:twosteptoprove}, we give the proof of Theorem~\ref{thm:main}  (which implies 
  Theorem~\ref{thm:uniqueness})
in Section~\ref{sec:conclusion}.

\section{Concluding uniqueness}\label{sec:conclusion}
In this section, we prove Proposition~\ref{lem:steptwouniq} and also conclude the proof of Theorem~\ref{thm:main} (assuming for now Lemmas~\ref{lem:onestepuniq} and~\ref{lem:twosteptoprove} and also Lemma~\ref{lem:existence}, which we have already used). Recall that 
\begin{equation*}\tag{\ref{eq:erff5g54r323234}}
M_{\alpha,c_1,c_2,\beta}={\max}_{(\child{\pb}{1},\hdots,\child{\pb}{d})\in\triangle_{\alpha}^d}\, h_{c_1,c_2,\beta}\big(\child{\pb}{1},\hdots,\child{\pb}{d}\big).
\end{equation*}

We will need the following proposition.
\begin{proposition}\label{lem:gammaisdecreasing}
Let $q\geq 3$, $d\geq 2$ and $\beta\in (0,1)$. Suppose that, for some integer $n\geq 3$ and some $\alpha>1$, it holds that $\theratio{q}{\beta}{d}{n-2}=\alpha$ and  $M_{\alpha,c_1,c_2,\beta}<\alpha^{1/d}$  for all colours $c_1,c_2\in[q]$. 
Then $\theratio{q}{\beta}{d}{n}\leq(M_{\alpha,c_1,c_2,\beta})^d<\theratio{q}{\beta}{d}{n-2}$.
\end{proposition}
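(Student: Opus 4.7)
The plan is to apply the two-step recursion (Proposition~\ref{lem:twostep}) directly, using the hypothesis $\theratio{q}{\beta}{d}{n-2}=\alpha$ to control the inputs and the hypothesis $M_{\alpha,c_1,c_2,\beta}<\alpha^{1/d}$ to control the output. Fix an arbitrary configuration $\tau:\theleaves\to[q]$ on the leaves of $T=\thetree$, and let $z$, $z_1,\ldots,z_d$, $\{z_{i,j}\}_{i,j\in[d]}$, $T_{i,j}$, $\Lambda_{i,j}$, and $\rb^{(i,j)}$ be as in Proposition~\ref{lem:twostep}. Each subtree $T_{i,j}$ has height exactly $n-2$, so it is isomorphic to $\mathbb{T}_{d,n-2}$.

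The first key observation is that $\rb^{(i,j)}\in\triangle_\alpha$ for every $i,j\in[d]$. Indeed, by the hypothesis $\theratio{q}{\beta}{d}{n-2}=\alpha$ and the definition of $\gamma$ in~\eqref{eq:ratio}, for any two colours $c,c'\in[q]$ we have $r^{(i,j)}_c/r^{(i,j)}_{c'}\leq\alpha$; taking $c$ to be the coordinate of $\rb^{(i,j)}$ attaining the maximum and $c'$ the coordinate attaining the minimum shows that $\max_c r^{(i,j)}_c\leq\alpha\min_c r^{(i,j)}_c$, which is exactly the membership condition for $\triangle_\alpha$.

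With this in hand, fix any colours $c_1,c_2\in[q]$. For every $k\in[d]$ the tuple $(\rb^{(k,1)},\ldots,\rb^{(k,d)})$ lies in $\triangle_\alpha^d$, so by the definition of $M_{\alpha,c_1,c_2,\beta}$ in~\eqref{eq:erff5g54r323234},
\[
h_{c_1,c_2,\beta}\bigl(\rb^{(k,1)},\ldots,\rb^{(k,d)}\bigr)\;\leq\;M_{\alpha,c_1,c_2,\beta}.
\]
Multiplying these $d$ inequalities and applying Proposition~\ref{lem:twostep} yields
\[
\frac{\Pr_T[\sigma(z)=c_1\mid\sigma(\Lambda)=\tau]}{\Pr_T[\sigma(z)=c_2\mid\sigma(\Lambda)=\tau]}\;\leq\;\bigl(M_{\alpha,c_1,c_2,\beta}\bigr)^d.
\]
Since this holds for every $\tau$ and every pair $c_1,c_2$, taking the maximum over all such choices (maximising the right-hand side over $c_1,c_2$) gives $\theratio{q}{\beta}{d}{n}\leq(M_{\alpha,c_1,c_2,\beta})^d$ for the relevant pair of colours achieving the maximum. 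Finally, the strict inequality $M_{\alpha,c_1,c_2,\beta}<\alpha^{1/d}$ gives $(M_{\alpha,c_1,c_2,\beta})^d<\alpha=\theratio{q}{\beta}{d}{n-2}$, completing the claim.

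There is no real obstacle here; the proof is essentially a bookkeeping exercise that combines three already-established ingredients (the two-step recursion, the definition of $\gamma$, and the definition of $M_{\alpha,c_1,c_2,\beta}$). The only subtle point is the bridge between the ratio-based definition of $\gamma$ and the max/min-based definition of $\triangle_\alpha$, handled in the second paragraph above.
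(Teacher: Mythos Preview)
Your proof is correct and follows essentially the same approach as the paper: you show $\rb^{(i,j)}\in\triangle_\alpha$ from the definition of $\gamma$, apply the two-step recursion of Proposition~\ref{lem:twostep}, bound each factor by $M_{\alpha,c_1,c_2,\beta}$, and take the maximum over $\tau$ and $c_1,c_2$. The only differences are cosmetic---you spell out the $\triangle_\alpha$ membership argument in slightly more detail and note the minor ambiguity in which pair $(c_1,c_2)$ appears in the displayed conclusion.
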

\begin{proof}
Consider the tree $\thetree$ with root $z=v_{d,n}$ and leaves  $\Lambda= \theleaves$. Let $\tau: \theleaves\rightarrow [q]$ be an arbitrary configuration. As in Proposition~\ref{lem:twostep}, let $\{z_{i,j}\}_{i,j\in [d]}$ denote the grandchildren of the root, 
let $T_{i,j}$ be the subtree  of $T$ rooted at $z_{i,j}$, and let $\Lambda_{i,j}$ be the set of leaves of $T_{i,j}$. Further, let $\rb^{(i,j)}$ be the marginal distribution at $z_{i,j}$  in the subtree $T_{i,j}$, conditioned on the configuration $\tau(\Lambda_{i,j})$.  

By the assumption $\theratio{q}{\beta}{d}{n-2}=\alpha$ and the definition \eqref{eq:ratio} of the ratio $\theratio{q}{\beta}{d}{n-2}$, we have that   $\rb^{(i,j)}\in \triangle_\alpha$ for all $i,j\in [d]$.
Proposition~\ref{lem:twostep} also guarantees that for  colours $c_1\in [q]$ and $c_2\in [q]$ we  have
    \begin{equation*}
    \frac
    {\Pr_{\mathbb{T}_{d,n}}[\sigma(z) = c_1\mid\sigma( \Lambda)=\tau]}
    {\Pr_{\mathbb{T}_{d,n}}[\sigma(z) = c_2\mid\sigma( \Lambda)=\tau]}=\prod^{d}_{k=1}h_{c_1,c_2,\beta}\big(\rb^{(k,1)}, \ldots, \rb^{(k,d)}\big)\leq (M_{\alpha,c_1,c_2,\beta})^d <\alpha,
    \end{equation*}
where the strict inequality follows by the assumption that  $M_{\alpha,c_1,c_2,\beta}<\alpha^{1/d}$. Since $\tau$ was an arbritrary configuration on the leaves $\Lambda$, we obtain that $\theratio{q}{\beta}{d}{n}<\theratio{q}{\beta}{d}{n-2}$ as needed.
\end{proof}

We start with Proposition~\ref{lem:steptwouniq}, which we restate here for convenience.
\begin{lemsteptwouniq}
\statelemsteptwouniq
\end{lemsteptwouniq}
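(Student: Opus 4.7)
The plan is to interpret the hypothesis as a strict contraction that forces the sequence $\alpha_n := \theratio{q}{\beta}{d}{n}$ to decrease monotonically once it lies above $1$, and then to conclude via a fixed-point argument driven by the already-proved Proposition~\ref{lem:gammaisdecreasing}.

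First I would show $\alpha_n \leq \alpha_{n-2}$ for all $n\geq 3$. If $\alpha_{n-2}>1$, the hypothesis supplies $M_{\alpha_{n-2},c_1,c_2,\beta}<\alpha_{n-2}^{1/d}$ for every pair of colours $c_1,c_2\in[q]$, so Proposition~\ref{lem:gammaisdecreasing} applied with $\alpha=\alpha_{n-2}$ gives $\alpha_n<\alpha_{n-2}$. If instead $\alpha_{n-2}=1$, then the definition of $\theratio{q}{\beta}{d}{n-2}$ forces every vector $\rb^{(i,j)}$ appearing in the two-step recursion of Proposition~\ref{lem:twostep} to be the uniform distribution $(1/q,\ldots,1/q)$; each factor $h_{c_1,c_2,\beta}$ then equals $1$, and so $\alpha_n=1$ as well. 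Consequently the even- and odd-indexed subsequences $\{\alpha_{2n}\}$ and $\{\alpha_{2n+1}\}$ are non-increasing in $[1,\infty)$, hence converge to limits $L_e\geq 1$ and $L_o\geq 1$ respectively.

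Next, define $F(\alpha):=\max_{c_1,c_2\in[q]}(M_{\alpha,c_1,c_2,\beta})^d$ on $(1,\infty)$. The hypothesis reads $F(\alpha)<\alpha$ for every $\alpha>1$, and Proposition~\ref{lem:gammaisdecreasing} gives $\alpha_n\leq F(\alpha_{n-2})$ whenever $\alpha_{n-2}>1$. The key technical input is that $F$ is continuous on $(1,\infty)$, which I would derive from continuity of each map $\alpha\mapsto M_{\alpha,c_1,c_2,\beta}$. Monotonicity in $\alpha$ is immediate from $\triangle_{\alpha}\subseteq\triangle_{\alpha'}$ when $\alpha\leq\alpha'$. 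For right-continuity at $\alpha_0$, any maximisers in $\triangle_\alpha^d$ for $\alpha\downarrow\alpha_0$ eventually lie in the compact set $\triangle_{\alpha_0+1}^d$, and any subsequential limit lies in $\bigcap_{\alpha>\alpha_0}\triangle_\alpha^d=\triangle_{\alpha_0}^d$ (closed intersection of nested closed sets); continuity of $h_{c_1,c_2,\beta}$ on $\triangle^d$ (valid because $\beta\in(0,1)$ rules out singularities) upper-bounds the right-limit by $M_{\alpha_0,c_1,c_2,\beta}$, matched from below by monotonicity. For left-continuity, any maximiser of $M_{\alpha_0,c_1,c_2,\beta}$ can be approximated by its convex combinations with the uniform distribution $(1/q,\ldots,1/q)$, which lie strictly inside $\triangle_\alpha$ for every $\alpha<\alpha_0$ sufficiently close to $\alpha_0$, so another appeal to continuity of $h_{c_1,c_2,\beta}$ finishes the argument.

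With continuity of $F$ in hand, suppose for contradiction that $L_e>1$. Then $\alpha_{2n}>1$ eventually, so $\alpha_{2n+2}\leq F(\alpha_{2n})$ for large $n$; letting $n\to\infty$ and using continuity of $F$ at $L_e$ (with $\alpha_{2n}\downarrow L_e$) yields $L_e\leq F(L_e)<L_e$, a contradiction. Hence $L_e=1$, and the same argument gives $L_o=1$, so $\alpha_n\to 1$. This is exactly the statement that the limsup in Definition~\ref{def:uniqueness} vanishes, yielding uniqueness. The main obstacle is the continuity of $F$, which is conceptually clean but needs a short argument because the feasible set $\triangle_\alpha$ depends on $\alpha$ and is neither open nor smoothly varying; everything else is routine monotone convergence combined with the strict-decrease hypothesis $F(\alpha)<\alpha$.
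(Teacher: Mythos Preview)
Your proposal is correct and follows essentially the same approach as the paper. Both arguments use Proposition~\ref{lem:gammaisdecreasing} to obtain monotone convergence of $\{\alpha_{2n}\}$ and $\{\alpha_{2n+1}\}$, then reach a contradiction at a hypothetical limit $L>1$ via compactness and the continuity of $h_{c_1,c_2,\beta}$; the only difference is packaging---you isolate this as continuity of $F(\alpha)$, whereas the paper inlines the compactness step by extracting a convergent subsequence of maximisers $\mathbf{m}_{2n}\in\triangle_{\alpha_{2n}}^d$ whose limit lies in $\triangle_{L}^d$ (this is precisely your right-continuity argument, and indeed only right-continuity is needed since $\alpha_{2n}\downarrow L$).
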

\begin{proof}
Fix $q$, $d$ and $\beta$ as in the statement.
For all $n\geq 1$, let $\alpha_n:= \theratio{q}{\beta}{d}{n}$. We may  assume that $\alpha_n>1$ for all $n\geq 1$ (otherwise, uniqueness follows trivially by 
choosing $n_0$ such that $\alpha_{n_0}=1$ and then applying
  Proposition~\ref{lem:onesteprecursion} repeatedly to show $\alpha_n = 1$ for all $n\geq n_0$.) 

Using Proposition~\ref{lem:gammaisdecreasing} and the assumption that $M_{\alpha,c_1,c_2,\beta}<\alpha^{1/d}$ for all $\alpha>1$ and colours $c_1,c_2\in [q]$, we obtain that
\begin{equation}\label{eq:decreasing}
1<\alpha_n<\alpha_{n-2}.
\end{equation}
This implies that both of the sequences $\{\alpha_{2n}\}$ and $\{\alpha_{2n+1}\}$ are decreasing. Since both 
of these sequences is bounded below by 1, we obtain that for $n\rightarrow \infty$ it holds that\footnote{The notation $\alpha_{2n}\downarrow \alpha_{\mathrm{ev}}$ means that the sequence $\alpha_{2n}$ converges to $\alpha_{\mathrm{ev}}$ by decreasing monotonically.}
\[ 
\alpha_{2n}\downarrow \alpha_{\mathrm{ev}}, \quad \alpha_{2n+1}\downarrow \alpha_{\mathrm{odd}}\]
for some $\alpha_{\mathrm{ev}}, \alpha_{\mathrm{odd}}\geq 1$. We claim that in fact both of $\alpha_{\mathrm{ev}}, \alpha_{\mathrm{odd}}$ are equal to $1$, which proves that 
$  \theratio{q}{\beta}{d}{n}\rightarrow 1$ as $n\rightarrow \infty$.

Suppose for the sake of contradiction that  $\alpha_{\mathrm{ev}}>1$ (a similar argument applies for $\alpha_{\mathrm{odd}}$). Let $\mathbf{m}_{2n}=(\child{\pb}{1}_{2n},\hdots,\child{\pb}{d}_{2n})$ achieve the maximum in \eqref{eq:erff5g54r323234} for $\alpha=\alpha_{2n}$, i.e.,
\[M_{\alpha_{2n},c_1,c_2,\beta}=h_{c_1,c_2,\beta}\big(\child{\pb}{1}_{2n},\hdots,\child{\pb}{d}_{2n}\big).\] 
Note that for all $n\geq 1$ we have that
\begin{equation}\label{eq:contrah1242r2}
h_{c_1,c_2,\beta}\big(\child{\pb}{1}_{2n},\hdots,\child{\pb}{d}_{2n}\big)\geq (\alpha_{\mathrm{ev}})^{1/d};
\end{equation}
otherwise, we would have that $M_{\alpha_{2n}, c_1,c_2,\beta}<(\alpha_{\mathrm{ev}})^{1/d}$ and hence, by Proposition~\ref{lem:gammaisdecreasing}, we would have that $\alpha_{2n+2}<\alpha_{\mathrm{ev}}$, contradicting that $\alpha_{2n}\downarrow \alpha_{\mathrm{ev}}$. 

Moreover, observe that $\mathbf{m}_{2n}$ belongs to the compact space $\triangle^d$ for all $n\geq 1$ and therefore there exists a subsequence $\{n_k\}_{k\geq 1}$ and $(\child{\pb}{1},\hdots,\child{\pb}{d})\in \triangle^d$ 
such that 
\[\mathbf{m}_{2n_k}\rightarrow (\child{\pb}{1},\hdots,\child{\pb}{d}),\]
In fact, since $\{\alpha_{2n_k}\}_{k\geq 1}$ is a subsequence of the convergent sequence $\{\alpha_{2n}\}_{n\geq 1}$, we have that  the sequence $\{\alpha_{2n_k}\}$ converges to $\alpha_{\mathrm{ev}}$ as well. From $\mathbf{m}_{2n_k}\in \triangle_{\alpha_{2n_k}}^d$, we therefore obtain that $(\child{\pb}{1},\hdots,\child{\pb}{d})\in \triangle^d_{\alpha_{\mathrm{ev}}}$. Applying the assumption $M_{\alpha,c_1,c_2,\beta}<\alpha^{1/d}$ for $\alpha=\alpha_{\mathrm{ev}}$, we therefore have that
\begin{equation}\label{eq:contrah1242r2b}
h_{c_1,c_2,\beta}(\child{\pb}{1},\hdots,\child{\pb}{d})<(\alpha_{\mathrm{ev}})^{1/d}.
\end{equation}
Since the function $h_{c_1,c_2,\beta}$ is continuous on $\triangle^{d}$ 
for all $\beta\in (0,1)$, we have that as $k\rightarrow\infty$
\[h_{c_1,c_2,\beta}(\child{\pb}{1}_{2n_k},\hdots,\child{\pb}{d}_{2n_k})\rightarrow h_{c_1,c_2,\beta}(\child{\pb}{1},\hdots,\child{\pb}{d}).\]
This contradicts \eqref{eq:contrah1242r2} and \eqref{eq:contrah1242r2b}.  Therefore, $\alpha_{\mathrm{ev}}=1$, and similarly $\alpha_{\mathrm{odd}}=1$, completing the proof.
\end{proof}

Assuming Lemmas~\ref{lem:onestepuniq} and~\ref{lem:twosteptoprove}, we can also conclude the proof of Theorem~\ref{thm:main} in a similar way.
\begin{thmmain}
\statethmmain
\end{thmmain}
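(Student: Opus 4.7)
My plan is to combine the three ingredients that have been marshalled in Section~\ref{sec:approach}: the one-step bound of Corollary~\ref{thm:ratiobound}, the two-step verification of Condition~\ref{cond:we} for $\alpha$ in a bounded range (Lemma~\ref{lem:twosteptoprove}), and the monotone-convergence/compactness argument already used in the proof of Proposition~\ref{lem:steptwouniq}. Fix $q=3$ and $(d,\beta)$ as in the statement, and write $\alpha_n=\gamma(3,\beta,d,n)$ and $\beta_*=\max\{1-3/(d+1),0\}$. Corollary~\ref{thm:ratiobound} handles both cases simultaneously: it furnishes an integer $n_0$ such that $\alpha_n\leq 53/27$ for all $n\geq n_0$, so the tail of the sequence lives in the compact interval $[1,53/27]$.

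The next step is to promote Condition~\ref{cond:we} into the analytic contraction $M_{\alpha,c_1,c_2,\beta}<\alpha^{1/d}$ on the relevant range. I would argue, following the proof of Proposition~\ref{lem:condimpliesuniq}, that for every $\alpha\in(1,53/27]$ and any colours $c_1,c_2\in[3]$, Lemma~\ref{lem:existence} produces an $(\alpha,c_2)$-extremal maximiser of $h_{c_1,c_2,\beta_*}$ over $\triangle_\alpha^d$, whose entries can be rescaled via~\eqref{eq:scalefree23} so that each of them lies in $\mathrm{Ex}_{c_2}(\alpha)$. Lemma~\ref{lem:twosteptoprove} then yields $M_{\alpha,c_1,c_2,\beta_*}<\alpha^{1/d}$, and the monotonicity statement of Lemma~\ref{lem:mo12no12tone} extends this inequality to $M_{\alpha,c_1,c_2,\beta}<\alpha^{1/d}$ for all $\beta\in[\beta_*,1)$ (note that when $d=2$ we have $\beta_*=0$, which covers the full range $\beta\in(0,1)$ required by the theorem).

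With the strict contraction in hand for every $\alpha\in(1,53/27]$, I would essentially replay the argument from the proof of Proposition~\ref{lem:steptwouniq}, but restricted to $n\geq n_0+2$. Proposition~\ref{lem:gammaisdecreasing} gives $\alpha_n<\alpha_{n-2}$ for every such $n$ (using that $\alpha_{n-2}\in(1,53/27]$), so both subsequences $\{\alpha_{2n}\}$ and $\{\alpha_{2n+1}\}$ are eventually monotonically decreasing and bounded below by~$1$, hence convergent, say to $\alpha_{\mathrm{ev}}$ and $\alpha_{\mathrm{odd}}$ respectively. To rule out $\alpha_{\mathrm{ev}}>1$ (and symmetrically $\alpha_{\mathrm{odd}}>1$), one picks maximisers $\mathbf{m}_{2n}\in\triangle_{\alpha_{2n}}^d$ achieving $M_{\alpha_{2n},c_1,c_2,\beta}$, notes that $h_{c_1,c_2,\beta}(\mathbf{m}_{2n})\geq\alpha_{\mathrm{ev}}^{1/d}$ (otherwise Proposition~\ref{lem:gammaisdecreasing} would force $\alpha_{2n+2}<\alpha_{\mathrm{ev}}$), passes to a subsequential limit $\mathbf{m}^*\in\triangle_{\alpha_{\mathrm{ev}}}^d$ by compactness of $\triangle^d$, and invokes continuity of $h_{c_1,c_2,\beta}$ together with $M_{\alpha_{\mathrm{ev}},c_1,c_2,\beta}<\alpha_{\mathrm{ev}}^{1/d}$ to produce a contradiction.

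The bulk of the technical content lives in Lemmas~\ref{lem:onestepuniq} and~\ref{lem:twosteptoprove}, so the theorem itself is essentially a bookkeeping combination. The only delicate point, and where I would expect to be most careful, is that Condition~\ref{cond:we} is available only for $\alpha\in(1,53/27]$ rather than for all $\alpha>1$ (unlike in Proposition~\ref{lem:steptwouniq}); this is precisely what distinguishes the present argument from the cleaner Proposition~\ref{lem:condimpliesuniq}. The one-step bound is doing exactly the work of trapping the trajectory of $\alpha_n$ inside that interval after finitely many steps, which is why both halves fit together. No serious further obstacle is anticipated beyond tracking the even/odd subsequences and checking that the extremal/scale-free reduction transfers the inequality from $\beta_*$ up to the whole interval $[\beta_*,1)$ via Lemma~\ref{lem:mo12no12tone}.
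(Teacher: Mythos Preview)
Your proposal is correct and follows essentially the same route as the paper's proof: trap the tail of $\alpha_n$ in $(1,53/27]$ via Corollary~\ref{thm:ratiobound}, convert Condition~\ref{cond:we} (Lemma~\ref{lem:twosteptoprove}) into $M_{\alpha,c_1,c_2,\beta}<\alpha^{1/d}$ on that range using Lemma~\ref{lem:existence}, the scale-free property, and Lemma~\ref{lem:mo12no12tone}, then apply Proposition~\ref{lem:gammaisdecreasing} and rerun the compactness/continuity argument from Proposition~\ref{lem:steptwouniq} on the tail subsequences. The only cosmetic difference is that you treat $d=2$ and $d\geq 3$ uniformly via $\beta_*=\max\{1-3/(d+1),0\}$, whereas the paper separates them; also note (as the paper does in passing) that the degenerate case $\alpha_{n}=1$ for some $n$ is dispatched by Proposition~\ref{lem:onesteprecursion}.
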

\begin{proof}
We first consider the case $d\geq 3$. Let $\beta\in[1-\tfrac{3}{d+1},1)$ and for all $n\geq 1$, set $\alpha_n=\gamma(3,d,\beta,n)$. By Lemma~\ref{lem:onestepuniq}, we have that there exists $n_0$ such that for all $n\geq n_0$, it holds that
\[\alpha_n\in (1,53/27].\]
(The reason that the left end-point of the interval is open is that we finish if $\alpha_n=1$, as in the proof
of Proposition~\ref{lem:steptwouniq}.)

By Lemma~\ref{lem:twosteptoprove},
the pair $(q,d)$ satisfies Condition~\ref{cond:we}
for $\alpha_n$ (for $n\geq n_0$).
By the definition of Condition~\ref{cond:we},
 for all $c_1,c_2\in[q]$,
 and 
 $(\child{\pb}{1},\hdots,\child{\pb}{d})\in \mathrm{Ex}_{c_2}(\alpha_n)$,
 we have
 $ h_{c_1,c_2,\beta_*}\big(\child{\pb}{1},\hdots,\child{\pb}{d}\big)<\alpha_n^{1/d}  $.
 By the scale-free property of~$h_{c_2,c_2,\beta_*}$,
 $M_{\alpha_n,c_1,c_2,\beta_*} < \alpha_n^{1/d}$
 so, by Lemma~\ref{lem:mo12no12tone},
  $M_{\alpha_n,c_1,c_2,\beta} < \alpha_n^{1/d}$.
   Using Proposition~\ref{lem:gammaisdecreasing}  we obtain that for all $n\geq n_0+2$, it holds that
\begin{equation}\label{eq:decreasingb}
1<\alpha_n< \alpha_{n-2}.
\end{equation}
This implies that both of the sequences $\{\alpha_{2n}\}_{n\geq n_0}$ and $\{\alpha_{2n+1}\}_{n\geq n_0}$ are decreasing, and since both are bounded below by 1, they converge. 
We now use an argument that is almost identical to the one used in 
the proof of  Proposition~\ref{lem:steptwouniq}.
 The only difference is that now the sequences start from $2n_0$ and $2n_0+1$ instead of from $n=2$ and $n=3$, respectively.
 Using this argument,
 we obtain that the limits of $\{\alpha_{2n}\}_{n\geq n_0}$ and $\{\alpha_{2n+1}\}_{n\geq n_0}$ must be equal to 1, thus proving that  $\gamma(3,d,\beta,n)\rightarrow 1$ as $n\rightarrow \infty$.

The argument for the case $d=2$ and $\beta\in (0,1)$ is actually the same; the only difference to the case $d\geq 3$ is that $\beta$ lies in an open interval instead of a half-open interval.
\end{proof}

\section{Proving Tree Recursions}
\label{sec:proverec}

In this section, we give proofs of the (standard) tree recursions, which we have already used.
We first prove Proposition~\ref{lem:onesteprecursion} for the one-step recursion.

\begin{lemonesteprecursion}
\statelemonestep
\end{lemonesteprecursion}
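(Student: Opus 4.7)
The plan is to expand the conditional probability as a ratio of weighted sums and exploit the fact that, once the colour $\sigma(v)=c$ is fixed at the root, the Gibbs weight of $T$ factorises over the subtrees $T_1,\ldots,T_d$. Concretely, for each child $v_i$ and each colour $c'\in[q]$, I would introduce the partial partition function
\[
Z^{(i)}_{c'} := \sum_{\substack{\sigma_i\colon V(T_i)\to[q] \\ \sigma_i(v_i)=c',\ \sigma_i(\Lambda_i)=\tau(\Lambda_i)}} w_{T_i}(\sigma_i),
\]
and set $Z^{(i)} := \sum_{c'\in[q]} Z^{(i)}_{c'}$. By the definition of the Gibbs distribution on $T_i$, the ratio $Z^{(i)}_{c'}/Z^{(i)}$ is precisely $\Pr_{T_i}[\sigma(v_i)=c'\mid \sigma(\Lambda_i)=\tau(\Lambda_i)]$.

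The main step is the factorisation at the root. The only edges of $T$ that connect different subtrees are the root edges $(v,v_i)$, so the weight of a configuration $\sigma$ of $T$ with $\sigma(v)=c$ and $\sigma(\Lambda)=\tau$ decomposes as $\prod_{i=1}^d \beta^{\mathbf{1}[c=\sigma(v_i)]}\, w_{T_i}(\sigma\restriction_{V(T_i)})$. Summing over all such $\sigma$, independence across subtrees gives
\[
\sum_{\substack{\sigma\colon V(T)\to[q] \\ \sigma(v)=c,\ \sigma(\Lambda)=\tau}} w_T(\sigma) \;=\; \prod_{i=1}^d \Bigl( \sum_{c'\in[q]} \beta^{\mathbf{1}[c=c']}\, Z^{(i)}_{c'} \Bigr).
\]
Singling out the $c'=c$ term, the inner sum simplifies to
\[
\beta\, Z^{(i)}_{c} + \sum_{c'\neq c} Z^{(i)}_{c'} \;=\; Z^{(i)} - (1-\beta)\, Z^{(i)}_{c} \;=\; Z^{(i)}\Bigl(1-(1-\beta)\Pr_{T_i}[\sigma(v_i)=c\mid\sigma(\Lambda_i)=\tau(\Lambda_i)]\Bigr).
\]

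To conclude, I would observe that $\Pr_T[\sigma(v)=c\mid\sigma(\Lambda)=\tau] = N_c/\sum_{c''\in[q]} N_{c''}$, where $N_{c''}$ denotes the expression above with $c$ replaced by $c''$. The common factor $\prod_{i=1}^d Z^{(i)}$ appears in every $N_{c''}$ and cancels between the numerator and denominator, yielding exactly the ratio stated in the proposition. The argument is really just bookkeeping; the only point that needs a moment's care is the factorisation step, where one must verify that the edge weights along $(v,v_i)$ are captured correctly by the $\beta^{\mathbf{1}[c=c']}$ factor and that the subtree weights $w_{T_i}$ decouple cleanly once $\sigma(v_i)$ is specified. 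Positivity of the denominator (needed so that the conditional probability is well-defined) follows from $\beta\in(0,1)$ together with $Z^{(i)}>0$, which itself holds because any assignment consistent with $\tau(\Lambda_i)$ has strictly positive weight.
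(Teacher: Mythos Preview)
Your proof is correct and follows essentially the same approach as the paper: both arguments fix the root colour, factorise the Gibbs weight over the $d$ subtrees using the root edges, rewrite each subtree factor as $Z^{(i)}\bigl(1-(1-\beta)\Pr_{T_i}[\sigma(v_i)=c\mid\sigma(\Lambda_i)=\tau(\Lambda_i)]\bigr)$, and then cancel the common $\prod_i Z^{(i)}$ from numerator and denominator. The only cosmetic difference is that you name the partial partition functions $Z^{(i)}_{c'}$ explicitly, whereas the paper works directly with the weights $w_{T_i}(\sigma(v_i)=c\land\sigma(\Lambda_i)=\tau(\Lambda_i))$.
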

\begin{proof}
For any graph $G$,
we use  $V(G)$ to denote the vertex set of   $G$.
Recall that, for any configuration $\sigma: V(G)\rightarrow [q]$, its weight in the Potts model with parameter $\beta$ is given by 
$w_G(\sigma)=\beta^{m(\sigma)}$
where $m(\sigma)$ denotes the number of monochromatic edges in $G$ under the assignment $\sigma$.
If $v\in V(G)$ then
we use the notation  $w_G(\sigma(v)= c)$ to
denote the quantity $w_G(\sigma(v)=c) = \sum_{\sigma'\colon V(G)\to[q], \sigma'(v)=c} w_G(\sigma')$. 
Similarly, if $S$ is a subset of $V(G)$
and $\tau$ is an assignment $\tau:S \rightarrow [q]$
then we use the notation
$w_G(\sigma(S)=\tau) = \sum_{\sigma'\colon V(G)\to[q], \sigma'(S)=\tau} w_G(\sigma')$. 
We will typically be interested in the case where $G$ is a sub-tree of~$T$.
We have
    \begin{equation}\label{eq:conditioned}
    \Pr_{T}[\sigma(v) = c\mid \sigma(\Lambda)=\tau] =\frac{\Pr_{T}[\sigma(v) = c\land \sigma(\Lambda)=\tau]}{\Pr_{T}[\sigma(\Lambda)=\tau]}=\frac{w_{T}(\sigma(v) = c\land \sigma(\Lambda)=\tau)}{w_{T}(\sigma(\Lambda)=\tau)}.
    \end{equation}

    Now we compute $w_T(\sigma(v) = c\land \sigma(\Lambda)=\tau)$:
    \begin{align}\label{eq:weight}
    w_T(\sigma(v) &= c\land \sigma(\Lambda)=\tau)\cr =&\prod_{i=1}^d
 \Big( w_{T_i}\big(\sigma(v_i) \neq c\land \sigma(\Lambda_i)=\tau(\Lambda_i)\big)+\beta\, w_{T_i}\big(\sigma(v_i) = c\land \sigma(\Lambda_i)=\tau(\Lambda_i)\big)\Big)\cr
    =&\prod_{i=1}^d w_{T_i}\big(\sigma(\Lambda_i)=\tau(\Lambda_i)\big) \left(1- (1-\beta)\frac{w_{T_i}\big(\sigma(v_i) = c\land \sigma(\Lambda_i)=\tau(\Lambda_i)\big)}{w_{T_i}\big(\sigma(\Lambda_i)=\tau(\Lambda_i)\big) }\right)\cr
    =&\prod_{i=1}^d w_{T_i}\big(\sigma(\Lambda_i)=\tau(\Lambda_i)\big)
    \Big(1-(1-\beta)
    \Pr_{T_i}\big[\sigma(v_i)= c \mid \sigma(\Lambda_{i}) = \tau(\Lambda_{i})\big]\Big).
    \end{align}
    Also, we have $w_T(\sigma(\Lambda)=\tau) = \sum_{c'=1}^q w_T(\sigma(v) = c'\land\sigma(\Lambda)=\tau)$. Combining this with~ \eqref{eq:conditioned}~and~\eqref{eq:weight}, we obtain the statement of the lemma.
\end{proof}

We next prove Proposition~\ref{lem:twostep} for the two-step recursion of Section~\ref{sec:introtwostep}.
\begin{lemtwostep}
\statelemtwostep
\end{lemtwostep}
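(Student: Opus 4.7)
My plan is to derive the two-step recursion by applying the one-step recursion (Proposition~\ref{lem:onesteprecursion}) twice: once at the root $z$ (to relate marginals at $z$ to marginals at the children $z_i$) and once at each child $z_i$ (to relate marginals at $z_i$ to the marginals at the grandchildren $z_{i,j}$, i.e., to $\rb^{(i,j)}$). Then the goal is algebraic: to simplify the resulting double product and show that it factors into $d$ copies of~$h_{c_1,c_2,\beta}$, one per child.

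First, I would apply Proposition~\ref{lem:onesteprecursion} at $z$. Taking the ratio of the expressions for $\Pr_T[\sigma(z)=c_1\mid\sigma(\Lambda)=\tau]$ and $\Pr_T[\sigma(z)=c_2\mid\sigma(\Lambda)=\tau]$ collapses the denominator $\sum_{c'} \prod_i(\cdot)$, giving
\[
\frac{\Pr_T[\sigma(z)=c_1\mid\sigma(\Lambda)=\tau]}{\Pr_T[\sigma(z)=c_2\mid\sigma(\Lambda)=\tau]}
=\prod_{i=1}^{d}\frac{1-(1-\beta)\Pr_{T_i}[\sigma(z_i)=c_1\mid\sigma(\Lambda_i)=\tau(\Lambda_i)]}{1-(1-\beta)\Pr_{T_i}[\sigma(z_i)=c_2\mid\sigma(\Lambda_i)=\tau(\Lambda_i)]}.
\]
So it suffices to prove that the $i$-th factor equals $h_{c_1,c_2,\beta}(\rb^{(i,1)},\ldots,\rb^{(i,d)})$ for each~$i$.

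For a fixed~$i$, I would apply Proposition~\ref{lem:onesteprecursion} again at the vertex $z_i$ (with its $d$ subtrees $T_{i,j}$) to write
\[
\Pr_{T_i}[\sigma(z_i)=c\mid\sigma(\Lambda_i)=\tau(\Lambda_i)]=\frac{A_{i,c}}{S_i}, \quad \text{where } A_{i,c}:=\prod_{j=1}^{d}\bigl(1-(1-\beta)r^{(i,j)}_c\bigr),\ S_i:=\sum_{c'=1}^q A_{i,c'}.
\]
Substituting into the $i$-th factor above and clearing $S_i$ gives $\bigl(S_i-(1-\beta)A_{i,c_1}\bigr)/\bigl(S_i-(1-\beta)A_{i,c_2}\bigr)$. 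On the other hand, a short simplification of the definitions in \eqref{eq:gh12def} shows that
\[
g_{c_1,c_2,\beta}(\rb^{(i,1)},\ldots,\rb^{(i,d)})=\prod_{k=1}^{d}\frac{1-(1-\beta)r^{(i,k)}_{c_1}}{1-(1-\beta)r^{(i,k)}_{c_2}}=\frac{A_{i,c_1}}{A_{i,c_2}},
\]
using the identity $\beta p_{c_2}+\sum_{c\neq c_2}p_c=1-(1-\beta)p_{c_2}$ for any probability vector. Plugging this into the definition of $h_{c_1,c_2,\beta}$ and using $\beta A_{i,c_2}+\sum_{c\neq c_2}A_{i,c}=S_i-(1-\beta)A_{i,c_2}$, the fraction collapses to exactly $\bigl(S_i-(1-\beta)A_{i,c_1}\bigr)/\bigl(S_i-(1-\beta)A_{i,c_2}\bigr)$, matching the $i$-th factor.

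The argument is essentially bookkeeping; there is no genuine obstacle, only the need to be careful with the algebra identifying the ratio form of the two-step recursion with the $g/h$ expression. I would present it as a single display per step so that the simplifications $\beta p_{c_2}+\sum_{c\neq c_2}p_c=1-(1-\beta)p_{c_2}$ and $\beta A_{i,c_2}+\sum_{c\neq c_2}A_{i,c}=S_i-(1-\beta)A_{i,c_2}$ are visible. Combining the $d$ simplified factors yields the stated two-step recursion.
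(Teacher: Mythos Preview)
Your proposal is correct and follows essentially the same approach as the paper: apply Proposition~\ref{lem:onesteprecursion} at the root and at each child, take ratios to collapse the normalising sums, and identify each of the $d$ resulting factors with $h_{c_1,c_2,\beta}(\rb^{(i,1)},\ldots,\rb^{(i,d)})$ via the algebraic identities you highlight. The paper organises the same computation by first noting $r^{(i)}_{c_1}/r^{(i)}_{c_2}=g_{c_1,c_2,\beta}(\rb^{(i,1)},\ldots,\rb^{(i,d)})$ and then rewriting the $i$-th factor in the form $1+\frac{(1-\beta)(1-r^{(i)}_{c_1}/r^{(i)}_{c_2})}{\beta+\sum_{c\neq c_2}r^{(i)}_c/r^{(i)}_{c_2}}$, which is exactly your $(S_i-(1-\beta)A_{i,c_1})/(S_i-(1-\beta)A_{i,c_2})$ after clearing $A_{i,c_2}$.
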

\begin{proof}[Proof of Proposition~\ref{lem:twostep}]
For  $i\in[d]$ and $c\in[q]$, let $T_i$ be the subtree of $T$ rooted at $z_i$ 
with leaves $\Lambda_{T_i}$
and let
$r^{(i)}_c=\Pr_{T_i}[\sigma(z_{i})=c \mid \sigma(\Lambda_{T_i})= \tau(\Lambda_{T_i})]$; let $\rb^{(i)}$ be the vector  $(r^{(i)}_1, \ldots, r^{(i)}_q )$.

For every $i\in[d]$ and $c\in[q]$ 
we can apply  Proposition~\ref{lem:onesteprecursion}
to $T_i$ to obtain 
\begin{equation*}
r^{(i)}_c = \frac{\prod_{j=1}^d \big(1-(1-\beta)r^{(i,j)}_c\big)}{\sum_{c'=1}^q \prod_{j=1}^d \big(1-(1-\beta)r^{(i,j)}_{c'}\big)}.
\end{equation*}
We have $r^{(i)}_c>0$ for every $c\in[q]$, hence  we can apply
 Proposition~\ref{lem:onesteprecursion} to $T$ 
to obtain
\begin{equation*}
\Pr_{\thetree}[\sigma(z)=c\mid \sigma(\Lambda) = \tau] = \frac{\prod_{i=1}^d \big(1-(1-\beta)r^{(i)}_c\big)}{\sum_{c'=1}^q \prod_{i=1}^d \big(1-(1-\beta)r^{(i)}_{c'}\big)}.
\end{equation*}

 Thus, for every $i\in[d]$ and $c_1,c_2\in[q]$,
\begin{equation}\label{eq:f34g11123}
    \frac{r^{(i)}_{c_1}}{r^{(i)}_{c_2}}= \prod^d_{j=1}\frac{1-(1-\beta)r^{(i,j)}_{c_1}}{1-(1-\beta)r^{(i,j)}_{c_2}}=g_{c_1,c_2,\beta} \big(\rb^{(i,1)},\ldots,\rb^{(i,d)}\big).
\end{equation}
Analogously, for every $c_1, c_2 \in [q]$, we have
\begin{equation}\label{eq:f34g11123a}
\frac{\Pr_{\thetree}[\sigma(z)=c_1\mid \sigma(\Lambda) = \tau]}{\Pr_{\thetree}[\sigma(z)=c_2\mid \sigma(\Lambda) = \tau]} = \prod_{k=1}^d\frac{1-(1-\beta)r^{(k)}_{c_1}}{1-(1-\beta)r^{(k)}_{c_2}}=\prod_{k=1}^d\bigg(1+\frac{(1-\beta)\big(1-r^{(k)}_{c_1}/r^{(k)}_{c_2}\big)}{\beta+\sum_{c\neq c_2} r^{(k)}_{c}/r^{(k)}_{c_2}}\bigg)
\end{equation}
Plugging \eqref{eq:f34g11123} into \eqref{eq:f34g11123a}, and using the definition of $h_{c_1,c_2,\beta}$ from \eqref{eq:gh12def}, we obtain the statement of the lemma.
\end{proof}

\section{Bounds from the one-step recursion -- Proof of Lemma~\ref{lem:onestepuniq}}\label{sec:bounds12onestep}
In this section, we prove Lemma~\ref{lem:onestepuniq}.
\subsection{Bounding the marginal probability at the root by the one-step recursion}
\label{sec:fns}
We begin by giving an upper and a lower bound for the marginal probability that the root is assigned a colour $c$ via the one-step recursion (see the upcoming Lemma~\ref{lem:boundbyonesteprecursion}).

First we define two functions. Let
\[f_u(d, \beta, x,y) =
\frac{\big(1-(1-\beta)y\big)^d}{\big(1-(1-\beta) y\big)^d + 2\big(1-(1-\beta) x\big)^{d/2}\big(1-(1-\beta)(1-x-y)\big)^{d/2}} \]
and
\[f_\ell(d, \beta,x, y) = \frac{\big(1-(1-\beta) x\big)^d}{\big(1-(1-\beta) x\big)^d + \big(1-(1-\beta) y\big)^d + \big(1-(1-\beta)(1-x-y)\big)^d}.\]

\def\tempx{z}
\def\newx{x}
We will use the following lemma.  
\begin{lemma}\label{lem:ferfe}
    Let $f$ be a convex function on an interval $I=[a,b]$. 
    \begin{enumerate}
        \item \label{it:dec1} Given $\rho\in [2a,a+b]$, the function $g(\newx) = f(\newx) + f(\rho-\newx)$ is decreasing on $J=[a,\rho/2]$.
        \item \label{it:inc2} Given $\rho\in [a+b,2b]$, the function $g(\newx) = f(\newx) + f(\rho-\newx)$ is increasing on $J=[\rho/2,b]$. 
       \end{enumerate}
   \end{lemma}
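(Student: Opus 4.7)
The plan is to derive this directly from the standard characterisation of convex functions via monotonicity of secant slopes: if $f$ is convex on $I=[a,b]$ and $s<t\leq u<v$ lie in $I$ with $t-s=v-u$, then $f(t)-f(s)\leq f(v)-f(u)$, i.e.\ the secant slope over $[u,v]$ is at least the secant slope over $[s,t]$. This is a standard one-sided form of convexity and requires no differentiability of $f$.

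I would first observe that part~(\ref{it:inc2}) reduces to part~(\ref{it:dec1}) by the reflection substitution $\tilde f(y):=f(a+b-y)$, which is convex on $[a,b]$, with parameter $\tilde\rho:=2(a+b)-\rho\in[2a,a+b]$. Setting $y=a+b-x$ turns the function $g(x)=f(x)+f(\rho-x)$ on $[\rho/2,b]$ into $\tilde g(y):=\tilde f(y)+\tilde f(\tilde\rho-y)$ on $[a,\tilde\rho/2]$, with the orientation reversed. Hence ``\emph{$g$ increasing}'' for part~(\ref{it:inc2}) becomes ``\emph{$\tilde g$ decreasing}'' for part~(\ref{it:dec1}), so it suffices to prove part~(\ref{it:dec1}).

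For part~(\ref{it:dec1}), I would first check that $g$ is well defined on $J$: for $x\in[a,\rho/2]$, the hypothesis $\rho\geq 2a$ gives $\rho-x\geq\rho/2\geq a$, and $\rho\leq a+b$ gives $\rho-x\leq \rho-a\leq b$, so $\rho-x\in[\rho/2,b]\subseteq I$. Then, fixing arbitrary $a\leq x_1<x_2\leq\rho/2$, set $u_1:=\rho-x_2$ and $u_2:=\rho-x_1$. From $x_2\leq\rho/2$ one gets $x_2\leq u_1$, and by construction $u_2-u_1=x_2-x_1$, so the pairs $(x_1,x_2)$ and $(u_1,u_2)$ satisfy the hypotheses of the convexity slope inequality. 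Applying it yields $f(x_2)-f(x_1)\leq f(u_2)-f(u_1)$, which rearranges to $g(x_2)\leq g(x_1)$.

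There is no substantial obstacle here, as the lemma is essentially a textbook fact; the only mild care is in tracking the endpoint conditions on $\rho$ so that $\rho-x$ stays inside $I$ and so that the intervals $[x_1,x_2]$ and $[u_1,u_2]$ are correctly ordered (the two separate cases~(\ref{it:dec1}) and~(\ref{it:inc2}) correspond precisely to the two possible orderings, which is why the reflection trick works cleanly).
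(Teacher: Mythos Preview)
Your argument is correct and follows essentially the same route as the paper: both proofs establish Item~\ref{it:dec1} by comparing the secant slopes of $f$ over the two equal-length intervals $[x_1,x_2]$ and $[\rho-x_2,\rho-x_1]$, which convexity orders correctly. The only cosmetic difference is that the paper proves Item~\ref{it:inc2} by repeating the analogous slope comparison directly, whereas you reduce it to Item~\ref{it:dec1} via the reflection $\tilde f(y)=f(a+b-y)$; either way the content is identical.
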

   \begin{proof}
       We first prove Item~\ref{it:dec1}.  Suppose  $y,\tempx\in J$  satisfy  $y< \tempx$, we will show that $g(\tempx)\leq g(y)$.
       We have $y < \tempx \leq \rho-\tempx < \rho-y$ and $y\geq a$, $\rho-y\leq b$ (using $\rho\leq a+b$). It follows that all of 
       $y,\tempx,\rho-y,\rho-\tempx$ belong to $I$. Moreover, by the convexity of $f$ on $I$, we conclude that  the slope of $f$ in the interval 
       $[\rho-\tempx,\rho-y]$ is greater or equal to the slope of $f$ in the interval $[y,\tempx]$,  i.e., 
       $\frac{f(\rho-y)-f(\rho-\tempx)}{(\rho-y)-(\rho-\tempx)}\geq \frac{f(\tempx)-f(y)}{\tempx-y}$. 
       Re-arranging, we obtain $g(\tempx) \leq g(y)$.
       
       The proof of Item~\ref{it:inc2} is  analogous. For $y,\tempx\in J$  satisfying  $y< \tempx$, we have that $\rho-\tempx<\rho-y\leq y < \tempx$ and all of $y,\tempx,\rho-y,\rho-\tempx$ belong to $I$ 
       (using $\rho\geq a+b$). By the convexity of $f$ on $I$, we conclude that  the slope of $f$ in the interval 
       $[\rho-\tempx,\rho-y]$ is less or equal to the slope of $f$ in the interval $[y,\tempx]$, which gives that $g(\tempx)\geq g(y)$.
   \end{proof}

The following lemma gives recursively-generated  bounds on 
the probability that the root of $\thetree$ is a given colour.
 
\begin{lemma}\label{lem:boundbyonesteprecursion}
    Suppose $q= 3$, $d\geq 2$ and $\beta\in [0,1]$.
    For any $n\geq0$, let $T=\thetree$  with root $v=v_T$ and leaves $\Lambda=\Lambda_T$.
    Let $v_1,\ldots,v_d$ be the children of $v$ in~$T$, $T_i$ be the subtree of $T$ rooted at $v_i$ and $\Lambda_i$ denote the set of leaves of the subtree $T_i$.
    Consider any configuration $\tau \colon \Lambda \to [q]$
    and any real numbers $L,U \in [0,1]$ 
    such that, for all $i\in [d]$ and all $j\in[3]$
    we have 
    $$L\leq \Pr_{T_i}[\sigma(v_i)=j \mid \sigma(\Lambda_i)= \tau(\Lambda_i)]\leq U.$$
    Then, for all colours $c\in[q]$, we also have 
    \begin{equation*}
    f_\ell(d, \beta, U, L) \leq \Pr_T[\sigma(v)=c \mid \sigma(\Lambda)= \tau] \leq f_u(d, \beta, U, L).
    \end{equation*}
\end{lemma}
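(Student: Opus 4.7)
The plan is to apply Proposition~\ref{lem:onesteprecursion} (the one-step recursion) and then reduce each of the two bounds to an optimisation over the per-child feasible set $F:=\{(r_1,r_2,r_3)\colon r_1+r_2+r_3=1,\ L\le r_j\le U\}$. Writing $r_j^{(i)}:=\Pr_{T_i}[\sigma(v_i)=j\mid\sigma(\Lambda_i)=\tau(\Lambda_i)]$ (so $r_j^{(i)}\in[L,U]$), $A_j^{(i)}:=1-(1-\beta)r_j^{(i)}$, and $P_j:=\prod_{i=1}^d A_j^{(i)}$, the one-step recursion yields $\Pr_T[\sigma(v)=c\mid\sigma(\Lambda)=\tau]=P_c/(P_1+P_2+P_3)$. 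By symmetry of the colours I treat only $c=3$.

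For the upper bound $\Pr\le f_u(d,\beta,U,L)$, I apply AM-GM to pair up the two ``off'' colours: $P_1+P_2\ge 2\sqrt{P_1P_2}=2\prod_i\sqrt{A_1^{(i)}A_2^{(i)}}$, hence
\[\Pr\le \frac{\prod_iA_3^{(i)}}{\prod_iA_3^{(i)}+2\prod_i\sqrt{A_1^{(i)}A_2^{(i)}}}.\]
Since this expression is increasing in each $A_3^{(i)}$ and decreasing in each $\sqrt{A_1^{(i)}A_2^{(i)}}$, it suffices to prove, for each $i$ independently, the pointwise bounds $A_3^{(i)}\le 1-(1-\beta)L$ (trivial from $r_3^{(i)}\ge L$) and $A_1^{(i)}A_2^{(i)}\ge (1-(1-\beta)U)(1-(1-\beta)(1-U-L))$. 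The second is a two-variable optimisation on $F$: using $A_1A_2=1-(1-\beta)(1-r_3)+(1-\beta)^2 r_1 r_2$, which is linear with positive slope in $r_1 r_2$, the minimum is attained by spreading $(r_1,r_2)$ as far apart as the constraints allow; a short monotonicity analysis in $r_3$ identifies the claimed minimizer $(r_3,r_1,r_2)=(L,U,1-U-L)$ in the ``nice'' regime $1-U-L\in[L,U]$, and a short algebraic comparison handles the degenerate regimes where that configuration is infeasible.

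For the lower bound $\Pr\ge f_\ell(d,\beta,U,L)$, I equivalently upper bound $Y_1+Y_2$ where $Y_j:=P_j/P_3=\prod_i X_j^{(i)}$ with $X_j^{(i)}:=A_j^{(i)}/A_3^{(i)}$. Here the argument necessarily couples all children. Fixing every child except $i$, $Y_1+Y_2$ reduces to the linear-fractional function $(\mu A_1^{(i)}+\nu A_2^{(i)})/A_3^{(i)}$ of $(r_1^{(i)},r_2^{(i)})$ on the convex polytope $F$ (after eliminating $r_3^{(i)}=1-r_1^{(i)}-r_2^{(i)}$), hence attains its maximum at a vertex. Direct differentiation shows this function is strictly decreasing in each of $r_1^{(i)}$ and $r_2^{(i)}$, so the maximising vertex has $r_3^{(i)}=U$; the remaining linear objective in $r_1^{(i)}$ (with $r_2^{(i)}=1-U-r_1^{(i)}$) is then extremised at $(r_1^{(i)},r_2^{(i)})\in\{(L,1-U-L),(1-U-L,L)\}$, giving $(X_1^{(i)},X_2^{(i)})\in\{(x^*,y^*),(y^*,x^*)\}$ where $x^*=(1-(1-\beta)L)/(1-(1-\beta)U)$ and $y^*=(1-(1-\beta)(1-U-L))/(1-(1-\beta)U)$. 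Iterating this push-one-child step produces a global maximiser in which each child sits at one of these two vertices, say $k$ at the first and $d-k$ at the second, and then
\[Y_1+Y_2 = (x^*)^k(y^*)^{d-k} + (x^*)^{d-k}(y^*)^k,\]
which is log-convex in $k$ and hence maximised at $k\in\{0,d\}$ (equivalently, by AM-GM), giving $(x^*)^d+(y^*)^d=(Q^*+R^*)/P^*$ with $P^*=(1-(1-\beta)U)^d$, $Q^*=(1-(1-\beta)L)^d$, $R^*=(1-(1-\beta)(1-U-L))^d$. This is exactly the required bound.

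I expect the main obstacle to be the single-child vertex analysis underlying the lower bound---showing that the linear-fractional maximum lies at the ``spread'' vertex $(L,1-U-L,U)$ rather than at a ``doubled'' vertex such as $(L,L,1-2L)$ or $(U,U,1-2U)$. The two-stage monotonicity argument above handles this cleanly in the regime $1-U-L\in[L,U]$; the degenerate regimes where the canonical vertex is infeasible require a small amount of extra case-work (mirroring the algebraic comparison at the end of the upper-bound step), but these regimes do not arise in the downstream application to Lemma~\ref{lem:onestepuniq}, in which the relevant $L,U$ lie near $1/3$.
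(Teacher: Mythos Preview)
Your upper bound argument is essentially the paper's: AM--GM on the two off-colours, then a per-child lower bound on $A_1^{(i)}A_2^{(i)}$ obtained by spreading $(r_1,r_2)$ to an extreme. The paper packages this last step via Lemma~\ref{lem:ferfe} (Item~2) applied to $f(x)=-\ln(1-\hat\beta x)$ on $I=[-1,1]$, which has the advantage of working uniformly for \emph{all} $L,U$ --- it never needs the point $(U,1-U-L,L)$ to be feasible, only that $U\geq\max\{r_1,r_2\}$ and $r_1+r_2\leq 1-L$.

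Your lower bound takes a genuinely different route. The paper first applies Jensen (concavity of $\ln(1-\hat\beta x)$) to collapse each product $\prod_i(1-\hat\beta p_{i,c'})$ to $(1-\hat\beta\bar p_{c'})^d$, and then uses a one-variable convexity argument (Lemma~\ref{lem:ferfe}, Item~1) on $(1-\hat\beta\bar p_2)^d+(1-\hat\beta\bar p_3)^d$ together with $\bar p_2+\bar p_3\geq 1-U$. You instead keep the per-child structure, push each child to an extremal vertex of $F$ via the linear-fractional/monotonicity observation, and finish with a discrete convexity argument in the count $k$. Both are valid; your argument is perhaps more transparent about where the extremal configuration comes from, while the paper's Jensen step is shorter and --- crucially --- regime-free: it never needs $1-U-L\in[L,U]$.

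That last point is the only real gap. As stated, the lemma is for \emph{arbitrary} $L,U\in[0,1]$ satisfying the hypothesis, and your vertex identification (for both bounds) relies on the nice regime. You correctly note that the downstream application stays in that regime, but the lemma itself does not assume it. If you want a self-contained proof of the lemma as stated, the cleanest fix is to replace your vertex case-work by the paper's convexity moves: for the upper bound, bound $A_1A_2$ directly via $g(x)=-\ln(1-\hat\beta x)-\ln(1-\hat\beta(\rho-x))$ increasing on $[\rho/2,1]$ (so $g(U)\geq g(\max\{r_1,r_2\})$), then use $\rho\leq 1-L$; for the lower bound, apply Jensen first and then the analogous decreasing-on-$[0,\rho/2]$ argument. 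These avoid any feasibility discussion entirely.
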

\begin{proof}
    By symmetry between the colours, we may assume that $c=1$. 
    Let $p= \Pr_T[\sigma(v)=1 \mid \sigma(\Lambda)= \tau]$.
   For any colour $c'\in [3]$ and 
   any child
   $i\in [d]$, let 
    $p_{i,c'}=\Pr_{T_i}[\sigma(v_i)=c' \mid \sigma(\Lambda_i)= \tau(\Lambda_i)]$.
    
    For convenience, let $\hat{\beta} := 1-\beta \in [0,1]$.  
    By Proposition~\ref{lem:onesteprecursion}, with $q=3$ and $c=1$, we have that
    \begin{equation}\label{eq:btb5y56}
    p = \frac{1}{1+R}, \mbox{ where } R := \frac{\sum_{c'=2}^3\prod_{i=1}^d\big(1-\hat\beta p_{i,c'}\big)}{\prod_{i=1}^d \big(1-\hat\beta p_{i,1}\big)}.
    \end{equation}
    We first show that $ p \geq f_\ell(d, \beta, U, L)$. Since $p_{i,1}\leq U$ for every $i\in [d]$, we obtain that
    \begin{equation}\label{eq:rvrveetr2}
    R \leq \frac{\sum_{c'=2}^3\prod_{i=1}^d\big(1-\hat\beta p_{i,c'}\big)}{\big(1-\hat\beta U\big)^d}.
    \end{equation}
    For $c'\in[3]$, let  $\bar{p}_{c'}$   denote the mean $\frac{1}{d}\sum_{i\in[d]} p_{i,c'}$. 
    The function $f(x) = \ln(1-\hat{\beta} x)$ is concave on the interval $[0,1]$
    so by Jensen's inequality
    \begin{equation*}
    \frac{1}{d} \sum_{i=1}^d \ln(1-\hat{\beta} p_{i,c'}) \leq \ln(1-\hat{\beta} \bar{p}_{c'}).
    \end{equation*}
    Thus, for $c'\in\{2,3\}$ we have
    $   \prod_{i=1}^d\big(1-\hat\beta p_{i,c'}\big)\leq 
    \big(1-\hat\beta \bar{p}_{c'}\big)^d$ which implies
    \begin{equation}\label{eq:4g4g6gg}
    \sum_{c'=2}^3\prod_{i=1}^d\big(1-\hat\beta p_{i,c'}\big)\leq    
    \big(1-\hat\beta \bar{p}_2\big)^d+\big(1-\hat\beta \bar{p}_3\big)^d.
    \end{equation}

    Let $f(x) =    (1-\hat\beta x)^d$. Let $a=0, b=1, \rho=\bar{p}_2 + \bar{p}_3$ and consider the interval $I=[a,b]$.  Since $f$ is convex on the interval $I$ and $\rho\in [2a,a+b]$, Item~\ref{it:dec1} of Lemma~\ref{lem:ferfe}
    implies that the function $g(x) = f(x) + f(\rho-x)$ is decreasing on
    $J=[a,\rho/2]=[0,\rho/2]$. 
    Since $L\leq \bar{p}_2$ and $L\leq \bar{p}_3$,
    the values $L$ and  $\min\{\bar{p}_2,\bar{p}_3\}$  are in~$J$
    and  $\min\{\bar{p}_2,\bar{p}_3\}\geq L$, so $g(\min\{\bar{p}_2,\bar{p}_3\})\leq g(L)$, i.e.,
    $$
    \big(1-\hat\beta \bar{p}_2\big)^d+\big(1-\hat\beta \bar{p}_3\big)^d
    \leq \big(1-\hat\beta L\big)^d+\big(1-\hat\beta (\bar{p}_2+\bar{p}_3-L)\big)^d.
    $$
    Since, for every $i\in[d]$,  
    $p_{i,2}+p_{i,3} =    1-p_{i,1}\geq 1-U$, we have
    $\bar{p}_2 + \bar{p}_3 \geq 1-U$, so 
    $$   
    \big(1-\hat\beta \bar{p}_2\big)^d+\big(1-\hat\beta \bar{p}_3\big)^d 
    \leq \big(1-\hat\beta L\big)^d+\big(1-\hat\beta (1-U-L)\big)^d.
    $$
    Plugging this into \eqref{eq:4g4g6gg} and then into \eqref{eq:rvrveetr2}, we obtain that 
    \begin{equation*}
    R \leq \frac{\big(1-\hat\beta L\big)^d+\big(1-\hat\beta (1-U-L)\big)^d}{\big(1-\hat\beta U\big)^d}.
    \end{equation*}
    Therefore, using \eqref{eq:btb5y56}, we obtain the lower bound $ p\geq f_\ell(d,\beta, U, L)$.

    Next, we show that $ p \leq f_u(d,\beta, U, L)$.    To give an upper bound on $ p$, it suffices to lower bound $R$. Since $p_{i,1}\geq L$ for every $i\in [d]$, we obtain the lower bound
    \begin{equation}\label{eq:rvrv342}
    R \geq \frac{\sum_{c'=2}^3\prod_{i=1}^d\big(1-\hat\beta p_{i,c'}\big)}{\big(1-\hat\beta L\big)^d}.
    \end{equation}
    Using the arithmetic-mean geometric-mean 
    inequality we have  
    \begin{equation}\label{eq:wds134}
    \sum_{c'=2}^3\prod_{i=1}^d\big(1-\hat\beta p_{i,c'}\big)\geq 2\prod^d_{i=1}\Big(\big(1-\hat\beta p_{i,2}\big)\big(1-\hat\beta p_{i,3}\big)\Big)^{1/2}.
    \end{equation}
    
    Now let  $f(x) =   -\ln(1-\hat\beta x\big)$ and consider an arbitrary $i\in [d]$. Let $a=-1, b=1, \rho=p_{i,2}+p_{i,3}$ and consider the interval $I=[a,b]$. 
    Since $f$ is convex on $I$ and $\rho\in [a+b,2b]$, Item~\ref{it:inc2} of Lemma~\ref{lem:ferfe}
    implies that the function $g(x) = f(x) + f(\rho-x)$ is increasing on
    the interval $J=[\rho/2,b]=[\rho/2,1]$.
    Let $x= U$ and $y = \max\{p_{i,2},p_{i,3}\}$.
    Since $p_{i,2}$ and $p_{i,3}$ are at most $U$,
    $x$ and $y$ are in $J$ and satisfy $x\geq y$.
    Therefore, $g(x)\geq g(y)$, which gives that
    $$
    \ln\big(1-\hat\beta y\big) + \ln\big(1-\hat\beta (\rho-y)\big) \geq  \ln\big(1-\hat\beta x\big) + \ln\big(1-\hat\beta (\rho-x)\big).$$ 
    Thus, by substituting in the values of~$x$, $y$ and $\rho$ and exponentiating, we have
    \begin{equation*}
    \big(1-\hat\beta p_{i,2}\big)\big(1-\hat\beta p_{i,3}\big)
    \geq \big(1-\hat\beta U\big)\big(1-\hat\beta (p_{i,2}+p_{i,3}-U)\big).
    \end{equation*}
    Using the inequality $p_{i,2}+p_{i,3} = 1-p_{i,1}\leq 1-L$  in the right-hand side, we get
    \begin{equation*}
    \big(1-\hat\beta p_{i,2}\big)\big(1-\hat\beta p_{i,3}\big)
    \geq \big(1-\hat\beta U\big)\big(1-\hat\beta (1-U-L)\big).
    \end{equation*}
    Plugging this into \eqref{eq:wds134} for each $i\in[d]$ and then into \eqref{eq:rvrv342}, we obtain that 
    \[R \geq \frac{2\big(1-\hat\beta U\big)^{d/2}\big(1-\hat\beta(1-U-L)\big)^{d/2}}{\big(1-\hat\beta L\big)^d}.\]
    Therefore, using \eqref{eq:btb5y56}, we obtain the upper bound $ p \leq f_u(d,\beta, U, L)$. 
\end{proof}

\subsection{Properties of the functions $f_u$ and $f_\ell$}
In this section, we establish useful monotonicity properties of the functions $f_u$ and $f_\ell$ that will be relevant later. 
 
\begin{lemma}\label{lem:fbetamono}
For any fixed $d \geq 2$
and  any fixed $x$ and $y$ satisfying $0\leq y \leq x \leq 1$ and
$2y + x \leq 1 \leq 2x + y$,  
    \begin{itemize}
        \item   $f_u(d,\beta,x,y)$ is a decreasing function of $\beta$ on the interval  $(0, 1)$ and
        \item $f_\ell(d, \beta,x,y)$ is an increasing function of $\beta$ on the interval $(0,1)$.
    \end{itemize}
\end{lemma}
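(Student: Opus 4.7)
The plan is to reduce both monotonicity claims to elementary statements about explicit ratios in $\hat\beta := 1-\beta$. I would introduce the shorthand $A := 1-\hat\beta y$, $B := 1-\hat\beta x$, $C := 1-\hat\beta(1-x-y)$; all three lie in $(0,1]$ for $\hat\beta\in(0,1)$, and the hypotheses give $A \geq B$ (from $y\leq x$) and $C \geq B$ (from $2x+y\geq 1$). In these variables,
\[
f_u = \frac{1}{1+R_u},\quad R_u := \frac{2 B^{d/2} C^{d/2}}{A^d}, \qquad f_\ell = \frac{1}{1+S_\ell},\quad S_\ell := (A/B)^d+(C/B)^d.
\]
So the two monotonicity claims become, respectively, ``$R_u$ is decreasing in $\hat\beta$'' and ``$S_\ell$ is increasing in $\hat\beta$''.

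For the $f_\ell$ claim I expect a short computation to suffice: the quotient rule gives
$\tfrac{d}{d\hat\beta}(A/B) = (x-y)/B^2 \geq 0$ and $\tfrac{d}{d\hat\beta}(C/B) = (2x+y-1)/B^2 \geq 0$ by the hypotheses. Since $A/B$ and $C/B$ remain positive, $(A/B)^d$ and $(C/B)^d$ inherit the same monotonicity via the chain rule, hence so does $S_\ell$, and thus $f_\ell$ is increasing in $\beta$.

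The harder direction is $f_u$. Taking logarithms and differentiating, the target inequality $\tfrac{d \ln R_u}{d \hat\beta}\leq 0$ becomes
\[
\frac{y}{A} \leq \frac{1}{2}\left(\frac{x}{B} + \frac{1-x-y}{C}\right).
\]
The trick I would use is to recognize the three fractions as values of a single function $\phi(t) := t/(1-\hat\beta t)$ at $t\in\{y,x,1-x-y\}$. One checks $\phi'(t) = (1-\hat\beta t)^{-2}>0$ and $\phi''(t) = 2\hat\beta(1-\hat\beta t)^{-3}>0$, so $\phi$ is increasing and strictly convex on $[0,1]$. By Jensen, $\tfrac12\bigl(\phi(x)+\phi(1-x-y)\bigr) \geq \phi\bigl(\tfrac{1-y}{2}\bigr)$, so it suffices to prove $\phi(y) \leq \phi\bigl(\tfrac{1-y}{2}\bigr)$, and since $\phi$ is increasing this just asks that $3y\leq 1$. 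But the hypotheses give $3y \leq 2y+x \leq 1$, finishing the argument.

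The main obstacle I anticipate is precisely this convexity packaging: the differentiated inequality mixes three terms with distinct denominators and opposite signs, and a naive bound that simply replaces $y$ by $x$ in the numerator fails. Wrapping everything into the single function $\phi$ and combining Jensen's inequality with monotonicity is what makes the assumption $2y+x\leq 1$ actually bite, via the derived bound $y\leq 1/3$. The $f_\ell$ side, by contrast, should go through directly once the right ratios $A/B$ and $C/B$ are isolated.
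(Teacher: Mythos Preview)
Your argument is correct and takes a genuinely different route from the paper. The paper computes $\partial f_u/\partial\beta$ and $\partial f_\ell/\partial\beta$ explicitly (using Mathematica to verify the derivative formulas), writes the numerator of $\partial f_u/\partial\beta$ in terms of an auxiliary polynomial $W$ in $\hat\beta$, and then bounds $W$ term-by-term using the hypotheses to obtain $W\geq\beta(1-3y)\geq 0$; the $f_\ell$ side is handled similarly by direct inspection of the explicit derivative.

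Your approach, by contrast, avoids any explicit derivative formula for $f_u$ or $f_\ell$. Writing $f_u=1/(1+R_u)$ and $f_\ell=1/(1+S_\ell)$ and differentiating $\ln R_u$ packages the $f_u$ claim as the single inequality $\phi(y)\leq\tfrac12(\phi(x)+\phi(1-x-y))$ for the convex increasing map $\phi(t)=t/(1-\hat\beta t)$; Jensen plus monotonicity then reduces everything to $3y\leq 1$, which is exactly the same endpoint the paper reaches but via a cleaner conceptual mechanism and with no need for computer verification. The $f_\ell$ side is likewise more transparent in your formulation: isolating the ratios $A/B$ and $C/B$ makes the monotonicity immediate from two quotient-rule computations, each landing directly on one of the hypotheses $x\geq y$ and $2x+y\geq 1$. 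Your proof is shorter, self-contained, and makes clearer why the two structural hypotheses are exactly what is needed.
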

\begin{proof}
 Let $\hat{\beta} := 1- \beta$,
 and 
 $W = 1 - 3 y + \hat\beta  \big((3 y - 1) + (1 - x - 2 y) + x (2 x + y - 1) +y (x - y)\big)$. 
The derivative of $f_u$ with respect to~$\beta$ is given by
 $$  \pderiv{f_u}{\beta} = -f_u^2\frac{d(1-\hat\beta x)^{d/2}(1-\hat\beta (1-x-y))^{d/2} {W}\strut }
 {\strut (1-\hat\beta y)^{d+1}(1-\hat\beta x)(1-\hat\beta (1-x-y))}.$$
(Obviously, this can be checked directly, but the reader may prefer to use the Mathematica code in
Section~\ref{app:fbetamono} to check this and the derivative of $f_\ell$ with respect to~$\beta$, which appears below.)
Using the conditions on~$x$ and~$y$   in the statement of the lemma,
we find that
$$ W \geq 1 - 3 y + \hat\beta (3y-1)=\beta(1-3y)\geq 0.$$
We conclude that $ \pderiv{f_u}{\beta}\leq 0$  so
 $f_u(d,\beta,x,y)$ is a decreasing function of $\beta$ on the interval $[0, 1]$.
 Similarly, 
 \begin{equation*}
    \pderiv{f_\ell}{\beta}= f_\ell^2\frac{d  \left( (2 x+y-1) (1-\hat\beta  (1-x-y))^{d-1}+(x-y) (1-\hat\beta  y)^{d-1} \right)}{(1-\hat\beta x)^{d+1}}\geq 0,
    \end{equation*}
    so   $f_\ell(d, \beta,x,y)$ is an increasing function of $\beta$ on the interval $(0,1)$.\end{proof}

\begin{lemma}\label{lem:flxymono}
    For any fixed $d\geq 2$ and $0<\beta\leq 1$,  
    \begin{enumerate}
        \item \label{it:weq51} $f_\ell(d,\beta,x,y)$ is a  decreasing function of $x$ when $x,y\in [0,1]$, and
        \item \label{it:weq52} $f_\ell(d, \beta, x,y)$ is an increasing function of $y$ when $x+2y\leq 1$ and $x,y\in [0,1]$.
    \end{enumerate}
\end{lemma}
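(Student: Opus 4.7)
The plan is to do direct partial differentiation, working with the abbreviations $\hat\beta := 1-\beta \in [0,1)$, $A := (1-\hat\beta x)^d$, $B := (1-\hat\beta y)^d$, and $C := (1-\hat\beta(1-x-y))^d$, so that $f_\ell = A/(A+B+C)$. The edge case $\beta = 1$ (equivalently $\hat\beta = 0$) makes $f_\ell \equiv 1/3$, so both items hold trivially in the weak sense; I would dispose of this case first and then assume $\hat\beta \in (0,1]$.

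For Item~\ref{it:weq51} (monotonicity in $x$), the plan is to rewrite $f_\ell = 1/(1 + B/A + C/A)$, so it suffices to show that $B/A$ and $C/A$ are each non-decreasing in $x$. Since $A = (1-\hat\beta x)^d$ is non-increasing in $x$ on $[0,1]$, the quotient $B/A$ (with $B$ independent of $x$) is clearly non-decreasing in $x$. For $C/A$, observe that the numerator $(1-\hat\beta(1-x-y))^d$ is non-decreasing in $x$ while the denominator $A$ is non-increasing in $x$ and positive, so the quotient is non-decreasing. Summing and taking reciprocals flips the direction, giving the required monotonicity of $f_\ell$.

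For Item~\ref{it:weq52} (monotonicity in $y$ under $x+2y\leq 1$), the plan is even simpler: $A$ is independent of $y$, so the sign of $\partial f_\ell/\partial y$ is the opposite of the sign of $\partial (B+C)/\partial y$. Differentiating gives
\[
\pderiv{(B+C)}{y} = d\hat\beta\Big[(1-\hat\beta(1-x-y))^{d-1} - (1-\hat\beta y)^{d-1}\Big],
\]
which is $\leq 0$ iff $1-\hat\beta(1-x-y) \leq 1-\hat\beta y$, i.e.\ iff $y \leq 1-x-y$, i.e.\ iff $x+2y\leq 1$. Thus under the stated hypothesis $\partial f_\ell/\partial y \geq 0$, proving the claim.

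There is no real obstacle: the whole proof is elementary calculus, and the condition $x+2y\leq 1$ in Item~\ref{it:weq52} turns out to be exactly the condition that makes the argument of the exponent in $C$ at least the argument of the exponent in $B$, which is the single inequality that drives the sign of the derivative. The one thing to be careful about is the boundary values of $x$ and $y$ where $A$, $B$ or $C$ could vanish (e.g.\ $\beta=0$ combined with $x=1$); in those cases one can argue by continuity from the interior, or simply note that $f_\ell$ is still well-defined and the monotonicity is preserved in the limit.
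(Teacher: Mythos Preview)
Your argument is correct. For Item~\ref{it:weq51} it is essentially identical to the paper's: both write $f_\ell = 1/(1+R)$ with $R=(B+C)/A$ and observe that the numerator of each piece of $R$ is non-decreasing in $x$ while $A$ is non-increasing and positive. For Item~\ref{it:weq52}, the paper reaches the same conclusion (that $B+C$ is decreasing in $y$ on $[0,(1-x)/2]$) but routes it through a general convexity lemma (Lemma~\ref{lem:ferfe}) applied to $f(t)=(1-\hat\beta t)^d$, whereas you differentiate directly and reduce to the single inequality $1-x-y\ge y$. Your route is more elementary and self-contained for this particular lemma; the paper's convexity lemma is reused elsewhere (e.g.\ in Lemma~\ref{lem:boundbyonesteprecursion} and Lemma~\ref{lem:fuxymono}), which is why it is stated separately there. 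Two minor remarks: since the hypothesis is $\beta>0$, you actually have $\hat\beta\in[0,1)$, so $A,B,C$ are always strictly positive and the boundary worry you flag at the end never arises; and after disposing of $\beta=1$ you should write $\hat\beta\in(0,1)$ rather than $(0,1]$.
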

\begin{proof}
    Let
    \[R :=     \frac{\big(1-(1-\beta) y\big)^d+\big(1-(1-\beta)(1-x-y)\big)^d}{\big(1-(1-\beta)x\big)^d}, \mbox{ so that } f_\ell(d,\beta,x,y)= \frac{1}{1+R}.\]
    
 We first prove Item~\ref{it:weq52}. Let $a=0$, $b=1$, $\rho=1-x$ 
 and consider the interval $I=[a,b]$. Since $f(y) = (1-(1-\beta) y)^d$ is convex on $I$ and $\rho\in [2a,a+b]$, Item~\ref{it:dec1} of Lemma~\ref{lem:ferfe} yields that $g(y) = f(y) + f(\rho-y)$ is decreasing on
$J = [a,\rho/2]=[0,\rho/2]$. It follows that, for fixed~$x$, $R$ is  a decreasing function of $y$ on $J$ 
and therefore $f_\ell$  is increasing in~$y$. It remains to observe that, for $x\in [0,1]$, the condition $y\in J$ is equivalent to the condition $x+2y\leq 1$ and $y\in [0,1]$ in the statement. 
    
    For Item~\ref{it:weq51}, note that $1-(1-\beta)(1-x-y)$ is an increasing nonnegative function of $x$ and $1-(1-\beta)x$ is a  decreasing nonnegative function of $x$, so $R$ is an increasing function of $x$. Thus, $f_\ell$ is a  decreasing function of $x$.
\end{proof}

\begin{lemma}\label{lem:fuxymono}
    For any fixed $d\geq 2$ and $0<\beta\leq 1$,  
    \begin{enumerate}
        \item \label{it:we14} $f_u(d,\beta,x,y)$ is an increasing function of $x$ when $1 \leq 2x+y$ and $x,y\in[0,1]$, and
        \item \label{it:we15} $f_u(d, \beta, x,y)$ is a  decreasing function of $y$ when $x,y\in [0,1]$.
    \end{enumerate}
\end{lemma}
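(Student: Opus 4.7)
The plan is to write $f_u(d,\beta,x,y) = 1/(1+R(x,y))$ where
\[
R(x,y) = \frac{2\bigl(1-(1-\beta)x\bigr)^{d/2}\bigl(1-(1-\beta)(1-x-y)\bigr)^{d/2}}{\bigl(1-(1-\beta)y\bigr)^d}.
\]
Then showing $f_u$ is increasing (resp.\ decreasing) in one of its arguments reduces to showing $R$ is decreasing (resp.\ increasing) in that argument. It is convenient to take logs:
\[
\log R = \log 2 + \tfrac{d}{2}\log\bigl(1-(1-\beta)x\bigr) + \tfrac{d}{2}\log\bigl(1-(1-\beta)(1-x-y)\bigr) - d\log\bigl(1-(1-\beta)y\bigr).
\]

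For Item~\ref{it:we15} (monotonicity in $y$), differentiating directly gives
\[
\pderiv{\log R}{y} = \frac{d(1-\beta)}{2\bigl(1-(1-\beta)(1-x-y)\bigr)} + \frac{d(1-\beta)}{1-(1-\beta)y},
\]
which is a sum of two non-negative terms since $1-\beta\ge 0$ and both denominators are positive for $x,y\in[0,1]$. Hence $R$ is increasing in $y$, so $f_u$ is decreasing in $y$. This step is routine.

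For Item~\ref{it:we14} (monotonicity in $x$), only the first two terms of $\log R$ depend on $x$, and together they equal $\tfrac{d}{2}h(x)$ where $h(x) = f(x) + f(\rho-x)$ with $f(t) = \log(1-(1-\beta)t)$ and $\rho = 1-y$. The function $f$ is concave on its domain, so the argument is the concave analogue of Lemma~\ref{lem:ferfe}: since $f'$ is decreasing, $h'(x) = f'(x) - f'(\rho-x) \leq 0$ whenever $x \geq \rho - x$, i.e., whenever $2x+y \geq 1$. Thus $R$ is decreasing in $x$ on the stated region and $f_u$ is increasing in $x$.

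I don't expect any significant obstacle; the structural observation that decouples the $x$-dependence into a symmetric pair $f(x) + f(\rho - x)$, mirroring the analysis used for $f_\ell$ in Lemma~\ref{lem:flxymono}, reduces everything to an elementary concavity computation. The only minor care needed is to verify that the denominators in $R$ remain positive on the claimed domain, which follows immediately from $\beta \in (0,1]$ and $x,y\in[0,1]$.
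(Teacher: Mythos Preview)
Your proposal is correct and follows essentially the same approach as the paper: both write $f_u = 1/(1+R)$ and reduce to monotonicity of $R$, handling the $x$-dependence via the symmetric pair $f(x)+f(\rho-x)$ with $\rho=1-y$ and the concavity/convexity of the logarithm (the paper applies Lemma~\ref{lem:ferfe} to the convex function $-\ln$, whereas you do the equivalent one-line derivative argument for the concave $\ln$). For Item~\ref{it:we15} the paper argues directly from monotonicity of the factors of $R$ rather than differentiating $\log R$, but this is the same trivial observation.
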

\begin{proof}
    The proof is analogous to that of Lemma~\ref{lem:flxymono}. Let
    \[R :=     \frac{2\big(1-(1-\beta) x\big)^{d/2}\big(1-(1-\beta)(1-x-y)\big)^{d/2}}{\big(1-(1-\beta)y\big)^d}, \mbox{ so that } f_u(d,\beta,x,y)= \frac{1}{1+R}.\]
    We first prove Item~\ref{it:we14}.  Let $a=-y$, $b=1$, $\rho=1-y$ and consider the interval $I=[a,b]$. Since the function 
    $f(x)=-\ln\big(1-(1-\beta)x\big)$ is convex on the interval $I$ and $\rho\in [a+b,2b]$, by Item~\ref{it:inc2} of Lemma~\ref{lem:ferfe}, the function $g(x)=f(x)+f(\rho-x)$ is increasing on the interval $J=[\rho/2,b]$. It follows that the function 
    \[\exp(-g(x))=\big(1-(1-\beta) x\big)\big(1-(1-\beta)(1-x-y)\big)\] is a decreasing function of $x$ on $J$, and therefore  $R$ has the same property as well. Thus, $f_u$ is an increasing function of $x$ on the interval $J$. It remains to observe that, for $y\in [0,1]$, the condition $x\in J$ is equivalent to the condition $1 \leq 2x+y$ and $x\in [0,1]$ in the statement. 
    
    For Item~\ref{it:we15}, note that $1-(1-\beta)(1-x-y)$ is an  increasing nonnegative function of $y$ and $1-(1-\beta)y$ is a  decreasing nonnegative function of $y$, so $R$ is an increasing function of $y$. Thus, $f_u$ is a  decreasing function of $y$.
\end{proof}

\begin{lemma}\label{lem:fimgrange}
    For any fixed $d \geq 2$, $0 < \beta < 1$ and $0 \leq y\leq x \leq 1$ such that $2y+x \leq 1 \leq 2x + y$, we have\[2f_\ell(d,\beta,x,y) + f_u(d,\beta,x,y) \leq1\leq  2f_u(d,\beta,x,y)+f_\ell(d,\beta,x,y).\]
\end{lemma}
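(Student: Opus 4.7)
The plan is to introduce shorthand $A := 1-(1-\beta)x$, $B := 1-(1-\beta)y$, $C := 1-(1-\beta)(1-x-y)$ and $\Delta := A^{d/2}C^{d/2}$, so that
\[f_\ell = \frac{A^d}{A^d+B^d+C^d}, \qquad f_u = \frac{B^d}{B^d+2\Delta}.\]
Since $1-\beta > 0$, the three hypotheses $y\leq x$, $2y+x\leq 1$ and $2x+y\geq 1$ translate respectively into $A\leq B$, $C\leq B$ and $A\leq C$, giving the chain $A\leq C\leq B$ that I will exploit throughout. Note also that $A,B,C\geq\beta>0$, so both denominators are strictly positive.

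For the upper bound $2f_\ell + f_u\leq 1$, I would clear denominators and simplify; I expect the inequality to reduce (after cancelling a common positive factor of~$2$) to $A^d(B^d+\Delta)\leq \Delta(B^d+C^d)$. The difference of the two sides should then factor as
\[A^d(B^d+\Delta) - \Delta(B^d+C^d) = A^{d/2}\bigl(A^{d/2}-C^{d/2}\bigr)\bigl(B^d + \Delta + C^d\bigr),\]
which is nonpositive: the middle factor is $\leq 0$ because $A\leq C$, and the other two factors are nonnegative.

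For the lower bound $2f_u + f_\ell\geq 1$, an analogous clearing of denominators should reduce the claim to $B^d(2A^d + B^d + C^d)\geq 2\Delta(B^d + C^d)$. A clean factorisation is harder to come by here, so my plan is a short case split. If $B^d\geq 2\Delta$ the inequality is immediate from $2A^d + C^d\geq 0$. Otherwise $2\Delta > B^d$, and after dividing through by $B^d$ and setting $r := C^d/B^d\in[0,1]$, the inequality becomes $2A^d \geq (1+r)(2\Delta - B^d)$; bounding $1+r\leq 2$ (which uses $r\leq 1$, i.e.\ $C\leq B$) reduces this to $A^d+B^d\geq 2\Delta$, which follows from the AM--GM bound $A^d+C^d\geq 2A^{d/2}C^{d/2} = 2\Delta$ combined with $B^d\geq C^d$. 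The main nontrivial step I anticipate is spotting the factorisation in the first part and the correct case split in the second; the remaining manipulations are routine polynomial algebra.
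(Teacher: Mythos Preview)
Your proof is correct. Both the factorisation in the first part and the case split in the second part go through as you describe (the minor imprecision ``immediate from $2A^d+C^d\geq 0$'' should read $2A^d\geq 0$, but the argument is sound).

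The paper's proof is organised differently and is a bit slicker. Rather than clearing denominators, the paper first observes via AM--GM that $A^d+C^d\geq 2\Delta$, so the denominator $A^d+B^d+C^d$ of $f_\ell$ is at least the denominator $B^d+2\Delta$ of $f_u$. It then bounds $2f_\ell+f_u$ from above by putting everything over the \emph{smaller} denominator $B^d+2\Delta$, reducing the claim to $2A^d+B^d\leq B^d+2\Delta$, i.e.\ $A^{d/2}\leq C^{d/2}$; symmetrically, it bounds $2f_u+f_\ell$ from below by putting everything over the \emph{larger} denominator $A^d+B^d+C^d$, reducing to $2B^d+A^d\geq A^d+B^d+C^d$, i.e.\ $B\geq C$. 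This avoids your case split and treats the two inequalities symmetrically. On the other hand, your factorisation for the first inequality is a genuinely different argument that avoids AM--GM there entirely, which is a nice observation; the trade-off is that the second inequality then requires more work on your side.
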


\begin{proof}
    Since $2y+x \leq 1 \leq 2x+y$, we obtain that $y\leq 1-x-y\leq x$. Further, by the AM-GM inequality,
    \[\big(1-(1-\beta) x\big)^d+\big(1-(1-\beta)(1-x-y)\big)^d\geq 2\big(1-(1-\beta)
     x\big)^{d/2}\big(1-(1-\beta)(1-x-y)\big)^{d/2}.\]
    So
    the denominator in the definition of $f_\ell(d,\beta,x,y)$ is at least as big as the denominator in the definition of
    $f_u(d,\beta,x,y)$. We conclude that
    \[2f_\ell+f_u \leq\frac{2\big(1-(1-\beta)x\big)^d+\big(1-(1-\beta)y\big)^d}{\big(1-(1-\beta) y\big)^d + 2\big(1-(1-\beta) x\big)^{d/2}\big(1-(1-\beta)(1-x-y)\big)^{d/2}} \leq 1,\]
    and
    \[2f_u + f_\ell \geq \frac{2\big(1-(1-\beta) y\big)^d+\big(1-(1-\beta) x\big)^d}{\big(1-(1-\beta) x\big)^d + \big(1-(1-\beta) y\big)^d + \big(1-(1-\beta)(1-x-y)\big)^d}\geq 1.\qedhere\]
\end{proof}

\begin{lemma}\label{lem:fulxymono}
    For any fixed $d \geq 2$, $0 < \beta < 1$ and $0 \leq y_1 \leq y_2 \leq x_2 \leq x_1 \leq 1$ such that $2y_1+x_1 \leq 1 \leq 2x_1 + y_1$ and $2y_2+x_2 \leq 1 \leq 2x_2 + y_2$, we have
    \[f_u(d,\beta,x_2,y_2)\leq f_u(d,\beta,x_1,y_1) \mbox{ \ \ and \ \ } f_\ell(d,\beta,x_2,y_2)\geq f_\ell(d,\beta,x_1, y_1).\]
\end{lemma}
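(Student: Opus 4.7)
The plan is to chain the one-variable monotonicities already established in Lemmas~\ref{lem:flxymono} and~\ref{lem:fuxymono} by moving from $(x_1,y_1)$ to $(x_2,y_2)$ through a suitably chosen intermediate point. Since the conditional monotonicity items (Item~\ref{it:we14} of Lemma~\ref{lem:fuxymono} and Item~\ref{it:weq52} of Lemma~\ref{lem:flxymono}) carry side conditions, the order of the two moves must be chosen carefully, and in fact it has to be different for $f_u$ and for $f_\ell$. Recall that we are given $y_1\leq y_2\leq x_2\leq x_1$, so going from $(x_1,y_1)$ to $(x_2,y_2)$ means decreasing $x$ and increasing $y$.

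For the inequality $f_u(d,\beta,x_2,y_2)\leq f_u(d,\beta,x_1,y_1)$, I would go via the intermediate point $(x_1,y_2)$. In the first step, hold $x=x_1$ and increase $y$ from $y_1$ to $y_2$; Item~\ref{it:we15} of Lemma~\ref{lem:fuxymono} says $f_u$ is decreasing in $y$ on $[0,1]$ without any additional condition, so this step weakly decreases $f_u$. In the second step, hold $y=y_2$ and decrease $x$ from $x_1$ to $x_2$; by Item~\ref{it:we14} of Lemma~\ref{lem:fuxymono}, $f_u$ is increasing in $x$ provided $1\leq 2x+y_2$. For every $x\in[x_2,x_1]$ we have $2x+y_2\geq 2x_2+y_2\geq 1$ by the hypothesis on $(x_2,y_2)$, so the condition holds throughout the interval, and the step weakly decreases $f_u$.

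For the inequality $f_\ell(d,\beta,x_2,y_2)\geq f_\ell(d,\beta,x_1,y_1)$, I would instead go via the intermediate point $(x_2,y_1)$. In the first step, hold $y=y_1$ and decrease $x$ from $x_1$ to $x_2$; by Item~\ref{it:weq51} of Lemma~\ref{lem:flxymono}, $f_\ell$ is decreasing in $x$ without any extra condition, so this step weakly increases $f_\ell$. In the second step, hold $x=x_2$ and increase $y$ from $y_1$ to $y_2$; by Item~\ref{it:weq52} of Lemma~\ref{lem:flxymono}, $f_\ell$ is increasing in $y$ provided $x_2+2y\leq 1$. For every $y\in[y_1,y_2]$ we have $x_2+2y\leq x_2+2y_2\leq 1$ by the hypothesis on $(x_2,y_2)$, so the condition holds throughout, and the step weakly increases $f_\ell$.

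The only real obstacle is picking the right intermediate point: the opposite ordering fails because $2x_2+y_1\geq 1$ is not guaranteed by the hypotheses (for $f_u$) and $x_1+2y_2\leq 1$ is not guaranteed (for $f_\ell$), so the corresponding conditional monotonicity cannot be applied. With the orderings above, the side conditions are preserved by the monotonic direction of travel, and the lemma follows by concatenating the two weak inequalities.
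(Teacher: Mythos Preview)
Your proof is correct and follows exactly the same route as the paper: intermediate point $(x_1,y_2)$ for $f_u$ and $(x_2,y_1)$ for $f_\ell$, chaining Lemmas~\ref{lem:flxymono} and~\ref{lem:fuxymono}. You are in fact more careful than the paper in checking that the side conditions $1\leq 2x+y_2$ and $x_2+2y\leq 1$ hold along the full interval of the conditional step, and your remark about why the opposite orderings fail is a nice addition.
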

\begin{proof}
 Using the assumptions in the statement of the lemma,  we obtain  
 $    
    1\leq 2x_1+y_2$ and   $2y_1+x_2\leq 1$.
    Therefore, by Lemmas~\ref{lem:flxymono}~and~\ref{lem:fuxymono}, we obtain  
    \begin{align*}
    &f_u(d,\beta, x_2,y_2) \leq f_u(d,\beta,x_1, y_2) \leq f_u(d,\beta,x_1,y_1), \text{ and}\\
    &f_\ell(d,\beta,x_2,y_2) \geq f_\ell(d,\beta, x_2,y_1) \geq f_\ell(d, \beta,x_1,y_1).\qedhere
    \end{align*}
\end{proof}

\subsection{Bounding the marginal probability at the root by two sequences}
For any $\beta > 0$ and $d \geq 2$, we define two sequences:
\[
\begin{cases}
u_0(d,\beta) = 1,\cr
\ell_0(d, \beta) = 0,
\end{cases}\]
and for every non-negative integer~$n$,
\begin{equation}\label{def:unelln}
\begin{cases}
u_{n+1}(d,\beta) = f_u(d,\beta,u_n(d,\beta), \ell_n(d,\beta)), \text{and}\cr
\ell_{n+1}(d,\beta) = f_\ell(d,\beta, u_n(d,\beta),\ell_n(d, \beta)).
\end{cases}
\end{equation}

Our interest in the sequences $u_n(d,\beta)$ and $\ell_n(d, \beta)$ is that they  give upper and lower bounds on the probability 
$\Pr_{\thetree}[\sigma(\theroot) = c]$, respectively (subject to any boundary configuration at the leaves).
\begin{lemma}\label{lem:probbound}
    Suppose that 
$q=3$, 
    $d\geq 2$, and $\beta\in(0,1)$. 
        For any $n\geq 0$, for the $d$-ary tree $\thetree$ with depth $n$ and root $\theroot$, for any configuration $\tau\colon\theleaves \to [q]$ on the leaves and any 
    colour $c\in[q]$, it holds that
    \[\ell_n(d,\beta)\leq\Pr_{\thetree}[\sigma(\theroot)=c\mid \sigma(\theleaves)=\tau]\leq u_n(d,\beta).\]
\end{lemma}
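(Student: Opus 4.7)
The plan is to prove Lemma~\ref{lem:probbound} by straightforward induction on $n$, using Lemma~\ref{lem:boundbyonesteprecursion} as the engine of the inductive step. The sequences $\{u_n(d,\beta)\}$ and $\{\ell_n(d,\beta)\}$ are defined precisely so that one iteration of $f_u$ and $f_\ell$ on the pair $(u_n,\ell_n)$ yields $(u_{n+1},\ell_{n+1})$, which matches the conclusion of Lemma~\ref{lem:boundbyonesteprecursion} when the input bounds on the marginals of the children are $U=u_n$ and $L=\ell_n$.

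For the base case $n=0$, the tree $\thetree$ consists of a single vertex which is simultaneously the root $\theroot$ and the unique leaf, so $\sigma(\theroot)=\tau(\theroot)$ is forced and $\Pr_{\thetree}[\sigma(\theroot)=c\mid \sigma(\theleaves)=\tau]\in\{0,1\}\subseteq[0,1]=[\ell_0(d,\beta),u_0(d,\beta)]$. This establishes the initial inclusion.

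For the inductive step, fix $n\geq 0$ and assume the bound holds for trees of height $n$. Consider $T=\mathbb{T}_{d,n+1}$ with root $v=v_{d,n+1}$, leaves $\Lambda=\Lambda_T$, children $v_1,\ldots,v_d$ of $v$, and subtrees $T_i$ (rooted at $v_i$) of height $n$ with leaf sets $\Lambda_i$. Each $T_i$ is isomorphic to $\mathbb{T}_{d,n}$, so the induction hypothesis applied to the configuration $\tau(\Lambda_i)$ and each colour $j\in[3]$ gives
\[
\ell_n(d,\beta)\leq \Pr_{T_i}[\sigma(v_i)=j\mid \sigma(\Lambda_i)=\tau(\Lambda_i)]\leq u_n(d,\beta).
\]
Thus the hypotheses of Lemma~\ref{lem:boundbyonesteprecursion} are met with $L=\ell_n(d,\beta)$ and $U=u_n(d,\beta)$, yielding
\[
f_\ell(d,\beta,u_n(d,\beta),\ell_n(d,\beta))\leq \Pr_T[\sigma(v)=c\mid \sigma(\Lambda)=\tau]\leq f_u(d,\beta,u_n(d,\beta),\ell_n(d,\beta)).
\]
By the recursive definition~\eqref{def:unelln}, the left-hand side is $\ell_{n+1}(d,\beta)$ and the right-hand side is $u_{n+1}(d,\beta)$, closing the induction.

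There is essentially no obstacle here: the technical work has already been packaged into Lemma~\ref{lem:boundbyonesteprecursion} (which requires no side conditions on the input bounds $L,U\in[0,1]$ beyond their being valid bounds on the children's marginals). The only mild subtlety is the base case, where one must remember that a $d$-ary tree of height $0$ is a single vertex that is both root and leaf, so the conditional probability is either $0$ or $1$ — both of which fall within $[\ell_0,u_0]=[0,1]$.
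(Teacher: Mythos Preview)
Your proof is correct and follows essentially the same approach as the paper: induction on~$n$, with the base case handled by $[\ell_0,u_0]=[0,1]$ and the inductive step by applying Lemma~\ref{lem:boundbyonesteprecursion} with $L=\ell_n(d,\beta)$ and $U=u_n(d,\beta)$, then invoking~\eqref{def:unelln}. There is no meaningful difference between your argument and the paper's.
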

\begin{proof}
We prove the lemma by induction on $n$. For the base case $n=0$, note that $\thetree$ has a single vertex.
    Thus, for every $c\in[q]$ and every $\tau$ assigning a colour to this vertex,
    \[\ell_0(d,\beta)=0\leq\Pr_{\thetree}[\sigma(\theroot)=c\mid \sigma(\theleaves)=\tau]\leq 1= u_0(d,\beta).\] 
    
    For the inductive step, suppose $n>0$.
    For convenience, denote by $T$ the tree $\thetree$, by $v$ the root $\theroot$ and by $\Lambda$ the leaves $\theleaves$.
    Let $v_1,\ldots,v_d$ be the children of $v$ in~$T$ and let $\Lambda_i = \Lambda_{T[v_i]}$ denote the set of leaves of the subtree $T[v_i]$.
    Consider any configuration $\tau \colon \Lambda \to [q]$ and any colour~$c$.  By the induction hypothesis, for every $i \in[d]$ and $j \in [q]$, we have
    \[\ell_{n-1}(d,\beta)\leq\Pr_{T[v_i]}[\sigma(v_i)=j\mid \sigma(\Lambda_i) = \tau(\Lambda_i)] \leq u_{n-1}(d,\beta).\] 
		By  Lemma~\ref{lem:boundbyonesteprecursion} and~\eqref{def:unelln}, we conclude that $\ell_n(d,\beta)\leq   \Pr_{T}[\sigma(v)=c\mid \sigma(\Lambda)=\tau]   \leq u_n(d,\beta)$.
\end{proof}

The following lemma will be used to show that the sequences $u_n(d,\beta)$ and $\ell_n(d, \beta)$ converge. 
\begin{lemma}\label{lem:seqxymono}
    For any fixed $d\geq 2$, $0<\beta < 1$ and $n \in \bn$, we have
    \begin{enumerate} 
        \item \label{it:ununp1} $u_n(d,\beta) \geq u_{n+1}(d,\beta)$,
        \item \label{it:lnlnp1} $\ell_n(d, \beta) \leq \ell_{n+1}(d,\beta)$, and
        \item \label{it:unln1} $2\ell_n(d,\beta) + u_n(d,\beta) \leq 1 \leq 2u_n(d,\beta) + \ell_n(d,\beta)$.
    \end{enumerate}
\end{lemma}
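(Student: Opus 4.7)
The plan is to prove the three items simultaneously by induction on~$n$. The base case $n=0$ is immediate from $u_0 = 1$ and $\ell_0 = 0$: item~\ref{it:unln1} becomes $1 \leq 1 \leq 2$, and items~\ref{it:ununp1} and~\ref{it:lnlnp1} reduce to $u_1 \leq 1$ and $\ell_1 \geq 0$, both of which follow directly from the definitions of $f_u$ and $f_\ell$ as ratios of non-negative quantities whose numerator is at most the denominator.

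For the inductive step, assume the three items hold up to some $n \geq 0$. I would first establish item~\ref{it:unln1} at $n+1$ by applying Lemma~\ref{lem:fimgrange} with $(x,y) = (u_n,\ell_n)$. The inductive hypothesis for item~\ref{it:unln1} at $n$ supplies the pair $2\ell_n + u_n \leq 1 \leq 2u_n + \ell_n$ (which in particular forces $\ell_n \leq u_n$), and the remaining bracketing $0 \leq \ell_n \leq u_n \leq 1$ is automatic since $f_u, f_\ell \in [0,1]$. Lemma~\ref{lem:fimgrange} then yields $2\ell_{n+1} + u_{n+1} \leq 1 \leq 2u_{n+1} + \ell_{n+1}$.

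Next, for items~\ref{it:ununp1} and~\ref{it:lnlnp1} at $n+1$ with $n \geq 1$, I would apply Lemma~\ref{lem:fulxymono} with $(x_1,y_1) = (u_{n-1},\ell_{n-1})$ and $(x_2,y_2) = (u_n,\ell_n)$, which directly gives $u_{n+1} \leq u_n$ and $\ell_{n+1} \geq \ell_n$. Its hypotheses are exactly what the induction supplies: the chain $0 \leq y_1 \leq y_2 \leq x_2 \leq x_1 \leq 1$ comes from items~\ref{it:ununp1} and~\ref{it:lnlnp1} at index $n-1$ combined with item~\ref{it:unln1} at~$n$ (the latter giving $\ell_n \leq u_n$), and the constraints $2y_i + x_i \leq 1 \leq 2x_i + y_i$ for $i \in \{1,2\}$ come from item~\ref{it:unln1} at indices $n-1$ and~$n$. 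The missing case $n = 0$ for items~\ref{it:ununp1} and~\ref{it:lnlnp1} reduces to $u_1 \leq 1 = u_0$ and $\ell_1 \geq 0 = \ell_0$, which again hold because $f_u, f_\ell \in [0,1]$.

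I do not foresee any real obstacle: the hypotheses of Lemmas~\ref{lem:fimgrange} and~\ref{lem:fulxymono} were evidently tailored to match precisely the conclusions that the three items at the previous induction level deliver, so the induction propagates cleanly. The only minor piece of bookkeeping is separating off $n=0$ for the monotonicity items, since Lemma~\ref{lem:fulxymono} compares two consecutive levels of the sequences and therefore needs $n \geq 1$.
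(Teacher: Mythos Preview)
Your approach is exactly the paper's, but there is an off-by-one slip in the inductive step for items~\ref{it:ununp1} and~\ref{it:lnlnp1}. You assume all three items up through index~$n$ and then apply Lemma~\ref{lem:fulxymono} with $(x_1,y_1)=(u_{n-1},\ell_{n-1})$ and $(x_2,y_2)=(u_n,\ell_n)$; that yields $u_{n+1}\le u_n$ and $\ell_{n+1}\ge \ell_n$, which are items~\ref{it:ununp1} and~\ref{it:lnlnp1} at index~$n$ --- already part of your hypothesis --- not at index $n{+}1$. Your ``missing case $n=0$'' then just re-derives the base case, which is a symptom of the same slip.

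The fix is to shift by one: take $(x_1,y_1)=(u_n,\ell_n)$ and $(x_2,y_2)=(u_{n+1},\ell_{n+1})$. The hypotheses of Lemma~\ref{lem:fulxymono} are supplied by items~\ref{it:ununp1} and~\ref{it:lnlnp1} at index~$n$ (assumed), item~\ref{it:unln1} at index~$n$ (assumed), and item~\ref{it:unln1} at index $n{+}1$ (which you correctly established first). This gives $u_{n+2}\le u_{n+1}$ and $\ell_{n+2}\ge \ell_{n+1}$, i.e.\ the desired items at index $n{+}1$, and it works uniformly for every $n\ge 0$, so no separate $n=0$ sub-case is needed. With this correction your argument coincides with the paper's (after the relabelling $n\mapsto n+1$).
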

\begin{proof}
 We prove this lemma by induction on~$n$. 
Since $d$ and $\beta$ are fixed, we simplify the notation by
writing~$u_n$ for $u_n(d,\beta)$, writing~$\ell_n$ for $\ell_n(d,\beta)$,
writing~$f_u(x,y)$ for $f_u(d,\beta,x,y)$ and writing $f_\ell(x,y)$ for $f_\ell(d,\beta,x,y)$.

For the base case $n=0$,  we have $u_0=1$ and $\ell_0=0$, so  Item~\ref{it:unln1} holds  since
$2\ell_0  + u_0= 1 < 2 u_0 +  \ell_0$. Items~\ref{it:ununp1} and~\ref{it:lnlnp1} follow from 
\[u_{1}  = f_u( 1,0)=\frac{1}{1+ 2\beta^{d/2}} < 1 = u_0, \mbox{\  and  \ } \ell_{ 1}  = f_\ell( 1,0) = \frac{\beta^d}{\beta^d + 2} > 0 = \ell_0.\]

 For the inductive step, suppose $n > 0$. 
 Item~\ref{it:unln1} follows (using the induction hypothesis)
 from    Lemma~\ref{lem:fimgrange} 
   with $x=u_{n-1}$ and $y=\ell_{n-1}$. 
      We now obtain Items~\ref{it:ununp1} and~\ref{it:lnlnp1}.
    By the induction hypothesis,
    \[0\leq \ell_{n-1}  \leq\ell_n \leq u_n  \leq u_{n-1} \leq 1, \mbox{\  and  \ } 2\ell_{n-1}  + u_{n-1}  \leq 1 \leq 2u_{n-1}  + \ell_{n-1}.\]
    Using Lemma~\ref{lem:fulxymono} (with these facts and with item~3), we obtain
    $$
    u_{n+1}  = f_u( u_n , \ell_n )
    \leq f_u( u_{n-1} , \ell_{n-1} )
    = u_n,$$
		proving Item~\ref{it:ununp1}. 
Similarly, we also obtain that $\ell_{n+1}  = f_\ell( u_n, \ell_n )
    \geq f_\ell( u_{n-1}, \ell_{n-1})
    = \ell_n$, proving Item~\ref{it:lnlnp1}.
\end{proof}

By Lemma~\ref{lem:seqxymono}, we have that the sequences $\{u_n(d,\beta\})$ and $\{\ell_n(d,\beta)\}$ are bounded and monotonic, so they both converge. Let
\begin{equation}\label{eq:tb53tb5b3fwefe}
u_\infty(d,\beta):=\lim_{n\to\infty} u_n(d,\beta), \mbox{\  and \ } \ell_\infty(d,\beta):=\lim_{n\to\infty} \ell_n(d, \beta).
\end{equation} 
We have the following characterisation of the limits $u_\infty(d,\beta), \ell_\infty(d,\beta)$.
\begin{lemma}\label{lem:fixedpoints}
    For any $d\geq 2$ and $0 < \beta \leq 1$, $(x,y)=(u_\infty(d, \beta), \ell_\infty(d,\beta))$ is a solution to the 
    system of equations
    \[
    \begin{cases}
    f_u(d,\beta,x,y) = x\cr
    f_\ell(d,\beta,x,y)=y
    \end{cases}
    \]
    satisfying $0<y\leq 1-x-y \leq x < 1$.
\end{lemma}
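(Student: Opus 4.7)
The plan is to obtain Lemma~\ref{lem:fixedpoints} by passing to the limit in the defining recurrence~\eqref{def:unelln}, using the convergence established via Lemma~\ref{lem:seqxymono}, and then reading off the claimed inequalities from the corresponding finite-$n$ inequalities together with the explicit values of $u_1(d,\beta)$ and $\ell_1(d,\beta)$.

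First, I would show that $f_u(d,\beta,\cdot,\cdot)$ and $f_\ell(d,\beta,\cdot,\cdot)$ are continuous on the closed region
\[
R=\{(x,y):0\leq y\leq x\leq 1,\ x+y\leq 1\}.
\]
The point is that for every $\beta\in(0,1]$ and every $z\in[0,1]$ one has $1-(1-\beta)z\geq\beta>0$, so each of the factors $\bigl(1-(1-\beta)y\bigr)$, $\bigl(1-(1-\beta)x\bigr)$ and $\bigl(1-(1-\beta)(1-x-y)\bigr)$ appearing in the numerators and denominators of $f_u$ and $f_\ell$ is bounded below by $\beta$ on $R$. Hence both denominators are bounded away from zero on $R$, and $f_u,f_\ell$ are continuous there. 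By Lemma~\ref{lem:seqxymono}, every pair $(u_n(d,\beta),\ell_n(d,\beta))$ lies in $R$, so the limit point $(u_\infty,\ell_\infty)$ does too. Applying continuity to the recurrence $u_{n+1}=f_u(d,\beta,u_n,\ell_n)$ and $\ell_{n+1}=f_\ell(d,\beta,u_n,\ell_n)$, and letting $n\to\infty$, immediately yields $f_u(d,\beta,u_\infty,\ell_\infty)=u_\infty$ and $f_\ell(d,\beta,u_\infty,\ell_\infty)=\ell_\infty$.

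Next I would check the inequalities $0<y\leq 1-x-y\leq x<1$ at $(x,y)=(u_\infty,\ell_\infty)$. The two middle (non-strict) inequalities come directly from Item~\ref{it:unln1} of Lemma~\ref{lem:seqxymono}: taking $n\to\infty$ in $2\ell_n(d,\beta)+u_n(d,\beta)\leq 1\leq 2u_n(d,\beta)+\ell_n(d,\beta)$ gives $2\ell_\infty+u_\infty\leq 1\leq 2u_\infty+\ell_\infty$, which rearranges to $\ell_\infty\leq 1-u_\infty-\ell_\infty\leq u_\infty$. For the strict outer inequalities, I would use the explicit values computed inside the base case of Lemma~\ref{lem:seqxymono}: for $\beta\in(0,1]$,
\[
u_1(d,\beta)=\frac{1}{1+2\beta^{d/2}}<1,\qquad \ell_1(d,\beta)=\frac{\beta^d}{\beta^d+2}>0.
\]
Combining these with the monotonicity $u_n\downarrow u_\infty$ and $\ell_n\uparrow \ell_\infty$ from Items~\ref{it:ununp1} and~\ref{it:lnlnp1} of Lemma~\ref{lem:seqxymono} gives $u_\infty\leq u_1<1$ and $\ell_\infty\geq \ell_1>0$, which delivers the required strict inequalities.

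I do not foresee any substantive obstacle; the lemma is essentially a bookkeeping statement combining the convergence and monotonicity packaged in Lemma~\ref{lem:seqxymono} with continuity of the update maps. The only point requiring a little care is confirming that the relevant denominators remain positive on the whole region $R$, which is what lets us exchange the limit with $f_u$ and $f_\ell$; this is handled by the uniform lower bound $1-(1-\beta)z\geq\beta>0$ noted above.
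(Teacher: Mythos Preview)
Your proposal is correct and follows essentially the same approach as the paper's proof: both pass to the limit in the recurrence using continuity of $f_u$ and $f_\ell$, obtain the non-strict inequalities from Item~\ref{it:unln1} of Lemma~\ref{lem:seqxymono}, and derive the strict endpoints from the explicit values of $u_1$ and $\ell_1$ together with monotonicity. Your version is slightly more explicit about why continuity holds (via the uniform lower bound $1-(1-\beta)z\geq\beta$ on the region $R$), but the argument is the same.
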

\begin{proof}
Since $d$ and $\beta$ are fixed, we simplify the notation by
writing $u_\infty$ for $u(d,\beta)$ and $\ell_\infty$ for $\ell(d,\beta)$.
We also drop $d$ and $\beta$ as parameters of $u_n$, $\ell_n$, $f_u$ and $f_\ell$ 
(as in the proof of Lemma~\ref{lem:seqxymono}). 
By Lemma~\ref{lem:seqxymono}, we have
$\ell_\infty  \geq \ell_1  =  {\beta^d}/(\beta^d+2) > 0$ and
 $u_\infty  \leq u_1  = 1/(1+2\beta^{d/2}) < 1$.
 Also,  for every non-negative integer~$n$, we have
    $
    \ell_n    \leq 1 
    - u_n  - \ell_n 
    \leq u_n$,
    which  implies, by applying limits,
    that 
    $ \ell_\infty \leq 1-u_\infty  - \ell_\infty  \leq u_\infty$.

    Recall that, for $n\geq 0$,
    $u_{n+1} =f_u( u_n ,\ell_n )$
    and 
    $\ell_{n+1} = f_\ell( u_n ,\ell_n )$.
    Using these definitions and
    the continuity of the functions $f_u( x,y)$ and $f_\ell( x,y)$
    with respect to~$x$ and~$y$ (in the third equality below), we have
    $$  
    u_\infty  = \lim_{n\rightarrow \infty} u_n  
    = \lim_{n\rightarrow \infty} f_u( u_{n-1} ,\ell_{n-1})
    = f_u( \lim_{n\rightarrow \infty} u_{n-1},\lim_{n\rightarrow\infty} \ell_{n-1})
    = f_u(u_\infty,\ell_\infty).
 $$
    Similarly, 
    $\ell_\infty  = f_\ell(u_\infty,\ell_\infty)$.\end{proof}

\subsection{Bounding the maximum ratio}
In this section, we place the final pieces for the proof of Lemma~\ref{lem:onestepuniq}. The first lemma accounts for the $d=2$ case of Lemma~\ref{lem:onestepuniq}. 
\begin{lemma}\label{lem:d=2bound}
    Suppose $q = 3$ and  $\beta\in(0,1)$.    
    Then there  is a positive integer $n_0$ such that for every $n \geq n_0$, every $c \in [q]$, 
    and every configuration $\tau\colon  \theleavestwo \to [q]$,
\[\frac{459}{2000}\leqslant \Pr_{\thetreetwo}[\sigma(\theroottwo)=c\mid \sigma(\theleavestwo) = \tau] \leqslant \frac{1107}{2500}.\]
\end{lemma}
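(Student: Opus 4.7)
The plan is to apply Lemma~\ref{lem:probbound} to reduce the statement to an asymptotic bound on the limits $u_\infty(2,\beta)$ and $\ell_\infty(2,\beta)$ of the one-step recursion, and then to establish those bounds through the fixed-point characterisation in Lemma~\ref{lem:fixedpoints}.

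Concretely, Lemma~\ref{lem:probbound} gives, for every configuration $\tau\colon\theleavestwo\to[q]$ and every colour $c\in[q]$,
\[
\ell_n(2,\beta)\leq\Pr_{\thetreetwo}[\sigma(\theroottwo)=c\mid\sigma(\theleavestwo)=\tau]\leq u_n(2,\beta),
\]
and Lemma~\ref{lem:seqxymono} gives $u_n(2,\beta)\downarrow u_\infty(2,\beta)$ and $\ell_n(2,\beta)\uparrow\ell_\infty(2,\beta)$. Thus it suffices to prove the two \emph{strict} inequalities
\[
u_\infty(2,\beta)<\tfrac{1107}{2500}\qquad\text{and}\qquad\ell_\infty(2,\beta)>\tfrac{459}{2000}\qquad\text{for every }\beta\in(0,1);
\]
then, for each fixed $\beta$, one chooses $n_0$ large enough that $u_{n_0}(2,\beta)\leq 1107/2500$ and $\ell_{n_0}(2,\beta)\geq 459/2000$, and the monotonicity of the two sequences yields the same bounds for every $n\geq n_0$.

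By Lemma~\ref{lem:fixedpoints}, the pair $(x,y):=(u_\infty(2,\beta),\ell_\infty(2,\beta))$ is a solution of the fixed-point system $f_u(2,\beta,x,y)=x$, $f_\ell(2,\beta,x,y)=y$ subject to $0<y\leq 1-x-y\leq x<1$. Clearing the (strictly positive) denominators of $f_u$ and $f_\ell$ turns this into a polynomial system in $(x,y,\beta)$ of total degree at most $2d=4$. The desired inequalities then reduce to the two semi-algebraic claims that no real triple $(x,y,\beta)$ satisfying the polynomial system and $0<\beta<1$, $0<y\leq 1-x-y\leq x<1$ can simultaneously satisfy $x\geq 1107/2500$, and that no such triple can simultaneously satisfy $y\leq 459/2000$. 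Both are first-order statements over the reals in three variables of modest polynomial degree, and can be discharged rigorously by Mathematica's \texttt{Resolve} function, following exactly the template used elsewhere in the paper (e.g.~in Lemma~\ref{lem:examplecond}).

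The main subtlety will be the calibration of the numerical constants. They have to be tight enough that the resulting ratio $(1107/2500)/(459/2000)=164/85$ is strictly smaller than $53/27$ (so that the bound is compatible with the use made of it in Corollary~\ref{thm:ratiobound}), yet loose enough to strictly separate every admissible fixed point from the two critical levels $x=1107/2500$ and $y=459/2000$ uniformly in $\beta\in(0,1)$. The extremal behaviour occurs as $\beta\to 0^+$, where the fixed-point system degenerates toward the frozen $3$-colourings on the binary tree (which is precisely the obstruction to uniqueness at $\beta=0$ mentioned in Section~\ref{sec:results}); the rational constants $1107/2500$ and $459/2000$ are chosen just large, respectively small, enough that the strict inequalities survive uniformly in a neighbourhood of this degenerate limit. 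Once this choice has been made, the residual verification is a mechanical polynomial check by \texttt{Resolve}.
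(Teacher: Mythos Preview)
Your proposal is correct and follows essentially the same approach as the paper: reduce to bounding the limits $u_\infty(2,\beta)$ and $\ell_\infty(2,\beta)$ via Lemmas~\ref{lem:probbound} and~\ref{lem:seqxymono}, invoke the fixed-point characterisation of Lemma~\ref{lem:fixedpoints}, and discharge the resulting semi-algebraic claims with \texttt{Resolve}. The only cosmetic difference is that the paper runs the \texttt{Resolve} check with the slightly tighter thresholds $1106/2500$ and $460/2000$ (giving a one-unit buffer before passing to $1107/2500$ and $459/2000$ on the sequences), whereas you use the final constants directly; either works.
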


\begin{proof}
    For any $0 < \beta \leq 1$, by Lemma~\ref{lem:fixedpoints}, $(x, y) = (u_\infty(2,\beta), \ell_\infty(d,\beta))$ is a solution to the  system of equations
    \begin{equation}\label{eq:d=2fixedpoints}
    \begin{cases}
    f_u(2,\beta,x,y) = x\cr
    f_\ell(2,\beta,x,y)=y
    \end{cases}
    \end{equation}
     satisfying $0<y\leq 1-x-y \leq x < 1$. In Section~\ref{app:d=2bound}, we use the Resolve function of Mathematica to show
rigorously that there is no solution to~\eqref{eq:d=2fixedpoints}
satisfying      
    \[0 < y \leq \frac{1}{3}  \text{ and } \frac{1106}{2500} \leq x < 1\]
and there is no solution satisfying
    \[0 < y \leq \frac{460}{2000}  \text{ and } \frac{1}{3} \leq x < 1.\]

If   $0<y\leq 1-x-y \leq x < 1$ then $0<y\leq1/3$ and $1/3 \leq x < 1$.
So any solution to~\eqref{eq:d=2fixedpoints}
which satisfies $0<y\leq 1-x-y \leq x < 1$
must also satisfy 
$ y> {460}/{2000}$
and $x <  {1106}/{2500}$.
We conclude that 
$ \ell_\infty(d,\beta)> {460}/{2000}$
and $u_\infty(d,\beta) <  {1106}/{2500}$.
     
Since $\ell_\infty(2,\beta)$ and $u_\infty(2,\beta)$ are the limits of the sequences
$\ell_n(2,\beta)$ and $u_n(2,\beta)$, respectively,
there is a positive integer~$n_0$
such that, for all $n\geq n_0$,
$ \ell_n(2,\beta)\geq {459}/{2000}$
and $u_n(2,\beta) \leq  {1107}/{2500}$.
Thus by Lemma~\ref{lem:probbound}, 
for every $n\geq n_0$ and every $\tau\colon{\theleavestwo} \to [3]$, 
    \[\frac{459}{2000}\leq\ell_n(2,\beta)\leq \Pr_{\thetreetwo}[\sigma(\theroottwo)=c\mid \sigma(\theleavestwo) = \tau] \leq
    u_n(2,\beta) \leq
    \frac{1107}{2500}.\qedhere\]
    \end{proof}

\begin{definition}
For any $d \geq 3$,  
define the critical parameter $\beta_*(d)$ by $\beta_*(d) = 1 - \displaystyle\frac{3}{d+1}$.
\end{definition}

Note that $\beta_*(d)>0$.
The following lemma 
shows that $u_n(d,\beta)$ and $\ell_n(d,\beta)$ are
bounded by the values corresponding to the critical parameter.

\begin{lemma}\label{lem:seqbetamono}
Fix any $d\geq 3$.
For any $\beta$  in the range $\beta_*(d)\leq \beta<1$ 
and any non-negative integer~$n$,  we have
$u_n(d,\beta_*(d))\geq u_n(d, \beta)$ and  $\ell_n(d,\beta_*(d)) \leq \ell_n(d,\beta)$.
\end{lemma}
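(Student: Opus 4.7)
The plan is to proceed by induction on $n$, using monotonicity in $\beta$ (Lemma~\ref{lem:fbetamono}) together with coordinatewise monotonicity in $(x,y)$ (Lemma~\ref{lem:fulxymono}), and relying on Lemma~\ref{lem:seqxymono} to verify that the constraints required by those two lemmas hold at every step.

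For the base case $n=0$, both $u_0(d,\beta_*(d)) = 1 = u_0(d,\beta)$ and $\ell_0(d,\beta_*(d)) = 0 = \ell_0(d,\beta)$, so the inequalities hold as equalities. For the inductive step, write $\beta_* := \beta_*(d)$ and assume $u_n(d,\beta_*) \geq u_n(d,\beta)$ and $\ell_n(d,\beta_*) \leq \ell_n(d,\beta)$. By Lemma~\ref{lem:seqxymono}, each of the two pairs $(u_n(d,\beta_*),\ell_n(d,\beta_*))$ and $(u_n(d,\beta),\ell_n(d,\beta))$ satisfies $2\ell_n+u_n \leq 1 \leq 2u_n+\ell_n$, which (combined with $\ell_n \leq 1-u_n-\ell_n \leq u_n$) gives $0 \leq \ell_n \leq u_n \leq 1$. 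Together with the inductive hypothesis, this yields the nested ordering $0 \leq \ell_n(d,\beta_*) \leq \ell_n(d,\beta) \leq u_n(d,\beta) \leq u_n(d,\beta_*) \leq 1$, which is precisely the hypothesis of Lemma~\ref{lem:fulxymono}.

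Applying Lemma~\ref{lem:fulxymono} at the fixed parameter $\beta_*$ with $(x_1,y_1) = (u_n(d,\beta_*),\ell_n(d,\beta_*))$ and $(x_2,y_2) = (u_n(d,\beta),\ell_n(d,\beta))$ gives
\[
f_u(d,\beta_*,u_n(d,\beta),\ell_n(d,\beta)) \leq f_u(d,\beta_*,u_n(d,\beta_*),\ell_n(d,\beta_*)) = u_{n+1}(d,\beta_*),
\]
and symmetrically for $f_\ell$ with the inequality reversed. Next, since $\beta \geq \beta_*$ and the pair $(u_n(d,\beta),\ell_n(d,\beta))$ satisfies the hypotheses of Lemma~\ref{lem:fbetamono}, that lemma gives
\[
u_{n+1}(d,\beta) = f_u(d,\beta,u_n(d,\beta),\ell_n(d,\beta)) \leq f_u(d,\beta_*,u_n(d,\beta),\ell_n(d,\beta)),
\]
and the reverse inequality for $f_\ell$. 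Chaining the two estimates yields $u_{n+1}(d,\beta) \leq u_{n+1}(d,\beta_*)$ and $\ell_{n+1}(d,\beta) \geq \ell_{n+1}(d,\beta_*)$, closing the induction.

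There is no real obstacle here: the proof is an essentially mechanical combination of three lemmas already in hand. The only point requiring a bit of care is bookkeeping — ensuring at each step that the two pairs $(u_n,\ell_n)$ lie in the region where both Lemma~\ref{lem:fbetamono} and Lemma~\ref{lem:fulxymono} are applicable, which is exactly what the three items of Lemma~\ref{lem:seqxymono} provide. Note that the restriction $d \geq 3$ does not enter the argument directly; it is used only so that $\beta_*(d) > 0$ lies in the open interval $(0,1)$ on which Lemma~\ref{lem:fbetamono} gives strict monotonicity in $\beta$.
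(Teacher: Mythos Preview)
Your proof is correct and follows essentially the same approach as the paper: induction on $n$, using Lemma~\ref{lem:seqxymono} to verify the hypotheses, and then combining Lemma~\ref{lem:fbetamono} with Lemma~\ref{lem:fulxymono} to close the inductive step. The only cosmetic difference is the order in which you chain the two monotonicity lemmas (you apply the $(x,y)$-monotonicity at $\beta_*$ and then the $\beta$-monotonicity at the inner pair, whereas the paper applies the $\beta$-monotonicity at the outer pair first and then the $(x,y)$-monotonicity at $\beta$); both routes are equally valid.
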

\begin{proof}
We prove the lemma by induction on~$n$. 
Since $d$ is fixed, we simplify the notation by writing $\beta_*$ for $\beta_*(d)$.
We also drop 
the argument~$d$ from $u_n(d,\beta)$, $\ell_n(d,\beta)$, $f_u(d,\beta,x,y)$ and $f_\ell(d,\beta,x,y)$.
 
For the base case $n=0$, note that for every~$\beta$, it holds that $u_0(\beta) = u_0(\beta_*)=1$ and
$\ell_0(\beta) = \ell_0(\beta_*)=0$.

For the inductive step, suppose $n>0$. 
By Lemma~\ref{lem:seqxymono}, we have
\begin{align*}
&2\ell_{n-1}(\beta)+u_{n-1}(\beta)\leq 1 \leq 2u_{n-1}(\beta)+\ell_{n-1}(\beta), \text{ and}\\
&2\ell_{n-1}(\beta_*)+u_{n-1}(\beta_*)\leq 1 \leq 2u_{n-1}(\beta_*)+\ell_{n-1}(\beta_*).
\end{align*}
By Lemma~\ref{lem:seqxymono} and 
  the induction hypothesis, we have   
    \[0\leq \ell_{n-1}(\beta_*)\leq \ell_{n-1}(\beta) \leq u_{n-1}(\beta) \leq u_{n-1}(\beta_*)\leq 1.\]
Now, using the definitions and  Lemma~\ref{lem:fbetamono}, we get 
$$
        u_n(\beta_*) =f_u(\beta_*,u_{n-1}(\beta_*),\ell_{n-1}(\beta_*))
        \geq f_u(\beta,u_{n-1}(\beta_*),\ell_{n-1}(\beta_*)).$$
Then using Lemma~\ref{lem:fulxymono}, we continue with
$$f_u(\beta,u_{n-1}(\beta_*),\ell_{n-1}(\beta_*))
        \geq
        f_u(\beta,u_{n-1}(\beta),\ell_{n-1}(\beta))
        =u_n(\beta).
$$
 Similarly, again using Lemma~\ref{lem:fbetamono} and then Lemma~\ref{lem:fulxymono}, we have that
$$f_\ell(\beta_*,u_{n-1}(\beta_*),\ell_{n-1}(\beta_*))
    \leq f_\ell(\beta,u_{n-1}(\beta_*),\ell_{n-1}(\beta_*))
    \leq
    f_\ell(\beta,u_{n-1}(\beta),\ell_{n-1}(\beta)),$$ 
which gives that $\ell_n(\beta_*)\leq \ell_n(\beta)$.
\end{proof}

Our next goal is to prove Lemma~\ref{lem:ffixedpoints} below,
which will help us to obtain an upper bound on the ratio $u_\infty(d,\beta_*(d))/\ell_\infty(d,\beta_*(d))$ when $d$ is sufficiently large. In order to do this,
we first define some  useful
re-parameterisations of $f_u$ and $f_\ell$, and establish some properties of these.

\begin{definition}\label{def:gs}
Let 
    $g_u(d,\mu,y)=f_u(d, \beta_*(d), \mu\cdot y, y)-\mu\cdot y$ and
    $g_\ell(d, \mu, y)= f_\ell(d, \beta_*(d), \mu\cdot y, y) - y$.
\end{definition}
Note that the argument $\mu$ in $g_u,g_\ell$ corresponds to the ratio $x/y$  of the arguments $x,y$ of $f_u,f_\ell$.
 
   \begin{lemma}\label{lem:guymono}
For every  
$d\geq 5$
and $\mu \geq 1$,   $g_u(d,\mu,y)$ is a decreasing function of $y$ 
in the range
$1/(2\mu+1) \leq y \leq 1/(\mu+2)$.
\end{lemma}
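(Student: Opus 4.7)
The plan is to differentiate $g_u$ directly and verify that the derivative is non-positive on the prescribed range. By the chain rule,
\[
\pderiv{g_u}{y}(d,\mu,y) \;=\; \mu \pderiv{f_u}{x}(d,\beta_*(d),\mu y, y) \;+\; \pderiv{f_u}{y}(d,\beta_*(d),\mu y, y) \;-\; \mu,
\]
where the partial derivatives are with respect to the third and fourth arguments of $f_u$. When $\mu = 1$, the interval $[1/(2\mu+1),1/(\mu+2)]$ collapses to the single point $y=1/3$, so the claim is vacuous; I may therefore assume $\mu > 1$.

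For the main computation, I introduce the shorthand $\hat\beta := 1-\beta_*(d) = 3/(d+1)$, $A := 1-\hat\beta y$, $B := 1-\hat\beta \mu y$, and $C := 1-\hat\beta(1-(\mu+1)y)$, so that $f_u(d,\beta_*(d),\mu y, y) = A^d/(A^d + 2(BC)^{d/2})$. Explicit differentiation (routine but tedious, and well-suited to the Mathematica support used elsewhere in the paper, cf.\ the derivative formula in the proof of Lemma~\ref{lem:fbetamono}) yields a closed-form rational expression for $\partial g_u/\partial y$ in $A,B,C,\mu,d$. Multiplying through by the square of the denominator $(A^d+2(BC)^{d/2})^2$, which is positive, turns $\partial g_u/\partial y \leq 0$ into a single polynomial-type inequality in $A,B,C,\mu,d$.

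A key structural input is that the range hypothesis $1/(2\mu+1) \leq y \leq 1/(\mu+2)$ is equivalent to $y \leq 1-(\mu+1)y \leq \mu y$, which in terms of the auxiliary variables reads $B \leq C \leq A$. Combined with Lemma~\ref{lem:fuxymono} (which gives $\partial f_u/\partial x \geq 0$ in this range, since $2x+y \geq 1$) and the fact that $f_u$ is decreasing in $y$, this ordering lets me control the signs and relative magnitudes of the two partial-derivative contributions in a uniform way, and in particular explains why the derivative contribution that pushes upward (via $\mu\,\partial f_u/\partial x$) is bounded above by $\mu$ once the damping from $\partial f_u/\partial y$ is taken into account.

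The main obstacle will be the explicit $d$-dependence in the exponents $A^d$ and $(BC)^{d/2}$, which prevents a direct application of the {\tt Resolve} technique to a purely polynomial inequality. My plan for this final step is to exploit the specific form $\hat\beta = 3/(d+1)$ to regroup terms and isolate monotonicity in $d$: after clearing denominators, I expect the reduced inequality to have a structure in which $(BC)^{d/2}$ decays sufficiently faster than $A^d$ as $y$ grows (because $B \leq C \leq A$ with $B$ strictly smaller), and the hypothesis $d \geq 5$ is exactly the slack needed to make the downward pull from differentiating $(BC)^{d/2}$ dominate the upward pull from differentiating $A^d$. A small number of base cases ($d=5,6,\ldots$) can then be discharged individually by {\tt Resolve}, and a uniform argument in $d$ handles the remaining range.
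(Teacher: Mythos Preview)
Your setup is fine: the chain-rule expression and the reparameterisation in terms of the linear factors $A,B,C$ are correct, and the observation that the range hypothesis is exactly $B\le C\le A$ is useful. But the proposal stops precisely where the real work begins. You correctly identify that the $d$-th powers block a direct polynomial reduction, and then offer only a vague plan (``I expect the reduced inequality to have a structure\ldots''; ``a uniform argument in $d$ handles the remaining range'') with no indication of what that uniform argument is or why $d\ge 5$ is the right threshold. Simply noting that $(BC)^{d/2}$ decays faster than $A^d$ is not enough: the derivative involves a \emph{ratio} of such quantities, and the contribution you need to bound does not obviously shrink with $d$.

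The paper's proof supplies exactly the missing device. After writing $\partial g_u/\partial y$ in the form $-\mu + (\text{positive factor})\cdot W\cdot(\text{rational expression})$ with $W := A^d(BC)^{d/2}/(A^d+2(BC)^{d/2})^2$, the key observation is the elementary inequality $8ab\le (a+2b)^2$, which gives $W\le 1/8$ \emph{uniformly in $d$}. This single step removes the $d$-th powers from the problem. Combined with crude bounds on the remaining rational factors coming from the range of $y$, the derivative is then bounded above by $\mu\bigl(\tfrac{9d}{8(d-2)^2}-1\bigr)$, which is negative exactly when $d\ge 5$. No case-checking in $d$ is needed. Your proposal would become a proof if you replaced the speculative final paragraph with this $W\le 1/8$ bound and the ensuing one-line estimate.
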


\begin{proof}
Let  
$A={(1-{3 y}/{(d+1)})}^d$ and
\[ B = \left(1-\frac{3(1-y(\mu+1))}{d+1}\right)^{d/2} \left(1-\frac{3 \mu y}{d+1}\right)^{d/2}.\]
Since 
$y\leq 1/(\mu+2) \leq 1/3$ and $d\geq 5$,  we 
have $3y<d+1$, so $A>0$.
Also,   $3\mu y< d+1$
and $3(1-y(\mu+1))<d+1$, so $B>0$.
Let $W = A B / {(A+2B)}^2>0$.
The derivative of $g_u$ with respect to $y$ 
(see Section~\ref{app:guymono} for Mathematica assistance)
is given by the following.
$$
\pderiv{g_u}{y} =-\mu+\left(\frac{9 d W}{3y(\mu+1)+d-2}\right)\left(\frac{\mu(2\mu y+y-1)}{1+d-3\mu y}-\frac{2\mu y+y+d-1}{1+d-3y}\right).$$
The upper bound on $y$
yields (crudely) $2\mu y+y-1< 1$ and $1+d-3\mu y > d-2$.
Similarly, the lower bound on $y$
yields   $2\mu y+y+d-1\geq d$ and  (since $y$ is non-negative)
$1+d-3y < 1+d$.
Plugging these in, we obtain
\begin{equation}\label{eq:evref344346}
\pderiv{g_u}{y}  <-\mu +\left(\frac{9 d W}{3 y(\mu +1)+d-2}\right)\left(\frac{\mu }{d-2}-\frac{d}{1+d}\right).
\end{equation}
Note that 
$3y(\mu +1)> 0$  and $d> 2$ so 
the expression
$3y(\mu +1)+d-2$ in the denominator is positive.
We now consider two cases.

If $\mu /(d-2)-d/(1+d)\leq0$
then recall that $W>0$, so  \eqref{eq:evref344346} gives that $\pderiv{g_u}{y}<-\mu  < 0$.
 
Otherwise,  $\mu /(d-2)-d/(1+d)>0$.
In this case, note that, for any $A$ and $B$,  
$(A-2B)^2 \geq 0$, so 
$8 A B \leq A^2 + 4 AB + 4B^2=(A+2B)^2$.
From the definition of~$W$, this ensures that $W\leq 1/8$.
So, \eqref{eq:evref344346} gives that
\[\pderiv{g_u}{y}< -\mu +\frac{\frac{9 d}{8} \left(\frac{\mu }{d-2}-\frac{d}{1+d}\right)}{3 y (\mu +1) +d-2} <
-\mu  + \frac{9d}{8(d-2)}\left(\frac{\mu }{d-2}-\frac{d}{1+d}\right)<\mu \left(\frac{9d}{8(d-2)^2}-1\right)<0,\]
where the final inequality uses $d\geq 5$.  
\end{proof}

The following lemma is analogous to Lemma~\ref{lem:guymono}, but for the function~$g_\ell$.
 
\begin{lemma}\label{lem:glymono}
For every  
$d\geq 3$ and $\mu \geq 1$,
$g_\ell(d,\mu ,y)$ is a decreasing function of $y$ 
in the range   $1/(2\mu +1) \leq y \leq 1/(\mu +2)$.
\end{lemma}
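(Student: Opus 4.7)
The plan is to mimic the structure of the proof of Lemma~\ref{lem:guymono}: compute $\pderiv{g_\ell}{y}$ and show it is strictly negative on the stated range. Since $g_\ell(d,\mu,y) = f_\ell(d,\beta_*(d),\mu y,y) - y$, we have $\pderiv{g_\ell}{y} = \pderiv{}{y} f_\ell(d,\beta_*(d),\mu y,y) - 1$, so it suffices to show that the $y$-derivative of $f_\ell(d,\beta_*(d),\mu y,y)$ is at most $0$ (in fact, I expect to show it is strictly negative, which is more than enough).

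Set $\hat\beta = 1-\beta_*(d) = 3/(d+1)$ and, by analogy with the notation in Lemma~\ref{lem:guymono}, abbreviate
\[
A=(1-\hat\beta\mu y)^d,\qquad B=(1-\hat\beta y)^d,\qquad C=\bigl(1-\hat\beta(1-(\mu+1)y)\bigr)^d,
\]
so that $f_\ell(d,\beta_*(d),\mu y,y) = A/(A+B+C)$. The upper bound $y\le 1/(\mu+2)$ combined with $\hat\beta<1$ (using $d\ge 3$) guarantees that $1-\hat\beta\mu y$, $1-\hat\beta y$ and $1-\hat\beta(1-(\mu+1)y)$ are all strictly positive, hence $A,B,C>0$ and the chain-rule expressions for $A_y$, $B_y$, $C_y$ are well defined (this is where the bookkeeping of the range assumption enters).

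Writing the derivative as
\[
\pderiv{f_\ell}{y} = \frac{A_y(B+C) - A(B_y+C_y)}{(A+B+C)^2},
\]
a direct computation (to be verified with the Mathematica {\tt Resolve}/differentiation assistance, analogously to Section~\ref{app:guymono}) yields, after factoring out $d\hat\beta A$ and using the elementary identity
\[
\frac{1}{1-\hat\beta y} - \frac{\mu}{1-\hat\beta\mu y} = \frac{1-\mu}{(1-\hat\beta y)(1-\hat\beta\mu y)},
\]
the representation
\[
A_y(B+C)-A(B_y+C_y) = -d\hat\beta A\left[\frac{(\mu-1)B}{(1-\hat\beta y)(1-\hat\beta\mu y)} + C\left(\frac{\mu}{1-\hat\beta\mu y} + \frac{\mu+1}{1-\hat\beta(1-(\mu+1)y)}\right)\right].
\]
For $\mu\ge 1$ both summands inside the brackets are non-negative, and the second one is strictly positive (all factors are positive and $C>0$). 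Hence the entire right-hand side is strictly negative, which gives $\pderiv{f_\ell}{y}<0$ and therefore $\pderiv{g_\ell}{y}<-1<0$.

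Unlike Lemma~\ref{lem:guymono}, no case analysis and no AM--GM-style bound (such as the $W\le 1/8$ argument) is needed here, because the $B$- and $C$-contributions in the bracketed expression have the same (non-positive) sign once the identity above is applied. The only real obstacle is bookkeeping: carrying out the differentiation cleanly and verifying the algebraic identity that collapses the $B$ terms. The lower bound $y\ge 1/(2\mu+1)$ plays no role in establishing the derivative's sign, but we keep it in the statement because it is the natural range imposed by the context in which $g_\ell$ is used later.
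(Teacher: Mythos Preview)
Your proof is correct and takes essentially the same approach as the paper: both compute $\pderiv{g_\ell}{y}$ and show the numerator of $\pderiv{f_\ell}{y}$ is negative by exhibiting it as $-d\hat\beta A$ times a manifestly positive quantity (the paper packages your bracket as $(d+1)W$ and verifies the derivative formula via Mathematica, whereas you derive it by hand via the identity for the $B$-terms). Your observation that only the upper bound $y\le 1/(\mu+2)$ is needed for positivity is also consistent with the paper's argument.
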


\begin{proof}
Let 
\[W = \frac{(\mu  (2 d-1)+d+1)\big(\frac{3 y(\mu +1) +d-2}{d+1}\big)^d}{(1 + d - 3 \mu  y) (d - 2 + 3 y (1 + \mu ) )} +\frac{(\mu -1) (d+1)\big(1-\frac{3 y}{d+1}\big)^d}{(1 + d - 3 y) (1 + d - 3 \mu  y)}.\]
Since $\mu \geq 1$ and $d\geq 3$
and $y\leq 1/(\mu +2)$ all of the factors in $W$ are  positive, so $W> 0$.
The derivative of $g_\ell$ with respect to $y$ 
(see Section~\ref{app:glymono} for Mathematica assistance)
is given by the following.
\[
\pderiv{g_\ell}{y} = - \frac{3 d \big(1-\frac{3 \mu  y}{d+1}\big)^d W}{\Big(\big(\frac{3 (\mu +1) y+d-2}{d+1}\big)^d+\big(1-\frac{3 \mu  y}{d+1}\big)^d+\big(1-\frac{3 y}{d+1}\big)^d\Big)^2}-1.
\]
We've already seen that $W>0$ and the denominator is greater than~$0$ since it is a square.
Since $3\mu y < 3<d+1$, the remaining term is also positive, so 
$\pderiv{g_\ell}{y}<0$, as required.
\end{proof}

Next, we will identify a value $y_\mu $ so that, when $\mu$ 
and $d$ are sufficiently large, 
$g_u(d,\mu ,y_\mu )<0$ and $g_\ell(d,\mu ,y_\mu )>0$.

\begin{definition} \label{def:ymu}
Define the quantity $y_\mu $ as follows. 
\[y_\mu =
\begin{cases}
\frac{7}{10\mu +12}+\frac{3}{500} & \mbox{if $\mu  < 32$,}\cr
\frac{7}{10\mu +12} & \mbox{if $\mu  \geqslant 32$}.
\end{cases}
\]
Let $x_\mu = \mu y_\mu$.
Now define the functions $h_u$ and $h_\ell$ as
$h_u(d, \mu ) = g_u(d,\mu , y_\mu )$
and   $h_\ell(d, \mu ) = g_\ell(d,\mu , y_\mu )$.
\end{definition}
Then we have the following lemmas.

\begin{lemma}
\label{lem:ineqs}
If $\mu\geq 157/80$ then
$0 < y_\mu < 1-x_\mu-y_\mu < \tfrac13 < x_\mu < 1-y_\mu$.
\end{lemma}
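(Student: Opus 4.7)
The plan is to unfold the compound chain of inequalities into its five atomic pieces: (i) $0<y_\mu$, (ii) $y_\mu < 1-x_\mu - y_\mu$, (iii) $1-x_\mu-y_\mu < 1/3$, (iv) $1/3 < x_\mu$, and (v) $x_\mu < 1-y_\mu$. Using $x_\mu = \mu y_\mu$, these are equivalent, respectively, to $y_\mu>0$, $(\mu+2)y_\mu < 1$, $(\mu+1)y_\mu > 2/3$, $\mu y_\mu > 1/3$, and $(\mu+1)y_\mu < 1$. Inequality (i) is immediate from the definition of $y_\mu$; the remaining four are single-variable inequalities in $\mu$ that I would verify using the piecewise definition from Definition~\ref{def:ymu} and elementary monotonicity.

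For the branch $\mu \geq 32$ (where $y_\mu = 7/(10\mu+12)$), each of (ii)--(v) becomes a linear-in-$\mu$ inequality after clearing denominators: for example, (iii) becomes $21(\mu+1) > 2(10\mu+12)$, i.e.\ $\mu > 3$, which holds since $\mu\geq 32$; (ii) becomes $7(\mu+2)<10\mu+12$, i.e.\ $\mu>2/3$; and (iv), (v) are similarly trivial. For the branch $157/80 \leq \mu < 32$ (where $y_\mu = 7/(10\mu+12) + 3/500$), I would split each inequality into the rational part and the linear part. A brief derivative computation shows that $\mu\mapsto (\mu+1)y_\mu$ and $\mu\mapsto \mu y_\mu$ are monotonically increasing on this interval (both summands are increasing), while $\mu\mapsto (\mu+2)y_\mu$ has a unique interior minimum and is therefore maximized at the endpoints. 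Consequently, each of (ii)--(v) reduces to checking its value at $\mu=157/80$ and at $\mu = 32$, at which points one gets exact rational expressions.

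The main obstacle, and the place where the constant $157/80$ is pinned down, is inequality (iii) at the left endpoint $\mu=157/80$: all the other inequalities have comfortable slack, but here both sides nearly coincide. At $\mu=157/80$ one computes $(\mu+1)y_\mu = \frac{1659}{2530} + \frac{711}{40000}$; since $\frac{1659}{2530} < \frac{2}{3}$, the inequality $(\mu+1)y_\mu > 2/3$ is equivalent to $\frac{711}{40000} > \frac{2}{3} - \frac{1659}{2530} = \frac{83}{7590}$, which in turn reduces to the integer inequality $711 \cdot 7590 > 83 \cdot 40000$, i.e.\ $5396490 > 3320000$. This is the sole nontrivial arithmetic check; monotonicity then propagates (iii) to all $\mu \geq 157/80$ in the lower branch, and the $\mu\geq 32$ case was already handled.

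Putting these pieces together yields the five inequalities simultaneously and hence the chain $0 < y_\mu < 1-x_\mu-y_\mu < 1/3 < x_\mu < 1-y_\mu$. I would organize the proof as: (1) state the five equivalent scalar inequalities; (2) dispatch the easy $\mu\geq 32$ branch uniformly; (3) establish the monotonicity claims for the lower branch via one-line derivative computations; (4) perform the single delicate arithmetic verification above at $\mu=157/80$; and (5) check the trivial endpoint bounds at $\mu = 32^-$.
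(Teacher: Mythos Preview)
Your proposal is correct and more detailed than the paper's own argument. The paper simply states that the inequalities follow directly from Definition~\ref{def:ymu} and delegates the verification to Mathematica's \texttt{Resolve} function on the two piecewise branches; no monotonicity reasoning or explicit arithmetic appears in the text.

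Your approach, by contrast, is a genuine by-hand proof: you reduce the chain to the four scalar inequalities $(\mu+2)y_\mu<1$, $(\mu+1)y_\mu>2/3$, $\mu y_\mu>1/3$, $(\mu+1)y_\mu<1$, observe that on the branch $[157/80,32)$ the maps $\mu\mapsto\mu y_\mu$ and $\mu\mapsto(\mu+1)y_\mu$ are increasing while $\mu\mapsto(\mu+2)y_\mu$ is convex, and thereby reduce everything to endpoint evaluations. Your identification of (iii) at $\mu=157/80$ as the only tight check is accurate, and your arithmetic there ($711\cdot 7590>83\cdot 40000$) is correct. The upshot is that your argument is self-contained and does not rely on computer algebra, at the cost of a few explicit rational computations; the paper's version is shorter but outsources the work to \texttt{Resolve}. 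Either is fine for a lemma of this kind.
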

 
\begin{proof} 
The inequalities follow directly from Definition~\ref{def:ymu}.
Mathematica code is given in Section~\ref{sec:ineqs}.
\end{proof}

\begin{lemma}\label{lem:dhu<0}
Suppose $d \geq 23$.
If $157/80\leq \mu< 32$ or $32<\mu$ then $\pderiv{h_u}{\mu} < 0$.
\end{lemma}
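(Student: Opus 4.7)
The plan is to reduce the lemma to inequalities that Mathematica's \texttt{Resolve} function can verify. Since $h_u(d,\mu) = f_u(d,\beta_*(d),x_\mu,y_\mu) - x_\mu$ with $x_\mu = \mu\,y_\mu$, applying the chain rule gives
\[
\pderiv{h_u}{\mu} = \Bigl(\pderiv{f_u}{x}(d,\beta_*(d),x_\mu,y_\mu) - 1\Bigr)\Bigl(y_\mu + \mu\,y'_\mu\Bigr) + \pderiv{f_u}{y}(d,\beta_*(d),x_\mu,y_\mu)\cdot y'_\mu,
\]
where by Definition~\ref{def:ymu}, in both branches $y'_\mu = -70/(10\mu+12)^2$. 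I would compute $\pderiv{f_u}{x}$ and $\pderiv{f_u}{y}$ symbolically (with Mathematica assistance analogous to Sections~\ref{app:guymono} and~\ref{app:glymono}), yielding a closed-form expression $D(d,\mu)$ for $\pderiv{h_u}{\mu}$ built from factors of the form $(1-\tfrac{3}{d+1}z)^d$ and $(1-\tfrac{3}{d+1}z)^{d/2}$ evaluated at $z\in\{x_\mu,y_\mu,1-x_\mu-y_\mu\}$.

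Next I would handle the two ranges separately, substituting the appropriate formula for $y_\mu$ in each case: the range $\mu\in[157/80,32)$ with $y_\mu = 7/(10\mu+12)+3/500$, and the range $\mu>32$ with $y_\mu = 7/(10\mu+12)$. By Lemma~\ref{lem:ineqs}, in both cases the quantities $1-\tfrac{3}{d+1}x_\mu$, $1-\tfrac{3}{d+1}y_\mu$ and $1-\tfrac{3}{d+1}(1-x_\mu-y_\mu)$ are strictly positive for $d\geq 23$, so I can clear denominators in $D(d,\mu)<0$ by multiplying through by positive quantities. To eliminate the half-integer exponent, I would reparameterize $d=2e$ with $e\geq 23/2$ so that $(1-\tfrac{3}{d+1}z)^{d/2}$ becomes a single power $(1-\tfrac{3}{2e+1}z)^e$, and then invoke \texttt{Resolve} on the resulting polynomial-style inequality in $\mu$ and $e$. (One also needs to check separately the odd integers $d\geq 23$; since the inequality will actually hold for all real $d\geq 23$ after reparameterization, this is subsumed.)

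The main obstacle is the feasibility of \texttt{Resolve} once $d$ appears in the exponent: quantifier elimination may not terminate on the raw expression. If this happens, I would split the $\mu$-range into finitely many subintervals and use the monotonicity facts already developed in this section (in particular Lemmas~\ref{lem:fbetamono} and~\ref{lem:fulxymono}, applied in the form of elementary bounds $(1-\tfrac{3}{d+1}z)^d \le e^{-3z}$ and the reverse inequality to within a controlled multiplicative factor for $d\geq 23$) to reduce to a single boundary check at $d=23$ together with a sign check on the $d$-derivative. In each subinterval, the resulting inequality would be a single-variable polynomial inequality in $\mu$ over a compact interval with rational endpoints, readily handled by \texttt{Resolve}. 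The code for these verifications would live in a companion section parallel to Section~\ref{app:guymono}.
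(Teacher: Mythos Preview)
Your chain-rule setup is correct, and you correctly identify the core difficulty: once $d$ appears in the exponent of $(1-\tfrac{3}{d+1}z)^d$, \texttt{Resolve} will not handle the two-variable inequality directly. But your fallback plan is where the proposal breaks down. Splitting the $\mu$-range into subintervals does nothing to eliminate the $d$-dependence; your suggested exponential bound $(1-\tfrac{3}{d+1}z)^d\le e^{-3z}$ goes the wrong way for half the factors you need, and ``the reverse inequality to within a controlled multiplicative factor'' is not made precise. The appeal to Lemmas~\ref{lem:fbetamono} and~\ref{lem:fulxymono} is also off-target: those lemmas concern monotonicity of $f_u,f_\ell$ in $\beta$ and in $(x,y)$, not the structure of $\partial f_u/\partial x$ and $\partial f_u/\partial y$ that you need here. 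Finally, reducing to ``a boundary check at $d=23$ plus a sign check on the $d$-derivative'' is essentially a second lemma of the same difficulty, not a simplification.

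The paper's proof avoids all of this with one structural observation you are missing: after writing out $\partial f_u/\partial x$ and $\partial f_u/\partial y$, the only place $d$ enters is through $\hat\beta_*=3/(d+1)\le 1/8$ and through the combination $f_u^2 R$, which satisfies the universal bound $f_u^2 R\le 1/8$ (from $(A-2B)^2\ge 0$). The paper then shows that the ratio $z=-\partial_x f_u/\partial_y f_u$ and the remaining factor are both monotone increasing in $\hat\beta_*$, so replacing $\hat\beta_*$ by $1/8$ gives valid upper bounds. This collapses the problem to a single-variable inequality in~$\mu$ (one for each branch of~$y_\mu$), which \texttt{Resolve} dispatches immediately. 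Without this monotonicity-in-$\hat\beta_*$ trick, you do not have a working route to eliminate~$d$.
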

\begin{proof}
Since $d$ is fixed in the proof of this lemma, we will drop it as an argument of
$\beta_*$, $h_u$, $g_u$.
We will use $\hat\beta_*$ to denote $1-\beta_*=3/(d+1)$.
We will drop $d$ and $\beta_*$ as an argument of $f_u$.
So, plugging in Definitions~\ref{def:gs} and~\ref{def:ymu},
we get
$h_u(\mu) = g_u(\mu,y_\mu) = f_u(x_\mu, y_\mu)- x_\mu$.
We have
\begin{equation}\label{eq:parexpand}
\pderiv{h_u(\mu)}{\mu}
=\pderiv{f_u(x_\mu,y_\mu)}{x_\mu}\cdot\pderiv{x_\mu}{\mu}+\pderiv{f_u(x_\mu,y_\mu)}{y_\mu}\cdot\pderiv{y_\mu}{\mu}-\pderiv{x_\mu}{\mu}.\end{equation}
Let 
\[R(x,y)=\frac{(1-\hat\beta_* x)^{d/2}(1-\hat\beta_* (1-x-y))^{d/2}}{(1-\hat\beta_* y)^d}.\]
The derivatives of $f_u(x,y)$ with respect to $x$ and $y$ are as follows
(see Section~\ref{app:lem:dhu<0} for Mathematica assistance).
\begin{equation}\label{mypartial}
\begin{aligned} 
\pderiv{f_u(x,y)}{x}&= \left(\frac{f_u(x, y)^2 R(x,y) d \hat\beta_*}{1-\hat\beta_* (1-x-y)}\right) \left( \frac{\hat\beta_*(2x+y-1)  }
{ 1-\hat\beta_* x}\right), \text{ and}\cr 
\pderiv{f_u(x,y)}{y}&=-\left(\frac{f_u(x, y)^2 R(x,y) d \hat\beta_*}{1-\hat\beta_* (1-x-y)}\right) 
\left( \frac{ 3 + \hat\beta_*  (2 x + y-2)  }{1-\hat\beta_* y  }\right).
 \end{aligned}
\end{equation}
If $0\leq x\leq1$, $0\leq y\leq 1$, $0\leq 1-x-y\leq 1$
and $2x+y>1$  then
 all of the factors are   positive, so 
 by Lemma~\ref{lem:ineqs},
 $\pderiv{f_u(x_\mu,y_\mu)}{x_\mu} > 0$
and $\pderiv{f_u(x_\mu,y_\mu)}{y_\mu} < 0$.
Let 
\begin{equation}\label{defz}
z =-\pderiv{f_u(x_\mu,y_\mu)}{x_\mu}\bigg/\pderiv{f_u(x_\mu,y_\mu)}{y_\mu}
\end{equation}
and note that 
$z$~is positive. Using~\eqref{eq:parexpand} and~\eqref{defz}, we can express 
$\pderiv{h_u(\mu)}{\mu}$
as 
$$ \pderiv{h_u(\mu)}{\mu}=\left(-z\cdot\pderiv{f_u(x_\mu,y_\mu)}{y_\mu}-1\right)\pderiv{x_\mu}{\mu}+\pderiv{f_u(x_\mu,y_\mu)}{y_\mu}\cdot\pderiv{y_\mu}{\mu}.$$
From the definition of $y_\mu$ (Definition~\ref{def:ymu}),
$\pderiv{y_\mu}{\mu}=-\frac{35}{2 (5 \mu+6)^2}<0$ for all $\mu\neq 32$.
If $\mu < 32$ then 
$\pderiv{x_\mu}{\mu} = \frac{21}{(5 \mu+6)^2}+\frac{3}{500}>0$.
If $\mu>32$ then
$\pderiv{x_\mu}{\mu} = \frac{21}{(5 \mu+6)^2}>0$. Thus, to show $\pderiv{h_u(\mu)}{\mu}<0$,
it suffices to show 
\begin{equation}
\label{dec:goal}
-\pderiv{f_u(x_\mu,y_\mu)}{y_\mu}<
\frac{\pderiv{x_\mu}{\mu}\strut}{\strut z\cdot\pderiv{x_\mu}{\mu}-\pderiv{y_\mu}{\mu}}.\end{equation}

We will simplify~\eqref{dec:goal} by finding an upper bound for~$z$. Using \eqref{mypartial} and \eqref{defz}, we have
\[z=\bigg( \frac{1-\hat\beta_*  y_\mu}{1-\hat\beta_*  x_\mu}\bigg)\bigg(\frac{\hat\beta_* (2x_\mu+y_\mu-1)}{3+\hat\beta_* (2x_\mu+y_\mu-2)}\bigg)= \bigg(1+\frac{\hat\beta_*(x_\mu-y_\mu)}{1-\hat\beta_*  x_\mu}\bigg)\bigg(\frac{\hat\beta_* (2x_\mu+y_\mu-1)}{3-\hat\beta_* (2-2x_\mu-y_\mu)}\bigg).\] 
Since, by Lemma~\ref{lem:ineqs}, 
$x_\mu > y_\mu$,
$2 x_u + y_y > 1$,
$x_\mu > 0$ and
(since $x_\mu +y_\mu < 1$)
$2 > 2 x_u + y_u$,
 $z$ is an increasing function of $\hat\beta_*$. 
Since $d\geq 23$, we have $\hat\beta_* = 3/(d+1)\leq 1/8$, so $z$ is upper-bounded by its value with $\hat\beta_*$ replaced by~$1/8$. This gives that
\[z\leq \frac{(8-y_\mu) (2 x_\mu+y_\mu-1)}{(8-x_\mu) (2 x_\mu+y_\mu+22)}.
\]
Moreover, using Mathematica, we show in Appendix~\ref{app:lem:dhu<0} that  
\begin{equation}\label{eq:err346gevrerf53}
\frac{(8-y_\mu) (2 x_\mu+y_\mu-1)}{(8-x_\mu) (2 x_\mu+y_\mu+22)}<\frac{1}{24} \mbox{ for all } \mu> 1.
\end{equation}
 It follows that $z<1/24$. Thus, we can re-write our goal from~\eqref{dec:goal} ---
 to prove the lemma, it suffices to show
 \begin{equation}
\label{dec:newgoal}
-\pderiv{f_u(x_\mu,y_\mu)}{y_\mu}<
\frac{\pderiv{x_\mu}{\mu}\strut}{\strut \tfrac{1}{24}\cdot\pderiv{x_\mu}{\mu}-\pderiv{y_\mu}{\mu}}.\end{equation} 
The definitions of~$f_u$ and~$R$
imply that $f_u(x,y) = 1/(1+2 R(x,y))$. Therefore, using the fact that $\frac{a}{(1 + 2a)^2}\leq 1/8$ for all $a>0$, we have
\[
f_u(x,y)^2 R(x,y)  = 
\frac{R(x,y)}{\big(1 + 2R(x,y)\big)^2}
\leqslant \frac{1}{8}.
\]
So, plugging this into the second equality in~\eqref{mypartial}, 
recalling that $ \pderiv{f_u(x_\mu,y_\mu)}{y_\mu}<0$
and $\beta_*=3/(d+1)$,
we get
$$-\pderiv{f_u(x_\mu,y_\mu)}{y_\mu} 
\leq \frac{d\hat\beta_* (3+\hat\beta_* (2x_\mu+y_\mu-2))}{8(1-\hat\beta_* (1-x_\mu-y_\mu))(1-\hat\beta_* y_\mu)}
\leq \frac{ 3 (3+\hat\beta_* (2x_\mu+y_\mu-2))}{8(1-\hat\beta_* (1-x_\mu-y_\mu))(1-\hat\beta_* y_\mu)}.
$$

Let $Y$ be the right-hand-side of the previous expression. Using $\hat\beta_*\leq 1/8$ and the 
inequalities from Lemma~\ref{lem:ineqs}, we find 
that $Y$ is increasing in~$\hat\beta_*$ (see that Mathematica code in Appendix~\ref{app:lem:dhu<0}).
Thus, we can replace $Y$ with its value with   $\hat\beta_*$ replaced by~$1/8$,
which is 
$ {3(2 x_\mu+y_\mu+22)}/{((8-y_\mu) (x_\mu+y_\mu+7))}$.
Plugging this into~\eqref{dec:newgoal}, it suffices 
to show

\begin{equation}\label{eq:zbound}
\frac{3(2 x_\mu+y_\mu+22)}{(8-y_\mu) (x_\mu+y_\mu+7)}
 <
\frac{\pderiv{x_\mu}{\mu}\strut}{\strut \tfrac{1}{24}\cdot\pderiv{x_\mu}{\mu}-\pderiv{y_\mu}{\mu}}.
\end{equation}
We prove~\eqref{eq:zbound} in two cases.  
 
\noindent{\bf Case 1: $\mu> 32$:\quad}  Using the values $\pderiv{x_\mu}{\mu},\pderiv{y_\mu}{\mu}$ that we calculated earlier,  the right-hand side of~\eqref{eq:zbound} 
is~$8/7$.
The Mathematica code in Appendix~\ref{app:lem:dhu<0}
uses Resolve to show rigorously that there is no $\mu>32$
 satisfying~\eqref{eq:zbound}. \vskip 0.2cm

\noindent{\bf Case 2: $ \mu<32$:\quad} Using the values that we calculated earlier, the right-hand side of~\eqref{eq:zbound} is
$$ 
    \frac{24 \left(25 \mu^2+60 \mu+3536\right)}{25 \mu^2+60 \mu+73536}.
 $$ 
The Mathematica code in Appendix~\ref{app:lem:dhu<0}
uses Resolve to show rigorously that there is no $\mu>1$   
 satisfying~\eqref{eq:zbound}. 
\end{proof}

We will use the following function in several of the remaining lemmas.

\begin{definition}
\label{def:psi}
Let 
$\psi(d, z) = {d}/(d-3 z+1)+\ln \left(d+1-3z\right)$.
\end{definition}

\begin{lemma}\label{lem:muone}
Suppose $d\geq 23$. Then $h_u(d,157/80) < 0$
and $h_u(d,32) < 0$.\end{lemma}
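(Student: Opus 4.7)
The plan is to substitute the explicit values of $y_\mu$ and $x_\mu$ from Definition~\ref{def:ymu} (namely $y_\mu = 7/332$ and $x_\mu = 56/83$ for $\mu = 32$; $y_\mu = 56/253 + 3/500$ and $x_\mu = (157/80)\,y_\mu$ for $\mu = 157/80$) and to reduce the inequality $h_u(d,\mu) < 0$ to an equivalent logarithmic form. Unpacking $h_u(d,\mu) = f_u(d,\beta_*(d),x_\mu,y_\mu) - x_\mu$, clearing denominators and taking logs gives
\[
2\ln\frac{1-x_\mu}{2x_\mu} \;<\; d\,\Phi_\mu(d), \qquad \Phi_\mu(d) := \ln(1-\hat\beta_* x_\mu) + \ln\bigl(1-\hat\beta_*(1-x_\mu-y_\mu)\bigr) - 2\ln(1-\hat\beta_* y_\mu),
\]
with $\hat\beta_* = 3/(d+1)$. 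The task therefore reduces to lower-bounding $d\,\Phi_\mu(d)$ uniformly over $d \geq 23$.

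For this lower bound I would exploit the function $\psi$ of Definition~\ref{def:psi}. A direct computation shows $\pderiv{}{d}\bigl[d\ln(1 - 3z/(d+1))\bigr] = \psi(d,z) - \psi(d,0)$, and hence
\[
\pderiv{}{d}[d\,\Phi_\mu(d)] \;=\; \psi(d,x_\mu) + \psi\bigl(d,1-x_\mu-y_\mu\bigr) - 2\,\psi(d,y_\mu).
\]
Using Mathematica's \texttt{Resolve}, I would verify that this derivative is strictly negative for all real $d \geq 23$ and for both values of $\mu$; hence $d\,\Phi_\mu(d)$ is strictly decreasing on $[23,\infty)$. A Taylor expansion of $\ln(1-3z/(d+1))$ in powers of $1/(d+1)$ yields the limit
\[
\lim_{d\to\infty} d\,\Phi_\mu(d) \;=\; -3 x_\mu - 3(1-x_\mu-y_\mu) + 6 y_\mu \;=\; 9 y_\mu - 3,
\]
and combined with the monotonicity above this gives $d\,\Phi_\mu(d) > 9 y_\mu - 3$ for every $d \geq 23$.

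It only remains to verify the scalar inequality $9 y_\mu - 3 > 2\ln\bigl((1-x_\mu)/(2x_\mu)\bigr)$ separately for each $\mu$, which \texttt{Resolve} handles over exact rationals. For $\mu = 32$ the two sides differ by roughly $3.5 \times 10^{-2}$, so this step is painless. The main obstacle is the case $\mu = 157/80$: here the two sides differ only by about $10^{-5}$, and the $+3/500$ correction built into $y_\mu$ in Definition~\ref{def:ymu} appears to be tuned precisely to make this final gap strictly positive. The monotonicity step is also delicate for this $\mu$, since $\pderiv{}{d}[d\,\Phi_\mu(d)]$ decays like $-c/d^2$ with a small positive constant, so its sign must be certified symbolically by \texttt{Resolve} rather than inferred from any finite collection of sample values.
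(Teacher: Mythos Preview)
Your approach is essentially the same as the paper's, just reparametrised. Writing $\zeta(d,x,y)=2\psi(d,y)-\psi(d,x)-\psi(d,1-x-y)$, your derivative $\pderiv{}{d}[d\,\Phi_\mu(d)]$ equals $-\zeta(d,x_\mu,y_\mu)$, so your monotonicity claim is exactly the paper's assertion that $\zeta>0$; and your final scalar inequality $9y_\mu-3>2\ln\bigl((1-x_\mu)/(2x_\mu)\bigr)$ is, after exponentiating, precisely the paper's Fact~3 that $\lim_{d\to\infty}h_u(d,\mu)<0$. So the logical skeleton matches.

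The one genuine gap is in how you establish $\zeta>0$. You propose to hand $\psi(d,x_\mu)+\psi(d,1-x_\mu-y_\mu)-2\psi(d,y_\mu)<0$ directly to \texttt{Resolve} as a universally quantified statement in~$d$. But $\psi(d,z)=d/(d+1-3z)+\ln(d+1-3z)$ contains logarithms, so this inequality is transcendental, not semialgebraic; it lies outside the first-order theory of real closed fields that \texttt{Resolve} decides, and in practice Mathematica will not return on such a query. The paper handles exactly this obstruction by differentiating once more: $\pderiv{\zeta}{d}$ is a \emph{rational} function of~$d$ (the logarithms disappear), so \texttt{Resolve} can certify $\pderiv{\zeta}{d}<0$ for all $d\geq23$ (Fact~1). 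Combined with the easy limit $\lim_{d\to\infty}\zeta(d,x_\mu,y_\mu)=0$ (Fact~2), this yields $\zeta>0$ without ever asking \texttt{Resolve} to reason about logs. Your proof goes through once you insert this extra differentiation step.
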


\begin{proof} 
Let $\zeta(d,x,y):=2\psi(d,y)-\psi(d,x)-\psi(d,1-x-y)$, where   $\psi$ is 
the function defined  in Definition~\ref{def:psi}. 
The derivative of $h_u(d,\mu)$ with respect to~$d$ is given 
as follows (see Appendix~\ref{app:muone} for the Mathematica code).
\begin{align}
\pderiv{h_u(d,\mu)}{d} &=
\pderiv{f_u(d,\beta_*(d),x_\mu, y_\mu)}{d}\nonumber\\
\label{eq:Dhud}
&=\frac{\big(1-\frac{3 x_\mu }{d+1}\big)^{\frac{d}{2}} \big(1-\frac{3 y_\mu }{d+1}\big)^d \big(1-\frac{3(1-x_\mu-y_\mu)}{d+1}\big)^{\frac{d}{2}} \zeta(d,x_\mu,y_\mu)}{\Big(\big(1-\frac{3 y_\mu }{d+1}\big)^d+2 \big(1-\frac{3 x_\mu }{d+1}\big)^{\frac{d}{2}} \big(1-\frac{3(1-x_\mu-y_\mu)}{d+1}\big)^{\frac{d}{2}}
 \Big)^2}.
\end{align}

First, fix $\mu=157/80$.
We will prove three facts.
\begin{itemize}
\item{\bf Fact 1:} For all $d\geq 23$, $\pderiv{\zeta(d,x_\mu,y_\mu)}{d} < 0$.
\item{\bf Fact 2: $\lim_{d\rightarrow \infty} \zeta(d,x_\mu,y_\mu)=0$.}
\item{\bf Fact 3: $\lim_{d\rightarrow \infty} h_u(d,\mu)<0$.}
\end{itemize}

Facts~1 and~2 guarantee that, for all $d\geq 23$, $\zeta(d,x_\mu,y_\mu)>0$. 
Lemma~\ref{lem:ineqs} guarantees that all other factors in~\eqref{eq:Dhud} are also positive.
 Thus,
$\pderiv{h_u(d,\mu)}{d}$ is positive for all $d\geq 23$.
Together with Fact~3, this proves 
the first part of the lemma, that 
$h_u(d,157/80) < 0$.
The three facts are proved in the Mathematica code in Section~\ref{app:muone}.

Finally, fix $\mu=32$.
Lemma~\ref{lem:ineqs} again guarantees that all   factors in~\eqref{eq:Dhud} 
other than $\zeta(d,x_\mu,y_\mu)$
are  positive. 
Thus, it suffices to prove the three facts for $\mu=32$, and this is done in
 the Mathematica code in Section~\ref{app:muone}.
\end{proof}

Lemmas~\ref{lem:dhu<0} and~\ref{lem:muone} have the following corollary.

\begin{corollary}\label{lem:hu<0}
For every $d \geq 23$ and $\mu \geq 157/80$, $h_u(d,\mu) <0$.
\end{corollary}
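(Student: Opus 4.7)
My plan is to derive the corollary immediately by combining the two preceding lemmas, treating the discontinuity of $y_\mu$ at $\mu=32$ carefully by splitting the range $[157/80,\infty)$ into the two pieces where $y_\mu$ (and hence $h_u$) is a smooth function of $\mu$.

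First, I would fix $d\geq 23$ and view $h_u(d,\mu)$ as a one-variable function of $\mu$. On the half-open interval $[157/80, 32)$ the definition of $y_\mu$ uses the first branch, so $h_u(d,\mu)$ is continuous there. By Lemma~\ref{lem:dhu<0}, $\pderiv{h_u}{\mu}<0$ throughout $(157/80,32)$, so $h_u(d,\mu)$ is strictly decreasing on $[157/80,32)$. Combined with Lemma~\ref{lem:muone}, which gives $h_u(d,157/80)<0$, this yields $h_u(d,\mu)<0$ for every $\mu\in[157/80,32)$.

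Next, on the interval $[32,\infty)$ the definition of $y_\mu$ uses the second branch, so $h_u(d,\mu)$ is again continuous there (in particular at the left endpoint $\mu=32$, where the second branch is the one selected). Lemma~\ref{lem:dhu<0} gives $\pderiv{h_u}{\mu}<0$ throughout $(32,\infty)$, so $h_u(d,\mu)$ is strictly decreasing on $[32,\infty)$. Since Lemma~\ref{lem:muone} gives $h_u(d,32)<0$, we conclude $h_u(d,\mu)<0$ for every $\mu\geq 32$. Assembling the two cases proves the corollary for all $\mu\geq 157/80$.

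There is no real obstacle; the only subtlety is that $y_\mu$ is discontinuous at $\mu=32$ (dropping by $3/500$), so one cannot just invoke a single monotonicity statement on all of $[157/80,\infty)$. Splitting at $\mu=32$ and observing that Lemma~\ref{lem:muone} supplies a negative value at both endpoints $\mu=157/80$ and $\mu=32$ accounts for this cleanly.
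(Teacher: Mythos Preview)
Your proposal is correct and follows essentially the same approach as the paper: split the range at $\mu=32$, use Lemma~\ref{lem:dhu<0} to get that $h_u(d,\cdot)$ is decreasing on each piece, and use Lemma~\ref{lem:muone} to supply a negative value at each left endpoint. Your version is slightly more explicit about the continuity of $h_u$ on each branch and about the inclusion of the endpoint $\mu=32$, but the argument is the same.
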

\begin{proof}
By Lemma~\ref{lem:dhu<0},
$h_u(d,\mu)$ is decreasing for $\mu\in[157/80,32)$.
Thus, for $\mu$ in this range, $h_u(d,\mu) \leq h_u(d,157/80)$
and by Lemma~\ref{lem:muone}, $h_u(d,157/80)<0$.

By Lemma~\ref{lem:dhu<0},
$h_u(d,\mu)$ is decreasing for $\mu> 32$.
Thus, for $\mu>32$, $h_(d,\mu) \leq h(d,32)$
and by Lemma~\ref{lem:muone}, $h_u(d,32)<0$.
\end{proof}

\begin{lemma}\label{lem:hl>0}
    For every $d \geq 23$ and $\mu \geq 157/80$, $h_\ell(d,\mu) >0$.
\end{lemma}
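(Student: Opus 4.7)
The plan is to mirror the structure of Corollary~\ref{lem:hu<0} exactly: first establish the monotonicity of $h_\ell(d,\mu)$ in $\mu$ on each of the intervals $[157/80,32)$ and $(32,\infty)$ (an analogue of Lemma~\ref{lem:dhu<0}), and then verify strict positivity at the two left endpoints $\mu=157/80$ and $\mu=32$ (an analogue of Lemma~\ref{lem:muone}). Combining these, for every $d\geq 23$ and every $\mu\geq157/80$ we will have
\[
h_\ell(d,\mu) \;\geq\; \min\!\big\{h_\ell(d,157/80),\; h_\ell(d,32)\big\} \;>\; 0,
\]
which is what we want.

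For the monotonicity step, I would write $h_\ell(d,\mu)=f_\ell(d,\beta_*(d),x_\mu,y_\mu)-y_\mu$ and differentiate to get
\[
\pderiv{h_\ell}{\mu} \;=\; \pderiv{f_\ell}{x}\,\pderiv{x_\mu}{\mu} + \pderiv{f_\ell}{y}\,\pderiv{y_\mu}{\mu} \;-\; \pderiv{y_\mu}{\mu}.
\]
By Lemma~\ref{lem:flxymono} together with Lemma~\ref{lem:ineqs} (which supplies $x_\mu+2y_\mu\leq 1$ so that part~\ref{it:weq52} applies at the relevant point), $\pderiv{f_\ell}{x}<0$ and $\pderiv{f_\ell}{y}>0$. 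Using the explicit formulas for $\pderiv{f_\ell}{x}$ and $\pderiv{f_\ell}{y}$ (to be produced with Mathematica assistance, as in the code in Section~\ref{sec:code}), I would introduce the ratio $z_\ell:=-\pderiv{f_\ell}{x}/\pderiv{f_\ell}{y}$, and then exploit $\hat\beta_*=3/(d+1)\leq 1/8$ (since $d\geq 23$) to bound $z_\ell$ by an absolute constant on the relevant range of $(x_\mu,y_\mu)$. Plugging in $\pderiv{x_\mu}{\mu}$ and $\pderiv{y_\mu}{\mu}$ from Definition~\ref{def:ymu} (separately on $\mu<32$ and $\mu>32$, where the two pieces of the definition of $y_\mu$ apply), the inequality to be proved reduces to a single-variable rational inequality in $\mu$, which I would verify rigorously for each of the two ranges using Mathematica's \texttt{Resolve}, exactly as in the proof of Lemma~\ref{lem:dhu<0}.

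For the boundary step, at each of $\mu_0\in\{157/80,32\}$ I would differentiate $h_\ell(d,\mu_0)$ with respect to $d$ and, as in the derivation of~\eqref{eq:Dhud}, factor the result into a manifestly positive prefactor times a quantity of the form $\zeta_\ell(d,x_{\mu_0},y_{\mu_0})$ expressible via the function $\psi$ of Definition~\ref{def:psi}. With Mathematica I would then check three facts: (i) $\zeta_\ell(d,x_{\mu_0},y_{\mu_0})$ is monotone in $d$ for $d\geq 23$; (ii) $\lim_{d\to\infty}\zeta_\ell(d,x_{\mu_0},y_{\mu_0})=0$; and (iii) $\lim_{d\to\infty}h_\ell(d,\mu_0)>0$. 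Facts (i)--(ii) pin down the sign of $\pderiv{h_\ell}{d}$ on $[23,\infty)$, and together with (iii) they give $h_\ell(d,\mu_0)>0$ for every $d\geq 23$. Corollary-style combination with the monotonicity-in-$\mu$ step then finishes the lemma.

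The main obstacle will be the sign analysis in the monotonicity step. Unlike the $h_u$ case, the three contributions to $\pderiv{h_\ell}{\mu}$ do not all have the same sign: on $[157/80,32)$ we have $\pderiv{f_\ell}{x}\pderiv{x_\mu}{\mu}<0$ and $\pderiv{f_\ell}{y}\pderiv{y_\mu}{\mu}<0$ but the free $-\pderiv{y_\mu}{\mu}>0$ term pulls in the opposite direction, and the specific piecewise choice of $y_\mu$ in Definition~\ref{def:ymu} is what tips the balance. Identifying the correct analogue of the ratio $z$ from the proof of Lemma~\ref{lem:dhu<0} and bounding it tightly enough (using $\hat\beta_*\leq 1/8$) so that the residual one-variable inequality is provable by \texttt{Resolve} on both $\mu\in(157/80,32)$ and $\mu>32$ is the technically delicate part; the boundary-value step is routine once the $\zeta_\ell$ formulation is in place.
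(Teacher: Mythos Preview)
Your proposal takes a genuinely different route from the paper. You propose monotonicity in~$\mu$ (mirroring Lemma~\ref{lem:dhu<0} and Corollary~\ref{lem:hu<0}) with boundary checks at $\mu\in\{157/80,32\}$. The paper instead proves monotonicity in~$d$: it shows $\pderiv{h_\ell(d,\mu)}{d}>0$ for \emph{all} $d\geq 23$ and $\mu\geq 157/80$, and then verifies the single base case $h_\ell(23,\mu)>0$ for all $\mu\geq 157/80$ via \texttt{Resolve}. The $d$-derivative (equation~\eqref{eq:Dhld}) is a sum of two terms $A\,C\,(\psi(d,x_\mu)-\psi(d,1-x_\mu-y_\mu))+A\,B\,(\psi(d,x_\mu)-\psi(d,y_\mu))$ with positive prefactors $A,B,C$, and both $\psi$-differences are shown positive by elementary monotonicity of $\psi(d,z)$ in $z$ together with Lemma~\ref{lem:ineqs}. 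This completely sidesteps the delicate sign analysis in~$\mu$ that you yourself identify as the main obstacle: no analogue of Lemma~\ref{lem:dhu<0} is needed for $h_\ell$ at all.

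Two concrete concerns with your route. First, your displayed inequality $h_\ell(d,\mu)\geq\min\{h_\ell(d,157/80),h_\ell(d,32)\}$ tacitly assumes $h_\ell$ is \emph{increasing} in~$\mu$ on each piece; mere monotonicity is not enough, since $y_\mu$ jumps at $\mu=32$ and the left limit on $[157/80,32)$ is not $h_\ell(d,32)$. You have not argued a direction, and your own sign tally (two negative contributions against one positive) gives no a~priori reason to expect increase. Second, your boundary step posits a single factor $\zeta_\ell$ as in~\eqref{eq:Dhud}, but $\pderiv{h_\ell}{d}$ does not factor that way: it has two separate $\psi$-difference terms (see~\eqref{eq:Dhld}), so a direct transplant of the Lemma~\ref{lem:muone} template would not go through without modification. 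The paper's $d$-monotonicity argument is both shorter and avoids these issues.
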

\begin{proof}
 We will show that
 \begin{equation}\label{eq:hellmu}
\pderiv{h_\ell(d,\mu)}{d}>0 \mbox{ for all } d\geq 23 \mbox{ and } \mu\geq 157/80.
\end{equation}
The Mathematica code in Appendix~\ref{app:hl>0}
verifies that $h_\ell(23,\mu)>0$ for all $\mu\geq157/80$.
Together with~\eqref{eq:hellmu}, this proves the lemma.

Therefore, in the rest of the proof, we prove~\eqref{eq:hellmu}.
Using Definitions~\ref{def:ymu} and~\ref{def:gs},
we have
$h_\ell(d,\mu) =   f_\ell(d,\beta_*(d) , x_\mu,y_\mu )-y_{\mu}$.
We use the following definitions in order to describe 
$\pderiv{h_\ell(d,\mu)}{d}$.
Recall from Definition~\ref{def:psi} that $\psi(d,z)={d}/(d-3 z+1)+\ln \left(d+1-3z\right)$.
Let $A = (1-\frac{3 x_\mu}{d+1})^d$
and $B = (1-\frac{3 y_\mu}{d+1})^d$
and $C = (1-\frac{3(1-x_\mu-y_\mu)}{d+1})^d$.
Then the derivative of $h_\ell(d,\mu)$
with respect to~$d$ is given as follows 
(see Appendix~\ref{app:hl>0} for Mathematica assistance). 
\begin{equation} 
\label{eq:Dhld}
\pderiv{h_\ell(d,\mu)}{d}=
\pderiv{f_\ell(d,\beta_*(d),x_\mu,y_\mu)}{d}=\frac{
A C(\psi(d,x_\mu)-\psi(d,1-x_\mu-y_\mu))+A B(\psi(d,x_\mu)-\psi(d, y_\mu))
}{(A+B+C)^2}.
\end{equation}

Lemma~\ref{lem:ineqs} guarantees that 
$A$, $B$ and $C$ are positive, so 
to prove~\eqref{eq:hellmu}, and hence the lemma,
it suffices to show 
$\psi(d,x_\mu)>\psi(d,1-x_\mu-y_\mu)$
and
$\psi(d,x_\mu)>\psi(d, y_\mu)$.

Note that  
\[\pderiv{\,\psi\!\left(d, \frac{1}{3}+t\right)}{t} = \frac{9 t}{(d-3 t)^2}, \quad \mbox{ and } \quad \pderiv{\,\psi\!\left(d, \frac{1}{3}-t\right)}{t} = \frac{9 t}{(d+3 t)^2}.\]
Thus, for fixed~$d$, the function $\psi(d,z)$ is decreasing for $z\in [0,1/3]$.
Since, by Lemma~\ref{lem:ineqs}, $0<y_\mu < 1-x_\mu-y_\mu   < 1/3$,
we have
\begin{equation}
\label{eq:first8}
\psi(d,y_\mu)\geq \psi(d,1-x_\mu-y_\mu).
\end{equation}

The function $\psi(d,z)$ is increasing for $z\in[1/3,1]$.
Since, Lemma~\ref{lem:ineqs} guarantees $1/3 < 2/3-y_\mu < x_\mu < 1$,
we have
\begin{equation}
\label{eq:second8}
\psi(d,x_\mu)\geq \psi(d,\tfrac{2}{3} - y_\mu).
\end{equation}

Since the function $\psi(d, \tfrac{1}{3}+t)-\psi(d, \tfrac{1}{3}-t)$ is increasing
for $t\in[0,1/3]$, 
and it is~$0$ at $t=0$, 
we have
$\psi(d, \tfrac{1}{3}+t)\geq\psi(d, \tfrac{1}{3}-t)$
for $t\in[0,1/3]$.
Lemma~\ref{lem:ineqs} guarantees $0<y_\mu<1/3$,
so taking $t=1/3-y_\mu$, we get
\begin{equation}
\label{eq:third8}
 \psi(d,\tfrac{2}{3} - y_\mu)\geq \psi(d,y_\mu).
\end{equation}
Combining~\eqref{eq:second8}, \eqref{eq:third8} and~\eqref{eq:first8}
we obtain 
 $\psi(d,x_\mu)>\psi(d,1-x_\mu-y_\mu)$
and
$\psi(d,x_\mu)>\psi(d, y_\mu)$, which prove~\eqref{eq:hellmu}, and hence the lemma.\end{proof}

\begin{lemma}\label{lem:ffixedpoints}
If  $d \geq 23$ then there is no solution to the system of equations
\begin{equation}\label{eq:ffixedpoints}
\begin{cases}
f_u(d,\beta_*(d),x,y) = x\cr
f_\ell(d,\beta_*(d),x,y)=y
\end{cases}
\end{equation}
which satisfies $x \geq 157y/80 \geq 0$ and $2x + y \geq 1 \geq 2y+x$.
\end{lemma}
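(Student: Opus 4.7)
\medskip
\noindent\textbf{Proof plan.} The plan is to argue by contradiction, leveraging the monotonicity lemmas for $g_u$ and $g_\ell$ (Lemmas~\ref{lem:guymono} and~\ref{lem:glymono}) together with the pointwise sign information at $y=y_\mu$ given by Corollary~\ref{lem:hu<0} and Lemma~\ref{lem:hl>0}. Suppose $(x,y)$ is a solution to \eqref{eq:ffixedpoints} satisfying $x\geq (157/80)y\geq 0$ and $2x+y\geq 1\geq 2y+x$. Since a solution requires $y>0$ (otherwise $x=0$ and the constraint $2x+y\geq 1$ fails), I set $\mu=x/y\geq 157/80$, so that $x=\mu y$. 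The conditions $2x+y\geq 1$ and $x+2y\leq 1$ translate precisely to
\[
\frac{1}{2\mu+1}\;\leq\; y\;\leq\;\frac{1}{\mu+2},
\]
which is exactly the range on which Lemmas~\ref{lem:guymono} and~\ref{lem:glymono} apply (note $d\geq 23$, so both $d\geq 5$ and $d\geq 3$ hold). Expressed in the $(\mu,y)$-parameterisation of Definition~\ref{def:gs}, being a fixed point is exactly the pair of equations
\[
g_u(d,\mu,y)=0,\qquad g_\ell(d,\mu,y)=0.
\]

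\medskip
\noindent Next, I would verify that the reference point $y_\mu$ of Definition~\ref{def:ymu} also lies in $[1/(2\mu+1),1/(\mu+2)]$ for every $\mu\geq 157/80$. For $\mu\geq 32$, this is the direct check $7(2\mu+1)\geq 10\mu+12$ and $7(\mu+2)\leq 10\mu+12$, both of which reduce to linear inequalities holding for all $\mu\geq 5/4$. For $157/80\leq\mu<32$, the extra additive $3/500$ preserves the lower bound trivially and only slightly enlarges $y_\mu$, so the upper bound $y_\mu\leq 1/(\mu+2)$ still holds (this is effectively also captured by the inequalities $0<y_\mu<1-x_\mu-y_\mu<1/3$ in Lemma~\ref{lem:ineqs}: the first two rewrite to $y_\mu\leq 1/(\mu+2)$, and $2x_\mu+y_\mu\geq 1$ is what gives the lower bound $y_\mu\geq 1/(2\mu+1)$). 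I can if needed verify the remaining lower-bound instance $7(2\mu+1)+(3/500)(2\mu+1)(10\mu+12)\geq 10\mu+12$ by a direct linear/quadratic check valid on $\mu\geq 157/80$.

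\medskip
\noindent With both $y$ and $y_\mu$ lying in $[1/(2\mu+1),1/(\mu+2)]$, I split into two cases. If $y\leq y_\mu$, then by Lemma~\ref{lem:glymono} the function $g_\ell(d,\mu,\cdot)$ is decreasing on this interval, so
\[
0\;=\;g_\ell(d,\mu,y)\;\geq\;g_\ell(d,\mu,y_\mu)\;=\;h_\ell(d,\mu)\;>\;0,
\]
where the final strict inequality is Lemma~\ref{lem:hl>0} (using $d\geq 23$ and $\mu\geq 157/80$). This is a contradiction. If instead $y\geq y_\mu$, then by Lemma~\ref{lem:guymono} the function $g_u(d,\mu,\cdot)$ is decreasing on this interval, so
\[
0\;=\;g_u(d,\mu,y)\;\leq\;g_u(d,\mu,y_\mu)\;=\;h_u(d,\mu)\;<\;0,
\]
where the final strict inequality is Corollary~\ref{lem:hu<0}. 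Again a contradiction, which establishes the lemma.

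\medskip
\noindent\textbf{Where the effort lies.} The proof itself is short once the machinery is in place; essentially all the analytical work sits inside the previous lemmas (the careful choice of $y_\mu$, the monotonicity of $g_u$ and $g_\ell$ in $y$, and the sign of $h_u$, $h_\ell$ at $y_\mu$). The only real bookkeeping step is the containment $y_\mu\in[1/(2\mu+1),1/(\mu+2)]$, which is a routine linear inequality verification modulo the piecewise definition at $\mu=32$; I do not anticipate any genuine obstacle beyond matching the monotonicity ranges to the constraint region.
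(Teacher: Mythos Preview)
Your proposal is correct and follows essentially the same approach as the paper's proof: reparameterise via $\mu=x/y$, translate the fixed-point equations into $g_u=g_\ell=0$, and then combine the monotonicity Lemmas~\ref{lem:guymono}, \ref{lem:glymono} with the sign information from Corollary~\ref{lem:hu<0} and Lemma~\ref{lem:hl>0} to force $y<y_\mu$ and $y>y_\mu$ simultaneously. The only difference is that you make the verification $y_\mu\in[1/(2\mu+1),1/(\mu+2)]$ explicit, which the paper's proof uses implicitly; note that Lemma~\ref{lem:ineqs} directly gives only the upper endpoint (via $y_\mu<1-x_\mu-y_\mu$), so your direct linear check for the lower endpoint is the right way to close that small gap.
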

\begin{proof}
Consider any fixed~$d\geq 23$ and, for the sake of contradiction, assume that 
such an $(x,y)$ exists. Let $\mu = x/y$, so that (by Definition~\ref{def:gs}), $(\mu, y)$ is a solution to the equation
\[g_u(d,\mu,y) = g_\ell(d,\mu,y) = 0.\]
The conditions $x \geq 157y/80 \geq 0$ and $2x + y \geq 1 \geq 2y+x$ translate into
$\mu\geq {157}/{80}$ and
$  {1}/{(2\mu+1)}\leq y \leq  {1}/{(\mu+2)}$.
Since $g_u(d,\mu,y) = 0$, by Lemma~\ref{lem:guymono} and
Corollary~\ref{lem:hu<0}, we have $y < y_\mu$. Since $g_\ell(d,\mu,y) = 0$, by Lemmas~\ref{lem:glymono}~and~\ref{lem:hl>0}, we have $y > y_\mu$. This yields a contradiction.
\end{proof}

\begin{corollary}\label{lem:d>=23bound}
For every integer $d \geq 23$, there exists a positive integer $n_0$ such that for all $n \geq n_0$,
\[\frac{u_n(d,\beta_*(d))}{\ell_n(d,\beta_*(d))}\leq \frac{53}{27}.\]
\end{corollary}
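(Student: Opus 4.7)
The plan is to combine Lemma~\ref{lem:ffixedpoints} with the convergence of the sequences $\{u_n(d,\beta_*(d))\}$ and $\{\ell_n(d,\beta_*(d))\}$ to their limits $u_\infty(d,\beta_*(d))$ and $\ell_\infty(d,\beta_*(d))$ from equation~\eqref{eq:tb53tb5b3fwefe}.

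First, I would invoke Lemma~\ref{lem:fixedpoints} with $\beta = \beta_*(d) \in (0,1)$ to conclude that the pair $(x,y) = (u_\infty(d,\beta_*(d)), \ell_\infty(d,\beta_*(d)))$ is a solution to the system~\eqref{eq:ffixedpoints} satisfying $0 < y \leq 1-x-y \leq x < 1$. Rewriting these chain inequalities, the condition $y \leq 1-x-y$ is equivalent to $2y+x \leq 1$, and $1-x-y \leq x$ is equivalent to $2x+y \geq 1$, so this pair satisfies exactly the triangle-type conditions appearing in the hypothesis of Lemma~\ref{lem:ffixedpoints}.

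Next, by the contrapositive of Lemma~\ref{lem:ffixedpoints}, the remaining hypothesis $x \geq 157 y/80$ must fail, i.e., $u_\infty(d,\beta_*(d)) < \frac{157}{80}\,\ell_\infty(d,\beta_*(d))$. A direct comparison gives $157 \cdot 27 = 4239 < 4240 = 80 \cdot 53$, so $\frac{157}{80} < \frac{53}{27}$, and therefore
\[
\frac{u_\infty(d,\beta_*(d))}{\ell_\infty(d,\beta_*(d))} < \frac{53}{27}.
\]
Note that the denominator is strictly positive since, by Lemma~\ref{lem:seqxymono}, $\ell_\infty(d,\beta_*(d)) \geq \ell_1(d,\beta_*(d)) = \beta_*(d)^d/(\beta_*(d)^d + 2) > 0$ (using $\beta_*(d) > 0$ for $d \geq 23$), so the ratio is well-defined.

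Finally, by definition $u_n(d,\beta_*(d)) \to u_\infty(d,\beta_*(d))$ and $\ell_n(d,\beta_*(d)) \to \ell_\infty(d,\beta_*(d))$ as $n \to \infty$, and since the limit ratio is strictly less than $53/27$ and the denominator sequence is bounded away from~$0$, there exists $n_0$ such that for all $n \geq n_0$, $u_n(d,\beta_*(d))/\ell_n(d,\beta_*(d)) \leq 53/27$. There is no genuine obstacle here since the substantial analytic work is already carried out in Lemma~\ref{lem:ffixedpoints}; this corollary just unpacks the limit statement and converts the strict inequality at the limit into a weak inequality valid for all sufficiently large~$n$.
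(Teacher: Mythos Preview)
Your proposal is correct and follows essentially the same argument as the paper: apply Lemma~\ref{lem:fixedpoints} to get that $(u_\infty,\ell_\infty)$ solves~\eqref{eq:ffixedpoints} with $2y+x\leq 1\leq 2x+y$, use Lemma~\ref{lem:ffixedpoints} to force $u_\infty/\ell_\infty<157/80<53/27$, and then pass from the strict limit inequality to the eventual weak inequality for the sequence. The paper's version is slightly terser (it does not spell out the positivity of $\ell_\infty$ or the arithmetic $157\cdot 27<80\cdot 53$), but the logic is identical.
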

\begin{proof}
Fix $d\geq 23$. For simplicity, we will write $\beta_*$ instead of $\beta_*(d)$. 

Recall that the sequences $\{u_n(d,\beta_*)\}$ and $\{\ell_n(d,\beta_*)\}$ converge to the limits $u_\infty(d, \beta_*)$ and $\ell_\infty(d,\beta_*)$, respectively (cf. \eqref{eq:tb53tb5b3fwefe}). Moreover, by Lemma~\ref{lem:fixedpoints},  the pair $(x,y)=(u_\infty(d, \beta_*), \ell_\infty(d,\beta_*))$ is a solution to the  system of equations~\eqref{eq:ffixedpoints}
satisfying $0<y\leq 1-x-y \leq x < 1$. By Lemma~\ref{lem:ffixedpoints},   there is no solution $(x,y)$ to \eqref{eq:ffixedpoints} such that $x \geq 157y/80 \geq 0$ and $2x + y \geq 1 \geq 2y+x$. So, it must be the case that
    ${u_\infty(d,\beta_*)}/{\ell_\infty(d,\beta_*)}<{157}/{80}$.
Since $53/27>157/80$, there exists 
a positive integer~$n_0$ such that for all $n \geq n_0$, ${u_n(d,\beta_*)}/{\ell_n(d,\beta_*)}\leq{53}/{27}$.
\end{proof}

Corollary~\ref{lem:d>=23bound} accounts for integers $d\geq 23$. To account for integers $3\leq d\leq 22$, we define the following two sequences.
    \[
    \begin{cases}
    u'_0(d) = 1\cr
    \ell'_0(d) = 0
    \end{cases} \]
    and for  every non-negative integer~$n$,
    \[\begin{cases}
    u'_{n+1}(d) = \displaystyle\frac{\lceil 10000\,f_u(d,\beta_*(d),u'_n(d), \ell'_n(d))\rceil}{10000}\\[8pt]
    \ell'_{n+1}(d) = \displaystyle\frac{\lfloor 10000\,f_\ell(d,\beta_*(d), u'_n(d),\ell'_n(d))\rfloor}{10000}
    \end{cases}.\]

We have the following lemma, which is proved by brute force. 
\begin{lemma}\label{lem:d<=22sequence}
For every integer $d\in\{3,\ldots,22\}$  and 
every integer $n\in \{0,\ldots,60\}$, we have 
$u'_n(d) \geq u'_{n+1}(d)$,   $\ell'_n(d) \leq \ell'_{n+1}(d)$,   $2u'_n(d)+ \ell'_n(d) \geq 1\geq 2\ell'_n(d)+u_n(d)$ 
and $u'_{60}(d)/\ell'_{60}(d)\leq \frac{53}{27}$.
\end{lemma}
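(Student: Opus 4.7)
The plan is direct computation, certified rigorously. For each fixed $d\in\{3,\dots,22\}$, we have $\beta_*(d)=(d-2)/(d+1)\in\mathbb{Q}$, and by construction, for every $n\geq 1$ the pair $(u'_n(d),\ell'_n(d))$ is a rational pair with common denominator dividing $10000$. Thus at every stage the state of the recursion is represented exactly by two integers in $\{0,1,\dots,10000\}$, and the iteration is fully deterministic. Since only finitely many values need to be checked (at most $20\cdot 61=1220$ pairs), the lemma reduces to a bounded, explicit finite computation.

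First I would implement, for each $d$, the update rule $u'_0(d)=1$, $\ell'_0(d)=0$, and
\[
u'_{n+1}(d) = \frac{\lceil 10000\,f_u(d,\beta_*(d),u'_n(d),\ell'_n(d))\rceil}{10000},\qquad
\ell'_{n+1}(d) = \frac{\lfloor 10000\,f_\ell(d,\beta_*(d),u'_n(d),\ell'_n(d))\rfloor}{10000},
\]
carrying out the arithmetic in Mathematica. The one subtlety is that $f_u$ involves the exponent $d/2$, so for odd $d$ the value $10000\,f_u(d,\beta_*,u'_n,\ell'_n)$ is an algebraic number containing a square root rather than a rational. To compute the ceiling (respectively floor) rigorously, I would use Mathematica's exact algebraic arithmetic via \texttt{RootReduce} / \texttt{Ceiling} and \texttt{Floor} (which operate exactly on algebraic numbers), or equivalently produce a rigorous enclosing interval of width less than $1$ and read off the unique integer. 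Either way the rounding is deterministic and certified.

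Having computed the sequences, I would then, at every step $n\in\{0,\dots,60\}$, directly verify the three inequalities $u'_n(d)\geq u'_{n+1}(d)$, $\ell'_n(d)\leq\ell'_{n+1}(d)$, and $2\ell'_n(d)+u'_n(d)\leq 1\leq 2u'_n(d)+\ell'_n(d)$; these are comparisons between rationals with denominator $10000$, which is trivial. Finally, at $n=60$, I would check $27\,u'_{60}(d)\leq 53\,\ell'_{60}(d)$, again a single rational comparison. The only real obstacle is bookkeeping and ensuring that the square-root handling for odd $d$ is done by an exact procedure rather than floating-point approximation; but since Mathematica's algebraic number arithmetic is exact, this is straightforward, and the entire verification across all $20$ values of $d$ and all $60$ iterations fits easily within a short script listed alongside the paper's other Mathematica code.
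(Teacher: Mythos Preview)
Your proposal is correct and matches the paper's approach: the paper's proof is precisely a brute-force Mathematica computation of all $u'_n(d),\ell'_n(d)$ for $d\in\{3,\dots,22\}$ and $n\in\{0,\dots,60\}$, followed by direct verification of each inequality. Your discussion of the square-root issue for odd $d$ is a valid point the paper leaves implicit, relying on Mathematica's \texttt{Ceiling}/\texttt{Floor} handling exact algebraic numbers correctly.
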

\begin{proof}
In Appendix~\ref{app:d<=22sequence}, we use Mathematica to compute all the values $u'_n(d)$ and $\ell'_n(d)$ for $n\in\{0,\ldots,60\}$ and $d\in\{3,\ldots,22\}$. We  then check that all of the desired inequalities hold.
\end{proof}

We  next show that the sequences $\{u'_n(d)\}$ and $\{\ell'_n(d)\}$ bound the sequences  $\{u_n(d,\beta_*(d))\}$ and $\{\ell_n(d,\beta_*(d))\}$ for $n\leq 60$.  
\begin{lemma}\label{lem:d<=22seqmono}
For every integer $d\in\{3,\ldots,22\}$  and 
every integer $n\in \{0,\ldots,60\}$, we have 
$u'_n(d) \geq u_n(d,\beta_*(d))$ and $\ell'_n(d) \leq \ell_n(d,\beta_*(d))$. 
\end{lemma}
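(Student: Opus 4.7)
The plan is to prove the lemma by induction on $n$, exploiting the monotonicity properties of $f_u$ and $f_\ell$ established in Lemma~\ref{lem:fulxymono} together with the fact that the primed recursion is defined as a rounded-up/rounded-down version of the unprimed one.

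For the base case $n=0$, we have $u'_0(d)=1=u_0(d,\beta_*(d))$ and $\ell'_0(d)=0=\ell_0(d,\beta_*(d))$, so both inequalities hold with equality. For the inductive step, fix $d\in\{3,\ldots,22\}$ and $n\in\{0,\ldots,59\}$, and assume $u'_n(d)\geq u_n(d,\beta_*(d))$ and $\ell'_n(d)\leq \ell_n(d,\beta_*(d))$. Write $u_n=u_n(d,\beta_*(d))$, $\ell_n=\ell_n(d,\beta_*(d))$, $u'_n=u'_n(d)$, $\ell'_n=\ell'_n(d)$, and similarly for $n+1$; also write $\beta_*=\beta_*(d)$.

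To apply Lemma~\ref{lem:fulxymono} with $(x_1,y_1)=(u'_n,\ell'_n)$ and $(x_2,y_2)=(u_n,\ell_n)$, I need to verify $0\leq \ell'_n\leq \ell_n\leq u_n\leq u'_n\leq 1$ along with the hypotheses $2\ell'_n+u'_n\leq 1\leq 2u'_n+\ell'_n$ and $2\ell_n+u_n\leq 1\leq 2u_n+\ell_n$. The chain of inequalities on the endpoints follows from the inductive hypothesis combined with Lemma~\ref{lem:seqxymono} (which gives $\ell_n\leq u_n$). The constraint on $(u'_n,\ell'_n)$ is supplied by Lemma~\ref{lem:d<=22sequence}, and the constraint on $(u_n,\ell_n)$ is supplied again by Lemma~\ref{lem:seqxymono}. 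Therefore Lemma~\ref{lem:fulxymono} yields
\[
f_u(d,\beta_*,u_n,\ell_n)\leq f_u(d,\beta_*,u'_n,\ell'_n) \quad\text{and}\quad f_\ell(d,\beta_*,u_n,\ell_n)\geq f_\ell(d,\beta_*,u'_n,\ell'_n).
\]

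To conclude, combine these inequalities with the definition of the primed recursion. Since rounding up can only increase a value,
\[
u_{n+1}=f_u(d,\beta_*,u_n,\ell_n)\leq f_u(d,\beta_*,u'_n,\ell'_n)\leq \frac{\lceil 10000\,f_u(d,\beta_*,u'_n,\ell'_n)\rceil}{10000}=u'_{n+1},
\]
and analogously, since rounding down can only decrease a value,
\[
\ell_{n+1}=f_\ell(d,\beta_*,u_n,\ell_n)\geq f_\ell(d,\beta_*,u'_n,\ell'_n)\geq \frac{\lfloor 10000\,f_\ell(d,\beta_*,u'_n,\ell'_n)\rfloor}{10000}=\ell'_{n+1}.
\]
This completes the inductive step. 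There is no real obstacle in the argument: the only subtlety is making sure that at each step the monotonicity hypotheses of Lemma~\ref{lem:fulxymono} are in force for both pairs $(u_n,\ell_n)$ and $(u'_n,\ell'_n)$, which is exactly why Lemma~\ref{lem:d<=22sequence} includes the auxiliary inequalities $2u'_n+\ell'_n\geq 1\geq 2\ell'_n+u'_n$ as part of its brute-force verification.
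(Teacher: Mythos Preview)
Your proof is correct and follows essentially the same approach as the paper: induction on $n$, with the inductive step appealing to Lemma~\ref{lem:fulxymono} (whose hypotheses are supplied by Lemma~\ref{lem:seqxymono} for the unprimed sequence and Lemma~\ref{lem:d<=22sequence} for the primed one), followed by the observation that the ceiling/floor in the definition of $u'_{n+1},\ell'_{n+1}$ only helps. The paper's write-up is slightly terser, but the logical structure is identical.
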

\begin{proof}
Fix $d$ to be an integer between 3 and 22. Since $d$ is fixed,   we simplify the notation by
writing~$u_n$ for $u_n(d,\beta_*(d))$, $u_n'$ for $u_n'(d)$, $\ell_n$ for $\ell_n(d,\beta_*(d))$, $\ell_n'$ for $\ell'_n(d)$,  
$f_u(x,y)$ for $f_u(d,\beta_*(d),x,y)$ and   $f_\ell(x,y)$ for $f_\ell(d,\beta_*(d),x,y)$. 

We prove the lemma by induction on~$n$. For the base case $n=0$, we have $u_n = u_n'=1$ and $\ell_n = \ell_n'=0$. 
For the inductive step, suppose $n>0$. By Lemmas~\ref{lem:seqxymono} and~\ref{lem:d<=22sequence}, we have 
\[2\ell_{n-1}+u_{n-1}\leq 1 \leq 2u_{n-1}+\ell_{n-1}, \quad \text{and} \quad 2\ell'_{n-1}+u'_{n-1}\leq 1 \leq 2u'_{n-1}+\ell'_{n-1}.\]
By the induction hypothesis, we have
    \[ \ell'_{n-1}\leq \ell_{n-1}\leq u_{n-1}\leq u'_{n-1}.\]
Using Lemma~\ref{lem:fulxymono}, we therefore obtain that
    \begin{align*}
    &u'_n\geq f_u(u'_{n-1},\ell'_{n-1})\geq f_u(u_{n-1},\ell_{n-1})=u_n, \mbox{ and }\\
		&\ell'_n\leq f_\ell(u'_{n-1},\ell'_{n-1})\leq f_\ell(u_{n-1},\ell_{n-1})=\ell_n.
    \end{align*}
This completes the proof.
\end{proof}

\begin{corollary}\label{lem:d<=22bound}
    For every integer 
   $d\in\{3,\ldots,22\}$ and every integer $n\geq 60$,  
   $\displaystyle \frac{u_n(d,\beta_*(d))}{\ell_n(d,\beta_*(d))}\leq \displaystyle \frac{53}{27}$.
\end{corollary}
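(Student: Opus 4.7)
The plan is to reduce the claim to a finite check (at $n=60$) combined with the monotonicity of the sequences $\{u_n(d,\beta_*(d))\}$ and $\{\ell_n(d,\beta_*(d))\}$ established earlier. Fix an integer $d\in\{3,\ldots,22\}$, and, to simplify notation, write $u_n$ for $u_n(d,\beta_*(d))$ and $\ell_n$ for $\ell_n(d,\beta_*(d))$, and similarly $u'_n, \ell'_n$ for the rounded sequences.

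First I would handle the base case $n=60$. By Lemma~\ref{lem:d<=22seqmono} applied at $n=60$, we have $u_{60}\leq u'_{60}$ and $\ell_{60}\geq \ell'_{60}>0$, where the positivity of $\ell'_{60}$ follows from $\ell'_0=0$ together with $\ell'_n\leq \ell'_{n+1}$ from Lemma~\ref{lem:d<=22sequence} and the fact that $\ell'_1(d)>0$ (which is implicit in the brute-force table for $d\in\{3,\ldots,22\}$). Therefore
\[
\frac{u_{60}}{\ell_{60}}\;\leq\;\frac{u'_{60}}{\ell'_{60}}\;\leq\;\frac{53}{27},
\]
where the second inequality is the final assertion of Lemma~\ref{lem:d<=22sequence}.

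Next I would extend this bound to all $n\geq 60$. By Lemma~\ref{lem:seqxymono}, the sequence $\{u_n\}$ is non-increasing in $n$ and the sequence $\{\ell_n\}$ is non-decreasing in $n$; in particular, $\ell_n\geq \ell_{60}>0$ for every $n\geq 60$. Consequently, for every such $n$,
\[
\frac{u_n}{\ell_n}\;\leq\;\frac{u_{60}}{\ell_{60}}\;\leq\;\frac{53}{27},
\]
which is precisely the statement of the corollary.

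There is no real obstacle here: the entire argument is an immediate consequence of (i) the monotonicity of the exact sequences established in Lemma~\ref{lem:seqxymono}, (ii) the fact that the rounded sequences $\{u'_n(d)\}, \{\ell'_n(d)\}$ sandwich the exact ones for $n\leq 60$ (Lemma~\ref{lem:d<=22seqmono}), and (iii) the finite verification carried out in Lemma~\ref{lem:d<=22sequence} that the rounded sequences already satisfy the desired ratio bound at $n=60$ for each $d\in\{3,\ldots,22\}$. The only thing one has to be slightly careful about is ensuring that $\ell'_{60}(d)>0$ for every relevant~$d$, so that the ratio is well-defined; this is immediate from the explicit values computed in Appendix~\ref{app:d<=22sequence}.
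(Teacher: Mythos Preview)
Your proof is correct and follows essentially the same route as the paper: combine the sandwiching from Lemma~\ref{lem:d<=22seqmono} and the finite check in Lemma~\ref{lem:d<=22sequence} to handle $n=60$, then use the monotonicity of $\{u_n\}$ and $\{\ell_n\}$ from Lemma~\ref{lem:seqxymono} to extend to all $n\geq 60$. Your extra remark about the positivity of $\ell'_{60}(d)$ is a nice bit of care, though note it is already implicit in the statement $u'_{60}(d)/\ell'_{60}(d)\leq 53/27$ of Lemma~\ref{lem:d<=22sequence}.
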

\begin{proof}
Fix an arbitrary integer $d$ between 3 and 22. We have the following chain of inequalities (see below for explanation): 
\[\frac{u_n(d,\beta_*(d))}{\ell_n(d,\beta_*(d))}\leq \frac{u_{60}(d,\beta_*(d))}{\ell_{60}(d,\beta_*(d))}\leq \frac{u_{60}'(d)}{\ell_{60}'(d)}\leq \frac{53}{27}.\]
The first inequality holds by Lemma~\ref{lem:seqxymono},  since the sequence $\{u_n(d,\beta_*(d))\}$ is increasing and the sequence $\{\ell_n(d,\beta_*(d))\}$ is decreasing. The second inequality holds by Lemma~\ref{lem:d<=22seqmono}. Finally, the third inequality holds by Lemma~\ref{lem:d<=22sequence}.
\end{proof}

We can now prove Lemma~\ref{lem:onestepuniq}, which we restate here for convenience.
\begin{lemonestepuniq}
\statelemonestepuniq
\end{lemonestepuniq}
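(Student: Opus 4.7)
The proof plan is to stitch together results that have already been established. The $d=2$ case is essentially immediate: it is exactly the statement of Lemma~\ref{lem:d=2bound}, with $L_n := 459/2000$ and $U_n := 1107/2500$ (constants independent of $n$). So one can simply invoke that lemma.

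For $d\geq 3$, the plan is to take $L_n := \ell_n(d,\beta_*(d))$ and $U_n := u_n(d,\beta_*(d))$, where $\beta_*(d) = 1 - 3/(d+1)$. First, by Lemma~\ref{lem:probbound}, we have
\[\ell_n(d,\beta)\leq \Pr_{\thetree}[\sigma(\theroot)=c\mid\sigma(\theleaves)=\tau]\leq u_n(d,\beta)\]
for the parameter~$\beta$ under consideration. Then, since $\beta\in[\beta_*(d),1)$, Lemma~\ref{lem:seqbetamono} sandwiches these bounds between the sequences at the critical parameter:
\[\ell_n(d,\beta_*(d))\leq \ell_n(d,\beta)\quad\text{and}\quad u_n(d,\beta)\leq u_n(d,\beta_*(d)).\]
Chaining these together yields $L_n\leq \Pr_{\thetree}[\sigma(\theroot)=c\mid\sigma(\theleaves)=\tau]\leq U_n$ as required.

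It remains to bound the ratio $U_n/L_n = u_n(d,\beta_*(d))/\ell_n(d,\beta_*(d))$ by $53/27$ for sufficiently large $n$. Here we split into two subcases according to the value of~$d$. For $d\geq 23$, Corollary~\ref{lem:d>=23bound} gives exactly this ratio bound for all $n$ beyond some threshold $n_0$ depending on $d$. For $3\leq d \leq 22$, Corollary~\ref{lem:d<=22bound} provides the same bound for all $n\geq 60$. Taking $n_0$ to be the maximum of these two thresholds (which depends only on~$d$) completes the proof.

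The argument involves no real obstacle at this point — all the hard work (the monotonicity in $\beta$ from Lemma~\ref{lem:seqbetamono}, the analytic fixed-point analysis yielding Corollary~\ref{lem:d>=23bound}, and the finite brute-force verification in Corollary~\ref{lem:d<=22bound}) has already been carried out. The proof is essentially a one-paragraph assembly of these components.
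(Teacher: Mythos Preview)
Your proposal is correct and follows essentially the same approach as the paper: invoke Lemma~\ref{lem:d=2bound} for $d=2$, and for $d\geq 3$ take $L_n=\ell_n(d,\beta_*(d))$, $U_n=u_n(d,\beta_*(d))$, then combine Lemmas~\ref{lem:probbound} and~\ref{lem:seqbetamono} for the sandwich bound and Corollaries~\ref{lem:d>=23bound} and~\ref{lem:d<=22bound} for the ratio bound. One cosmetic remark: for a fixed $d$ only one of the two corollaries applies, so there is no need to take the maximum of two thresholds.
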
 
\begin{proof} 

The statement for $d=2$ follows directly from Lemma~\ref{lem:d=2bound}.

Suppose $d\geq 3$.
Let $U_n=u_n(d,\beta_*(d))$, $L_n=u_n(d,\beta_*(d))$. By  Corollaries~\ref{lem:d>=23bound} 
and~\ref{lem:d<=22bound}, there exists an integer $n_0$ such that for all $n \geq n_0$,
        \[\frac{U_n}{L_n}=\frac{u_n(d,\beta_*(d))}{\ell_n(d,\beta_*(d))}\leq \frac{53}{27}.\]
    Furthermore, by Lemmas~\ref{lem:probbound}~and~\ref{lem:seqbetamono}, for any $n\geq 0$, any configuration $\tau\colon\theleaves \to [3]$ and any colour
    $c\in[3]$, we have
    \[L_n=\ell_n(d,\beta_*(d))\leq\Pr_{\thetree}[\sigma(\theroot)=c\mid \sigma(\theleaves)=\tau]\leq u_n(d,\beta_*(d))=U_n.\]
This completes the proof.
\end{proof}

\section{Analysing the two-step recursion}\label{sec:twostep}

In this section, we fix $q\geq 3$, $d\geq 2$ and $\beta\in [0,1)$. 
All of our notation depends implicitly on these three parameters, but
when possible we  avoid using them as indices to aid readability.

Our ultimate goal is to understand the case where $q=3$,
but some of the lemmas are true more generally, so we start with $q\geq 3$.
When we later fix $q=3$, we say so explicitly.

\subsection{Characterising the maximiser of $h_{c_1,c_2,\beta}$ --- Proof of Lemmas~\ref{lem:existence} and~\ref{lem:mo12no12tone}}\label{sec:existence}

In this section, we prove Lemmas~\ref{lem:existence} and~\ref{lem:mo12no12tone} from Section~\ref{sec:simplecondition}. Recall that 
\begin{equation*}\tag{\ref{eq:gh12def}}
\begin{aligned}
g_{c_1,c_2,\beta}(\child{\pb}{1},\ldots,\child{\pb}{d})
&:=\prod^d_{k=1}\bigg(1-\frac{(1-\beta) \big(\child{p}{k}_{c_1}-\child{p}{k}_{c_2}\big)}{\beta \child{p}{k}_{c_2}+\sum_{c\neq c_2}\child{p}{k}_{c}}\bigg).\\ 
h_{c_1,c_2,\beta}(\child{\pb}{1}, \ldots, \child{\pb}{d})
&:=1+\frac{(1-\beta)\big(1-g_{c_1,c_2,\beta}(\child{\pb}{1}, \ldots, \child{\pb}{d})\big)}{\beta +\sum_{c\neq c_2}g_{c,c_2,\beta}(\child{\pb}{1}, \ldots, \child{\pb}{d})}.
\end{aligned}
\end{equation*}
To prove Lemma~\ref{lem:existence}, it will be helpful in this section to consider the set of maximisers of $h_{c_1,c_2,\beta}$.
\begin{definition}
Suppose $q\geq 3$, $ d\geq 2$ and $\beta\in [0,1)$. 
For colours $c_1,c_2\in[q]$ and $\alpha>1$, let
\begin{equation}\label{eq:max12misers}
\maxset{\alpha,c_1,c_2,\beta}={\arg\max}_{(\child{\pb}{1},\hdots,\child{\pb}{d})\in\triangle_{\alpha}^d}\, h_{c_1,c_2,\beta}\big(\child{\pb}{1},\hdots,\child{\pb}{d}\big).
\end{equation}
\end{definition}

The following lemmas give   properties of 
  the maximisers in $\maxset{\alpha,c_1,c_2,\beta}$.

\begin{lemma}\label{lem:gc1c2betaless1}
Fix $\alpha> 1$ and 
$\beta\in [0,1)$ and
colours $c_1,c_2\in[q]$. Then for any vector
 $(\child{\pb}{1},\hdots,\child{\pb}{d})\in\maxset{\alpha,c_1,c_2,\beta}$,   we have $g_{c_1,c_2,\beta}(\child{\pb}{1},\hdots,\child{\pb}{d})\leq 1$.
\end{lemma}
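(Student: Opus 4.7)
The plan is to argue by contradiction using a color-swap symmetry. The case $c_1=c_2$ is immediate, since each factor in the product defining $g_{c_1,c_2,\beta}$ then equals $1$, so I focus on $c_1\neq c_2$. A preliminary observation is that, using $\sum_{c\neq c_2}\child{p}{k}_c=1-\child{p}{k}_{c_2}$, the $k$-th factor in the definition of $g$ simplifies to
$$\frac{1-(1-\beta)\child{p}{k}_{c_1}}{1-(1-\beta)\child{p}{k}_{c_2}},$$
so that $g_{c_1,c_2,\beta}$ becomes a clean product of such ratios. In particular every $g_{c,c_2,\beta}$ is strictly positive whenever the $\child{\pb}{k}$ have strictly positive entries (which is the case in $\triangle_\alpha$).

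Suppose for contradiction that $(\child{\pb}{1},\ldots,\child{\pb}{d})\in\maxset{\alpha,c_1,c_2,\beta}$ satisfies $g:=g_{c_1,c_2,\beta}(\child{\pb}{1},\ldots,\child{\pb}{d})>1$. For each $k$, let $\child{\mathbf{q}}{k}$ be the vector obtained from $\child{\pb}{k}$ by transposing its $c_1$- and $c_2$-entries. Since $\triangle_\alpha$ is invariant under permutations of coordinates, the conjugate tuple $(\child{\mathbf{q}}{1},\ldots,\child{\mathbf{q}}{d})$ still lies in $\triangle_\alpha^d$. From the simplified ratio formula above I would immediately read off the two identities
$$g_{c_1,c_2,\beta}(\child{\mathbf{q}}{1},\ldots,\child{\mathbf{q}}{d})=\tfrac{1}{g},\qquad g_{c,c_2,\beta}(\child{\mathbf{q}}{1},\ldots,\child{\mathbf{q}}{d})=\tfrac{1}{g}\,g_{c,c_2,\beta}(\child{\pb}{1},\ldots,\child{\pb}{d})\text{ for }c\in[q]\setminus\{c_1,c_2\}.$$

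Plugging these into the definition of $h_{c_1,c_2,\beta}$ and setting $G':=\sum_{c\in[q]\setminus\{c_1,c_2\}}g_{c,c_2,\beta}(\child{\pb}{1},\ldots,\child{\pb}{d})$, I expect a short algebraic manipulation to give
$$h_{c_1,c_2,\beta}(\child{\mathbf{q}}{1},\ldots,\child{\mathbf{q}}{d})-h_{c_1,c_2,\beta}(\child{\pb}{1},\ldots,\child{\pb}{d})=(1-\beta)(g-1)\left(\frac{1}{\beta g+1+G'}+\frac{1}{\beta+g+G'}\right).$$
Under the contradiction hypothesis $g>1$, and since $(1-\beta)\in(0,1]$ and both denominators on the right are strictly positive, this difference is strictly positive, contradicting the maximality of $(\child{\pb}{1},\ldots,\child{\pb}{d})$. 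The main, essentially only, technical task is the careful verification of this $h$-difference identity; once the simplified product form for $g$ is in hand, the calculation is routine but is the place where the proof could stumble if the algebra is carried out sloppily.
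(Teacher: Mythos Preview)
Your argument is correct, and the simplification of the $k$-th factor to $\frac{1-(1-\beta)\child{p}{k}_{c_1}}{1-(1-\beta)\child{p}{k}_{c_2}}$ together with the swap identities and the $h$-difference formula are all verified exactly as you wrote them.

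However, the paper's proof is considerably shorter and avoids the swap construction altogether. The paper simply observes that if $g_{c_1,c_2,\beta}>1$ then, since the denominator $\beta+\sum_{c\neq c_2}g_{c,c_2,\beta}$ is positive, the definition of $h_{c_1,c_2,\beta}$ immediately gives $h_{c_1,c_2,\beta}(\child{\pb}{1},\ldots,\child{\pb}{d})<1$. On the other hand, the uniform vector $(1/q,\ldots,1/q)\in\triangle_\alpha$ makes every $g_{c,c_2,\beta}$ equal to $1$ and hence $h_{c_1,c_2,\beta}=1$, already contradicting maximality. Your route produces, via the colour swap, an explicit competitor that \emph{beats} the putative maximiser rather than merely one that \emph{ties} the value $1$; this is a slightly stronger conclusion and the symmetry you exploit is a nice structural observation, but for the purposes of this lemma it is more machinery than needed.
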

\begin{proof}
Assume for the sake of contradiction that $g_{c_1,c_2,\beta}(\child{\pb}{1},\hdots,\child{\pb}{d})> 1$. Then, we have that  $h_{c_1,c_2,\beta}(\child{\pb}{1},\hdots,\child{\pb}{d})<1$, which contradicts the fact that $(\child{\pb}{1},\hdots,\child{\pb}{d})\in\maxset{\alpha,c_1,c_2,\beta}$ since $h_{c_1,c_2,\beta}$ can take the value 1 by setting all of its arguments to be equal to the uniform vector $(1/q,\hdots,1/q)\in \triangle_\alpha$.
\end{proof}

\begin{lemma}\label{lem:pc2min}
Fix $\alpha> 1$ and $\beta\in[0,1)$ and any two distinct colours $c_1\in[q]$ and $c_2\in[q]$.
Then for any vector
 $(\child{\pb}{1},\hdots,\child{\pb}{d})\in\maxset{\alpha,c_1,c_2,\beta}$ and any $k\in[d]$,   we have
 $\child{{p}}{k}_{c_2} = \min_{c\in[q]} \{\child{{p}}{k}_{c}\}$.
\end{lemma}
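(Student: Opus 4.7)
The plan is to argue by contradiction using a local swap. First I would rewrite the definition of $g_{c_1,c_2,\beta}$ using the fact that $(\child{\pb}{1},\ldots,\child{\pb}{d})\in\triangle_\alpha^d$ are probability vectors, so $\sum_{c\ne c_2}\child{p}{k}_c = 1-\child{p}{k}_{c_2}$. This lets me simplify the $k$-th factor of $g_{c,c_2,\beta}$ to $\big(1-(1-\beta)\child{p}{k}_{c}\big)\big/\big(1-(1-\beta)\child{p}{k}_{c_2}\big)$, so that
\[
g_{c,c_2,\beta}\big(\child{\pb}{1},\ldots,\child{\pb}{d}\big)=\prod_{k=1}^d\frac{1-(1-\beta)\child{p}{k}_{c}}{1-(1-\beta)\child{p}{k}_{c_2}}.
\]
Each such factor is strictly positive (since $\beta\in[0,1)$ and $\triangle_\alpha$ has strictly positive entries), so all the $g_{c,c_2,\beta}$ are strictly positive. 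I would then write $h_{c_1,c_2,\beta}=1+(1-\beta) N/D$ with $N=1-g_{c_1,c_2,\beta}$ and $D=\beta+\sum_{c\ne c_2}g_{c,c_2,\beta}>0$, and note that by Lemma~\ref{lem:gc1c2betaless1} we have $N\geq 0$ at a maximiser.

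Now suppose for contradiction that some index $k\in[d]$ and some colour $c^\ast\ne c_2$ satisfy $\child{p}{k}_{c^\ast}<\child{p}{k}_{c_2}$. Let $\tilde\pb^{(k)}$ be obtained from $\child{\pb}{k}$ by swapping its $c^\ast$- and $c_2$-coordinates, and leave the other $\child{\pb}{j}$ unchanged. Because $\triangle_\alpha$ is defined purely by the multiset of coordinate values, the new tuple still lies in $\triangle_\alpha^d$. The plan is to show that this swap strictly increases $N$ and strictly decreases $D$, which together with $N\geq 0$ and $D>0$ yields a strict increase of $h_{c_1,c_2,\beta}$ and hence a contradiction.

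To analyse the effect of the swap I would track, for each colour $c\ne c_2$, the ratio of the new $k$-th factor of $g_{c,c_2,\beta}$ to the old one. Split into two cases. If $c^\ast=c_1$, the $k$-th factor in $g_{c_1,c_2,\beta}$ is replaced by its reciprocal, so $\tilde g_{c_1,c_2,\beta}=g_{c_1,c_2,\beta}/A^2$ where $A=(1-(1-\beta)\child{p}{k}_{c_1})/(1-(1-\beta)\child{p}{k}_{c_2})>1$; for every other colour $c\notin\{c_1,c_2\}$ the $k$-th factor is multiplied by $1/A$, giving $\tilde g_{c,c_2,\beta}=g_{c,c_2,\beta}/A$. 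If instead $c^\ast\notin\{c_1,c_2\}$, set $\rho=(1-(1-\beta)\child{p}{k}_{c_2})/(1-(1-\beta)\child{p}{k}_{c^\ast})<1$; then $\tilde g_{c^\ast,c_2,\beta}=\rho^2 g_{c^\ast,c_2,\beta}$ and $\tilde g_{c,c_2,\beta}=\rho\, g_{c,c_2,\beta}$ for every other $c\ne c_2$. In both cases every $g_{c,c_2,\beta}$ gets multiplied by a factor strictly less than one, so $D$ strictly decreases and $g_{c_1,c_2,\beta}$ strictly decreases (hence $N$ strictly increases).

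The only thing one has to be a little careful about is the boundary situation $N=0$, i.e.\ $g_{c_1,c_2,\beta}=1$ before the swap. In that case the same calculation shows that after the swap $g_{c_1,c_2,\beta}$ becomes strictly less than one, so $N$ becomes strictly positive while $D$ remains strictly positive, again giving a strict increase of $h_{c_1,c_2,\beta}$. I expect this boundary bookkeeping and the clean two-case analysis of the swap to be the only subtle part; the algebra is otherwise routine once the simplified product form of $g_{c,c_2,\beta}$ is in hand.
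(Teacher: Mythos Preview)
Your argument is correct and is essentially the same as the paper's: assume some $\child{p}{k}_{c^\ast}<\child{p}{k}_{c_2}$, swap those two coordinates in $\child{\pb}{k}$, observe that every $g_{c',c_2,\beta}$ strictly decreases, and invoke Lemma~\ref{lem:gc1c2betaless1} to conclude that $h_{c_1,c_2,\beta}$ strictly increases. Your version is more explicit---the simplified product form of $g$, the two-case swap analysis, and the boundary case $N=0$ are all spelled out---but the underlying idea matches the paper's proof exactly.
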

\begin{proof}
Assume for the sake of contradiction that there  is $k\in[d]$ and $c\in[q]$ such that 
$\child{{p}}{k}_{c_2}> \child{{p}}{k}_{c}$. 
Define $( \child{\tilde\pb}{1},\hdots,\child{\tilde\pb}{d})\in \triangle_{\alpha}^d$ as follows.
\begin{itemize}
\item If $j\neq k$, then $\child{\tilde\pb}{j} = \child{\pb}{j}$.
\item If $c'\notin\{c,c_2\}$, then $\child{\tilde p}{k}_{c'} = \child{p}{k}_{c'}$.
\item $\child{\tilde{p}}{k}_{c_2}=\child{{p}}{k}_{c}$.
\item $\child{\tilde{p}}{k}_{c}= \child{{p}}{k}_{c_2}$.
\end{itemize}
The definition of $g_{c',c_2}$ ensures that, for all 
$c'\neq c_2$, we have
\[g_{c',c_2,\beta}  (\child{\tilde{\pb}}{1},\ldots, \child{\tilde{\pb}}{d})< g_{c',c_2,\beta}  (\child{{\pb}}{1},\ldots, \child{{\pb}}{d}),\]  
since the $k$-th factor in the  definition \eqref{eq:gh12def} of $g_{c_1,c_2,\beta}$ became larger (by switching $\child{\pb}{k}$ to  $\child{\tilde{\pb}}{k}$). The definition of $h_{c_1,c_2,\beta}$, together with the fact that $c_1$ and $c_2$ are distinct and Lemma~\ref{lem:gc1c2betaless1},
implies   
\[h_{c_1,c_2,\beta}  (\child{\tilde{\pb}}{1},\ldots, \child{\tilde{\pb}}{d})> h_{c_1,c_2,\beta}  (\child{{\pb}}{1},\ldots, \child{{\pb}}{d}),\]
which contradicts the 
fact that $\big(\child{\pb}{1},\hdots,\child{\pb}{d}\big)\in\maxset{\alpha,c_1,c_2,\beta}$.  
\end{proof}

\begin{lemma}\label{lem:alpha>1}
Fix $\alpha> 1$ and $\beta\in [0,1)$ and any two distinct colours $c_1\in[q]$ and $c_2\in[q]$.
Then for any vector
 $\big(\child{\pb}{1},\hdots,\child{\pb}{d}\big)\in\maxset{\alpha,c_1,c_2,\beta}$ and any $k\in[d]$,   we have
 $\child{{p}}{k}_{c_2} < \max_{c\in[q]} \{\child{{p}}{k}_{c}\}$.
\end{lemma}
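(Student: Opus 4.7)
The plan is to argue by contradiction along the same lines as the proof of Lemma~\ref{lem:pc2min}: assume some coordinate $k \in [d]$ violates the inequality, exhibit an explicit perturbation of $\pb^{(k)}$ that stays in $\triangle_\alpha$, and show that $h_{c_1, c_2, \beta}$ strictly increases under this perturbation. Concretely, assume $\child{p}{k}_{c_2} \geq \max_{c\in[q]} \child{p}{k}_{c}$; combined with Lemma~\ref{lem:pc2min}, this forces $\pb^{(k)} = (1/q, \ldots, 1/q)$. I will perturb by setting $\tilde p^{(k)}_{c_1} = 1/q + \eps$ and $\tilde p^{(k)}_{c_2} = 1/q - \eps$ while leaving the remaining coordinates of $\pb^{(k)}$ and the other $\pb^{(j)}$'s unchanged. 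For sufficiently small $\eps > 0$, the ratio of the max to the min of $\tilde\pb^{(k)}$ equals $(1 + q\eps)/(1 - q\eps)$, which is strictly less than $\alpha$ since $\alpha > 1$, so the perturbed tuple lies in $\triangle_\alpha^d$.

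The key step is to show that $\frac{d}{d\eps} h_{c_1, c_2, \beta}(\tilde\pb^{(1)}, \ldots, \tilde\pb^{(d)})\big|_{\eps=0} > 0$, which will contradict the assumption that the original tuple is a maximiser. For each $c\neq c_2$, factor
\[ g_{c, c_2, \beta}(\tilde\pb^{(1)}, \ldots, \tilde\pb^{(d)}) = G_c \cdot \phi_c(\eps), \]
where $G_c$ is the product of the $d-1$ factors coming from $\pb^{(j)}$ with $j\neq k$, and $\phi_c(\eps)$ is the $k$-th factor evaluated at $\tilde\pb^{(k)}$. Since $\tilde\pb^{(k)}$ is uniform at $\eps = 0$, we have $\phi_c(0) = 1$ for all $c$, and a direct computation (using that the denominator of $\phi_c(\eps)$ equals $(q-1+\beta)/q + O(\eps)$) gives $\phi_{c_1}'(0) = -2(1-\beta)q/(q-1+\beta)$ and $\phi_c'(0) = -(1-\beta)q/(q-1+\beta)$ for $c \notin \{c_1, c_2\}$; both are strictly negative because $\beta \in [0,1)$.

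Writing $N(\eps) = 1 - g_{c_1, c_2, \beta}$ and $D(\eps) = \beta + \sum_{c\neq c_2} g_{c,c_2,\beta}$, the above yields $N'(0) = -G_{c_1}\phi_{c_1}'(0) > 0$ and $D'(0) < 0$. By Lemma~\ref{lem:gc1c2betaless1}, $N(0) = 1 - G_{c_1} \geq 0$. Since every coordinate of any vector in $\triangle_\alpha$ lies strictly in $(0,1)$, each factor of $g_{c_1,c_2,\beta}$ is strictly positive, so $G_{c_1} > 0$, which in particular gives $D(0) \geq G_{c_1} > 0$ even in the delicate case $\beta = 0$. Therefore
\[ \frac{d}{d\eps}\!\left(\frac{N}{D}\right)\!\bigg|_{\eps=0} = \frac{N'(0)D(0) - N(0)D'(0)}{D(0)^2} > 0, \]
and since $h_{c_1,c_2,\beta} = 1 + (1-\beta)N/D$ and $1-\beta > 0$, we conclude $\frac{d}{d\eps}h_{c_1,c_2,\beta}|_{\eps=0} > 0$, contradicting maximality. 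The main (minor) obstacle is verifying the strict positivity of $G_{c_1}$ and $D(0)$ uniformly on $\triangle_\alpha^d$, but this is forced by the fact that $\triangle_\alpha$-vectors have all coordinates bounded away from $0$ and $1$.
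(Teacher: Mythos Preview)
Your proof is correct and follows the same overall strategy as the paper: assume $\pb^{(k)}$ is uniform, perturb it within $\triangle_\alpha$, and derive a strict increase of $h_{c_1,c_2,\beta}$, contradicting maximality.

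The only noteworthy difference is in the choice of perturbation. You move mass from $c_2$ to $c_1$ infinitesimally and compute a first derivative; this changes every $g_{c,c_2,\beta}$ with $c\neq c_2$, so you must track all of them through the quotient-rule computation. The paper instead replaces $\pb^{(k)}$ by the finite vector with $\tilde p^{(k)}_{c_1}=\alpha/(\alpha+q-1)$ and $\tilde p^{(k)}_{c}=1/(\alpha+q-1)$ for $c\neq c_1$. Because $\tilde p^{(k)}_{c}=\tilde p^{(k)}_{c_2}$ for every $c\notin\{c_1\}$, the $k$-th factor of $g_{c,c_2,\beta}$ equals $1$ for those colours and hence $g_{c,c_2,\beta}$ is \emph{unchanged} for all $c\neq c_1$, while $g_{c_1,c_2,\beta}$ strictly decreases. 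The monotonicity of $h_{c_1,c_2,\beta}$ in $g_{c_1,c_2,\beta}$ alone then finishes the argument without any calculus. Your approach trades this structural shortcut for a routine derivative computation; both reach the same conclusion with comparable effort.
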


\begin{proof}
For the sake of contradiction, suppose that there is  $k\in[d]$ such that $\child{{p}}{k}_{c_2}\geq\max_{c\in[q]}\{\child{{p}}{k}_{c}\}$. By Lemma~\ref{lem:pc2min},   $\child{p}{k}_{c_2}$ is the minimum entry of the vector 
$\child{\pb}{k}$, so all entries of $\child{\pb}{k}$ must be equal (and hence, equal to $1/q$).
Define the vector 
$( \child{\tilde\pb}{1},\hdots,\child{\tilde\pb}{d})\in \triangle_{\alpha}^d$ as follows.
\begin{itemize}
\item If $j\neq k$, then $\child{\tilde\pb}{j} = \child{\pb}{j}$.
\item $\child{\tilde{p}}{k}_{c_1}= \alpha/(\alpha+q-1)$.
\item If $c \neq c_1$, then $\child{\tilde p}{k}_{c} =  1/(\alpha+q-1)$.
\end{itemize}
The definition of $g_{c,c_2,\beta}$ together with Lemma~\ref{lem:gc1c2betaless1} ensure that
\[g_{c_1,c_2,\beta}  (\child{\tilde{\pb}}{1},\ldots, \child{\tilde{\pb}}{d})< g_{c_1,c_2,\beta}  (\child{{\pb}}{1},\ldots, \child{{\pb}}{d})=1\]
and that for every $c \neq c_1$,
$g_{c,c_2,\beta}  (\child{\tilde{\pb}}{1},\ldots, \child{\tilde{\pb}}{d})= g_{c,c_2,\beta}  (\child{{\pb}}{1},\ldots, \child{{\pb}}{d})$.
The definition of $h_{c_1,c_2,\beta}$ therefore implies   that
\[h_{c_1,c_2,\beta}  (\child{\tilde{\pb}}{1},\ldots, \child{\tilde{\pb}}{d})> h_{c_1,c_2,\beta}  (\child{{\pb}}{1},\ldots, \child{{\pb}}{d}),\]
which contradicts the 
fact that $\big(\child{\pb}{1},\hdots,\child{\pb}{d}\big)\in\maxset{\alpha,c_1,c_2,\beta}$.
\end{proof}

To proceed, we will need the following technical fact.
\begin{lemma}\label{lem:linearoverlinear}
Let $A_0$, $B_0$, $A_1$ and $B_1$ be real numbers. Let~$a$ and~$b$ be real numbers satisfying $a\leq b$
such that, for all $x\in[a,b]$, $A_1 + B_1 x \neq 0$. Let
$\mathcal{L}(x) =  {(A_0+B_0 x)}/{(A_1+B_1 x)}$. Then
\[\max_{x\in[a,b]} \mathcal{L}(x) = \max\{\mathcal{L}(a), \mathcal{L}(b)\}.\]
\end{lemma}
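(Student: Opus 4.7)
My plan is to show that $\mathcal{L}$ is monotonic (possibly constant) on $[a,b]$, from which the conclusion is immediate: a monotonic continuous function on a closed interval attains its maximum at one of the endpoints.

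The cleanest route is via the derivative. A direct computation gives
\[
\mathcal{L}'(x) \;=\; \frac{B_0(A_1+B_1 x) - B_1(A_0+B_0 x)}{(A_1+B_1 x)^2} \;=\; \frac{A_1 B_0 - A_0 B_1}{(A_1+B_1 x)^2}.
\]
The numerator $A_1 B_0 - A_0 B_1$ is a constant independent of $x$. The denominator is a square, and it is strictly positive for every $x\in[a,b]$ because the hypothesis $A_1+B_1 x\neq 0$ on $[a,b]$ ensures $A_1+B_1 x$ is nonzero (and hence $(A_1+B_1 x)^2>0$). Therefore the sign of $\mathcal{L}'(x)$ is the same (or zero) throughout $[a,b]$.

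Hence one of three cases holds on $[a,b]$: (i) if $A_1 B_0 - A_0 B_1 > 0$ then $\mathcal{L}$ is strictly increasing and the maximum is $\mathcal{L}(b)$; (ii) if $A_1 B_0 - A_0 B_1 < 0$ then $\mathcal{L}$ is strictly decreasing and the maximum is $\mathcal{L}(a)$; (iii) if $A_1 B_0 - A_0 B_1 = 0$ then $\mathcal{L}$ is constant and the maximum equals both $\mathcal{L}(a)$ and $\mathcal{L}(b)$. In all three cases $\max_{x\in[a,b]}\mathcal{L}(x) = \max\{\mathcal{L}(a),\mathcal{L}(b)\}$.

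There is no real obstacle here; the only subtlety to flag is that the hypothesis $A_1+B_1 x\neq 0$ on $[a,b]$ is needed not merely so that $\mathcal{L}$ is defined, but also to guarantee that $(A_1+B_1 x)^2$ never vanishes on the interval, which is what keeps $\mathrm{sign}(\mathcal{L}'(x))$ constant. (If one preferred an argument that avoids calculus, one could split on whether $B_1=0$: if $B_1=0$ then $\mathcal{L}$ is affine and trivially monotonic; otherwise $\mathcal{L}(x) = \tfrac{B_0}{B_1} + \tfrac{A_0 B_1 - A_1 B_0}{B_1(A_1+B_1 x)}$, which is a constant plus a constant multiple of $1/(A_1+B_1 x)$, a strictly monotonic function on any interval avoiding the pole.)
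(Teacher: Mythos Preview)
Your proof is correct and follows essentially the same approach as the paper: compute the derivative $\mathcal{L}'(x) = (A_1 B_0 - A_0 B_1)/(A_1+B_1 x)^2$, observe that its sign is constant on $[a,b]$ since the denominator is positive there, and conclude monotonicity. Your write-up is more detailed (spelling out the three sign cases and offering a non-calculus alternative), but the argument is the same.
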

\begin{proof}
    Since $A_1+B_1x\neq 0$ and \[\deriv{\cal L}{x} = \frac{A_1B_0-A_0B_1}{(A_1+B_1x)^2},\]
    $\mathcal{L}(x)$ is a monotone function on $[a, b]$. Thus, $\max_{x\in[a,b]} \mathcal{L}(x) = \max\{\mathcal{L}(a), \mathcal{L}(b)\}$.
\end{proof}
\begin{lemma}\label{lem:ratio}
 Fix $\alpha > 1$ and $\beta\in[0,1)$ and two distinct colours~$c_1$ and~$c_2$ in $[q]$.
 Suppose that $(\child{\pb}{1},\hdots,\child{\pb}{d})\in\maxset{\alpha,c_1,c_2,\beta}$.
 Then for every $k\in[d]$, there exists $\tilde\pb\in\triangle_\alpha$ 
 such that
 $$(\child{\pb}{1},\hdots,\child{\pb}{k-1},\tilde\pb,\child{\pb}{k+1},\hdots,\child{\pb}{d})\in\maxset{\alpha,c_1,c_2,\beta}$$ 
 and, for all $c\in[q]$, 
$\tilde{p}_{c}/\tilde{p}_{c_2}  \in\{1,\alpha\}$.
\end{lemma}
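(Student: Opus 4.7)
The plan is to exploit the scale-free property~\eqref{eq:scalefree23} together with the observation that $h_{c_1,c_2,\beta}$, viewed as a function of any single coordinate of $\pb^{(k)}$ (with all other entries of $\pb^{(k)}$ and all other $\pb^{(j)}$ held fixed), is linear-fractional in that coordinate; Lemma~\ref{lem:linearoverlinear} then forces the maximum onto an endpoint of that coordinate's allowed range. Concretely, I first use the scale-free property to rescale $\pb^{(k)}$ so that $p^{(k)}_{c_2}=1$; by Lemma~\ref{lem:pc2min} the rescaled entries then lie in $[1,\alpha]$, and the rescaled tuple still realises the maximum.

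To verify the linear-fractional structure, fix $c'\in[q]\setminus\{c_2\}$ and let $x:=p^{(k)}_{c'}$. Setting $S:=\beta+\sum_{c''\neq c_2,c'}p^{(k)}_{c''}$, the $k$-th factor of $g_{c,c_2,\beta}(\pb^{(1)},\ldots,\pb^{(d)})$ from~\eqref{eq:gh12def} has the common denominator $S+x$ and a numerator linear in $x$ (for every $c\in[q]$), while the remaining factors (indexed by $k'\neq k$) are constants in $x$. Substituting into the definition of $h_{c_1,c_2,\beta}$, the second summand collapses to a ratio of two linear functions of $x$, so after clearing $h_{c_1,c_2,\beta}(x)=(A+Bx)/(C+Dx)$ for constants depending on the fixed data. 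The resulting denominator $C+Dx$ works out to $(S+x)\bigl(\beta+\sum_{c\neq c_2}g_{c,c_2,\beta}\bigr)$, which is strictly positive on $[1,\alpha]$ since each factor of every $g_{c,c_2,\beta}$ is directly seen to be positive on $\triangle_\alpha^d$. Hence Lemma~\ref{lem:linearoverlinear} applies and the maximum of $h_{c_1,c_2,\beta}$ over $x\in[1,\alpha]$ is attained at $x=1$ or $x=\alpha$.

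Because the original tuple is a global maximiser, the current value of $p^{(k)}_{c'}$ must also realise this endpoint maximum; thus replacing $p^{(k)}_{c'}$ by the appropriate element of $\{1,\alpha\}$ yields another maximiser. Iterating this replacement over each colour $c'\in[q]\setminus\{c_2\}$ in turn (holding the already-replaced coordinates fixed between iterations) produces an unnormalised vector whose $c_2$-entry equals $1$ and whose remaining entries lie in $\{1,\alpha\}$. Renormalising by the reciprocal of the coordinate sum yields $\tilde\pb\in\triangle_\alpha$ with $\tilde p_c/\tilde p_{c_2}\in\{1,\alpha\}$ for every $c\in[q]$, and by the scale-free property, substituting $\tilde\pb$ in place of $\pb^{(k)}$ keeps the tuple in $\maxset{\alpha,c_1,c_2,\beta}$.

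The main technical burden is the algebraic bookkeeping in the second paragraph: one must carefully confirm that the common denominator $S+x$ is shared across all $k$-th factors of the $g_{c,c_2,\beta}$'s so that the second summand of $h_{c_1,c_2,\beta}$ truly collapses to a ratio of linear functions of $x$, and that the resulting denominator is nonzero on $[1,\alpha]$ so that Lemma~\ref{lem:linearoverlinear} is genuinely applicable. Once this is in hand, the argument is conceptually just Lemma~\ref{lem:linearoverlinear} applied $q-1$ times.
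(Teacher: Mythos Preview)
Your proposal is correct and follows essentially the same approach as the paper: both arguments use the scale-free property (equivalently, the paper's reparameterisation $\mu_c=\hat p_c/\hat p_{c_2}$) together with Lemma~\ref{lem:pc2min} to reduce to optimising over $[1,\alpha]^{q-1}$, then observe that $h_{c_1,c_2,\beta}$ is linear-fractional in each coordinate with positive denominator and apply Lemma~\ref{lem:linearoverlinear} coordinate-by-coordinate. The only difference is presentational---the paper writes out the intermediate function $h(\boldsymbol{\mu})$ explicitly, whereas you describe the common-denominator structure in prose.
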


\begin{proof}
Fix   a tuple 
$(\child{\pb}{1},\hdots,\child{\pb}{d})\in\maxset{\alpha,c_1,c_2,\beta}$.
Fix $k\in[d]$.
Given any  
$\hat\pb\in \triangle_\alpha$,
we will be interested in the quantity
$h_{c_1,c_2,\beta}(\child{\pb}{1},\hdots,\child{\pb}{k-1},\hat{\pb},\child{\pb}{k+1},\hdots,\child{\pb}{d})$.
Given any $c'\neq c_2$, define  
\[P_{c'}  := 
\prod_{j\neq k}\left(1-\frac{(1-\beta) \big(\child{p}{j}_{c'}-\child{p}{j}_{c_2}\big)}{\beta \child{p}{j}_{c_2}+\sum_{c\neq c_2}\child{p}{j}_{c}}\right).\]

It will be helpful to re-parameterise the elements of $\hat\pb$.
Recall  that the definition of $\triangle_\alpha$ implies that  
$\hat p_c>0$ for every $c\in[q]$.
For every $c\in[q]$, let  $\mu_c(\hat\pb) = \hat p_{c}/\hat p_{c_2}$.
Let ${\boldsymbol{\mu}}(\hat\pb)$ be the tuple
${\boldsymbol{\mu}}(\hat\pb) = (\mu_1(\hat\pb),\ldots,\mu_q(\hat\pb))$.
Going the other direction from a tuple $\boldsymbol{\mu}$ with entries in $[1,\alpha]$,
for every $c\in[q]$, let $p_c(\boldsymbol{\mu}) = 
 \mu_c /\sum_{c'\in[q]} \mu_{c'}$ and
 let $\pb(\boldsymbol{\mu})$ be the tuple $(p_1(\boldsymbol{\mu}),\ldots,p_q(\boldsymbol{\mu}))$.

It is going to be important to note that the re-parameterisation is without loss of
information, so, to this end, let 
$\Omega_\alpha = \{\boldsymbol{\mu} \in [1,\alpha]^q \mid \mu_{c_2} = 1\}$.
The definition of $\triangle_\alpha$ and Lemma~\ref{lem:pc2min}
ensure that, for every 
$(\child{\tilde\pb}{1},\hdots,\child{\tilde\pb}{d})\in\maxset{\alpha,c_1,c_2,\beta}$ and any $j\in[d]$,  
$\boldsymbol{\mu}(\child{\tilde\pb}{j}) \in \Omega_\alpha$. 
Also, given any $\boldsymbol{\mu} \in \Omega_\alpha$, the vector
$\pb(\boldsymbol{\mu})$  
is in $\triangle_\alpha$.

Given a tuple $\boldsymbol{\mu}\in \Omega_\alpha$,
we will simplify notation by letting
\def\denkludge{m} 
$\denkludge(\boldsymbol{\mu}) := \beta + \sum_{c\neq c_2} \mu_c$.
 Then
we can write $g_{c',c_2}$ as
\begin{align*}g_{c',c_2}(\child{\pb}{1},\hdots,\child{\pb}{k-1},\hat{\pb},\child{\pb}{k+1},\hdots,\child{\pb}{d})
&=P_{c'} \left(1- \frac{(1-\beta)  ( {\hat p}_{c'}- {\hat p} _{c_2} )}
{\beta  {\hat p} _{c_2}+\sum_{c\neq c_2} {\hat p} _{c}}\right)\\
&= P_{c'}\left(1-\frac{(1-\beta)(\mu_{c'}(\hat\pb)-1)}{ \denkludge(\boldsymbol{\mu}(\hat\pb))}\right).
\end{align*}

Given the (fixed) values of 
$\child{\pb}{1},\hdots,\child{\pb}{k-1}$
and $\child{\pb}{k+1},\hdots,\child{\pb}{d} $, let
$$
h(\boldsymbol{\mu}) := 
  \frac{ 
 \denkludge(\boldsymbol{\mu}) - \denkludge(\boldsymbol{\mu}) P_{c_1} + P_{c_1} (1-\beta)(\mu_{c_1}-1)  
 }
 {\beta \denkludge(\boldsymbol{\mu})  + \denkludge(\boldsymbol{\mu}) \sum_{c'\neq c_2} P_{c'} 
 - \sum_{c'\neq c_2} 
P_{c'}   (1-\beta) (\mu_{c'}-1) 
}.$$
  
  Then we can write $h_{c_1,c_2,\beta}$ as
\begin{align*}
h_{c_1,c_2,\beta}(\child{\pb}{1},\hdots,\child{\pb}{k-1},\hat{\pb},\child{\pb}{k+1},\hdots,\child{\pb}{d})
&=   
1+\frac{(1-\beta)\left(1-
P_{c_1}\left(1-\frac{(1-\beta)(\mu_{c_1}(\hat\pb)-1)}{ \denkludge(\boldsymbol{\mu}(\hat\pb))}\right)
\right)}{\beta +\sum_{c'\neq c_2}
P_{c'}\left(1-\frac{(1-\beta)(\mu_{c'}(\hat\pb)-1)}{ \denkludge(\boldsymbol{\mu}(\hat\pb))}\right)
}
\\ 
&= 
 1+ (1-\beta) h(\boldsymbol{\mu}(\hat\pb)).
 \end{align*}
 
Since  
$(\child{\pb}{1},\hdots,\child{\pb}{d})\in\maxset{\alpha,c_1,c_2,\beta}$,
taking $\hat{\pb} = \child{\pb}{k}  $ maximises  
$$h_{c_1,c_2,\beta}(\child{\pb}{1},\hdots,\child{\pb}{k-1},\hat{\pb},\child{\pb}{k+1},\hdots,\child{\pb}{d})$$
over $\triangle_\alpha^d$.
Thus, for any maximiser ${\boldsymbol{\mu}}$ of 
$h(\boldsymbol{\mu})$ over $\Omega_\alpha$,
we have
$$h_{c_1,c_2,\beta}(\child{\pb}{1},\hdots,\child{\pb}{k-1}, 
\boldsymbol{p}({\boldsymbol{\mu}}),\child{\pb}{k+1},\hdots,\child{\pb}{d})
\in \maxset{\alpha,c_1,c_2,\beta}.$$
So to prove the lemma
(taking $\tilde\pb = \pb({\boldsymbol{\mu}})$)
it suffices to find a maximiser ${\boldsymbol{\mu}}$ of 
$h(\boldsymbol{\mu})$ over $\Omega_\alpha$
such that 
for all $c\in[q]$, 
$\mu_c  \in\{1,\alpha\}$.
This is what we will do in the rest of the proof.
The definition of $\Omega_\alpha$ guarantees that
$\mu_{c_2} = 1$. 

Fix any $c\neq c_2$.
For any fixed values $\mu_1,\ldots,\mu_{c-1}$
and $\mu_{c+1},\ldots,\mu_q$, all
in $[1,\alpha]$, satisfying $\mu_{c_2}=1$,
consider $h(\boldsymbol{\mu})$ as a function of $\mu_c$.
Note that both the numerator and denominator of $h(\boldsymbol{\mu})$ are linear in $\mu_c$.
We will argue that the denominator is not zero when $\mu_c\in[1,\alpha]$.
Using this, 
Lemma~\ref{lem:linearoverlinear}
guarantees that, given $\mu_1,\ldots,\mu_{c-1},\mu_{c+1},\ldots,\mu_q$,
$h(\boldsymbol{\mu})$ is maximised by setting $\mu_c \in \{1,\alpha\}$.
Going through the colours~$c$ one-by-one, we find the desired maximiser~$\boldsymbol\mu$.

To complete the proof, we just need to show that the denominator 
of $h(\boldsymbol{\mu})$
is not zero when  $\boldsymbol{\mu} \in \Omega_\alpha$.
This follows easily,  
since, for all $c'\neq c_2$, $P_{c'}>0$
and $m(\boldsymbol{\mu}) \geq (1-\beta)(\mu_{c'}-1)$.
\end{proof}

Lemmas~\ref{lem:alpha>1}~and~\ref{lem:ratio} yield  Lemma~\ref{lem:existence} as an immediate corollary.
\begin{lemexistence}
\statelemexistence
\end{lemexistence}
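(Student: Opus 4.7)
The plan is to obtain Lemma~\ref{lem:existence} as an essentially immediate consequence of Lemmas~\ref{lem:pc2min}, \ref{lem:alpha>1}, and \ref{lem:ratio}, which have already done all the heavy lifting. First I would dispose of the trivial case $c_1 = c_2$: from the definition \eqref{eq:gh12def}, every factor of $g_{c_2,c_2,\beta}$ equals $1$, so $g_{c_2,c_2,\beta} \equiv 1$ and hence $h_{c_2,c_2,\beta} \equiv 1$. Thus any tuple in $\triangle_\alpha^d$ is a maximiser, and in particular we may pick any concrete $(\alpha,c_2)$-extremal tuple (such as the one in which every $\pb^{(k)}$ puts weight $\alpha/(\alpha+q-1)$ on some fixed colour $c'\neq c_2$ and weight $1/(\alpha+q-1)$ on all others).

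For the main case $c_1 \neq c_2$, I would start by observing that the set $\maxset{\alpha,c_1,c_2,\beta}$ defined in \eqref{eq:max12misers} is nonempty, since $\triangle_\alpha^d$ is compact and $h_{c_1,c_2,\beta}$ is continuous on it. Fix any tuple $(\pb^{(1)},\ldots,\pb^{(d)}) \in \maxset{\alpha,c_1,c_2,\beta}$. Then I apply Lemma~\ref{lem:ratio} to coordinate $k=1$: it replaces $\pb^{(1)}$ by some $\tilde\pb^{(1)} \in \triangle_\alpha$ satisfying $\tilde p^{(1)}_c / \tilde p^{(1)}_{c_2} \in \{1,\alpha\}$ for every $c \in [q]$, and the new tuple is still in $\maxset{\alpha,c_1,c_2,\beta}$. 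I then apply Lemma~\ref{lem:ratio} again to coordinate $k=2$ of this new maximiser, and so on through $k=d$. After $d$ applications I obtain a maximiser $(\tilde\pb^{(1)},\ldots,\tilde\pb^{(d)}) \in \maxset{\alpha,c_1,c_2,\beta}$ such that, for every $k \in [d]$ and every $c \in [q]$,
\[
\tilde p^{(k)}_c \in \{\tilde p^{(k)}_{c_2},\; \alpha \cdot \tilde p^{(k)}_{c_2}\}.
\]
This is precisely the first bullet in the definition of an $(\alpha,c_2)$-extremal tuple.

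It remains to check the second bullet, namely that for every $k \in [d]$ there exists $c' \in [q]$ with $\tilde p^{(k)}_{c'} = \alpha \cdot \tilde p^{(k)}_{c_2}$. Since the final tuple $(\tilde\pb^{(1)},\ldots,\tilde\pb^{(d)})$ still lies in $\maxset{\alpha,c_1,c_2,\beta}$ (Lemma~\ref{lem:ratio} preserves membership in the maximising set at each step), Lemma~\ref{lem:alpha>1} applies and yields $\tilde p^{(k)}_{c_2} < \max_{c \in [q]} \tilde p^{(k)}_c$ for every $k$. Combined with the first bullet, which forces the maximum entry of $\tilde\pb^{(k)}$ to be either $\tilde p^{(k)}_{c_2}$ or $\alpha\,\tilde p^{(k)}_{c_2}$, the strict inequality $\tilde p^{(k)}_{c_2} < \max_c \tilde p^{(k)}_c$ forces the existence of a colour $c'$ with $\tilde p^{(k)}_{c'} = \alpha\,\tilde p^{(k)}_{c_2}$, as required. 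Thus $(\tilde\pb^{(1)},\ldots,\tilde\pb^{(d)})$ is an $(\alpha,c_2)$-extremal tuple achieving the maximum, which is exactly the claim of Lemma~\ref{lem:existence}. The only real content is in the three preceding lemmas; no further obstacle remains beyond bookkeeping and handling the degenerate $c_1 = c_2$ case.
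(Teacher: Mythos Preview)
Your proposal is correct and follows essentially the same route as the paper's proof, which is just the one-liner ``Just use Lemmas~\ref{lem:alpha>1} and~\ref{lem:ratio}.'' You have spelled out the iteration of Lemma~\ref{lem:ratio} across the $d$ coordinates and the use of Lemma~\ref{lem:alpha>1} for the second bullet, and you are also more careful than the paper in explicitly disposing of the degenerate case $c_1=c_2$ (which the cited lemmas exclude by hypothesis but which the statement of Lemma~\ref{lem:existence} allows).
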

\begin{proof}
Just use Lemmas~\ref{lem:alpha>1}~and~\ref{lem:ratio}.
\end{proof}

We also now prove Lemma~\ref{lem:mo12no12tone}.
\begin{lemmonotone}
\statelemmonotone
\end{lemmonotone}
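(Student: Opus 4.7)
The plan is to reduce the inequality to a pointwise monotonicity claim via Lemma~\ref{lem:existence}, and then verify that claim by a one-variable derivative computation after a convenient rescaling. By Lemma~\ref{lem:existence}, there is an $(\alpha, c_2)$-extremal tuple $(\child{\pb}{1}, \ldots, \child{\pb}{d})$ attaining the maximum $M_{\alpha,c_1,c_2,\beta''}$. Using the scale-free property~\eqref{eq:scalefree23}, we may assume (after rescaling each $\child{\pb}{k}$ by $1/\child{p}{k}_{c_2}$) that $\child{p}{k}_{c_2} = 1$ for all $k$; then each $\child{p}{k}_c \in \{1,\alpha\}$, and by extremality the set $S_k := \{c \in [q] : \child{p}{k}_c = \alpha\}$ is non-empty (and $c_2 \notin S_k$) for every $k$. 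If we can show that $\beta \mapsto h_{c_1,c_2,\beta}(\child{\pb}{1}, \ldots, \child{\pb}{d})$ is non-increasing on $[0, 1)$, then
\[
M_{\alpha, c_1,c_2,\beta''} = h_{c_1,c_2,\beta''}(\child{\pb}{1}, \ldots, \child{\pb}{d}) \leq h_{c_1,c_2,\beta'}(\child{\pb}{1}, \ldots, \child{\pb}{d}) \leq M_{\alpha, c_1,c_2,\beta'},
\]
completing the proof. The case $c_1 = c_2$ is automatic since every factor defining $g_{c_2, c_2, \beta}$ equals $1$, so $h_{c_2, c_2, \beta} \equiv 1$; thus assume $c_1 \neq c_2$.

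Set $t_k := \sum_c \child{p}{k}_c = q + |S_k|(\alpha - 1)$, a $\beta$-independent quantity, and for $c \neq c_2$ define
\[
\rho_k(\beta) := \frac{t_k - (1-\beta)\alpha}{t_k - (1-\beta)}, \qquad Q_c(\beta) := \prod_{k \,:\, c \in S_k} \rho_k(\beta),
\]
with the convention that an empty product equals $1$. A short simplification of~\eqref{eq:gh12def} reveals that each factor in $g_{c,c_2,\beta}(\child{\pb}{1}, \ldots, \child{\pb}{d})$ collapses to either $1$ (if $\child{p}{k}_c = \child{p}{k}_{c_2} = 1$) or $\rho_k(\beta)$ (if $\child{p}{k}_c = \alpha$), giving $g_{c,c_2,\beta}(\child{\pb}{1}, \ldots, \child{\pb}{d}) = Q_c(\beta)$. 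Consequently,
\[
h_{c_1,c_2,\beta}(\child{\pb}{1}, \ldots, \child{\pb}{d}) = 1 + \frac{(1-\beta)\bigl(1 - Q_{c_1}(\beta)\bigr)}{\beta + \sum_{c \neq c_2} Q_c(\beta)}.
\]

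The monotonicity is now term-by-term. A one-line calculation gives $d\rho_k/d\beta = t_k(\alpha-1)/(t_k - 1 + \beta)^2 > 0$, while $\rho_k(\beta) \leq 1$ is immediate from $\alpha > 1$ and $1-\beta \geq 0$. Hence each $Q_c$ is non-decreasing in $\beta$ with $Q_{c_1}(\beta) \leq 1$, so the numerator $(1-\beta)(1 - Q_{c_1}(\beta))$ is a product of two non-negative, non-increasing functions of $\beta$, and the denominator $\beta + \sum_{c \neq c_2} Q_c(\beta)$ is non-decreasing. The quotient is therefore non-increasing in $\beta$, yielding the desired monotonicity. The only subtle point is the positivity of each $\rho_k$ (needed so that each $Q_c$, and hence the denominator, is strictly positive), which follows from $t_k \geq q + (\alpha - 1) > \alpha \geq (1-\beta)\alpha$; this is precisely where the extremality hypothesis $|S_k| \geq 1$ enters.
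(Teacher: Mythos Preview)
Your proof is correct and follows essentially the same route as the paper: use Lemma~\ref{lem:existence} to reduce to an $(\alpha,c_2)$-extremal tuple, observe that on such a tuple each $g_{c,c_2,\beta}$ is $\le 1$ and non-decreasing in~$\beta$, and conclude that $h_{c_1,c_2,\beta}$ is non-increasing in~$\beta$. Your version is more explicit (rescaling via~\eqref{eq:scalefree23}, the closed form $\rho_k(\beta)$, the derivative computation, and the positivity check), whereas the paper argues the monotonicity of the factors of $g$ directly from $p^{(k)}_{c_2}\le p^{(k)}_c$ without rescaling.
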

\begin{proof}
Fix $\alpha>1$ and arbitrary colours $c_1,c_2\in [q]$. Note that the set of $(\alpha,c_2)$-extremal tuples does not depend on the parameter $\beta$, and for each $\beta\in (0,1)$ there exists by Lemma~\ref{lem:existence} an $(\alpha,c_2)$-extremal tuple which achieves the maximum in ${\max}_{(\child{\pb}{1},\hdots,\child{\pb}{d})\in\triangle_{\alpha}^d}\, h_{c_1,c_2,\beta}\big(\child{\pb}{1},\hdots,\child{\pb}{d}\big)$.

Therefore, the inequality will follow by showing  that 
\begin{equation}\label{eq:bb45yby6nu42}
h_{c_1,c_2,\beta''}(\pb^{(1)},\hdots,\pb^{(d)})\leq h_{c_1,c_2,\beta'}(\pb^{(1)},\hdots,\pb^{(d)}),
\end{equation}
where $\big(\pb^{(1)},\hdots,\pb^{(d)}\big)$ is an arbitrary $(\alpha,c_2)$-extremal tuple. Using the extremality of the tuple $\big(\pb^{(1)},\hdots,\pb^{(d)}\big)$,  we have that  $p^{(k)}_{c_2}\leq p^{(k)}_{c}$ for all colours $c\in [q]$ and every $k\in [d]$. Hence, using the definition \eqref{eq:gh12def} of $g_{c_1,c_2,\beta}$ and that $\beta'\leq \beta''$, we have
\[1\geq g_{c,c_2,\beta''}(\pb^{(1)},\hdots,\pb^{(d)})\geq g_{c,c_2,\beta'}(\pb^{(1)},\hdots,\pb^{(d)}) \mbox{ for all } c\in [q].\]
In turn, using the definition \eqref{eq:gh12def} of $h_{c_1,c_2,\beta}$, we obtain from  this that \eqref{eq:bb45yby6nu42} holds, as wanted.
\end{proof}

\subsection{Bounding the two-step recursion when $q=3$ --- Proof  of Lemma~\ref{lem:twosteptoprove}}\label{sec:twosteptoprove}

In this section, we assume that $q = 3$ and give the proof of Lemma~\ref{lem:twosteptoprove} that verifies Condition~\ref{cond:we} for all $\alpha\in (1,53/27]$. 

Recall that for a pair $(q,d)$, Condition~\ref{cond:we}, for a fixed value of $\alpha>1$ and colours $c_1,c_2\in [q]$, amounts to checking
\begin{equation}\label{eq:2ed2g2v2vtvt}
h_{c_1,c_2,\beta_*}\big(\child{\pb}{1},\hdots,\child{\pb}{d}\big)<\alpha^{1/d} \mbox{ for all } \child{\pb}{1},\hdots,\child{\pb}{d}\in \mathrm{Ex}_{c_2}(\alpha),
\end{equation}
where the set $\mathrm{Ex}_{c_2}(\alpha)$ is given by (cf. \eqref{ex:ca}) 
\begin{equation*}
\mathrm{Ex}_{c}(\alpha)=\big\{(p_1,\hdots,p_q)
\in \{1,\alpha\}^q
\mid p_{c}=1 \land \exists c'\in[q]  \text{ such that }  p_{c'}=\alpha \big\}. 
\end{equation*}
Note that, for $q=3$, $\mathrm{Ex}_{c}(\alpha)$ has exactly 3 vectors. As we shall see shortly, we can capture the value of $h_{c_1,c_2,\beta}$ when $\child{\pb}{1},\hdots,\child{\pb}{d}\in  \mathrm{Ex}_{c}(\alpha)$ using a function 
${\varphi}(d,d_0,d_1,\alpha,\beta)$ which depends on $\alpha$, $\beta$ and $d$, but also on two non-negative integers $d_0$ and $d_1$ with $d_0+d_1 \leq d$; roughly, $d_0$ is the number of   $\child{\pb}{1},\hdots,\child{\pb}{d}$ which are equal to the first vector in $\mathrm{Ex}_{c}(\alpha)$, 
$d_1$ to the second vector and the remaining $d-d_0-d_1$ to the third vector. Let us first define the function ${\varphi}(d,d_0,d_1,\alpha,\beta)$.

\begin{definition}\label{def:xy}
Let $\beta\in [0,1]$. Fix any $\alpha >1$ and any integer $d\geq 2$.
Let $x(\alpha,\beta)=1- \frac{(1-\beta)(\alpha-1)}{\beta+2\alpha}$
and $y(\alpha,\beta)=1- \frac{(1-\beta)(\alpha-1)}{\beta+\alpha+1}$.
For any nonnegative integers $d_0$ and $d_1$ such that
$d_0 + d_1 \leq d$, let
  $$
{\varphi}(d,d_0,d_1,\alpha,\beta)=1+\frac{(1-\beta)(1-x(\alpha,\beta)^{d_0}y(\alpha,\beta)^{d_1})}{\beta+x(\alpha,\beta)^{d_0}(y(\alpha,\beta)^{d_1}+y(\alpha,\beta)^{d-d_0-d_1})}.
 $$ 
\end{definition}

We then have the following lemma.
\begin{lemma}\label{lem:vald0d1d2}
    Suppose $q=3$, $d \geq 2$ and $\beta\in[0,1]$.    
    Fix $\alpha > 1$ and distinct colours $c_1,c_2\in [q]$,  and let $\child{\pb}{1}, \child{\pb}{2},\ldots,\child{\pb}{d} \in \mathrm{Ex}_{c_2}(\alpha)$.
    Then  there are nonnegative integers 
    $d_0$ and $d_1$ with $d_0+d_1\leq d$ such that
\[h_{c_1,c_2,\beta}\big(\child{\pb}{1},\hdots,\child{\pb}{d}\big)= \varphi(d, d_0, d_1, \alpha, \beta).\]
\end{lemma}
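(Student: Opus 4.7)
The proof is essentially a direct computation that exploits the very limited form of vectors in $\mathrm{Ex}_{c_2}(\alpha)$ when $q=3$. The plan is to classify the $\pb^{(k)}$'s into three types, compute the contribution of each type to $g_{c_1,c_2,\beta}$ and $g_{c_3,c_2,\beta}$, and then plug into the definition of $h_{c_1,c_2,\beta}$.

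First, let $c_3$ be the third colour in $[3]$, so that $\{c_1,c_2,c_3\}=[3]$. Every vector $\pb\in \mathrm{Ex}_{c_2}(\alpha)$ has $p_{c_2}=1$, entries in $\{1,\alpha\}$, and at least one entry equal to~$\alpha$. Thus $\mathrm{Ex}_{c_2}(\alpha)$ consists of exactly three vectors, distinguished by the pair $(p_{c_1},p_{c_3})$: type~A with $(p_{c_1},p_{c_3})=(\alpha,\alpha)$, type~B with $(p_{c_1},p_{c_3})=(\alpha,1)$, and type~C with $(p_{c_1},p_{c_3})=(1,\alpha)$. Let $d_0,d_1,d_2$ be, respectively, the number of $k\in[d]$ such that $\pb^{(k)}$ is of type~A, B, C, so that $d_0+d_1+d_2=d$.

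Next, for each type I will compute the corresponding factor in the product defining $g_{c,c_2,\beta}$ (see~\eqref{eq:gh12def}) for $c\in\{c_1,c_3\}$. Since $p^{(k)}_{c_2}=1$ for every~$k$, the factor in $g_{c,c_2,\beta}$ from index~$k$ equals
\[
1-\frac{(1-\beta)(p^{(k)}_{c}-1)}{\beta+p^{(k)}_{c_1}+p^{(k)}_{c_3}}.
\]
A straightforward case analysis (using the definitions of $x(\alpha,\beta)$ and $y(\alpha,\beta)$ from Definition~\ref{def:xy}) then gives: for type~A, the factors in $g_{c_1,c_2,\beta}$ and $g_{c_3,c_2,\beta}$ are both $x(\alpha,\beta)$; for type~B they are $y(\alpha,\beta)$ and $1$, respectively; and for type~C they are $1$ and $y(\alpha,\beta)$, respectively. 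Multiplying over $k\in[d]$ and writing $x=x(\alpha,\beta)$, $y=y(\alpha,\beta)$, this yields
\[
g_{c_1,c_2,\beta}\bigl(\pb^{(1)},\dots,\pb^{(d)}\bigr)=x^{d_0}y^{d_1},\qquad g_{c_3,c_2,\beta}\bigl(\pb^{(1)},\dots,\pb^{(d)}\bigr)=x^{d_0}y^{d_2}=x^{d_0}y^{d-d_0-d_1}.
\]

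Finally, substitute these expressions into the definition of $h_{c_1,c_2,\beta}$ (see~\eqref{eq:gh12def}): the sum $\sum_{c\neq c_2} g_{c,c_2,\beta}$ becomes $x^{d_0}(y^{d_1}+y^{d-d_0-d_1})$, and $1-g_{c_1,c_2,\beta}=1-x^{d_0}y^{d_1}$, so
\[
h_{c_1,c_2,\beta}\bigl(\pb^{(1)},\dots,\pb^{(d)}\bigr)=1+\frac{(1-\beta)(1-x^{d_0}y^{d_1})}{\beta+x^{d_0}(y^{d_1}+y^{d-d_0-d_1})}=\varphi(d,d_0,d_1,\alpha,\beta),
\]
which matches Definition~\ref{def:xy} exactly. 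There is no real obstacle here: the lemma amounts to identifying the $q=3$ extremal vectors and carrying out the bookkeeping; the main thing to be careful about is the asymmetry of the roles of $c_1,c_2,c_3$ in the definitions of $g$ and $h$, which is what causes types~B and~C to contribute differently to $g_{c_1,c_2,\beta}$ vs.~$g_{c_3,c_2,\beta}$.
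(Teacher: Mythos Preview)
Your proof is correct and essentially identical to the paper's own proof: both introduce the third colour $c_3$, classify each $\pb^{(k)}$ into three types according to $(p^{(k)}_{c_1},p^{(k)}_{c_3})$, compute the resulting factors $x(\alpha,\beta)$, $y(\alpha,\beta)$, or $1$ in $g_{c_1,c_2,\beta}$ and $g_{c_3,c_2,\beta}$, and substitute into the definition of $h_{c_1,c_2,\beta}$ to obtain $\varphi(d,d_0,d_1,\alpha,\beta)$.
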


\begin{proof} Let $c_3$ be the remaining colour in $[q]$ (other than $c_1$ and $c_2$), so that $[q]=\{c_1,c_2,c_3\}$. Since $\child{\pb}{1}, \child{\pb}{2},\ldots,\child{\pb}{d} \in \mathrm{Ex}_{c_2}(\alpha)$,  for every $k\in[d]$ we have that $\child{p}{k}_{c_2}=1$ and one of three situations
occurs: 
\[\mbox{ (i)   $\child{p}{k}_{c_1} = \child{p}{k}_{c_3} = \alpha$, 
(ii) 
 $\child{p}{k}_{c_1} =\alpha$,  $\child{p}{k}_{c_3} =1$,
 (iii) $\child{p}{k}_{c_1} =1$,  $\child{p}{k}_{c_3} =\alpha$.}\]
If situation (i)
occurs, then, using the notation from Definition~\ref{def:xy}, we have
\begin{equation}\label{eq:usex}
1-\frac{(1-\beta) \big(\child{p}{k}_{c_1}-\child{p}{k}_{c_2}\big)}{\beta \child{p}{k}_{c_2}+
 \child{p}{k}_{c_1} + \child{p}{k}_{c_3}
}
= x(\alpha,\beta).\end{equation} 
If situation (ii) occurs, then this quantity is $y(\alpha,\beta)$. If situation (iii) occurs, then it is~$1$.
Now let
$d_0$ be the number of $k\in[d]$ for which situation (i) occurs and
let $d_1$ be the number of $k\in[d]$ for which situation (ii) occurs.
Then plugging \eqref{eq:usex} and the other similar observations above into the definition \eqref{eq:gh12def} of $g_{c_1,c_2,\beta}$ and $g_{c_3,c_2,\beta}$, we have
\begin{align*}
g_{c_1,c_2,\beta} \big(\child{\hat{\pb}}{1},\ldots, \child{\hat{\pb}}{d}\big)&= 
 x(\alpha,\beta)^{d_0}y(\alpha,\beta)^{d_1}, \mbox{ and }\\
g_{c_3,c_2,\beta} \big(\child{\hat{\pb}}{1},\ldots, \child{\hat{\pb}}{d}\big)&= 
 x(\alpha,\beta)^{d_0}y(\alpha,\beta)^{d-d_0-d_1}.
\end{align*}
 Plugging this into the definition of $h_{c_1,c_2,\beta}$, we have
\begin{equation*}
h_{c_1,c_2,\beta}\big(\child{\hat\pb}{1},\hdots,\child{\hat\pb}{d}\big)=1+\frac{(1-\beta)\Big(1-g_{c_1,c_2,\beta}(\child{\hat\pb}{1}, \ldots, \child{\hat\pb}{d})\Big)}{\beta +\sum_{c\neq c_2}g_{c,c_2,\beta}(\child{\hat\pb}{1}, \ldots, \child{\hat\pb}{d})}=\varphi(d, d_0, d_1, \alpha, \beta).\qedhere
\end{equation*}
\end{proof}

The following definition applies Definition~\ref{def:xy} to the critical value of $\beta$
for the special case where $d_0 + d_1=d$; we will see that this special case is all that we need to consider to verify Condition~\ref{cond:we} for $\alpha\in (1,2)$.
\begin{definition}\label{def:gocritical}
Let $q=3$.
Fix any $\alpha > 1$ and any integer $d\geq 2$.
Let $\beta_*(d) = 1-q/(d+1)$. 
Let $X = x(\alpha,\beta_*(d))$ and $Y = y(\alpha,\beta_*(d))$.
Let  
    \[\varphi_*(d,d_0,\alpha)=\varphi(d, d_0, d-d_0, \alpha, \beta_*(d))=1+\frac{3}{d+1}\cdot\frac{1-X^{d_0}Y^{d-d_0}}{\beta_*(d)+X^{d_0}(Y^{d-d_0}+1)}.\]
\end{definition}
The values~$X$ and~$Y$ from definition~\ref{def:gocritical} are clearly functions of~$d$ and~$\alpha$,
but we suppress this in the notation to avoid clutter. (These values will arise in proofs, but,
unlike $\varphi_*(d,d_0,\alpha)$, they will not arise in statements of lemmas.)

The following lemma applies for $\alpha\in (1,2)$. 
\begin{lemma}\label{lem:hbound}
Suppose $q=3$, $d\geq 2$ and $ 1-q/(d+1)\leq \beta\leq 1$. 
Fix $\alpha\in (1,2)$ and distinct colours $c_1,c_2\in [q]$. Suppose that $\child{\pb}{1}, \child{\pb}{2},\ldots,\child{\pb}{d} \in \mathrm{Ex}_{c_2}(\alpha)$. Then, there is a non-negative integer $d_0 \leq d$ such that
    \begin{equation*}
    h_{c_1,c_2,\beta}\big(\child{\pb}{1},\hdots,\child{\pb}{d}\big)\leq \varphi_*(d, d_0,\alpha).
    \end{equation*}
\end{lemma}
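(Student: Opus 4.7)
The plan is to reduce the statement first to $\beta = \beta_*(d):=1-3/(d+1)$ and then to the ``no type-(iii)'' case $d_2 := d-d_0-d_1 = 0$, where the target inequality becomes an equality by definition of $\varphi_*$. Every vector of $\mathrm{Ex}_{c_2}(\alpha)$ is $(\alpha,c_2)$-extremal (take $p_{c_2}=1$ as the common scaling), so the pointwise monotonicity in $\beta$ established inside the proof of Lemma~\ref{lem:mo12no12tone} (see \eqref{eq:bb45yby6nu42}) gives $h_{c_1,c_2,\beta}(\pb^{(1)},\ldots,\pb^{(d)}) \le h_{c_1,c_2,\beta_*(d)}(\pb^{(1)},\ldots,\pb^{(d)})$, and Lemma~\ref{lem:vald0d1d2} applied to the right-hand side yields nonnegative integers $d_0, d_1$ with $d_0+d_1\le d$ such that the right-hand side equals $\varphi(d,d_0,d_1,\alpha,\beta_*(d))$. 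I will take $d_0^* := d_0$; if $d_2 = 0$, the conclusion is immediate from Definition~\ref{def:gocritical}, so henceforth assume $d_2 \ge 1$.

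The remaining task is $\varphi(d,d_0,d_1,\alpha,\beta_*(d)) \le \varphi(d,d_0,d-d_0,\alpha,\beta_*(d)) = \varphi_*(d,d_0,\alpha)$. Writing $A := X^{d_0}$, $s := Y^{d_1}$, $t := Y^{d_2}$ so that $st = Y^{d-d_0}$, a cross-multiplication of the two fractions on each side (their denominators $D_1,D_2$ are positive) produces
\[
\varphi_*(d,d_0,\alpha)-\varphi(d,d_0,d_1,\alpha,\beta_*(d)) \;=\; \tfrac{3}{d+1}\cdot\frac{A(1-t)\bigl[s\bigl(1+\beta_*(d)+A(1+t)\bigr)-1\bigr]}{D_1 D_2},
\]
where the factorisation uses the elementary identities $s+t-st-1=-(1-s)(1-t)$, $s-st=s(1-t)$ and $s-st^2=s(1-t)(1+t)$. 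Since $A(1-t) > 0$, it suffices to prove the scalar inequality $s\bigl(1+\beta_*(d)+A(1+t)\bigr)\ge 1$.

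For this scalar bound I would use two elementary facts. First, $X\ge Y$: straight from the definitions one gets $x(\alpha,\beta)-y(\alpha,\beta) = (1-\beta)(\alpha-1)^2/[(\beta+\alpha+1)(\beta+2\alpha)] \ge 0$. Combined with $d_0+d_1+d_2=d$ and $Y\le 1$, each of $s$, $sA$ and $sAt$ is at least $Y^d$, so
\[
s\bigl(1+\beta_*(d)+A(1+t)\bigr) \;\ge\; Y^d\Bigl(\tfrac{2d-1}{d+1}+2\Bigr) \;=\; \frac{Y^d(4d+1)}{d+1}.
\]
Second, $y(\alpha,\beta_*(d))$ is decreasing in $\alpha$ (direct derivative), so $\alpha < 2$ implies $Y \ge (4d-2)/(4d+1) = 1 - 3/(4d+1)$, and Bernoulli's inequality gives $Y^d \ge 1 - 3d/(4d+1) = (d+1)/(4d+1)$; hence $Y^d(4d+1)/(d+1) \ge 1$, as required. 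The main step is identifying the right factorisation of the cross-product numerator in the second paragraph; once the scalar bound is in hand, the uniform use of $X\ge Y$ together with Bernoulli's inequality finishes matters cleanly, and the hypothesis $\alpha < 2$ enters only through the lower bound on $Y$ at $\alpha=2$.
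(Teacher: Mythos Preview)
Your proof is correct, but it follows a genuinely different route from the paper's. Both arguments first reduce to $\beta=\beta_*(d)$ and invoke Lemma~\ref{lem:vald0d1d2} to obtain the triple $(d_0,d_1,d_2)$. From there the paper uses the inequality $X^2\le Y$ (valid when $\alpha<2$) to show
\[
\varphi(d,d_0,d_1,\alpha,\beta_*)\ \le\ \varphi(d,d_0+2,d_1-1,\alpha,\beta_*)
\]
whenever $d_1\ge d_2\ge 1$, and iterates (after a possible swap $d_1\leftrightarrow d_2$) until $d_2=0$; the resulting witness is $d_0^*=d_0+2\min\{d_1,d_2\}$. You instead keep the original~$d_0$ and compare $\varphi(d,d_0,d_1,\alpha,\beta_*)$ directly with $\varphi_*(d,d_0,\alpha)$ via the explicit factorisation of the cross-product numerator, reducing the question to the scalar bound $s(1+\beta_*+A(1+t))\ge 1$, which you dispatch with $X\ge Y$ and Bernoulli's inequality applied to $Y\ge 1-3/(4d+1)$. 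Your approach avoids the iteration and the $d_1\leftrightarrow d_2$ swap, at the cost of the extra algebra in the factorisation; the paper's approach is more ``local'' (each step is a small transformation) but needs the stronger relation $X^2\le Y$ rather than just $X\ge Y$. In both proofs the hypothesis $\alpha<2$ is used exactly once: in the paper to secure $X^2\le Y$, in yours to get the lower bound on~$Y$ at $\alpha=2$.
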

\begin{proof}
Since $d$ is fixed, we simplify the notation   by writing $\beta_*$ for $\beta_*(d)$.

Note that both $x(\alpha, \beta)$ and $y(\alpha, \beta)$ are increasing functions of $\beta$, so $\varphi$ is a decreasing function of $\beta$. Thus for all 
$\beta \in [\beta^*,1]$, we have   
$
    {\varphi}(d,d_0,d_1,\alpha,\beta_*) \geq {\varphi}(d, d_0, d_1,\alpha,\beta)
$.
Combining this with Lemma~\ref{lem:vald0d1d2},
we find that     
there are nonnegative integers~$d_0$ and $d_1$ with   $d_0 + d_1 \leq d$ 
such that     
    \begin{equation}\label{eq:hphibound}
    h_{c_1,c_2,\beta}\big(\child{\pb}{1},\hdots,\child{\pb}{d}\big) = {\varphi}(d, d_0, d_1, \alpha, \beta)\leq {\varphi}(d, d_0, d_1, \alpha, \beta_*).
    \end{equation}
    Let $d_2 = d- d_0 - d_1 \geq 0$.
    If $d_1 < d_2$, then \[ {\varphi}(d,d_0,d_2,\alpha,\beta_*)=1+\frac{(1-\beta_*)(1-x(\alpha,\beta_*)^{d_0}y(\alpha,\beta_*)^{d_2})}{\beta_*+x(\alpha,\beta_*)^{d_0}(y(\alpha,\beta_*)^{d_1}+y(\alpha,\beta_*)^{d_2})}\geq{\varphi}(d,d_0,d_1,\alpha,\beta_*).\] 
    So we can further assume that $d_1 \geq d_2$.
     Since $1<\alpha < 2$, and $\beta_*\leq 1$, 
    we have
$$
    x(\alpha, \beta_*)^2-y(\alpha, \beta_*)   =\frac{(1-\beta_*)(\alpha-1)}{(\beta_*+2\alpha)(\beta_*+\alpha+1)}\left(\frac{\beta_*+\alpha+1}{\beta_*+2\alpha}(1-\beta_*)(\alpha-1)-(\beta_*+2)\right) \leq  0,$$
    which implies
    \[x(\alpha,\beta_*)^{d_0}y(\alpha,\beta_*)^{d_1} \geq x(\alpha,\beta_*)^{d_0+2}y(\alpha,\beta_*)^{d_1-1},\]
    and 
        \[x(\alpha,\beta_*)^{d_0}y(\alpha,\beta_*)^{d_2} \geq x(\alpha,\beta_*)^{d_0+2}y(\alpha,\beta_*)^{d_2-1}.\]
 These (together with the definition of $\varphi$) imply that if $d_1\geq d_2\geq 1$
 then 
 $${\varphi}(d,d_0,d_1,\alpha,\beta_*) \leq{\varphi}(d,d_0+2,d_1-1,\alpha,\beta_*).$$
 Repeating this until $d_2=0$, we obtain
        \[{\varphi}(d,d_0,d_1,\alpha,\beta_*) \leq{\varphi}(d,d_0+2d_2,d_1-d_2,\alpha,\beta_*)=\varphi_*(d,d_0+2d_2,\alpha).\]
    Combining this with~\eqref{eq:hphibound},  we obtain
    \begin{equation*}
    h_{c_1,c_2,\beta}\big(\child{\pb}{1},\hdots,\child{\pb}{d}\big)\leq \varphi_*(d,d_0+2d_2,\alpha).\qedhere
    \end{equation*}
\end{proof}

\begin{lemma}\label{lem:d<=22techlem}
 For every fixed $d\in\{2,\ldots,22\}$, $d_0 \in \{0,\ldots,d\}$ and $\alpha \in (1,53/27]$,
    we have  $\varphi_*(d,d_0,\alpha) < \alpha^{1/d}$. 
\end{lemma}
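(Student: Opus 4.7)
This is a finite verification: the indices $d \in \{2,\ldots,22\}$ and $d_0 \in \{0,\ldots,d\}$ yield exactly $\sum_{d=2}^{22}(d+1) = 273$ pairs $(d,d_0)$, and for each one we need to establish a single-variable inequality in $\alpha$ on the interval $(1,53/27]$. The plan is to dispatch this by direct computer verification with Mathematica's \texttt{Resolve} function, in the same style as Lemmas~\ref{lem:examplecond}~and~\ref{lem:d=2bound}, with the supporting code placed in an appendix analogous to Section~\ref{app:d<=22sequence}.

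The preparatory step is to remove the algebraic term $\alpha^{1/d}$. By Definitions~\ref{def:xy} and~\ref{def:gocritical}, for each fixed $(d,d_0)$ the quantities $X = x(\alpha,\beta_*(d))$ and $Y = y(\alpha,\beta_*(d))$ are rational functions of $\alpha$ (since $\beta_*(d) = (d-2)/(d+1)$ is a fixed rational), so $\varphi_*(d,d_0,\alpha)$ is itself a rational function of $\alpha$. For $\alpha \geq 1$ both $X,Y \in (0,1]$, so the numerator $1 - X^{d_0}Y^{d-d_0}$ is non-negative and the denominator $\beta_*(d) + X^{d_0}(Y^{d-d_0}+1)$ is strictly positive; hence $\varphi_*(d,d_0,\alpha) \geq 1 > 0$. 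Since both sides of the target inequality are strictly positive, $\varphi_*(d,d_0,\alpha) < \alpha^{1/d}$ is equivalent to the purely rational inequality
\[
\varphi_*(d,d_0,\alpha)^d < \alpha,
\]
which \texttt{Resolve} can handle rigorously as a statement of real-closed-field theory over the interval $(1,53/27]$.

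The body of the proof is then a loop over the $273$ pairs $(d,d_0)$ that invokes \texttt{Resolve} on each resulting inequality. The main obstacle I anticipate is practical rather than conceptual: for $d=22$, the numerator and denominator of $\varphi_*^d$ are polynomials in $\alpha$ of degree on the order of $d^2$, and \texttt{Resolve}'s cylindrical algebraic decomposition can become slow on inputs of that size. If computation time turns out to be prohibitive, a convenient alternative is the substitution $\alpha = s^d$ with $s \in \bigl(1,(53/27)^{1/d}\bigr]$, which reduces the target to $\varphi_*(d,d_0,s^d) < s$ and keeps the right-hand side linear rather than raised to the $d$-th power. Conceptually, the only subtle point is the tightness at $\alpha = 1$: both sides equal $1$ there (since $X=Y=1$ gives $\varphi_* = 1 = 1^{1/d}$), which is precisely why the interval in the statement is open on the left; \texttt{Resolve} handles the open endpoint natively, so no separate analysis of the right-derivative at $\alpha = 1$ is needed, although verifying once by hand that $\partial_\alpha \varphi_*(d,d_0,\alpha)\big|_{\alpha=1} < 1/d$ would give a reassuring sanity check.
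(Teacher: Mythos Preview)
Your proposal is correct and matches the paper's approach: the paper also dispatches this lemma by a brute-force loop over all $(d,d_0)$ pairs calling Mathematica's \texttt{Resolve}, with the code in Section~\ref{app:d<=22techlem}. The only minor difference is that the paper uses your ``fallback'' substitution $\alpha = u^d$ (checking $\varphi_*(d,d_0,u^d) < u$) as its primary device rather than raising both sides to the $d$-th power; otherwise the strategies are identical.
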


\begin{proof}
This is rigorously verified using the Resolve function of Mathematica in  Section~\ref{app:d<=22techlem}.
\end{proof}

\begin{lemma}\label{lem:d>=23techlem} 
 For every fixed integer $d\geq 23$, $d_0 \in \{0,\ldots,d\}$ and $\alpha \in (1,53/27]$,
 we have  $\varphi_*(d,d_0,\alpha) < \alpha^{1/d}$.    
 \end{lemma}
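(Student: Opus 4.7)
The plan is to establish the inequality $\varphi_*(d, d_0, \alpha) < \alpha^{1/d}$ by reducing it to a form amenable to direct analytical bounds. A preliminary computation shows that, as $\alpha \to 1^+$, both $\varphi_*(d, d_0, \alpha)$ and $\alpha^{1/d}$ tend to $1$ and, in fact, both have $\partial/\partial\alpha$ equal to $1/d$ at $\alpha = 1$. Consequently the inequality is non-trivial only at second order in $\alpha-1$, and this shapes the overall strategy: crude bounds will not suffice, and one must compare the curvatures of the two sides carefully.

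First I would use $\alpha^{1/d} \geq 1 + (\ln \alpha)/d$ (from $e^{x}\geq 1+x$) to reduce the target to
\[
\frac{3d\bigl(1 - X^{d_0}Y^{d-d_0}\bigr)}{(d+1)\bigl(\beta_*(d) + X^{d_0}(Y^{d-d_0}+1)\bigr)} < \ln\alpha,
\]
with $X = x(\alpha,\beta_*(d))$ and $Y = y(\alpha,\beta_*(d))$. Writing $\delta_X = 1-X$ and $\delta_Y = 1-Y$, one checks directly that $\delta_X < \delta_Y$ (equivalently $X>Y$), since $2\alpha > \alpha+1$ when $\alpha>1$, so the denominator of $\delta_X$ exceeds that of $\delta_Y$.

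Next I would try to eliminate the dependence on $d_0$. Treating $d_0$ as a continuous variable, I would compute $\partial/\partial d_0$ of the LHS and, using $X>Y$ together with $X,Y<1$, argue that the LHS is maximised at one of the endpoints $d_0\in\{0,d\}$. This reduces the problem to two one-parameter inequalities, namely the inequality
\[
\frac{3d\bigl(1-Y^{d}\bigr)}{(d+1)\bigl(\beta_*(d) + Y^{d}+1\bigr)} < \ln\alpha
\qquad\text{and}\qquad
\frac{3d\bigl(1-X^{d}\bigr)}{(d+1)\bigl(\beta_*(d) + 2X^{d}\bigr)} < \ln\alpha,
\]
each of which depends only on $d$ and $\alpha$. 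For each of these, I would bound $-\ln X$ and $-\ln Y$ in terms of $\delta_X$ and $\delta_Y$ (using the standard inequality $-\ln(1-t) \leq t/(1-t)$), substitute the explicit rational forms of $\delta_X$ and $\delta_Y$, and reduce to an algebraic inequality in $d$ and $\alpha$ that one can verify on $(1,53/27]$ by checking the Taylor expansion at $\alpha=1$ (where a second-order comparison is unavoidable) and the behavior at the right endpoint.

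The main obstacle I anticipate is genuine tightness combined with uniformity in $d$. Because the zeroth- and first-order terms match at $\alpha=1$, the second-order coefficient of $\varphi_*(d,d_0,\alpha)$ around $\alpha=1$ must beat the corresponding coefficient $-(d-1)/(2d^2)$ from $\alpha^{1/d}$, and the restriction $d\geq 23$ strongly suggests that this comparison only becomes favorable once $d$ is large enough (which explains why the range $3\leq d\leq 22$ is handled by brute-force verification in Lemma~\ref{lem:d<=22techlem}). Making the Taylor-order analysis uniform in $d$ — and keeping the bounds tight enough through Step 2 and Step 3 so that no slack is lost in the crucial second-order term — is the part of the argument I expect to be most delicate.
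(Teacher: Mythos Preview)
The first step of your plan --- replacing $\alpha^{1/d}$ by the weaker lower bound $1+(\ln\alpha)/d$ --- already breaks the argument. You correctly observe that $\varphi_*(d,d_0,\alpha)$ and $\alpha^{1/d}$ agree to first order at $\alpha=1$, but in fact they also agree to \emph{second} order: a direct expansion shows, for \emph{every} $d_0\in\{0,\dots,d\}$,
\[
\varphi_*(d,d_0,\alpha)-1 \;=\; \frac{\alpha-1}{d}-\frac{(d-1)(\alpha-1)^2}{2d^2}+O\!\bigl((\alpha-1)^3\bigr),
\]
which is precisely the second-order Taylor expansion of $\alpha^{1/d}-1$. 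Since $1+(\ln\alpha)/d$ has the strictly smaller second-order coefficient $-1/(2d)$, the intermediate inequality you propose,
\[
\frac{3d\bigl(1-X^{d_0}Y^{d-d_0}\bigr)}{(d+1)\bigl(\beta_*+X^{d_0}(Y^{d-d_0}+1)\bigr)} \;<\; \ln\alpha,
\]
is \emph{false} for $\alpha$ close to $1$: its left side exceeds $\ln\alpha$ by $(\alpha-1)^2/(2d^2)+O((\alpha-1)^3)$. So no amount of later care can rescue this route --- the slack of order $1/(2d^2)$ that you discard at the outset is exactly the margin you need. (Your endpoint reduction in $d_0$ is also left unjustified, since $\partial/\partial d_0$ of the left side has two competing terms of opposite sign; but this is moot once the first step fails.)

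The paper's proof is built around this second-order coincidence. Rather than weakening $\alpha^{1/d}$ at the start, it applies a weighted AM--GM inequality to the \emph{denominator} of $\varphi_*$, obtaining a tractable upper bound $\tilde\varphi$, and then shows $\partial\tilde\varphi/\partial\alpha<0$ on $(1,53/27]$. Inside that derivative, $\alpha^{1/d}$ is lower-bounded by its exact second-order Taylor polynomial $W=1+(\alpha-1)/d-(d-1)(\alpha-1)^2/(2d^2)$ (tight to second order, unlike $1+(\ln\alpha)/d$), and the remaining comparison is resolved by showing that a certain auxiliary function $\xi_3(\alpha)$ has positive second derivative --- with $d_0$ kept as the continuous parameter $\kappa=d_0/d$ throughout, rather than reduced to endpoints. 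The final sign verifications are done rigorously with Mathematica's \texttt{Resolve}.
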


\begin{proof}    
Fix $d$ and $d_0$ such that 
$d\geq 23$ and  $d_0 \in \{0,\ldots,d\}$.
Since $d$ is fixed, we simplify the notation by writing $\beta_*$ for $\beta_*(d)$.
Note that $\beta_*\in (0,1)$.
Given~$d$, let $X$ and $Y$ be the functions of~$\alpha$  defined in Definition~\ref{def:gocritical}, and observe that these are positive for all $\alpha\geq 1$.
Let
    \[\tilde{\varphi}(d, d_0, \alpha)=1+\frac{1-X^{d_0}Y^{d-d_0}\strut}{\strut d X^{\frac{2d_0}{2+\beta_*}}Y^{\frac{d-d_0}{2+\beta_*}}}-\alpha^{1/d}=1+\frac{1}{d}\Big(X^{-\frac{2d_0}{2+\beta_*}}Y^{-\frac{d-d_0}{2+\beta_*}}-X^{\frac{d_0 \beta_*}{2+\beta_*}}
    Y^{\frac{(d-d_0)(1+\beta_*)}{2+\beta_*}}\Big)-\alpha^{1/d}.\]
    
By the weighted arithmetic-mean geometric-mean inequality\footnote{
The inequality says that for non-negative $x_1,x_2,x_3,w_1,w_2,w_3$ with $w=w_1+w_2+w_3>0$,
$w_1 x_1 + w_2 x_2 + w_3 x_3 \geq w x_1^{w_1/w} x_2^{w_2/w} x_3^{w_3/w}$. Take $x_1=1$, $x_2 = X^{d_0}Y^{d_1}$,
$x_3 = X^{d_0}$, $w_1 = \beta_*$, $w_2=1$ and $w_3=1$.} we have
\[\beta_*+X^{d_0}(Y^{d_1}+1)\geq (2+\beta_*)X^{\frac{2d_0}{2+\beta_*}}Y^{\frac{d_1}{2+\beta_*}},\]
which yields $\tilde{\varphi}(d,d_0,\alpha) \geqslant \varphi_*(d, d_0, \alpha) - \alpha^{1/d}$.
Thus, our goal is to prove $\tilde{\varphi}(d,d_0,\alpha)< 0$ when $\alpha\in (1,53/27]$. 
When $\alpha=1$,  $X$ and $Y$ are $1$  
so  $\tilde{\varphi}(d, d_0, 1)=0$.
To prove the lemma, it suffices to show that $\pderiv{\tilde{\varphi}}{\alpha}< 0$
for $\alpha\in (1,53/27]$. The rest of the proof is devoted to this technical fact.    
It is broken up into four steps.\vskip 0.2cm
 
\noindent{\bf Step 1.}
Let 
$ \xi_1= 2d_0 \big(\frac{2 + \alpha \beta_*}{2 + \beta_*}\big)+(d-d_0)\alpha$,
$M = -X^{d_0}Y^{d-d_0}\left(d_0\beta_*\frac{2 + \alpha \beta_*}{2 + \beta_*}+ (d-d_0)(\beta_*+1)\alpha\right)$,
$S = X^{\frac{2d_0}{2+\beta_*}} Y^{\frac{d-d_0}{2+\beta_*}} (\alpha+\beta_*+1) (\alpha \beta_*+2) / (1-\beta_*)$,
and $\xi_2 = M + \alpha^{1/d} S$. 
  The goal of Step~1 is to show that   $\pderiv{\tilde{\varphi}}{\alpha}< 0$
 follows from $\xi_1 \leq \xi_2$.
    
\noindent{\bf Technical details of Step~1.}
To calculate the derivative of the function $\tilde{\varphi}$, let 
\[g_X :=  (2 \alpha+\beta_*) (\alpha \beta_*+\alpha+1),\qquad g_Y :=(\alpha+\beta_*+1) (\alpha \beta_*+2).\]
Using Mathematica, we verify in Section~\ref{app:d>=23techlem}  the formula
\begin{equation}\label{eq:vetvt4g66}
\begin{aligned}
 \pderiv{\tilde{\varphi}}{\alpha}&=  
 \bigg(\frac{(1-\beta)X^{-\frac{2d_0}{2+\beta_*}}Y^{-\frac{d-d_0}{2+\beta_*}}}{d\alpha g_Y}  \bigg)
 \left(2 d_0\alpha  \frac{g_Y}{g_X}  +  (d-d_0)\alpha   +\right.\\
&\hskip 5.5cm \left. X^{ d_0} Y^{ d-d_0} \Big(d_0 \beta_*\alpha  \frac{g_Y}{g_X} +  
 (d-d_0)(1+\beta_*)\alpha  \Big) 
 - \alpha^{1/d} S\right).
\end{aligned}
\end{equation}
For all $\alpha>1$, we have that 
\[\alpha(2+\beta_*)(\alpha + \beta_* + 1)- (2\alpha + \beta_*)(\alpha \beta_* + \alpha +1)=-\beta_*(\alpha-1)^2 < 0,\]
so we obtain the bound
        \begin{equation}\label{eqn:fxfy}
        \frac{g_Y }{g_X }= \frac{(\alpha+\beta_*+1) (\alpha \beta_*+2)}{(2 \alpha+\beta_*) (\alpha \beta_*+\alpha+1)}< \frac{(\alpha+\beta_*+1) (\alpha \beta_*+2)}{\alpha(2+\beta_*)(\alpha + \beta_* + 1)}= \frac{2 + \alpha \beta_*}{\alpha (2 + \beta_*)}.
        \end{equation}

Now note that the first parenthesised expression in \eqref{eq:vetvt4g66} is  positive since $X,Y>0$ for all $\alpha>1$.  Thus, 
  to show  $\pderiv{\tilde{\varphi}}{\alpha}< 0$, 
 it suffices to show that the second parenthesised expression in \eqref{eq:vetvt4g66}
 is less than~$0$.
 To do this, we can apply the strict upper bound on $g_X/g_Y$ from \eqref{eqn:fxfy},  and show that the resulting expression,
 which is $\xi_1 - \xi_2$, is at most~$0$. Thus, we have completed Step~1.
\vskip 0.2cm

\noindent{\bf Step 2.}
Let $W = 1+\frac{\alpha-1}{d}-\frac{(d-1) (\alpha-1)^2}{2 d^2} $
and
$\xi_3 =  M + W S$.
The goal of Step~2 is to show 
$W \leq \alpha^{1/d}$, which implies
$\xi_3 \leq \xi_2$. 
Given Step~1, this means that 
   $\pderiv{\tilde{\varphi}}{\alpha}< 0$
will follow from showing that $\xi_1 \leq \xi_3$. 
 
 \noindent{\bf Technical details of Step~2.} 
 Recall that $d\geq 23$.
 Let $m(\alpha,d) = \alpha^{1/d}- W$.
 Note that
   \[m(1,d) = \pderiv{m}{\alpha}\Big\vert_{\alpha=1} = \pderiv{^2 m}{\alpha^2}\Big\vert_{\alpha=1}=0,\quad\mbox{and}\quad\pderiv{^3m}{\alpha^3} = \frac{\left(2-\frac{1}{d}\right) \left(1-\frac{1}{d}\right) }{d \alpha^{3-\frac{1}{d}} }>0\mbox{\ for all $\alpha > 1$.}\]
 Thus, $m(\alpha,d) \geq 0$ for $\alpha > 1$. 
 \vskip 0.2cm       
    
 \noindent{\bf Step 3.}   
 Let $\kappa=d_0/d$. After re-parameterising $\xi_1$ and $\xi_3$
 (so that they depend on $d$, $\kappa$ and $\alpha$),
 we show that $\xi_1 \leq \xi_3$ follows from 
 $\pderiv{^2 \xi_3}{\alpha^2} > 0$.
 Given the other steps, this means that 
    $\pderiv{\tilde{\varphi}}{\alpha}< 0$
 follows from  $\pderiv{^2 \xi_3}{\alpha^2} > 0$. 
 For future reference, the re-parameterisation is as follows.
 Let 
     \begin{align*}
    s_0&=-\left(\frac{(2 d-1) d (1-\kappa) \alpha}{d+1}+\frac{(d-2) \kappa ((d-2) \alpha+2(d+1))}{3(d+1)}\right), \text{ and}\cr
    t_0&=\frac{((d+1)\alpha+2d-1)  ((d-2) \alpha+2(d+1))}{3(d+1)}\left(1+\frac{\alpha-1}{d}-\frac{(d-1) (\alpha-1)^2}{2 d^2}\right).
    \end{align*} 
 Then 
\begin{equation}\label{eq:rt4tv4ybyb}
\begin{aligned}
\xi_1&= \frac{\alpha-1}{3} (-d \kappa+3 d-4 \kappa)+d(\kappa+1), \text{ and}\\
\xi_3&= s_0 X^{d\kappa }Y^{d(1-\kappa)} + t_0 X^{\frac{2\kappa(d+1)}{3}} Y^{\frac{(1-\kappa)(d+1)}{3}}.
\end{aligned}
\end{equation}
 
  \noindent{\bf Technical details of Step~3.}  
 
 The mathematica code in Section~\ref{app:d>=23techlem} 
 verifies that, at $\alpha=1$,
 $ \xi_3 = d(\kappa+1)$ and
 $  \pderiv{\xi_3}{\alpha} = \tfrac{1}{3}(-d\kappa+3d-4\kappa) $. 
     By  Taylor's theorem and the Lagrange form of the remainder, there exists a number  $\tilde{\alpha} \in (1, \alpha]$ such that
 \begin{equation}\label{eq:xi3}
    \xi_3 =d(\kappa+1) + \frac{1}{3}(-d\kappa+3d-4\kappa)(\alpha-1)+ \frac{1}{2}\left(\pderiv{^2 \xi_3}{\alpha^2}\Big\vert_{\alpha=\tilde{\alpha}}\right)(\alpha-1)^2.
    \end{equation}
    Thus, if $\pderiv{^2 \xi_3}{\alpha^2} > 0$ for all 
 $\alpha \in (1,53/27]$,   we can  conclude that 
    \begin{align}\label{eq:xi1xi3}
    \xi_3 \geqslant d(\kappa+1) + \frac{1}{3}(-d\kappa+3d-4\kappa)(\alpha-1) = \xi_1.
    \end{align} 
    
   \noindent{\bf Step 4.}     
 Using the parameterisation \eqref{eq:rt4tv4ybyb} of Step~3, we show that
  $\pderiv{^2 \xi_3}{\alpha^2} > 0$, 
  for all $d\geq 23$, $\kappa\in [0,1]$ and 
  $\alpha \in (1,53/27]$, thus  
    completing the proof.
   
     \noindent{\bf Technical details of Step~4.}  
We start with the  observation that, for any $k_1$ and $k_2$ (not depending on $\alpha$) and any function $r$ of $\alpha$, it holds that
$$\pderiv{(r X^{k_1} Y^{k_2})}{\alpha} = f(r,k_1,k_2) X^{k_1} Y^{k_2}, \mbox{ where } f(r,k_1,k_2) = \pderiv{r}{\alpha} + \frac{k_1 r}{X} \pderiv{X}{\alpha} + \frac{k_2 r}{Y}  \pderiv{Y}{\alpha}.$$ 
Applying this observation twice to each of the two summands in the expression \eqref{eq:rt4tv4ybyb} for $\xi_3$, 
we see that there   are rational functions~$s_2$ and~$t_2$  
 of $\alpha$, $d$ and $\kappa$
 such that 
 \[\pderiv{^2 \xi_3}{\alpha^2} = s_2 X^{d\kappa }Y^{d(1-\kappa)} + t_2 X^{\frac{2\kappa(d+1)}{3}} Y^{\frac{(1-\kappa)(d+1)}{3}}.\]

 In Section~\ref{app:d>=23techlem} we use Mathematica to calculate $t_2$ explicitly
 and to verify that for every 
 $d\geq 23$ and $\alpha \in (1,53/27]$, we have
 $ \pderiv{^2t_2}{\kappa^2}\geq0$,
 $ \pderiv{t_2}{\kappa}\Big\vert_{\kappa=0}\geq 0$
 and  $ t_2\vert_{\kappa=0}\geq 0 $.
 We conclude that $t_2\geq 0$ for all $\kappa \in [0,1]$
 (for the given ranges of $d$ and $\alpha$). 
 Since  $X$ and $Y$ are less than or equal to~$1$
 and $2 \kappa(d+1)/3 \leq d \kappa$  
 and  $(1-\kappa)(d+1)/3 \leq d(1-\kappa)$, the fact that $t_2\geq 0$ implies

    \begin{equation}\label{eq:xi3derivative}
    \pderiv{^2\xi_3}{\alpha^2} = s_2 X^{d\kappa} Y^{d(1-\kappa)} + t_2 X^{\frac{2\kappa(d+1)}{3}} Y^{\frac{(1-\kappa)(d+1)}{3}} \geqslant (s_2 + t_2) X^{d\kappa} Y^{d(1-\kappa)}.
    \end{equation}
    
      The final Mathematica code segment verifies that $s_2+t_2>0$
  for all $d\geq 23$, $\alpha \in (1,53/27]$ and $\kappa \in [0,1]$.     
  This, together with \eqref{eq:xi3derivative}, 
  completes the proof since $X$ and $Y$ are positive.
 \end{proof}

We finish by giving the proof of Lemma~\ref{lem:twosteptoprove}.
\begin{lemtwosteptoprove}
\statelemtwosteptoprove
\end{lemtwosteptoprove}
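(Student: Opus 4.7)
The plan is to assemble the lemma directly from the technical ingredients already proved, since essentially all the heavy lifting has been done in Lemmas~\ref{lem:vald0d1d2}, \ref{lem:hbound}, \ref{lem:d<=22techlem} and~\ref{lem:d>=23techlem}. The only thing I need to do is check that the hypotheses line up and glue them together.

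First I would observe the trivial reduction to $c_1 \neq c_2$: if $c_1=c_2$ then by definition $h_{c_1,c_2,\beta_*}\equiv 1$ and the required strict inequality $1<\alpha^{1/d}$ holds for any $\alpha>1$. So fix distinct colours $c_1,c_2\in[3]$, an arbitrary $\alpha\in(1,53/27]$, and an arbitrary $d$-tuple $\pb^{(1)},\dots,\pb^{(d)}\in\mathrm{Ex}_{c_2}(\alpha)$. The critical step is that $53/27<2$, so $\alpha$ lies in the open interval $(1,2)$ required by Lemma~\ref{lem:hbound}. Applying Lemma~\ref{lem:hbound} with $\beta=\beta_*=\max\{1-3/(d+1),0\}$ (which satisfies $\beta\geq 1-3/(d+1)$ as required) yields a non-negative integer $d_0\leq d$ with
\[
h_{c_1,c_2,\beta_*}\bigl(\pb^{(1)},\dots,\pb^{(d)}\bigr)\leq \varphi_*(d,d_0,\alpha).
\]

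Next I would split on the value of $d$ and invoke the two brute-force bounds on $\varphi_*$: for $d\in\{2,\dots,22\}$ use Lemma~\ref{lem:d<=22techlem}, and for $d\geq 23$ use Lemma~\ref{lem:d>=23techlem}. In both cases we obtain $\varphi_*(d,d_0,\alpha)<\alpha^{1/d}$ for every admissible choice of $d_0$ and every $\alpha\in(1,53/27]$. Chaining this with the previous inequality gives $h_{c_1,c_2,\beta_*}(\pb^{(1)},\dots,\pb^{(d)})<\alpha^{1/d}$, which is precisely the content of $\mathcal{C}(\alpha)$ for the pair $(3,d)$.

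Since $c_1,c_2$ and the extremal tuple were chosen arbitrarily, this establishes Condition~\ref{cond:we} for the pair $(3,d)$ at every $\alpha\in(1,53/27]$. I do not foresee any obstacle: the only delicate point is the range check $53/27<2$ so that the Lemma~\ref{lem:hbound} reduction to $\varphi_*$ (which collapses the two-index family $\varphi(d,d_0,d_1,\alpha,\beta)$ down to one free index by exploiting $x(\alpha,\beta_*)^2\leq y(\alpha,\beta_*)$ in that regime) is available. All the genuinely analytic content is already discharged by the Resolve-based proofs of Lemmas~\ref{lem:d<=22techlem} and~\ref{lem:d>=23techlem}, so the present lemma is essentially a bookkeeping step.
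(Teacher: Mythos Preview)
Your proof is correct and follows essentially the same approach as the paper: reduce via Lemma~\ref{lem:hbound} (using $53/27<2$) to the inequality $\varphi_*(d,d_0,\alpha)<\alpha^{1/d}$, then invoke Lemma~\ref{lem:d<=22techlem} for $2\leq d\leq 22$ and Lemma~\ref{lem:d>=23techlem} for $d\geq 23$. Your treatment is slightly more explicit (handling the trivial $c_1=c_2$ case and spelling out the range check), but the argument is the same.
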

\begin{proof}
To prove Condition~\ref{cond:we}, it suffices to check \eqref{eq:2ed2g2v2vtvt} for $\alpha\in (1,53/27)$. By Lemma~\ref{lem:hbound}, we only need to check that $\varphi_*(d, d_0,\alpha)<\alpha^{1/d}$ for $\alpha\in (1,53/27)$ and integers $0\leq d_0\leq d$. This has been verified in Lemma~\ref{lem:d<=22techlem} for all $d\leq 22$ and in Lemma~\ref{lem:d>=23techlem} for all $d\geq 23$.
\end{proof}

\section{Appendix: Mathematica Code}\label{sec:code}

\subsection{Lemma~\ref{lem:examplecond}}\label{app:examplecond}
The following code checks that, for all $d$-tuples $(\pb^{(1)},\hdots, \pb^{(d)})$ with  
$\pb^{(1)},\hdots, \pb^{(d)}\in\mathrm{Ex}_q(\alpha)$, it holds that 
$h_{c_1,c_2,\beta_*}(\pb^{(1)},\hdots, \pb^{(d)})<\alpha^{1/d}$.
The output is {\tt True}, and the same is true when the first line changes to $q=4,d=4$. 
\begin{verbatim}
q = 3; d = 3; b = 1 - q/(d + 1);
EX = Tuples[{1, alpha}, q-1]; 
G[vector_, colour_] := 1 - (1-b) (vector[[colour]]- 1)/
                    (b + Sum[vector[[cc]], {cc,1,q-1}]);
dTUPLES=Tuples[EX,d]; L=Length[dTUPLES];
UNIQ = True;
For[l = 1, l<= L && UNIQ, l++,
   currenttuple=dTUPLES[[l]];
   For[k=1, k <= d, k++,
         vectorofchild[k]=currenttuple[[k]];
   ];
   For[colour = 1, colour <= q-1, colour++,
        g[colour] = Product[  G[vectorofchild[k], colour], {k, 1, d}];
   ];
   h = 1 + (1 - b) (1 - g[1])/(b + Sum[g[c],{c,1,q-1}])/.{alpha->u^d};
   CHECK = Resolve[Exists[u, h >= u && u > 1]];
   If[CHECK == True, UNIQ = False];
];
Print[UNIQ]
\end{verbatim}

\subsection{Lemma~\ref{lem:fbetamono}} \label{app:fbetamono}

Both of the queries in the following code give the output {\tt True}.

\begin{verbatim}
fu = (1 - (1 - b) y)^d /
((1 - (1 - b) y)^d + 2 (1 - (1 - b) x)^(d/2) (1 - (1 - b) (1 - x - y))^(d/2));
bb = 1-b;
W = 1 - 3y + bb ((3y - 1) + (1 - x - 2y) + x(2x + y - 1) + y(x - y));
lhs = D[fu, b];
rhs = -fu^2 (d (1 - bb x)^(d/2) (1- bb(1-x-y))^(d/2) W /
((1- bb y)^(d+1) (1- bb x) (1- bb (1-x-y))));
FullSimplify[lhs == rhs]

fl = (1 - (1 - b) x)^d /
((1 - (1 - b) x)^d  + (1 - (1 - b) y)^d + (1 - (1 - b) (1 - x - y))^d);
lhs = D[fl, b];
rhs = fl^2 d ((2x+y-1) (1-bb(1-x-y))^(d-1) + (x-y) (1-bb y)^(d - 1)) /
((1-bb x)^(d+1));
FullSimplify[lhs == rhs]
\end{verbatim}

\subsection{Lemma~\ref{eq:d=2fixedpoints}}\label{app:d=2bound}

Both of the queries in the following code give the output {\tt False}.

\begin{verbatim}
fu = (1 - (1 - b) y)^d /
((1 - (1 - b) y)^d + 2 (1 - (1 - b) x)^(d/2) (1 - (1 - b) (1 - x - y))^(d/2));
fl = (1 - (1 - b) x)^d /
((1 - (1 - b) x)^d  + (1 - (1 - b) y)^d + (1 - (1 - b) (1 - x - y))^d);
Resolve[Exists[{x, y, b}, 0 < y <= 1/3 && 1106/2500 <= x < 1 && 
0 < b <= 1 && (fu /. {d -> 2}) == x && (fl /. {d -> 2}) == y]]
Resolve[Exists[{x, y, b}, 0 < y <= 460/2000 && 1/3 <= x < 1 && 
0 < b <= 1 && (fu /. {d -> 2}) == x && (fl /. {d -> 2}) == y]]
\end{verbatim}

\subsection{Lemma~\ref{lem:guymono}}\label{app:guymono}

The following code gives the output {\tt True}.

\begin{verbatim}
fu = (1 - (1 - b) y)^d /
((1 - (1 - b) y)^d + 2 (1 - (1 - b) x)^(d/2) (1 - (1 - b) (1 - x - y))^(d/2));
gu = fu - x /. {b -> 1 - 3/(d + 1), x -> m y};
A =  (1 - 3 y/(d + 1))^d;
B = (1- 3(1-y(m+1))/(d+1))^(d/2) (1- 3m y/(d+1))^(d/2);
W = A B/(A+2B)^2;   
lhs = D[gu, y];
rhs = -m+(9 d W /(3y(m+1)+d-2)) (m(2m y+y-1)/(1+d-3m y) - (2m y+y+d-1)/(1+d-3y));
FullSimplify[lhs == rhs]
\end{verbatim}

\subsection{Lemma~\ref{lem:glymono}}\label{app:glymono}

The following code gives the output {\tt True}.

\begin{verbatim}
fl = (1 - (1 - b) x)^d /
((1 - (1 - b) x)^d  + (1 - (1 - b) y)^d + (1 - (1 - b) (1 - x - y))^d);
gl = fl - y /. {b -> 1 - 3/(d + 1), x -> m y};
W = (m(2d-1)+d+1) ((3y(m+1)+d-2) / (d+1))^d / 
((1 + d - 3m y) (d-2+ 3y(1 + m))) +
(m-1)(d+1)(1-(3 y)/(d+1))^d / ((1 + d - 3 y) (1 + d - 3m y));
lhs = D[gl, y];
rhs = -3 d (1 - (3m y)/(d+1))^d W / ( ((3y(m+1)+d-2)/(d+1))^d 
    + (1 - (3m y)/(d+1))^d + (1-(3 y)/(d+1))^d )^2 - 1;
FullSimplify[lhs == rhs]  
\end{verbatim}

\subsection{Lemma~\ref{lem:ineqs}}\label{sec:ineqs}

The following code outputs {\tt False} and {\tt False}.

\begin{verbatim}
y2 = 7/(10 m + 12);
y1 = y2 + 3/500;
x2 = m y2;
x1 = m y1;
Resolve[Exists[m,  157/80 <= m < 32 && (0 >= y1 || y1 >= 1 - x1 - y1 || 
1 - x1 - y1 >= 1/3 ||  1/3 >= x1 || x1 >= 1 - y1 ) ]]
Resolve[Exists[m, 32 <= m && (0 >= y2 || y2 >= 1 - x2 - y2 || 
1 - x2 - y2 >= 1/3 ||  1/3 >= x2 || x2 >= 1 - y2 ) ]]
\end{verbatim}

\subsection{Lemma~\ref{lem:dhu<0}}\label{app:lem:dhu<0}

Both queries in the following code for the differentiation in \eqref{mypartial} output {\tt True}. 
 
\begin{verbatim}
fu = (1 - (1 - b) y)^d /
((1 - (1 - b) y)^d + 2 (1 - (1 - b) x)^(d/2) (1 - (1 - b) (1 - x - y))^(d/2));
bb = 1 - b;
R = (1 - bb x)^(d/2) (1 - bb (1 - x - y))^(d/2) / (1 - bb y)^d;
xlhs = D[fu, x];
xrhs = fu^2 R d bb^2 (2 x + y - 1) / ((1 - bb(1 - x - y))(1 - bb x));
ylhs = D[fu, y];
yrhs = -fu^2 R d bb (3 + bb(2 x + y - 2)) /((1 - bb(1 - x - y))(1 - bb y));
FullSimplify[xlhs == xrhs]
FullSimplify[ylhs == yrhs]
\end{verbatim}

\noindent The proof of \eqref{eq:err346gevrerf53}. 
We consider two cases --- when $\mu<32$ and when $\mu\geq 32$. The output is {\tt False} in both cases.
\begin{verbatim}
lhs=(8-y)(2x+y-1)/( (8-x)(2x+y+22) );
x1 = 7 m/(10 m + 12) + 3 m/500;
y1 = 7/(10 m + 12) + 3/500;
Resolve[Exists[m, 1<m <32 && (lhs/.{x->x1, y->y1}) >= 1/24]]
x2 = 7 m/(10 m + 12);
y2 = 7/(10 m + 12);
Resolve[Exists[m, m>=32 && (lhs/.{x->x2, y->y2}) >= 1/24]]
\end{verbatim}

\noindent Here is the code to show that  $Y$ is increasing  in~$\hat\beta_*$. The output is {\tt False.}
\begin{verbatim}
Num = 3 + b (2 x + y - 2);
Den = (1 - b (1 - x - y)) (1 - b y);
Der = D[Num/Den  , b];
Resolve[Exists[{d, x, y, b}, 
  d >= 0 && 0 < b <= 1/8 && 0 < y < 1 - x - y < 1/3 < x < 1 - y && 
   Der < 0]]
\end{verbatim}

Here is the code for Case 1. The output is {\tt False}.

\begin{verbatim} 
lhs = 3 (2 x + y + 22)/((8 - y) (x + y + 7)) /. 
        {x -> 7 m/(10 m + 12), y -> 7/(10 m + 12)};
rhs = 8/7;
Resolve[Exists[m, lhs >= rhs && m > 32]]
\end{verbatim}

Here is the code for Case 2. The output is {\tt False}.
\begin{verbatim}
lhs = 3 (2 x + y + 22)/((8 - y) (x + y + 7)) /. 
   {x -> 7 m/(10 m + 12) + 3 m/500, y -> 7/(10 m + 12) + 3/500};
rhs = 24 (25 m^2 + 60 m + 3536)/(25 m^2 + 60 m + 73536);
Resolve[Exists[m, lhs >= rhs && 1 < m]]
\end{verbatim}

\subsection{Lemma~\ref{lem:muone}}\label{app:muone}

The output of the following code, for the differentiation in \eqref{eq:Dhud}, is  {\tt True.}
\begin{verbatim}
fu = (1 - (1 - b) y)^d /
((1 - (1 - b) y)^d + 2 (1 - (1 - b) x)^(d/2) (1 - (1 - b) (1 - x - y))^(d/2));
psi[d_,z_] := d/(d-3z+1) + Log[d +1-3z];
zeta[d_,x_,y_]:=2psi[d,y]-psi[d,x]-psi[d,1-x-y];
lhs = D[ (fu/.{b -> 1 - 3/(d + 1)}), d];
rhs = (1- 3x/(d+1))^(d/2) (1- 3y/(d+1))^d (1- 3(1-x-y)/(d+1))^(d/2)zeta[d,x,y]/
			  ((1- 3y/(d+1))^d + 2(1- 3x/(d+1))^(d/2) (1- 3 (1-x-y)/(d+1))^(d/2))^2;
FullSimplify[lhs == rhs,d>=0]
\end{verbatim}

\noindent The following code establishes Facts~1, 2, and 3 for $\mu=157/80$.
The output is {\tt False}, $0$, then {\tt True.}
\begin{verbatim}
fu = (1 - (1 - b) y)^d /
((1 - (1 - b) y)^d + 2 (1 - (1 - b) x)^(d/2) (1 - (1 - b) (1 - x - y))^(d/2));
hu=fu-x /.{b -> 1 - 3/(d+1)};
psi[d_,z_] := d/(d-3z+1) + Log[d +1-3z];
zeta[d_,x_,y_]:=2psi[d,y]-psi[d,x]-psi[d,1-x-y];
Fd = D[zeta[d,x,y], d];
xm = 7 m/(10 m + 12) + 3 m/500 /. {m -> 157/80};
ym = 7/(10 m + 12) + 3/500 /. {m -> 157/80};
Resolve[Exists[d, d >= 23 && (Fd /. {x -> xm, y -> ym}) >= 0]]
Limit[zeta[d,xm,ym], d -> \[Infinity]]
Limit[hu /. {x -> xm, y -> ym}, d -> \[Infinity]]<0 
\end{verbatim}

\noindent The following code establishes Facts~1, 2, and 3 for $\mu=32$.
The output is {\tt False}, $0$, then {\tt True.}
\begin{verbatim}
fu = (1 - (1 - b) y)^d /
((1 - (1 - b) y)^d + 2 (1 - (1 - b) x)^(d/2) (1 - (1 - b) (1 - x - y))^(d/2));
hu=fu-x /.{b -> 1 - 3/(d+1)};
psi[d_,z_] := d/(d-3z+1) + Log[d +1-3z];
zeta[d_,x_,y_]:=2psi[d,y]-psi[d,x]-psi[d,1-x-y];
Fd = D[zeta[d,x,y], d];
xm = 7 m/(10 m + 12) /. {m -> 32};
ym= 7/(10 m + 12) /. {m -> 32}; 
Resolve[Exists[d, d >= 23 && (Fd /. {x -> xm, y -> ym}) >= 0]]
Limit[zeta[d,xm,ym], d -> \[Infinity]] 
Limit[hu /. {x -> xm, y -> ym}, d -> \[Infinity]]<0
\end{verbatim}

\subsection{Lemma~\ref{lem:hl>0}}\label{app:hl>0}

We first prove that $h_\ell(23,\mu)>0$ for $\mu\geq 157/80$. The output to both
queries is {\tt False.}
 
\begin{verbatim}
fl = (1 - (1 - b) x)^d /
((1 - (1 - b) x)^d  + (1 - (1 - b) y)^d + (1 - (1 - b) (1 - x - y))^d);
hl = fl - y /. {b -> 1 - 3/(d+1), x -> m y};
x1 = 7 m/(10 m + 12) + 3 m/500;
y1 = 7/(10 m + 12) + 3/500;
h1 = hl /. {d -> 23, x -> x1, y -> y1};
Resolve[Exists[m, 157/80<=m <32 && h1 <= 0]]
x2 = 7 m/(10 m + 12);
y2 = 7/(10 m + 12);
h2 = hl /. {d -> 23, x -> x2, y -> y2};
Resolve[Exists[m, m >= 32 && h2 <= 0]]
\end{verbatim}

\noindent The output of the following code, for the differentiation in~\eqref{eq:Dhld}, is  {\tt True.}
\begin{verbatim}
fl = (1 - (1 - b) x)^d/
((1 - (1 - b) x)^d + (1 - (1 - b) y)^d + (1 - (1 - b) (1 - x - y))^d); 
psi[d_, z_] := d/(d - 3 z + 1) + Log[d + 1 - 3 z];
A = (1 - 3 x/(d + 1))^d;
B = (1 - 3 y/(d + 1))^d;
CC = (1 - 3 (1 - x - y)/(d + 1))^d;
lhs = D[(fl /. {b -> 1 - 3/(d + 1)}), d];
rhs = (A CC (psi[d, x] - psi[d, 1 - x - y]) + A B (psi[d, x] - psi[d, y]))/
   (A + B + CC)^2;
FullSimplify[lhs == rhs, d >= 0]
\end{verbatim}

\subsection{Lemma~\ref{lem:d<=22sequence}}\label{app:d<=22sequence}
The code checks that all of the desired inequalities are satisfied. The output is {\tt True.}
\begin{verbatim}
fu = (1 - (1 - b) y)^d /
((1 - (1 - b) y)^d + 2 (1 - (1 - b) x)^(d/2) (1 - (1 - b) (1 - x - y))^(d/2));
fl = (1 - (1 - b) x)^d /
((1 - (1 - b) x)^d  + (1 - (1 - b) y)^d + (1 - (1 - b) (1 - x - y))^d);
Flag = True;
u[0] = 1;
l[0] = 0;
For[dd = 3, dd <= 22, dd++, (
   ffu = Ceiling[10000 fu /. {d -> dd, b -> 1 - 3/(dd + 1)}]/10000;
   ffl = Floor[10000 fl /. {d -> dd, b -> 1 - 3/(dd + 1)}]/10000;
   For[n = 0, n <= 60, n++, (
     u[n + 1] = ffu /. {x -> u[n], y -> l[n]};
     l[n + 1] = ffl /. {x -> u[n], y -> l[n]};
     Flag = Flag && u[n] >= u[n + 1] && l[n] <= l[n + 1] &&  
       2 u[n] + l[n] >= 1 >= 2 l[n] + u[n])];
   Flag = Flag && u[60]/l[60] <= 53/27;)];
Flag 
\end{verbatim}

\subsection{Lemma~\ref{lem:d<=22techlem}}\label{app:d<=22techlem}

The code checks all relevant values of~$d$ and~$d_0$. 
The output is {\tt True.}
The substitution of $u$ for $\alpha^{1/d}$ is there to make the code run faster.
Despite this, it takes more than 5 minutes to run on our machine.

\begin{verbatim}
b = 1 - 3/(d + 1);
x = 1 - (1 - b) (a - 1)/(b + 2 a);
y = 1 - (1 - b) (a - 1)/(b + a + 1);
phi = 1 + (3/(d + 1)) (1 - x^d0 y^(d - d0))/(b + 
        x^d0 (y^(d - d0) + 1)) /. {a -> u^d};
Flag = True;
For[dd = 2, dd <= 22, dd++, u0 = (53/27)^dd;
  For[dd0 = 0, dd0 <= dd, dd0++,
   Flag =  Flag && ! Resolve[Exists[u, 
      (phi /. {d -> dd, d0 -> dd0}) >= u && 1 < u <= u0]];];];
Flag
\end{verbatim}

\subsection{Lemma~\ref{lem:d>=23techlem}}\label{app:d>=23techlem}
The following code outputs {\tt True}, therefore verifying the differentiation in  \eqref{eq:vetvt4g66}.
\begin{verbatim}
b = 1 - 3/(d + 1);
x = 1 - (1 - b) (a - 1)/(b + 2 a);
y = 1 - (1 - b) (a - 1)/(b + a + 1);
phi = 1+(1/d)( x^(-2d0/(2+b)) y^(-(d-d0)/(2+b)) 
                   - x^(d0 b/(2+b)) y^((d-d0)(1+b)/(2+b)) )-a^(1/d);
gx=(2a+b)(a b+a+1);
gy=(a+b+1)(a b+2);
S=x^(2d0/(2+b)) y^((d-d0)/(2+b)) (a+b+1) (a b+2)/(1-b);
rhs=( (1-b)x^(-2d0/(2+b)) y^(-(d-d0)/(2+b)) / (d a gy) ) *
    (2 d0 a (gy/gx) + (d-d0)a + x^d0 y^(d-d0) (d0 b a (gy/gx)+ (d-d0)(1+b)a)
        -a^(1/d) S);
Resolve[Simplify[D[phi, a] - rhs] == 0]
\end{verbatim}

\noindent The following code outputs {\tt True} {\tt True}, verifying the calculation for Step 3.
\begin{verbatim}
b = 1 - 3/(d + 1);
x = 1 - (1 - b) (a - 1)/(b + 2 a);
y = 1 - (1 - b) (a - 1)/(b + a + 1); 
s0 = -(((2 d - 1) d (1 - k) a)/(d + 
        1) + ((d - 2) k ((d - 2) a + 2 (d + 1)))  /(3 (d + 1)));
W = 1 + (a - 1)/d - (d - 1) (a - 1)^2/(2 d^2);
t0 = W ((d + 1) a + 2 d - 1) ((d - 2) a + 2 (d + 1))/(3 (d + 1));
xi3 = s0 x^(d k) y^(d (1 - k)) + 
   t0 x^(2 k (d + 1)/3) y^((1 - k) (d + 1)/3);
FullSimplify[(xi3 /. {a -> 1}) == d + d k]
FullSimplify[(D[xi3, a] /. {a -> 1}) == (1/3) (- d k + 3 d - 4 k) ]
\end{verbatim}

\noindent The final two code segments are for Step~4.
The following code  calculates $t_2$ and verifies  
that $ \pderiv{^2t_2}{\kappa^2}\geq0$,
 $ \pderiv{t_2}{\kappa}\Big\vert_{\kappa=0}\geq 0$
 and  $ t_2\vert_{\kappa=0}\geq 0 $. The output is {\tt False}, {\tt False}, and {\tt False.}
 
\begin{verbatim}
b = 1 - 3/(d + 1);
x = 1 - (1 - b) (a - 1)/(b + 2 a);
y = 1 - (1 - b) (a - 1)/(b + a + 1);
s0 = -(((2 d - 1) d (1 - k) a)/(d + 
        1) + ((d - 2) k ((d - 2) a + 2 (d + 1)))/(3 (d + 1)));
W = 1 + (a - 1)/d - (d - 1) (a - 1)^2/(2 d^2);
t0 = W ((d + 1) a + 2 d - 1) ((d - 2) a + 2 (d + 1))/(3 (d + 1));
 
t2 = Simplify[ D[t0 x^(2 k (d + 1)/3) y^((1 - k) (d + 1)/3), {a, 
      2}]/(x^(2 k (d + 1)/3) y^((1 - k) (d + 1)/3))];
      
tk1 = D[t2, k];
tk2 = D[tk1, k];
Resolve[Exists[{d, a}, tk2 < 0  && d >= 23 && 1 <= a <= 53/27]]
Resolve[Exists[{d, a}, (tk1 /. {k -> 0}) < 0 && d >= 23 && 1 <= a <= 53/27]]
Resolve[Exists[{d, a}, (t2 /. {k -> 0}) < 0  && d >= 23 && 1 <= a <= 53/27]] \end{verbatim}

\noindent The following code calculates $s_2$ (as well as $t_2$)
and verifies  that $s_2+t_2$ is positive. It takes  about 10 minutes to run. 
The output is {\tt False.} The reason for the transformation of $\alpha$ (as a function of $r$) is to speed up the calculation.

\begin{verbatim}
b = 1 - 3/(d + 1);
x = 1 - (1 - b) (a - 1)/(b + 2 a);
y = 1 - (1 - b) (a - 1)/(b + a + 1);
s0 = -(((2 d - 1) d (1 - k) a)/(d + 
        1) + ((d - 2) k ((d - 2) a + 2 (d + 1)))/(3 (d + 1)));
W = 1 + (a - 1)/d - (d - 1) (a - 1)^2/(2 d^2);
t0 = W ((d + 1) a + 2 d - 1) ((d - 2) a + 2 (d + 1))/(3 (d + 1));
 
t2 = Simplify[ D[t0 x^(2 k (d + 1)/3) y^((1 - k) (d + 1)/3), {a, 
      2}]/(x^(2 k (d + 1)/3) y^((1 - k) (d + 1)/3))];
        
 s2 = Simplify[
   D[s0 x^(d k) y^(d (1 - k)), {a, 2}]/(x^(d k) y^(d (1 - k)))];     
p = Simplify[s2 + t2 /. {a -> 1 + 3 r}];
Resolve[Exists[{d, r, k}, 
  p <= 0 && d >= 23 && 0 <= k <= 1 && 0 < r <= 26/81]]    

\end{verbatim}

\bibliographystyle{plain} \bibliography{\jobname}

\begin{thebibliography}{10}

\bibitem{samplingpaper}
A.~{Blanca}, A.~{Galanis}, L.~A. {Goldberg}, D.~\v{S}tefankovi\v{c},
  E.~{Vigoda}, and K.~{Yang}.
\newblock Sampling in uniqueness from the {P}otts and random-cluster models on
  random regular graphs.
\newblock {\em ArXiv/1804.08111}, 2018.

\bibitem{BW02}
G.~R. Brightwell and P.~Winkler.
\newblock Random colorings of a {C}ayley tree.
\newblock {\em Contemporary combinatorics}, 10:247--276, 2002.

\bibitem{Dobrushin}
R.~Dobrushin.
\newblock Prescribing a system of random variables by conditional
  distributions.
\newblock {\em Theory of Probability \& Its Applications}, 15(3):458--486,
  1970.

\bibitem{Efthymiou}
C.~Efthymiou.
\newblock A simple algorithm for sampling colorings of ${G}(n,d/n)$ up to the
  {G}ibbs uniqueness threshold.
\newblock {\em SIAM Journal on Computing}, 45(6):2087--2116, 2016.

\bibitem{EHSV}
C.~Efthymiou, T.~P. Hayes, D.~\v{S}tefankovi\v{c}, and E.~Vigoda.
\newblock Sampling random colorings of sparse random graphs.
\newblock In {\em Proceedings of the Twenty-Ninth Annual {ACM-SIAM} Symposium
  on Discrete Algorithms}, SODA '18, pages 1759--1771, 2018.

\bibitem{friedlivelenik2017}
S.~Friedli and Y.~Velenik.
\newblock {\em Statistical Mechanics of Lattice Systems: A Concrete
  Mathematical Introduction}.
\newblock Cambridge University Press, 2017.

\bibitem{GSV}
A.~Galanis, D.~\v{S}tefankovi\v{c}, and E.~Vigoda.
\newblock Inapproximability for antiferromagnetic spin systems in the tree
  nonuniqueness region.
\newblock {\em J. {ACM}}, 62(6):50:1--50:60, 2015.

\bibitem{GSVi}
A.~Galanis, D.~\v{S}tefankovi\v{c}, and E.~Vigoda.
\newblock Inapproximability of the partition function for the antiferromagnetic
  {I}sing and hard-core models.
\newblock {\em Combinatorics, Probability and Computing}, 25(4):500--559, 2016.

\bibitem{GK}
D.~Gamarnik and D.~Katz.
\newblock Correlation decay and deterministic {FPTAS} for counting colorings of
  a graph.
\newblock {\em Journal of Discrete Algorithms}, 12:29 -- 47, 2012.

\bibitem{Geo88}
H.-O. Georgii.
\newblock {\em Gibbs measures and phase transitions}, volume~9.
\newblock Walter de Gruyter, 2011.

\bibitem{Jonasson02}
J.~Jonasson.
\newblock Uniqueness of uniform random colorings of regular trees.
\newblock {\em Statistics \& Probability Letters}, 57(3):243--248, 2002.

\bibitem{LiLuYin}
L.~Li, P.~Lu, and Y.~Yin.
\newblock Correlation decay up to uniqueness in spin systems.
\newblock In {\em Proceedings of the Twenty-fourth Annual ACM-SIAM Symposium on
  Discrete Algorithms}, SODA '13, pages 67--84, 2013.

\bibitem{LYZZ}
P.~Lu, K.~Yang, C.~Zhang, and M.~Zhu.
\newblock An {FPTAS} for counting proper four-colorings on cubic graphs.
\newblock In {\em Proceedings of the Twenty-Eighth Annual ACM-SIAM Symposium on
  Discrete Algorithms}, SODA '17, pages 1798--1817, 2017.

\bibitem{PY}
P.~Lu and Y.~Yin.
\newblock Improved {FPTAS} for multi-spin systems.
\newblock In Prasad Raghavendra, Sofya Raskhodnikova, Klaus Jansen, and
  Jos{\'e} D.~P. Rolim, editors, {\em Approximation, Randomization, and
  Combinatorial Optimization. Algorithms and Techniques}, pages 639--654, 2013.

\bibitem{martinelli1}
F.~Martinelli and E.~Olivieri.
\newblock Approach to equilibrium of {G}lauber dynamics in the one phase
  region. {I}. {T}he attractive case.
\newblock {\em Comm. Math. Phys.}, 161(3):447--486, 1994.

\bibitem{martinelli2}
F.~Martinelli and E.~Olivieri.
\newblock Approach to equilibrium of {G}lauber dynamics in the one phase
  region. {II}. {T}he general case.
\newblock {\em Communications in Mathematical Physics}, 161(3):487--514, 1994.

\bibitem{MS22}
E.~Mossel and A.~Sly.
\newblock Rapid mixing of {G}ibbs sampling on graphs that are sparse on
  average.
\newblock {\em Random Structures \& Algorithms}, 35(2):250--270, 2009.

\bibitem{Peruggi0}
F.~Peruggi, F.~di~Liberto, and G.~Monroy.
\newblock The {P}otts model on {B}ethe lattices. {I}. {G}eneral results.
\newblock {\em Journal of Physics A: Mathematical and General}, 16(4):811,
  1983.

\bibitem{Peruggi1}
F.~Peruggi, F.~di~Liberto, and G.~Monroy.
\newblock Critical behaviour in three-state {P}otts antiferromagnets on a
  {B}ethe lattice.
\newblock {\em Physica A: Statistical Mechanics and its Applications},
  131(1):300 -- 310, 1985.

\bibitem{Peruggi2}
F.~Peruggi, F.~di~Liberto, and G.~Monroy.
\newblock Phase diagrams of the $q$-state {P}otts model on {B}ethe lattices.
\newblock {\em Physica A: Statistical Mechanics and its Applications},
  141(1):151 -- 186, 1987.

\bibitem{Rozikov}
U.~A. Rozikov.
\newblock {\em Gibbs measures on {C}ayley trees}.
\newblock World scientific, 2013.

\bibitem{Sinclair2017}
A.~Sinclair, P.~Srivastava, D.~{\v{S}}tefankovi{\v{c}}, and Y.~Yin.
\newblock Spatial mixing and the connective constant: optimal bounds.
\newblock {\em Probability Theory and Related Fields}, 168(1):153--197, 2017.

\bibitem{Sinclair}
A.~Sinclair, P.~Srivastava, and M.~Thurley.
\newblock Approximation algorithms for two-state anti-ferromagnetic spin
  systems on bounded degree graphs.
\newblock {\em Journal of Statistical Physics}, 155(4):666--686, 2014.

\bibitem{SlySun}
A.~Sly and N.~Sun.
\newblock Counting in two-spin models on $d$-regular graphs.
\newblock {\em Ann. Probab.}, 42(6):2383--2416, 2014.

\bibitem{Thomas}
L.~E. Thomas.
\newblock Bound on the mass gap for finite volume stochastic {I}sing models at
  low temperature.
\newblock {\em Communications in Mathematical Physics}, 126(1):1--11, 1989.

\bibitem{Weitz}
D.~Weitz.
\newblock Combinatorial criteria for uniqueness of {G}ibbs measures.
\newblock {\em Random Structures \& Algorithms}, 27(4):445--475.

\bibitem{YZ}
Y.~Yin and C.~Zhang.
\newblock Sampling in {P}otts model on sparse random graphs.
\newblock In {\em Approximation, Randomization, and Combinatorial Optimization.
  Algorithms and Techniques, {APPROX/RANDOM} 2016}, pages 47:1--47:22, 2016.

\end{thebibliography}

\end{document}